\theoremstyle{plain}
\newtheorem{thm}{Theorem}[section]
\newtheorem{lem}[thm]{Lemma}
\newtheorem{prtl}{Protocol}
\newtheorem{conv}{Convention}
\newtheorem{toyprtl}{Toy Protocol}
\newtheorem{cor}[thm]{Corollary}
\newtheorem{assump}{Assumption}
\newtheorem{claim}[thm]{Claim}
\newtheorem{fact}{Fact}
\newtheorem{setup}{Set-up}
\theoremstyle{definition}
\newtheorem{defn}{Definition}[section]
\newtheorem{exmp}{Example}[section]
\newtheorem{nota}{Notation}[section]
\theoremstyle{remark}
\newcommand{\nRe}{{\operatorname{Re}}}
\newcommand{\nIm}{{\operatorname{Im}}}
\newcommand{\cR}{{\mathcal{R}}}
\newcommand{\cC}{{\mathcal{C}}}
\newcommand{\cF}{{\mathcal{F}}}
\newcommand{\cU}{{\mathcal{U}}}
\newcommand{\cP}{{\mathcal{P}}}
\newcommand{\mi}{{\mathrm{i}}}
\newcommand{\btheta}{{\boldsymbol{\theta}}}
\newcommand{\bTheta}{{\boldsymbol{\Theta}}}
\newcommand{\bDelta}{{\boldsymbol{\Delta}}}
\newcommand{\bI}{{\boldsymbol{I}}}
\newcommand{\bS}{{\boldsymbol{S}}}
\newcommand{\bD}{{\boldsymbol{D}}}
\newcommand{\bA}{{\boldsymbol{A}}}
\newcommand{\bx}{{\boldsymbol{x}}}
\newcommand{\bK}{{\boldsymbol{K}}}
\newcommand{\bH}{{\boldsymbol{H}}}
\newcommand{\bR}{{\boldsymbol{R}}}
\newcommand{\bY}{{\boldsymbol{Y}}}
\newcommand{\bd}{{\boldsymbol{d}}}
\newcommand{\bindic}{{\boldsymbol{indic}}}
\newcommand{\bct}{{\boldsymbol{ct}}}
\newcommand{\bflag}{{\boldsymbol{flag}}}
\newcommand{\btype}{{\boldsymbol{type}}}
\newcommand{\bscore}{{\boldsymbol{score}}}
\newcommand{\brec}{{\boldsymbol{rec}}}
\newcommand{\bpads}{{\boldsymbol{pads}}}
\newcommand{\bpad}{{\boldsymbol{pad}}}
\newcommand{\btrans}{{\boldsymbol{trans}}}
\newcommand{\bbC}{{\boldsymbol{C}}}
\DeclareMathOperator{\bE}{\mathbb{E}}
\DeclareMathOperator{\bC}{\mathbb{C}}
\DeclareMathOperator{\bbR}{\mathbb{R}}
\DeclareMathOperator{\bN}{\mathbb{N}}
\DeclareMathOperator{\bZ}{\mathbb{Z}}
\newcommand{\bbI}{\mathbb{I}}
\newcommand{\fH}{{\sf H}}
\newcommand{\fT}{{\sf T}}
\newcommand{\fP}{{\sf P}}
\newcommand{\fX}{{\sf X}}
\newcommand{\fY}{{\sf Y}}
\newcommand{\fZ}{{\sf Z}}
\newcommand{\fM}{{\sf M}}
\newcommand{\fToffoli}{{\sf Toffoli}}
\newcommand{\fCN}{{\sf CNOT}}
\newcommand{\fSWAP}{{\sf SWAP}}
\newcommand{\fSet}{{\sf Set}}
\newcommand{\fAdd}{{\sf Add}}
\newcommand{\fCOPY}{{\sf COPY}}
\newcommand{\fCPhase}{{\sf CPhase}}
\newcommand{\fReviseRO}{{\sf ReviseRO}}
\newcommand{\fServersim}{{\sf Serversim}}
\newcommand{\fDisgard}{{\sf Disgard}}
\newcommand{\fCalcRel}{{\sf CalcRel}}
\newcommand{\fCalc}{{\sf Calc}}
\newcommand{\Domain}{{\text{Domain}}}
\newcommand{\basishonest}{{\text{basishonest}}}
\newcommand{\tSUM}{{\text{SUM}}}
\newcommand{\fEn}{{\mathsf{Enc}}}
\newcommand{\fPadHadamard}{{\mathsf{PadHadamard}}}
\newcommand{\fGAUVBQC}{{\mathsf{GAUVBQC}}}
\newcommand{\fLT}{{\mathsf{LT}}}
\newcommand{\fPrtl}{{\mathsf{Prtl}}}
\newcommand{\fCombine}{{\mathsf{Combine}}}
\newcommand{\fAuxInf}{{\mathsf{AuxInf}}}
\newcommand{\fDc}{{\mathsf{Dec}}}
\newcommand{\fEv}{{\mathsf{Eval}}}
\newcommand{\fKg}{{\mathsf{KeyGen}}}
\newcommand{\fneg}{{\mathsf{negl}}}
\newcommand{\fpoly}{{\mathsf{poly}}}
\newcommand{\fgadget}{{\mathsf{gadget}}}
\newcommand{\fAdv}{{\mathsf{Adv}}}
\newcommand{\fSim}{{\mathsf{Sim}}}
\newcommand{\fNTCF}{{\mathsf{NTCF}}}
\newcommand{\fCHK}{{\mathsf{CHK}}}
\newcommand{\ftrue}{{\mathsf{true}}}
\newcommand{\ffalse}{{\mathsf{false}}}
\newcommand{\fpass}{{\mathsf{pass}}}
\newcommand{\ffail}{{\mathsf{fail}}}
\newcommand{\tflag}{{\text{flag}}}
\newcommand{\ttype}{{\text{type}}}
\newcommand{\score}{{\text{score}}}
\newcommand{\ftest}{{\mathsf{test}}}
\newcommand{\fquiz}{{\mathsf{quiz}}}
\newcommand{\fcomp}{{\mathsf{comp}}}
\newcommand{\fwin}{{\mathsf{win}}}
\newcommand{\flose}{{\mathsf{lose}}}
\newcommand{\fAddPhaseWithswitch}{{\mathsf{SwPhaseUpdate}}}
\newcommand{\fCollapse}{{\mathsf{Collapse}}}
\newcommand{\fBNTest}{{\mathsf{BUTest}}}
\newcommand{\fCoPhTest}{{\mathsf{CoPhTest}}}
\newcommand{\fStdBTest}{{\mathsf{StdBTest}}}
\newcommand{\fInPhTest}{{\mathsf{InPhTest}}}
\newcommand{\fHadamardTest}{{\mathsf{HadamardTest}}}
\newcommand{\fRSPV}{{\mathsf{RSPV}}}
\newcommand{\fpreRSPV}{{\mathsf{preRSPV}}}
\newcommand{\fpreRSPVTemp}{{\mathsf{preRSPVTemp}}}
\newcommand{\tspan}{{\text{span}}}
\newcommand{\sk}{{\text{sk}}}
\newcommand{\OPT}{{\text{OPT}}}
\newcommand{\SETUP}{{\text{SETUP}}}
\newcommand{\pad}{{\text{pad}}}
\newcommand{\pk}{{\text{pk}}}
\newcommand{\switch}{{\text{switch}}}
\newcommand{\fResponse}{{\mathsf{Response}}}
\begin{document}
\title{Classical Verification of Quantum Computations in Linear Time}
\author[1]{Jiayu Zhang\thanks{Supported by the IQIM, an NSF Physics Frontiers Center 
(NSF Grant PHY-1125565) with support of the Gordon and Betty Moore 
Foundation (GBMF-12500028).}}
\affil[1]{California Institute of Technology}
\affil[ ]{jiayu@caltech.edu}
\maketitle
\begin{abstract}
	In the quantum computation verification problem, a quantum server wants to convince a client that the output of evaluating a quantum circuit $C$ is some result that it claims. This problem is considered very important both theoretically and practically in quantum computation \cite{GKK,ABEM,RUV}. The client is considered to be limited in computational power, and one desirable property is that the client can be completely classical, which leads to the classical verification of quantum computation (CVQC) problem. In terms of the time complexity of server-side quantum computations (which typically dominate the total time complexity of both the client and the server), the fastest single-server CVQC protocol so far has complexity $O(\fpoly(\kappa)|C|^3)$ where $|C|$ is the size of the circuit to be verified and $\kappa$ is the security parameter, given by Mahadev \cite{MahadevVerification}. This leads to a similar cubic time blowup in many existing protocols including multiparty quantum computation, zero knowledge and obfuscation \cite{B21,VidickZ19,BM21,CCT,CLLW20,ACGH}. Considering the preciousness of quantum computation resources, this cubic complexity barrier could be a big obstacle for theoretical and practical development of protocols for these problems. \par
	 In this work, by developing new techniques, we give a new CVQC protocol with complexity $O(\fpoly(\kappa)|C|)$ (in terms of the total time complexity of both the client and the server), which is significantly faster than existing protocols. Our protocol is secure in the quantum random oracle model \cite{QRO} assuming the existence of noisy trapdoor claw-free functions \cite{BCMVV}, which are both extensively used assumptions in quantum cryptography. Along the way, we also give a new classical channel remote state preparation protocol for states in $\{\ket{+_\theta}=\frac{1}{\sqrt{2}}(\ket{0}+e^{\mi\theta\pi/4}\ket{1}):\theta\in \{0,1\cdots 7\}\}$, another basic primitive in quantum cryptography. Our protocol allows for parallel verifiable preparation of $L$ independently random states in this form (up to a constant overall error and a possibly unbounded server-side simulator), and runs in only $O(\fpoly(\kappa)L)$ time and constant rounds; for comparison, existing works (even for possibly simpler state families) all require very large or unestimated time and round complexities  \cite{GVRSP,qfactory,GMP,FWZ}.\par
\end{abstract}
\newpage\tableofcontents
\section{Introduction}\label{sec:1}
\subsection{Background}
Verification of computations is one of the most basic questions that one could ask about computations. In this problem, an untrusted server claims the output of running a circuit $C$ is $o$, and the client would like to check its validity without doing all the computations from scratch. 
The study of this type of problems in different settings has a very long history.  In the setting where the server has unlimited computation resources, many important complexity classes (like NP, IP, MIP \cite{AroraBarak}) and famous results (like the PCP theorem \cite{AroraBarak} or IP=PSPACE \cite{IPPSPACE}) can be understood as characterizations of power of verification protocols in different settings.\par
Consider a practical setting where a user wants to outsource a large scale computation to an untrusted cloud server, we need to additionally assume the server runs in polynomial time. The computation verification problem in this setting is widely-studied and widely-used in cryptography. For example, computation verification has been studied in various different settings \cite{GGP,PHGR} and is the foundation of various cryptographic problems (for example, zero-knowledge \cite{ZK20}).\par

Today quantum computations are gradually coming into reality. \cite{supremacy} Naturally, we would like to know whether quantum computations are also verifiable, and how it could be executed in practice. Formally speaking, a quantum computation verification protocol is defined as:
\begin{defn}[Quantum computation verification, review of \cite{ABEM}]\label{defn:qcv1}
A quantum computation verification protocol takes a quantum circuit $C$ and an output string $o$ as the inputs. It has completeness $c$ and soundness $s$ if:
\begin{itemize}
\item (Completeness) For $(C,o)$ such that $\Pr[C\ket{0}=o]\geq \frac{99}{100}$, the verifier accepts with probability $\geq c$.
\item (Soundness) For any malicious quantum server, for $(C,o)$ such that $\Pr[C\ket{0}=o]\leq \frac{1}{100}$, the verifier rejects with probability $\geq 1-s$.
\end{itemize}
	In addition to that, we want the protocol to be efficient, that is, both the client and the server should be in polynomial time.
\end{defn}
Quantum computation verification is also very important both in theory and in practice:
\begin{itemize}
\item The motivations of classical computation verification generally also hold in the quantum world. Historically, the study of quantum computation verification has led to a series of important works: For example, quantum computation verification protocols are the basis of many other quantum cryptographic protocols like multiparty computation and zero-knowledge \cite{VidickZ19,B21}; and the study of quantum computation verification in the multi-prover setting leads to one of the most striking results in quantum complexity theory \cite{MIPstar}. 
\item There is a potentially strong practical motivation for quantum computation verification: In foreseeable future, it is possible that large scale quantum computers will be used as cloud services instead of personal computers due to its extreme running conditions \cite{supremacy}, which makes the trust issue between the client and the server(s) more problematic.
\end{itemize}
On the other hand, quantum computation verification faces new difficulties that do not exist in the classical world:
\begin{itemize}
\item In quantum world, measurements are generally destructive, which is very different from the classical world. This means we can't trace and see what is happening during executions of quantum algorithms, which forbids an intuitive way of verifying computations.
\item In a classical world, a user that holds a small computation device could always simulate a slightly larger scale computation by using real-world storage devices to enlarge its memory. Storing classical information in real life is generally cheap. Studies of computation verification in the classical world generally aims at verification of very large scale computations or more advanced functionalities. \cite{GGP} However, quantum memory is not necessarily cheap. That implies, a user that already holds a quantum computer will still need to worry about the validity of outputs of larger scale quantum computations claimed by other untrusted parties.\par
\end{itemize}
We review some verification methods that are (possibly) practically useful but do not follow Definition \ref{defn:qcv1}, and discuss their restrictions.
\begin{itemize}
	\item Cross-check of different quantum devices. This method relies on the assumption that either at least one quantum computer is reliable, or they will not maliciously deviate in a similar way. However, based on the development of classical computation technologies, it's possible that in the long run only a small number of nations or companies could be able to build large scale quantum computers. In this situation, this technique may not be sufficient for building trust in a large scale. 
		\item Verification by solving problems in $NP\cap BQP$, like the factorization problem \cite{Shorsalg}. However, good choices for this class of problems are limited and (as far as we know) do not contain many important quantum computation algorithms like Hamiltonian simulation \cite{Hamiltonian}. With only this verification method, a malicious server could choose to behave honestly only on these specific problems and deviate on all the other problems.
\end{itemize}
Besides the basic conditions given in Definition \ref{defn:qcv1}, there are various additional factors that people pay attention to. These include the assumptions used, complexities, etc. One very desirable property is the client could be completely classical. This is called the classical verification of quantum computation (CVQC) problem, which is the focus of this paper.\par

\subsection{Existing Works}
\subsubsection{Verification of quantum computations}
There are various approaches for the verification of quantum computations. \cite{GKK} 
\begin{itemize}\item One approach that has a very long history is verification with a single quantum server and a client with a bounded quantum memory. These protocols include the Clifford-authentication-based protocol \cite{ABEM}, polynomial-code-based protocol \cite{ABEM}, protocols based on measurement-based quantum computation and trap qubits \cite{UVBQC}, the receive-and-measure protocol based on Hamiltonians \cite{FHM}, verification by randomly selected round types \cite{B15}, etc. These protocols generally require a small (for example, single-qubit) quantum device on the client side, and are information-theoretically (IT-) secure; however, the client still needs to do quantum computations, and for all the existing verification protocols that achieve IT-security in this setting, the client side quantum computations (thus also the total complexity) are at least linear (or even more) in the circuit size \cite{ABEM,FKD,UVBQC,FHM,B15}.
\item There is also a long history of verification with multiple entangled quantum servers and a completely classical client. \cite{RUV,leash,svqc} It is assumed that there is no communication among servers, thus the client can make use of the joint behavior of these servers to test each other. This class of protocols generally achieves information-theoretical security; but the requirement of multiple non-communicating entangled servers might be costly to guarantee at a scale in practice.
\item A relatively new approach is to base the protocols on computational assumptions. Early stage works like \cite{ADSS} do not achieve classical verification. Mahadev constructed the first classical verification of quantum computation (CVQC) protocol in \cite{MahadevVerification}. This protocol is based on a new primitive called \emph{noisy trapdoor claw-free functions}, which can be constructed from the Learning-With-Errors assumption. Based on this work, a series of new CVQC protocols are developed \cite{CCT,ACGH,CLLW20} which improve \cite{MahadevVerification} in different ways.\end{itemize}
 As said before, single-server cryptography-based CVQC is possible by \cite{MahadevVerification}. Considering the preciousness of quantum computation resources, the next factor to consider after proving the existence might be to find a protocol with lower complexity. However, the currently fastest existing works run in $O(|C|^3)$ time complexity for verifying a circuit of size $|C|$, for a fixed security parameter. In more detail:
\begin{itemize}
\item The original Mahadev's protocol builds on the Hamiltonian-based approach \cite{FHM} which leads to a cubic complexity. This complexity is inherited by the series of works built on it \cite{CCT,ACGH,CLLW20}.
\item There are also works that take the approach of remote state preparation like \cite{GVRSP,qfactory}. Their complexities are polynomial but are unestimated; a back-of-envelope calculation shows their complexities might be very large.\footnote{We point out that the $O(1/\epsilon^3)$ and $O(T^4)$ complexities claimed in Section 1.1 of \cite{GVRSP} underestimate the real complexities of their protocols. A calculation following its security proofs gives a much higher complexity. We thank the author(s) of \cite{GVRSP} for confirming it.}
\item Even if we allow the usage of multiple quantum servers, the problem still exists in a sense: although there exists a quasi-linear time protocol \cite{leash}, the no-communication requirement is an a-priori assumption, and it's not known how to base it on relativity-based space separation---which means guaranteeing the separation of different servers will be hard to achieve in a practical quantum network environment. If we focus on multi-server protocols where the no-communication condition can be based on space-like (relativity-based) separation, the fastest protocol known is still in cubic time \cite{svqc}.	
\end{itemize}
For an intermediate-size problem, cubic complexity might already be too large to run in practice, especially for quantum computations. (For all the protocols listed above, the total time complexities are equal to the complexities of the server's quantum computations; we will simply use ``complexity'' to mean both.) This leads to the following question:
\begin{center}
\emph{Could classical verification of quantum computations be faster?}	
\end{center}
\subsubsection{Related problem: remote state preparation}\label{sec:1.2.2}
A very basic notion in quantum cryptography that our work will be closely related to is \emph{remote state preparation}, raised in \cite{firstrsp}. There are different security notions for remote state preparation, including blindness and verifiability \cite{DK16,GVRSP}; We focus on  remote state preparation with verifiability (RSPV). In this problem, ideally, the client wants to send a uniformly random state from a state family. The client wants to use a protocol to interact with the server, so that when the protocol completes, if the server is not caught cheating, the server should hold the ideal state (approximately), as if the client prepares and sends it directly. This property is called the verifiability of remote state preparation. Necessarily, this notion of verifiability is defined up to a server-side isometry\footnote{There is a subtle difference between \emph{isometry-based RSPV} and \emph{simulation-based RSPV}; in the formal proof we use simulation-based notion (Definition \ref{defn:rspvv}) but for the informal discussion we blur the differences between them.}: the server could choose the basis freely and use this basis for all of its own operations, and there is no way to detect this change-of-basis from the outside.\par
This notion is basic and very useful in quantum cryptography. To demonstrate its applications, we note that many existing quantum cryptographic protocols have the following structure \cite{UBQC,UVBQC}: \begin{enumerate}\item The client first prepares some quantum gadgets (small size secret states) and sends them to the server; \item Both parties interact classically to achieve some tasks. We will call this step the \emph{gadget-assisted protocol}.

 \end{enumerate}
An undesirable property of these protocols is the client still needs to prepare and send (possibly many) quantum gadgets in the first step. RSPV could be used to replace the first step above approximately; if the RSPV protocol only relies on classical channels, we get a compiler that compiles a quantum channel protocol to a classical channel protocol with a similar functionality.\par
The authors of \cite{DK16} consider whether it's possible to design a classical-channel RSPV protocol for the gadgets used in \cite{UBQC,UVBQC} etc. In \cite{UBQC} the set of possible gadgets is $\{\ket{+_\theta}=\frac{1}{\sqrt{2}}(\ket{0}+e^{\theta\mi\pi/4}\ket{1}):\theta\in \{0,1\cdots 7\}\}$; in \cite{UVBQC} computational basis states $\{\ket{0},\ket{1}\}$ are added to the state family to support more advanced functionalities. The success of Mahadev's technique \cite{BCMVV,MahadevVerification} leads to a series of works on the possibility of constructing classical channel RSPV protocols for these state families \cite{GVRSP,qfactory}.\par
 Usually RSPV is defined on a small state family (for example, $\ket{+_{\theta}}$ discussed above), while gadget-assisted protocol in general requires a large number of such gadgets. For convenience we define a variant of RSPV that takes the gadget number $L$ as the input:
 \begin{defn}[Informal]\label{defn:infrspv}
 	RSPV for $L$ gadgets in the form of $\{\ket{+_\theta}=\frac{1}{\sqrt{2}}(\ket{0}+e^{\theta\mi\pi/4}\ket{1}):\theta\in \{0,1\cdots 7\}\}$ is defined to be a protocol that takes $1^L$ as inputs and satisfies:
 	\begin{itemize}
 	\item (Completeness) If the server is honest, it gets $\ket{+_{\theta^{(1)}}}\ket{+_{\theta^{(2)}}}\cdots \ket{+_{\theta^{(L)}}}$ in the end, where each of $\theta^{(1)}\cdots \theta^{(L)}$ are uniformly independently random in $\{0,1\cdots 7\}$. The client gets $\theta^{(1)}\cdots \theta^{(L)}$.
 	\item (Verifiability) For any (efficient) malicious server, if it could pass the protocol with significant probability, the joint state of the client and the server on the passing space is approximately indistinguishable to the honest state up to a server-side isometry.
 	\end{itemize}

 \end{defn}

 The authors of \cite{GVRSP} provide a positive answer to this problem using only classical channel; independently \cite{qfactory} also provides a candidate protocol whose security is shown against a restricted form of adversaries.\par
 However, the time complexities of protocols in \cite{qfactory,GVRSP} are not clear. Although both protocols are in polynomial time, the complexities are either completely implicit \cite{qfactory} or not fully calculated \cite{GVRSP} (a back-of-envolope calculation shows the order of the polynomial is tens or hundreds.) Considering the wide applications of classical channel RSPV, we ask the following question:
 \begin{center}
 \emph{Could classical channel RSPV for $L$ gadgets in the form of  $\{\ket{+_\theta},\theta\in \{0,1\cdots 7\}\}$ be faster?}	
 \end{center}
 An answer to this question could also open the road to RSPV protocols for more general state families. 
\subsubsection{Related works: a review of existing applications of CVQC and RSPV}\label{sec:1.2.3}
Since \cite{MahadevVerification}, there have been a series of works that built on the protocol or its techniques.
\begin{itemize}
	\item CVQC protocols with improvements over \cite{MahadevVerification}: the authors of \cite{CCT}, \cite{ACGH} construct non-interactive (2-rounds) CVQC protocols; \cite{CCT} further construct a protocol where the client-side classical computation is in only $\fpoly(\kappa)$ time. The authors of \cite{CLLW20} constructs blind, constant-round CVQC protocols for sampling problems.
	\item Multiparty quantum computation: The author of \cite{B21} constructs multiparty quantum computation protocols over classical channel, and constructs a composable bind CVQC protocol along the way from \cite{MahadevVerification, Mahadev2017ClassicalHE,CLLW20}.
	\item Zero-knowledge: \cite{VidickZ19} constructs classical zero-knowledge arguments for QMA. \cite{ACGH} constructs a non-interactive zero-knowledge protocol for QMA.
	\item Obfuscation: \cite{BM21} constructs an indistinguishability obfuscation scheme for null quantum circuits, based on non-interactive CVQC protocol with special properties. This results implies a series of fancy functionalities including publicly-verifiable NIZK, k-SNARG, ZAPR for QMA, attribute-based encryption for BQP etc, as discussed in \cite{BM21}.
\end{itemize}
All of these protocols have a cubic time complexity blowup inherited from \cite{MahadevVerification}.\par 
Since \cite{BCMVV}, there are also a series of works on classical-channel RSPV and related problems.
\begin{itemize}
	\item \cite{GVRSP,qfactory} construct classical-channel RSPV protocols (with or without proofs) for non-trivial single qubit state families, and  show these protocols could be useful for important problems like composable CVQC.
	\item \cite{MV21} constructs a single-server self-testing protocol; this leads to a new protocol for device-independent quantum key distribution \cite{MDCA}.
	\item \cite{GMP} construct a new RSPV protocol that allows for preparation of a large number of BB84 states; in \cite{FWZ} the authors construct a parallel single-server self-testing protocol. As shown in \cite{GMP}, these type of RSPV protocols could lead to a series of classical-channel protocols for problems including unclonable quantum encryption, quantum copy-protection, and more. 
\end{itemize}
These protocols, although polynomial-time, have very large or unestimated time complexity based on current analysis.\footnote{We note the settings of these protocols are not the same, and we consider the following setting for a fair comparison: if the protocol only considers the preparation of a single state, we consider its $L$-fold repetition and require the total error to be a constant; for protocols with small soundness error like \cite{FWZ}, we consider its repetition-based amplification that takes it to constant soundness error.}

\subsection{Our Results}\label{sec:1.3}
In this paper we make significant progress for the problems above. We work in the quantum random oracle model (QROM) \cite{QRO}, the ideal model for symmetric key encryptions or hash functions in the quantum world. (See Section \ref{sec:3.3} for a review.)\par
 As our central result, we prove the following:
\begin{thm}\label{thm:main}
Assuming the existence of post-quantum noisy trapdoor claw-free functions, there exists a single server CVQC protocol in QROM such that:
\begin{itemize}
\item The protocol has completeness $\frac{2}{3}$.
	\item For verifying a circuit of size $|C|$, the total time complexity is $O(\fpoly(\kappa)|C|)$, where $\kappa$ is the security parameter.
	\item The protocol has soundness $\frac{1}{3}$ against BQP adversaries. 
\end{itemize}
\end{thm}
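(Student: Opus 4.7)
The plan is to build the CVQC protocol in two modular layers: a fast classical-channel remote state preparation with verifiability (RSPV) subprotocol for the gadget family $\{\ket{+_\theta}:\theta\in\{0,1,\cdots,7\}\}$, and a gadget-assisted verification protocol that consumes these gadgets to check the target circuit. The high-level structure mirrors the one outlined in Section \ref{sec:1.2.2}: first prepare $O(|C|)$ one-qubit gadgets on the server via classical interaction, then run a measurement-based verification (for instance, UVBQC \cite{UVBQC}) on top of these gadgets. If the first layer runs in $O(\fpoly(\kappa) L)$ time for $L$ gadgets, and the second layer runs in $O(\fpoly(\kappa)|C|)$ time given the gadgets, the overall complexity is linear in $|C|$.

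For the first layer, I would design the RSPV subprotocol from noisy trapdoor claw-free functions \cite{BCMVV} combined with the quantum random oracle \cite{QRO}. The server commits, in superposition, to preimages of NTCF instances, which binds it to an underlying $\theta\in\{0,1,\cdots,7\}$ per gadget. The client then issues short random standard-basis or Hadamard-basis challenges on a random subset of the gadgets, and uses the QROM to hash large server responses down to short binding classical messages. An honest server ends up holding $\ket{+_{\theta}}$ on the remaining gadgets, while any deviation is caught with constant probability via the adaptive hardcore-bit property of NTCFs. For the second layer, I would feed the RSPV output into a UVBQC-style verification whose acceptance statistics depend only on measurement outcomes of the gadgets and are therefore invariant under a server-side isometry.

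The two layers are then glued with a composition lemma: if the RSPV output state is computationally indistinguishable (up to a server-side isometry on the passing branch) from the honest state of $L$ independent uniform $\ket{+_\theta}$'s, then substituting it for the trusted gadget preparation inside UVBQC preserves completeness and soundness up to negligible slack. The target constants $2/3$ and $1/3$ in the theorem statement then follow from standard gap amplification (parallel repetition or majority voting) applied to the underlying gadget-assisted protocol, which only multiplies the linear complexity by a constant.

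The hard part, I expect, will be proving verifiability of the RSPV subprotocol with only $O(\fpoly(\kappa)L)$ total time and constant rounds. Existing candidates \cite{GVRSP,qfactory,GMP,FWZ} incur very large polynomial overhead either because they test gadgets sequentially, losing a factor of $L$, or because their NTCF-based soundness reductions rewind the adversary many times, losing further large polynomial factors. To beat both, I would avoid sequential testing by bundling all $L$ gadgets into a single parallel challenge, and avoid rewinding by exploiting the programmability of the QROM to build a straight-line extractor that recovers the committed $\theta^{(i)}$ simultaneously. Closing the quantitative gap---showing that the joint cheating probability across all $L$ gadgets is controlled by a single application of NTCF hardness, rather than by a union bound scaling with $L$---is where the bulk of the technical work will sit, and where the QROM is likely to be indispensable.
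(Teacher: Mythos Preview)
Your two-layer decomposition (RSPV for $\{\ket{+_\theta}\}$ followed by a gadget-assisted verification protocol, glued by a composition lemma) matches the paper's architecture exactly, and your identification of the linear-time RSPV as the bottleneck is correct. However, your sketch of the RSPV layer has several genuine gaps that the paper's construction is specifically designed to close.

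First, the mechanism you describe---NTCF commitments plus standard/Hadamard challenges on a random subset, with deviations caught ``via the adaptive hardcore-bit property''---does not produce $\ket{+_\theta}$ states at all. NTCF evaluation yields $\ket{x_0}+\ket{x_1}$; getting the eight phases onto these superpositions, and then \emph{verifying} the phases, is the entire problem. The paper does this via phase lookup tables encrypted under a separate \emph{helper gadget} whose keys are destroyed by a RO-padded Hadamard test; this ``switch'' is what lets the client later argue that the server cannot retroactively change the phases. Nothing in your proposal supplies this step. (Incidentally, the paper explicitly does \emph{not} use the adaptive hardcore bit.)

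Second, your plan to beat the $L^2$ blowup by ``bundling all $L$ gadgets into a single parallel challenge'' names the goal but not a mechanism. Cut-and-choose on individual gadgets provably incurs the blowup you mention; the paper's way around it is a specific \emph{combine-and-test} technique: the client has the server coherently merge all $L$ gadgets into one large two-branch superposition, then runs a single Hadamard test on the merged object (the collective phase test) and, separately, merges a random subset and tests that (the basis norm test). These combined tests have one-sided error and together certify the full $L$-gadget state up to constant error in linear time. The remaining self-testing of phases is done on a \emph{single} distinguished gadget (the individual phase test), which has two-sided error with optimal winning probability $\cos^2(\pi/8)$; this is why the paper introduces an intermediate ``pre-RSPV'' notion with $\ftest/\fquiz/\fcomp$ round types and then amplifies. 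Your proposal has no analogue of any of these pieces.

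Third, your soundness strategy---QROM programmability and a straight-line extractor recovering all $\theta^{(i)}$---is not what the paper does and would run into the complex-conjugate obstruction: no classical-channel test can distinguish $\theta$ from $-\theta$, so there is no well-defined $\theta^{(i)}$ to extract. The paper instead proves approximate invariance of the joint state under explicit randomization operators $\cR_1,\cR_2,\cP$ acting on the client's phase registers controlled by the server's branch, and uses blinded-oracle arguments rather than extraction.
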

This means we construct a CVQC protocol that runs in time only linear in the circuit size $|C|$, which is optimal in terms of dependence on $|C|$. The noisy trapdoor claw-free functions (NTCF) \cite{BCMVV} in this theorem could be constructed from the Learning-With-Errors assumption, as given in \cite{BCMVV}. (See Section \ref{sec:ntcf} for a review.) The random oracle could be heuristically instantiated by a symmetric key encryption or hash function in practice, which is called the random oracle methodology \cite{roreliable}. Both are widely-used assumptions in cryptography.
\par
The $\fpoly(\kappa)$ in Theorem \ref{thm:main} is only linear in the time complexity of the noisy trapdoor claw-free functions (and the hash functions if we instantiate the random oracle).\par
Along the way, we construct a classical channel RSPV protocol for $\{\ket{+_\theta},\theta\in \{0,1\cdots 7\}\}$ that runs in linear time and constant rounds:
\begin{thm}[Informal]\label{thm:infrspv}There exists a classical channel RSPV protocol  for $L$ gadgets in the form of $\{\ket{+_\theta},\theta\in \{0,1\cdots 7\}\}$ in QROM that runs in time $O(\fpoly(\kappa)L)$ and constant rounds.
\end{thm}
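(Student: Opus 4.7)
The plan is to construct Protocol \ref{prtl:rspvr} in layers, starting from a single-qubit classical-channel preparation primitive and then adding phases of the form $\theta\pi/4$ blindly and in parallel over $L$ gadgets, with the random oracle used throughout to collapse both round and work overhead.

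The first layer is a Mahadev-style preparation of $\ket{+}$: the client samples an NTCF key with trapdoor, the server evaluates $f_b$ in uniform superposition on an auxiliary register and measures the image, so it is left with the standard claw state $\frac{1}{\sqrt{2}}(\ket{0}\ket{x_0}+\ket{1}\ket{x_1})$; a Hadamard-basis measurement of the preimage register then leaves a single qubit of the form $\ket{+_{\theta_0}}$ with $\theta_0\in\{0,4\}$, where the phase bit equals $d\cdot(x_0\oplus x_1)$ and is computable by the trapdoor holder. Running $L$ copies in parallel yields $L$ qubits in $\ket{\pm}$ with phases known to the client. The second layer promotes $\ket{\pm}$ to $\ket{+_\theta}$ with $\theta\in\{0,\dots,7\}$ by writing $\theta=4a+2b+c$ and blindly applying $Z^a S^b T^c$. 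I would realize this via a classical-channel phase-gadget teleportation: the client commits to $a,b,c$ using a QROM-based commitment and forces the server to perform the corresponding controlled-phase gate onto a freshly prepared auxiliary claw state whose basis the client has chosen. Each such teleportation costs only a constant number of NTCF evaluations and hash calls per gadget, keeping the per-gadget cost at $O(\fpoly(\kappa))$.

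Verifiability is enforced through a constant-size menu of test rounds, each chosen with non-trivial probability: computational-basis tests on the preimage registers, Hadamard-basis tests in the style of $\fHT$, and dedicated phase tests such as $\fPhTest$ and $\fCoPhTest$ that check the $S$ and $T$ phases up to Clifford conjugates. To keep round complexity constant and total work linear, I would batch all tests across the $L$ gadgets by deriving the per-gadget challenges from a single random-oracle query, so that all of the server's answers can be verified in one sweep.

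The main obstacle will be the verifiability proof in the parallel regime: a naive hybrid argument that extracts one gadget at a time would incur at least a polynomial blow-up in the simulation, killing the linear-time guarantee. To circumvent this I would invoke the QROM measure-and-reprogram and compressed-oracle toolkit to extract the server's implicit commitments for all $L$ gadgets in a single global pass, then run one hybrid that simultaneously swaps in the ideal phases while bounding the change by a union bound whose per-gadget term is negligible in $\kappa$ but independent of $L$. Establishing this global-extraction-plus-one-shot-hybrid argument cleanly, and reconciling it with the server-side-isometry formulation of verifiability in Definition \ref{defn:infrspv}, is where the real technical weight of the theorem lies.
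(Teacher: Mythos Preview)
Your proposal has a genuine structural gap that the paper explicitly identifies and solves. You propose per-gadget phase tests ``batched across the $L$ gadgets by deriving the per-gadget challenges from a single random-oracle query,'' with verifiability rescued by a global compressed-oracle extraction. But batching only controls the round count, not the total statistical work: as the paper argues in Section~\ref{sec:2.6} and the discussion of $\fCoPhTest$, if each gadget is tested locally then controlling the total error of $L$ gadgets to a constant forces per-gadget error $O(1/L)$, and detecting an $O(1/L)$ deviation on a single state scales quadratically, giving $\Omega(L^2)$ total tests. Your one-shot hybrid/extraction sketch does not evade this---it is a statement about simulator efficiency, not about how many physical test instances are needed to certify $L$ independent phases to constant total error. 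The paper's way out is structurally different: the \emph{combine-and-test} mechanism (Protocol~\ref{prtl:combine}) merges all $1{+}L$ gadgets into a single big gadget via lookup tables and runs one Hadamard test on the combined object; this certifies a global \emph{pre-phase-honest} form in one shot, and only a constant-cost $\fInPhTest$ on the distinguished $0$-th gadget is then needed to upgrade to \emph{phase-honest}. A separate $\fBNTest$ with a random index subset (not all-combine) handles the branch norms. None of this is cut-and-choose, and the security proof is not via measure-and-reprogram but via explicit randomization operators $\cR_1,\cR_2,\cP$ acting jointly on client phase registers and server branches.

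A second gap is how the phases stay hidden. Your ``QROM-commit to $a,b,c$ and teleport $Z^aS^bT^c$'' does not address the issue raised at the end of Section~2.2: once the client reveals the keys $K$ in the output step, the server can decrypt whatever table encoded the phases and learn $\theta^{(i)}$, violating verifiability. The paper's fix is the \emph{helper gadget method} (Protocol~\ref{prtl:stwh}): the phase tables are encrypted under a separate helper key pair $K^{(\mathrm{helper})}$, and before revealing $K$ the server is forced through an RO-padded Hadamard test on the helper gadget, which (Theorem~\ref{thm:pht}) makes the blinded oracle on $\{0,1\}^\kappa\|K^{(\mathrm{helper})}\|\cdots$ indistinguishable from the real one for any efficient continuation. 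This ``switch'' is what lets the server evaluate the phase table honestly yet lose access to it afterward; your proposal needs an analogous mechanism, and the gadget-teleportation sketch as written does not supply one.
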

The construction of the RSPV protocol is completed in Protocol \ref{prtl:rspvr} and the construction of the CVQC protocol is completed in Protocol \ref{prtl:cvqc}. Their proofs are completed in the corresponding sections. 
\paragraph{A quick summary of technical innovations} We develop a set of techniques that are very different from existing CVQC or RSPV protocols. At a high level, we give up the Hamiltonian approach used in many existing works and seek for a fast RSPV protocol as an intermediate step towards a fast CVQC protocol. As discussed before, our RSPV protocol aims at preparing states in the form of $\ket{+_{\theta}}$.\par
This problem is nontrivial even without considering the complexity. Existing works that are powerful enough to handle this type of states, like \cite{GVRSP}, work as follows at a high level: the client instructs the server to do an NTCF \cite{BCMVV} evaluation followed by a partial measurement, which creates $\ket{+_{\theta}}$ (or similar states) for a random $\theta$ on the server-side; the client could calculate $\theta$ from the server's response and its secret information (trapdoor etc). Then a series of tests are probabilistically executed on this state, where the client asks the server to measure the qubit in a basis (either related to $\theta$ or unrelated to $\theta$) and uses the server's feedback to check it has really prepared $\ket{+_{\theta}}$ as expected. Importantly, \cite{GVRSP} designed a test based on quantum random access code \cite{ALMO}.\par
We give a very brief overview of our protocol as follows. In Section \ref{sec:2} we give a detailed technical overview.
\begin{enumerate}
	\item To allow the honest server to get the state, different from existing works, we make use of the \emph{phase table} construction \cite{revGT1}, coming from a work on a different quantum delegation problem, that generalizes garbled tables \cite{YaoGCOrigin} construction into the quantum world. This technique could not directly create the single-qubit state $\ket{+_{\theta}}$; instead, it creates an encoded form of this state, which is $e^{\theta_0\mi\pi/4}\ket{x_0}+e^{\theta_1\mi\pi/4}\ket{x_1}$, where $x_0,x_1$ are long keys held secretly by the client, and $\theta_0,\theta_1$ are sampled randomly such that $\theta_1-\theta_0=\theta$.\par
	The advantage of doing this is it allows us to design a series of new tests that are not possible on $\ket{+_\theta}$ states. But it results in a serious cost: The client needs to reveal keys to allow the server to decode the state and get $\ket{+_{\theta}}$. But doing this directly turns out to be insecure since the revealed keys together with the phase tables allow the adversary to break the protocol. To address this problem, we develop the \emph{switch gadget technique} (see Section \ref{sec:2.3} for a detailed review of this technique). This technique, in a sense, allows the client to directly reveal the keys without sacrificing the necessary secrecy of the phase tables. We consider this part as the central step of our protocol.
	\item Then we design a series of new tests that allow the client to test the server's states. (See Section \ref{sec:2.4} for a further review.) The difficulty is we not only want these tests to verify the server's state is indeed honest, but also want the whole tests to be in linear time when applied on $L$ states. Existing works like \cite{GVRSP} design tests that work on each state separately; unluckily, as discussed in Section \ref{sec:2.7}, there are inherent barriers to get linear time protocols if we want the overall error of all these states to be within a constant.\par
	In our design of tests, there are tests that work on all these gadgets collectively, which allows us to bypass this barrier. 
\end{enumerate}
We put a series of diagrams that illustrate the differences of previous RSPV protocols and our protocols. We first show how our protocol goes in the single-gadget setting, then show the multi-gadget setting. The states below are server-side states in the honest setting, and the client knows all the information.
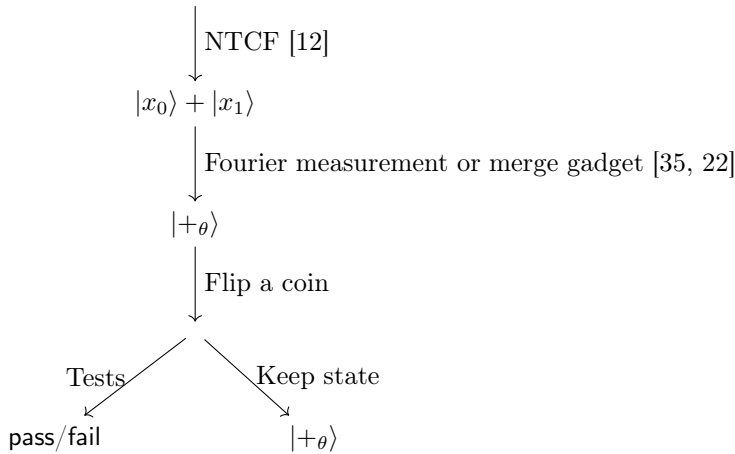
\begin{figure}
\centering
	\begin{tikzpicture}
		\node (a00) at (0,0) {};
		\node[right=of a00] (a1) {};
		\node[right=of a1] (a2) {};
		\node[right=of a2] (a3) {};
		\node[right=of a3] (a4) {};
		\node[right=of a4] (a5) {};
		\node[right=of a5] (a6) {};
		\node[left=of a00] (b1) {};
		\node[left=of b1] (b2) {};
		\node[left=of b2] (b3) {};
		\node[left=of b3] (b4) {};
			\node[left=of b4] (b5) {};
		\node[below=of a00] (a01) {$\ket{x_0}+\ket{x_1}$};
		\node[below=of a01] (a02) {$\ket{+_\theta}$};
		\node[below=of a02] (a03) {};
		\node[below right=of a03] (a04) {$\ket{+_\theta}$};
		\node[below left=of a03] (a05) {$\fpass$/$\ffail$};
		\draw[->] (a00)--(a01) node[midway,right] {NTCF \cite{BCMVV}};
		\draw[->] (a01)--(a02) node[midway,right] {Fourier measurement or merge gadget \cite{GVRSP,qfactory}};
		\draw[->] (a02)--(a03) node[midway,right] {Flip a coin};
		\draw[->] (a03)--(a04) node[midway,right] {Keep state};
		\draw[->] (a03)--(a05) node[midway,left] {Tests};
	\end{tikzpicture}
	\caption{Simplified gadget creation and testing outline in previous works\cite{GVRSP,qfactory}}
\end{figure}
\begin{figure}
\centering
	\begin{tikzpicture}
		\node (a00) at (0,0) {};
		\node[right=of a00] (a1) {};
		\node[right=of a1] (a2) {};
		\node[right=of a2] (a3) {};
		\node[right=of a3] (a4) {};
		\node[right=of a4] (a5) {};
		\node[right=of a5] (a6) {};
		\node[left=of a00] (b1) {};
		\node[left=of b1] (b2) {};
		\node[left=of b2] (b3) {};
		\node[left=of b3] (b4) {};
			\node[left=of b4] (b5) {};
		\node[below=of a00] (a01) {$\ket{x_0}+\ket{x_1}$};
		\node[below=of a01] (a02) {$e^{\theta_0\mi\pi/4}\ket{x_0}+e^{\theta_1\mi\pi/4}\ket{x_1}$};
		\node[below=of a02] (a03) {};
		\node[below right=of a03] (a04) {$\ket{+_\theta}$};
		\node[below left=of a03] (a05) {$\fpass$/$\ffail$};
		\draw[->] (a00)--(a01) node[midway,right] {NTCF \cite{BCMVV}};
		\draw[->] (a01)--(a02) node[midway,right] {Phase table \cite{revgt}};
		 \draw[->] (a02)--(a03) node[midway,right] {Flip a coin};
		\draw[->] (a03)--(a04) node[midway,right] {\begin{tabular}{l}Reveal keys to allow the server to decode\\Security provided by switch gadget technique\end{tabular}};
		\draw[->] (a03)--(a05) node[midway,left] {Different tests};
	\end{tikzpicture}
	\caption{Simplified gadget creation and testing outline in our protocol, for a single gadget}
\end{figure}
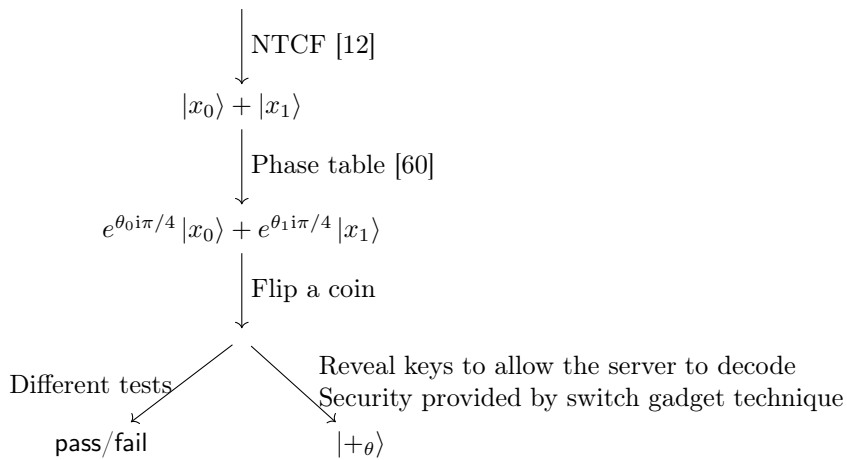
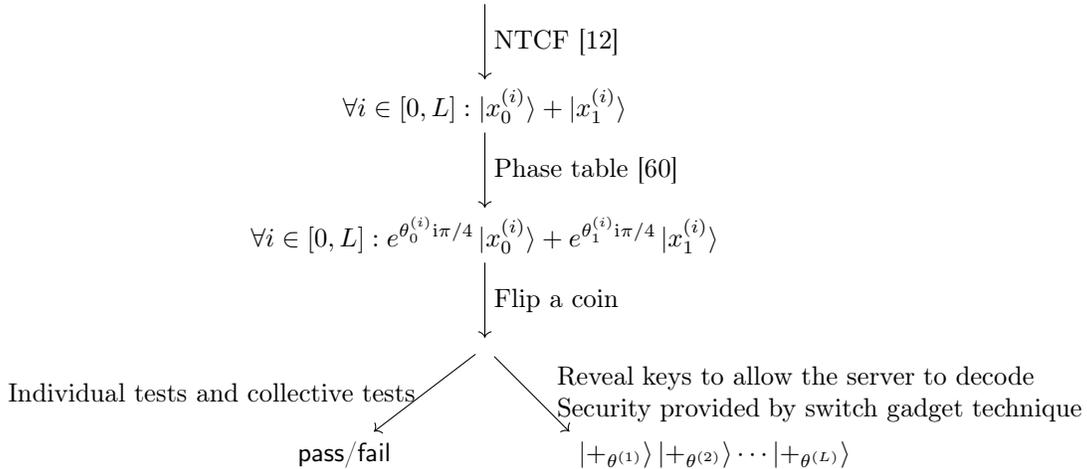
\begin{figure}
\centering
	\begin{tikzpicture}
		\node (a00) at (0,0) {};
		\node[right=of a00] (a1) {};
		\node[right=of a1] (a2) {};
		\node[right=of a2] (a3) {};
		\node[right=of a3] (a4) {};
		\node[right=of a4] (a5) {};
		\node[right=of a5] (a6) {};
		\node[left=of a00] (b1) {};
		\node[left=of b1] (b2) {};
		\node[left=of b2] (b3) {};
		\node[left=of b3] (b4) {};
			\node[left=of b4] (b5) {};
		\node[below=of a00] (a01) {$\forall i\in [0,L]:\ket{x_0^{(i)}}+\ket{x_1^{(i)}}$};
		\node[below=of a01] (a02) {$\forall i\in [0,L]:e^{\theta_0^{(i)}\mi\pi/4}\ket{x_0^{(i)}}+e^{\theta_1^{(i)}\mi\pi/4}\ket{x_1^{(i)}}$};
		\node[below=of a02] (a03) {};
		\node[below right=of a03] (a04) {   };
		\node[below right=of a03] (a04b) {$\ket{+_{\theta^{(1)}}}\ket{+_{\theta^{(2)}}}\cdots \ket{+_{\theta^{(L)}}}$};
		\node[below left=of a03] (a05) {$\fpass$/$\ffail$};
		\draw[->] (a00)--(a01) node[midway,right] {NTCF \cite{BCMVV}};
		\draw[->] (a01)--(a02) node[midway,right] {Phase table \cite{revgt}};
		 \draw[->] (a02)--(a03) node[midway,right] {Flip a coin};
		\draw[->] (a03)--(a04) node[midway,right] {\begin{tabular}{l}Reveal keys to allow the server to decode\\Security provided by switch gadget technique\end{tabular}};
		\draw[->] (a03)--(a05) node[midway,left] {Individual tests and collective tests};
	\end{tikzpicture}
	\caption{Simplified gadget creation and testing outline in our protocol, for multiple gadgets}
\end{figure}
\subsection{Discussion}
We first remark several limitations of our results that we do not aim to solve in this work.
\begin{itemize}
	\item Although our remote state preparation protocol runs in constant rounds, we do not aim to construct a constant round CVQC protocol. The underlying gadget-assisted verification protocol runs in linear rounds which implies the overall round complexity of our CVQC protocol is also linear.
	\item We do not focus on optimizing the hidden constant in the big-O notation. The current constant blow-up is far from being practical, but it's largely from the very loose security analysis (even elementary calculations in our work could be far from being tight). Despite being loose, in this work we still give explicit bounds for most of the constants to set a record for future improvements, which could be important for the studies of practicality of these protocols.
	\item The security of our RSPV protocol is defined up to constant error and possibly unbounded isometry, and does not take composability into consideration. (See Section \ref{sec:4.3} for details.) This is sufficient for constructing our CVQC protocol but will restrict the application scenarios of our RSPV.
	\item The circuit description and complexity are in the MBQC model \cite{MBQC}, since one existing work that our work relies on \cite{FKD} is in the MBQC model. In general there is an additional blowup of circuit width when a circuit in the usual circuit model is transformed into MBQC model, due to the fact that in the usual circuit model gates can be applied onto arbitrary wires while in the MBQC model only interactions between neighbors are natively supported and long-range gates come with a cost. It's debatable which model is more suitable for modeling quantum computations; nevertheless it's better to have protocols for both models and the usual circuit model version remains to be resolved.\footnote{We thank anonymous reviewers and Simons reunion program attendees for pointing this out.}
\end{itemize}
It will be desirable to see the resolutions of these problems.\par
Besides the problems above, our results also naturally give rise to the following questions.
\begin{itemize}
	\item Could we use either CVQC or RSPV in our work to construct new faster protocols for other problems?\par
	As discussed in Section \ref{sec:1.2.3}, \cite{MahadevVerification,BCMVV} leads to a series of protocols for various different problems. It is promising to explore the possibility of using our protocols or techniques to construct fast protocols for these problems.
	\item Could we replace the random oracle by standard model assumptions, and prove its security formally?\par
	We feel our protocols have a relatively clean and clear structure, and the usage of the random oracle is not involved. The applications of the random oracle in our protocol construction are as follows:
	\begin{itemize}
	\item We use the random oracle to construct the underlying symmetric key encryption scheme used in lookup-tables (see Section \ref{sec:2.2} for a discussion).
	\item We use the random oracle in RO-padded Hadamard tests \cite{jiayu20}. (We will mention it in Section \ref{sec:2.3}.)
	\end{itemize}
	It is an intriguing question to instantiate the random oracle (or reducing its usage) with a formal proof.
\end{itemize}
We believe our work, together with answers to these questions, will be important for theoretical and practical development of secure quantum computations.
\subsection{Paper Organizations}
This paper is organized as follows.
\begin{enumerate}
\item Section \ref{sec:1} is the introduction of the background and our results. Then in Section \ref{sec:2} we give a technical overview of our construction.
\item In Section \ref{sec:3} we give a review to the preliminaries. In Section \ref{sec:4r} we formalize the notion of CVQC and RSPV, and introduce the notion of pre-RSPV as an intermediate step.
\item In Section \ref{sec:4} we formalize our pre-RSPV protocol.
\item In Section \ref{sec:5}  to Section \ref{sec:11} we analyze each idea or subprotocol in our construction.
\item In Section \ref{sec:12} we combine all these stand-alone analysis of subprotocols together to prove the verifiability of the overall pre-RSPV protocol.
\item In Section \ref{sec:13} we use our pre-RSPV protocol to complete the construction of our RSPV and CVQC protocol.
\end{enumerate}

\section*{Acknowledgement}
We thank Thomas Vidick for helpful discussions. We also thank Alexander Poremba, Dominik Leichtle, Simons Quantum Reunion attendees, and annoymous reviewers for discussions.
%
\section{Technical Overview}\label{sec:2}
Let's give an overview of the construction of our protocols.
\subsection{Fast Parallel RSPV for 8-basis Qfactory and Gadget-assisted Quantum Computation Verification}\label{sec:2.1}
At a high level, to construct the fast CVQC protocol, we take the approach of constructing RSPV protocol as an intermediate step.\par
As informally defined in Section \ref{sec:1.2.2}, we consider the parallel version of RSPV for state family $\{\ket{+_\theta}=\frac{1}{\sqrt{2}}(\ket{0}+e^{\mi\theta\pi/4}\ket{1}),\theta\in \{0,1\cdots 7\}\}$. The inputs of the RSPV protocol are the gadget number $1^L$ and security parameter $1^\kappa$. Equivalently we could consider this protocol as an RSPV protocol for a single uniformly random state from a large state family
  \begin{equation}\label{eq:rr1}\{\ket{+_{\theta^{(1)}}}\otimes \ket{+_{\theta^{(2)}}}\otimes \cdots \otimes \ket{+_{\theta^{(L)}}}:\forall i,\theta^{(i)}\in \{0,1\cdots 7\}\}\end{equation}
  In the end the server should get a random element from \eqref{eq:rr1} and the client should get $(\theta^{(i)})_{i\in [L]}$. Equivalently we could express the joint cq-state of the client and server in this ideal functionality as\footnote{Below we express the cq-state by a mixture of density operators and pure states. This notation is not fully standard but is convenient and is indeed used in some places; we do not rely on any operational property of it and use it solely as a notation.}
  \begin{equation}\label{eq:cqtarget0}\sum_{\forall i, \theta^{(i)}\in \{0,1\cdots 7\}}\frac{1}{8^L}\underbrace{\ket{\theta^{(1)}}\bra{\theta^{(1)}}\ket{\theta^{(2)}}\bra{\theta^{(2)}}\cdots \ket{\theta^{(L)}}\bra{\theta^{(L)}}}_{\text{client}}\otimes\underbrace{\ket{+_{\theta^{(1)}}}\otimes \ket{+_{\theta^{(2)}}}\otimes \cdots \otimes \ket{+_{\theta^{(L)}}}}_{\text{server}} \end{equation}
 The road to prove Theorem \ref{thm:main} is as follows.
  \begin{enumerate}
  	\item Construct an RSPV protocol for target state  \eqref{eq:cqtarget0} that runs in time $O(\fpoly(\kappa)L)$.\par
  	This is achieved by Protocol \ref{prtl:rspvr} assuming NTCF and QROM and it also has the desirable property that it only has constant rounds. This proves Theorem \ref{thm:infrspv}.
  	\item Given a circuit $C$ to be verified, find a  gadget-assisted quantum computation verification protocol that uses \eqref{eq:rr1} as the initial gadgets, where the  gadget number $L$ needed is linear in the circuit size $|C|$. This is achieved by existing work \cite{FKD}.
  \end{enumerate}
  \paragraph{Remark} Classical-channel RSPV defined above for arbitrary state families are generally impossible due to the existence of complex-conjugate attack, discussed in \cite{RUV,leash,AGthesis}: the malicious server could choose to execute the complex conjugate of the honest behaviors, and the output state will be the complex conjugate of the target state, which is not isometric to the honest state in general. The client has no way to detect it over a classical channel. However, the state \eqref{eq:cqtarget0} that we aim at  is indeed invariant under complex conjugate: a complex conjugate of \eqref{eq:cqtarget0} is isometric to \eqref{eq:cqtarget0} up to a global phase by a sequence of $\fX$ flips:
  $$\fX\ket{+_{-\theta}}=e^{-\theta\mi\pi/4}\ket{+_\theta}$$
  $$\Rightarrow \fX^{\otimes L}(\ket{+_{-\theta^{(1)}}}\otimes \ket{+_{-\theta^{(2)}}}\otimes \cdots \otimes \ket{+_{-\theta^{(L)}}})=e^{-(\theta^{(1)}+\theta^{(2)}+\cdots \theta^{(L)})\mi\pi/4}\ket{+_{\theta^{(1)}}}\otimes \ket{+_{\theta^{(2)}}}\otimes \cdots \otimes \ket{+_{\theta^{(L)}}}$$
  which means a malicious server executing the complex-conjugate attack is equivalent to the honest server up to a server-side isometry and an undetectable global phase.
\subsection{Lookup-table-based Techniques for State Generation in the Honest Setting}\label{sec:2.2}
So how could we generate such states? Let's use the preparation of one gadget as an example. Suppose we need to prepare the state \begin{equation}\label{eq:rr2}\frac{1}{\sqrt{2}}(\ket{0}+e^{\mi\theta\pi/4}\ket{1})\end{equation} The first idea is to first use  the noisy trapdoor claw-free function (NTCF) \cite{BCMVV} technique to prepare the \emph{key-pair-superposition state}, then use \emph{phase tables} \cite{revgt} to add the phases. In more detail:
\begin{enumerate}
\item As shown in \cite{BCMVV} (see Section \ref{sec:ntcf} for a review), evaluating an NTCF function could result in a state of the following form: the client gets a key pair $K=(x_0,x_1)$, $x_0,x_1\in \{0,1\}^\kappa,x_0\neq x_1$; the server holds the state \begin{equation}\label{eq:1}\frac{1}{\sqrt{2}}(\ket{x_0}+\ket{x_1})\end{equation}
In this paper later we will frequently use the word ``keys'' to denote the $K=(x_0,x_1)$ coming from the NTCF evaluation. (In the construction of NTCF there are also ``secret keys'' and ``public keys'', which are different.)
\item We will use the lookup-table-based techniques \cite{revgt} for adding phases. These lookup-tables have a similar structure to garbled tables \cite{YaoGCOrigin} but do not carry computations directly.\par
For constructing look-up tables, we need an underlying symmetric key encryption scheme $\fEn$ with a key authentication part.\par
 Recall that we are working in the quantum random oracle model (QROM). We use $H$ to denote the random oracle.  A natural construction of $\fEn$ on encryption key $k$ and plaintext $p$ is
\begin{equation}\label{eq:4rps}\fEn_k(p):=(\underbrace{(R,H(R||k)+ p)}_{\text{ciphertext}}, \underbrace{(R^\prime,H(R^\prime||k))}_{\text{key authentication}}); R,R^\prime\leftarrow_r \{0,1\}^\kappa,\end{equation}	

where the addition is over some specific group. Then we define the lookup table $\fLT(x_1\rightarrow r_1,x_2\rightarrow r_2,\cdots x_D\rightarrow r_D)$, or simply $$(x_1\rightarrow r_1,x_2\rightarrow r_2,\cdots x_D\rightarrow r_D),$$as the tuple \begin{equation}\label{eq:r4}(\fEn_{x_1}(r_1),\fEn_{x_2}(r_2),\cdots, \fEn_{x_D}(r_D)).\end{equation}
Each $\fEn_{x_u}(r_u)$ in \eqref{eq:r4} is called a row of this table. Given the table, and one key $x_u$ used in some row, the server could decrypt the corresponding $r_u$ as follows: it first uses the key authentication part of $\fEn$ to find out the index $u$, then it decrypts the ciphertext part of $\fEn_{x_u}(r_u)$ and gets $r_u$.\par

 This type of lookup table technique is very useful in manipulating states in the form of \eqref{eq:1}. One nice property is the table decoding process above could be applied with superpositions of keys, for example, \eqref{eq:1}. Below we show how to use the \emph{phase table} technique \cite{revgt,jiayu20} for adding phases to \eqref{eq:1}.\par
  If the client wants to add a phase of $e^{\theta\mi\pi/4}$ to the $x_1$ basis of $\ket{x_0}+\ket{x_1}$, it can prepare the lookup table that encodes the following classical mapping:
 \begin{equation}\label{eq:r2}(x_0\rightarrow \theta_0, x_1\rightarrow \theta_1),\text{ where $\theta_0,\theta_1\in \{0,1\cdots 7\}$ are sampled randomly such that }\theta_1-\theta_0=\theta,\end{equation}
where we use $\bZ_8$ as the group of addition. Then the adding of phase is achieved with the following mappings honestly:
 \begin{align}
 	&\ket{x_0}+\ket{x_1}\label{eq:r6}\\
 	\text{(Decrypt table) }\rightarrow &\ket{x_0}\ket{\theta_0}+\ket{x_1}\ket{\theta_1}\label{eq:9}\\
 	\text{(Controlled phase) }\rightarrow &e^{\mi\theta_0\pi/4}\ket{x_0}\ket{\theta_0}+e^{\mi\theta_1\pi/4}\ket{x_1}\ket{\theta_1}\\
 	\text{(Decrypt table again) }\rightarrow &e^{\mi\theta_0\pi/4}\ket{x_0}+e^{\mi\theta_1\pi/4}\ket{x_1}\label{eq:r7}
 \end{align}
 where in the last step the decryption outcome is written into the same register that is introduced in \eqref{eq:9}, and the values in this register will be erased, as discussed in \cite{revgt}.
 \item If the server really holds the state in the form of \eqref{eq:r7}, the client could reveal the key pair $K$ and the server could decode the keys from \eqref{eq:r7} and get \eqref{eq:rr2}.
 \end{enumerate}
 This completes the construction in the honest setting. And we can naturally generalize it to prepare $L$ states:
 \begin{enumerate}
 	\item Both parties execute $L$ evaluations of NTCF functions and prepare the following states on the server side:
 	$$\frac{1}{\sqrt{2^L}}(\ket{x_0^{(1)}}+\ket{x_1^{(1)}})\otimes(\ket{x_0^{(2)}}+\ket{x_1^{(2)}})\otimes\cdots \otimes (\ket{x_0^{(L)}}+\ket{x_1^{(L)}})$$
 	while the client knows all the keys.
 	\item The client samples random phase pairs $\Theta=(\Theta^{(i)})_{i\in [L]}$, $\Theta^{(i)}=(\theta^{(i)}_0,\theta^{(i)}_1),\theta_0^{(i)},\theta_1^{(i)}\in_r \{0,1\cdots 7\}^2$. ($\in_r$ means uniformly random sampling.) Then it prepares the following table for each $i\in [L]$:
 	$$(x^{(i)}_0\rightarrow \theta^{(i)}_0,x^{(i)}_1\rightarrow \theta^{(i)}_1)$$
 	and sends all of them to the server.\par
 	The server is able to evaluate the phase tables and get the following state:
 	\begin{equation}\label{eq:gadget}\small\frac{1}{\sqrt{2^L}}(e^{\theta_0^{(1)}\mi\pi/4}\ket{x_0^{(1)}}+e^{\theta_1^{(1)}\mi\pi/4}\ket{x_1^{(1)}})\otimes(e^{\theta_0^{(2)}\mi\pi/4}\ket{x_0^{(2)}}+e^{\theta_1^{(2)}\mi\pi/4}\ket{x_1^{(2)}})\otimes\cdots \otimes (e^{\theta_0^{(L)}\mi\pi/4}\ket{x_0^{(L)}}+e^{\theta_1^{(L)}\mi\pi/4}\ket{x_1^{(L)}})\end{equation}
 	The client calculates $\theta^{(i)}=\theta_1^{(i)}-\theta_0^{(i)}$, $\forall i\in [L]$.
 	\item The client reveals all the keys and the server gets \eqref{eq:rr1} up to a global phase.
 \end{enumerate}
 
Have we got an RSPV protocol for \eqref{eq:cqtarget0}? The protocol above guarantees the honest behavior, but does not provide any security or verifiability. So where does this protocol violate the verifiability property of RSPV?\par
Recall the informal definition of RSPV in Definition \ref{defn:infrspv}. The verifiability property of RSPV requires the server to only hold the target state, it should not be either too little or too much---where ``too little'' means the server's state does not contain a subsystem that is isometric to the target state, and ``too much'' means the server gets additional information about the state descriptions that could not be simulated from the ideal state on the server side.\par
 The problem here is the lookup table will contain ciphertexts that encode phases $\theta^{(i)}$, which could be decrypted after the client reveals $K$ in the final step---which means the server knows too much.\par
  To address this problem we introduce the \emph{switch gadget technique}, which is one of the key ingredients of this work.
\subsection{Switch Gadget Technique, and Phase Update under This Technique ($\fAddPhaseWithswitch$)}\label{sec:2.3}
One important technique that we will use is the \emph{switch gadget technique}. This technique was also used in \cite{jiayu20} (under the name of \emph{helper gadget}), with an early stage  version of analysis techniques; in general this technique does not solve concrete problems on its own, and we need to make smart usage of it and combine it with other techniques.\par
As discussed in the last section, the client needs a way to introduce new phases in a completely secret way. The idea is, instead of using a simple phase table, the protocol will make use of one additional \emph{switch gadget}: 
$$\text{server holds $\frac{1}{\sqrt{2}}(\ket{x_0^{(\switch)}}+\ket{x_1^{(\switch)}})$, client holds }K^{(\switch)}=(x_0^{(\switch)},x_1^{(\switch)}).$$
 Importantly, when the mapping \eqref{eq:r2} is encoded, $K^{(\switch)}$ will also be used as a part of the encryption keys. (We say ``encryption keys'' to mean $k$ in $\fEn_k(p)$ appeared in the construction of tables \eqref{eq:r4}.) In more detail, each encryption in the table has the following form, where the encryption key is the concatenation of two keys:\begin{equation}\label{eq:9r}(\forall b^{(\switch)}\in \{0,1\},b\in \{0,1\}, x^{(\switch)}_{b^{(\switch)}}||x_b\rightarrow \theta_b)\end{equation}
The table \eqref{eq:9r} contains four rows coming from different values of $b^{(\switch)}$ and $b$. Different choices of the switch gadget key correspond to the same encrypted phases.\par
With this table, the same mapping could still be implemented. The server could use similar operations as \eqref{eq:r6} to \eqref{eq:r7}; it could decrypt the table \eqref{eq:9r} with the key superposition it holds, and the switch gadget remains in a product form from the other parts in each step analogous to \eqref{eq:r6} to \eqref{eq:r7}:
\begin{align}
&(\ket{x_0^{(\switch)}}+\ket{x_1^{(\switch)}})\otimes (\ket{x_0}+\ket{x_1})\\
\rightarrow & (\ket{x_0^{(\switch)}}+\ket{x_1^{(\switch)}})\otimes (\ket{x_0}\ket{\theta_0}+\ket{x_1}\ket{\theta_1})\\
\rightarrow & (\ket{x_0^{(\switch)}}+\ket{x_1^{(\switch)}})\otimes (e^{\theta_0\mi\pi/4}\ket{x_0}\ket{\theta_0}+e^{\theta_1\mi\pi/4}\ket{x_1}\ket{\theta_1})\\
\rightarrow & (\ket{x_0^{(\switch)}}+\ket{x_1^{(\switch)}})\otimes (e^{\theta_0\mi\pi/4}\ket{x_0}+e^{\theta_1\mi\pi/4}\ket{x_1})\label{eq:14r}
\end{align}

So why do we need the switch gadget? We will see, starting from \eqref{eq:14r}, the switch gadget will go through an \emph{(RO-padded) Hadamard test} \cite{jiayu20}, which we define below.  The (unpadded) Hadamard test is raised in \cite{BCMVV} and in \cite{jiayu20} it is observed that a random-oracle-padded version of this test certifies the server has to forget these keys; this observation will be formalized in our work in a nicer way\footnote{The fact that quantum techniques allow us to do deletion or revocation is not new in this paper. For example, a series of papers study security of message encryption against key leakage \cite{onewaytohiding,BI,HMNY}, which is now called \emph{certified deletion} (or \emph{proof of deletion}). What's different in our techniques is, the switch gadget will first \emph{allow} the server to decrypt, and revoke the encoded mapping after that. For comparison, in certified deletion the plaintext is protected at all time. This different goal leads to very different constructions from works in certified deletion.}.  For simplicity of the introduction we review the unpadded version, given in \cite{BCMVV}:
\begin{toyprtl}[Hadamard test]\label{toyprtl:1}
Suppose the client holds $K=(x_0,x_1)$, $x_0,x_1\in \{0,1\}^\kappa$, and the server holds $\frac{1}{\sqrt{2}}(\ket{x_0}+\ket{x_1})$.\par
The client asks for a non-zero $d$ such that $d\cdot x_0=d\cdot x_1\mod 2$. The server does a bit-wise Hadamard measurement on $\frac{1}{\sqrt{2}}(\ket{x_0}+\ket{x_1})$ and the output will satisfy the client's testing equation, which comes from the identity $\fH^{\otimes \kappa}(\frac{1}{\sqrt{2}}(\ket{x_0}+\ket{x_1}))=\frac{1}{\sqrt{2^{\kappa-1}}}\sum_{d:d\cdot x_0=d\cdot x_1}\ket{d}$.
\end{toyprtl}
As discussed above, the (RO-padded) Hadamard test satisfies the following informal property: 
\begin{claim}[Successful Hadamard test destroys keys]Suppose the server holds an initial state that satisfies some property (say, the honest initial state $\frac{1}{\sqrt{2}}(\ket{x_0}+\ket{x_1})$). Both parties execute an (RO-padded) Hadamard test. If the server passes the protocol with high probability, the probability that it could predict one of $x_0$ or $x_1$ from the post-test state is small.
	
\end{claim}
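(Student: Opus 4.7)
The plan is to prove the claim by a reduction to the claw-freeness property of the underlying NTCF. Fix an adversary $\mathcal{A}$ that, starting from the honest-form initial state, passes the RO-padded Hadamard test with probability $p$ and, conditioned on passing, outputs either $x_0$ or $x_1$ from the post-test state with probability $q$. The target is to build a claw-finder $\mathcal{B}$ whose success probability scales as $\Omega(p^{O(1)}\cdot q/\fpoly(\kappa))$, which is incompatible with the NTCF assumption unless $q$ is negligible whenever $p$ is noticeable.

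First I would formalize the RO-padded Hadamard test following \cite{jiayu20}, making explicit which classical messages the server commits to via random-oracle queries (the ``padding'') and which it outputs freely. Working in the QROM, I would use a compressed-oracle-style simulation of $H$ so that the reduction can record the adversary's queries and, crucially, resample the adversary's final classical output string without re-executing the full quantum computation.

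The reduction $\mathcal{B}$ then proceeds as follows. On input an NTCF instance (public key only, no trapdoor), $\mathcal{B}$ plays the verifier honestly; the verifier's Hadamard check $d\cdot(x_0\oplus x_1)=0$ can be performed without the trapdoor using the NTCF's public check procedure. $\mathcal{B}$ runs $\mathcal{A}$ once to obtain a passing transcript with output $d^{(1)}$, then applies partial rewinding via oracle reprogramming to extract additional passing outputs $d^{(2)},\ldots,d^{(\kappa)}$ while keeping $\mathcal{A}$'s post-test quantum register nearly intact. With noticeable probability these strings span a $(\kappa-1)$-dimensional subspace, pinning down $x_0\oplus x_1$ exactly. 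Finally $\mathcal{B}$ invokes $\mathcal{A}$'s predictor on the nearly-undisturbed post-test state to extract some $x_b$, and outputs the claw $(x_b,\, x_b\oplus(x_0\oplus x_1))$.

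The main obstacle will be rigorously justifying the partial rewinding, since quantum rewinding generically damages the register being rewound, and here we need that damaged register to still support the key predictor afterwards. The RO padding is exactly what circumvents this: the test outcome $d$ is generated from oracle-query commitments rather than from a destructive computational-basis measurement on the key register, so resampling is carried out by reprogramming the oracle, perturbing the quantum state only minimally. The delicate bookkeeping is tracking how the simulator reprograms $H$ across $\kappa$ partial rewindings and controlling the cumulative trace-distance disturbance (for instance via compressed-oracle hybrids or one-way-to-hiding style arguments); keeping this disturbance below $q$ is the technical heart of the argument. A secondary check, that the $d^{(j)}$'s generically have full rank modulo $x_0\oplus x_1$, follows from a standard rank argument once the $d^{(j)}$'s are close to uniform on the correct affine subspace.
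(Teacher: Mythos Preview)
Your reduction has a genuine gap at the rewinding step. In the RO-padded Hadamard test (Protocol~\ref{prtl:phadamard}), the server maps $\ket{x_b}\mapsto\ket{x_b}\ket{H(\pad\|x_b)}$ and then performs a Hadamard-basis measurement on \emph{all} of these qubits, including the key register itself; the string $d$ is precisely that measurement outcome. So $d$ is not ``generated from oracle-query commitments'' in any Fiat-Shamir-like sense---it is a destructive measurement of the key register. Reprogramming $H$ after the measurement does nothing to resample $d$, and reprogramming $H$ before the measurement requires first undoing the measurement, which is exactly the quantum rewinding obstacle you were trying to avoid. Running the adversary from scratch on a reprogrammed oracle gives a fresh $d$ but also a fresh post-test state, so the $\kappa$ constraints you collect and the state on which you later run the key predictor come from incompatible executions. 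There is no mechanism here by which the RO padding lets you extract $\Theta(\kappa)$ independent $d$'s while preserving a single post-test register. (A secondary issue: the reduction $\mathcal{B}$ receives only the public key, so it cannot evaluate the check $d\cdot(x_0\oplus x_1)$ and hence cannot even recognize which of its sampled $d$'s lie in the correct subspace.)

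The paper's route is entirely different and avoids rewinding. It proves a stronger statement (Theorem~\ref{thm:pht}): after passing the test, any efficient post-processing $D$ is indistinguishable from its blinded version $D'$ in which oracle entries prefixed by $x_0$ or $x_1$ are replaced by fresh randomness. The argument decomposes the semi-honest state into its $x_0$- and $x_1$-branches. On the $x_1$-branch alone, claw-freeness hides $x_0$, so blinding the $x_0$-prefixed entries is undetectable (Lemma~\ref{lem:5.2}). Separately, Lemma~\ref{lem:prepht} shows that on a single branch the check bit $d\cdot(x_0\oplus x_1)$ is unbiased (because $H(\pad\|x_{1-b})$ is fresh randomness from that branch's viewpoint), and Corollary~\ref{cor:prepht} then forces the two branches to be approximately equal on the passing space whenever the overall passing probability is high. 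This lets the blinding-indistinguishability transfer from the $x_1$-branch to the $x_0$-branch, and a symmetric argument blinds the $x_1$-prefixed entries. Key unpredictability is then an immediate corollary: a predictor for $x_b$ would distinguish $H$ from $H'$ by querying at the predicted key.
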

Since the keys of the switch gadget are part of the encryption keys for the mapping \eqref{eq:9r}, if the server loses the predictability of keys in $K^{(\switch)}$, intuitively the server loses the ability to make use of the mapping encoded by the table.\par
In more detail, the switch gadget technique works like a switch: before the test the server is able to evaluate the mapping, while after the test the information in the table will be hidden from efficient malicious servers.\par
With this in mind, our RSPV protocol roughly has the following structure from the viewpoint of the switch gadget technique:
\begin{toyprtl}\label{toyprtl:2r}
\begin{enumerate}
\item For each $i\in [L]$, the client sends the extended phase table for the $i$-th gadget
\begin{equation}(\forall b^{(\switch)}\in \{0,1\},b^{(i)}\in \{0,1\}, x^{(\switch)}_{b^{(\switch)}}||x_{b^{(i)}}^{(i)}\rightarrow \theta_b^{(i)})\end{equation}
\item ``Turn off'' the switch of the switch gadget technique (that is, to execute an RO-padded Hadamard test on the switch gadget).
\item The client flips a coin and does one of the following:
\begin{itemize}
\item Verify that the state is really in the form \eqref{eq:gadget}. (This step might be destructive.)
\item Reveal all the keys and allow the server to decrypt and get the states \eqref{eq:rr1}.
\end{itemize}
\end{enumerate}
\end{toyprtl}
Generally speaking, we call this type of protocol design technique as the \emph{switch gadget technique}. The switch gadget technique gives us a protocol of the following structure:
\begin{enumerate}
\item Encode a mapping on a switch gadget. The switch gadget keys will be an encryption key for the mapping, and the honest server could use either branch of the switch gadget (and thus their superpositions) to evaluate the mapping.	
\item Both parties do a Hadamard test on the switch gadget.\par
\end{enumerate}

Besides the switch gadget technique, the next non-trivial step in Toy Protocol \ref{toyprtl:2r} is the design of sub-tests, the first bullet of the third step. Below we give an overview of the important techniques that we develop for it.
\subsection{Overall Protocol Structure with Full Verification Procedures}\label{sec:2.4}
We will design various types of tests for verifying the states. These tests include \emph{standard basis test} ($\fStdBTest$), \emph{individual phase test} ($\fInPhTest$), \emph{collective phase test} ($\fCoPhTest$) and \emph{basis uniformity test} ($\fBNTest$).\par
To support these tests, in the very beginning when both parties use $\fNTCF$ to generate key-pair superpositions, they will generate $2+L$ gadgets and keys. These keys are denoted by $K^{(\switch)}$, $K^{(0)}$, $\cdots$, $K^{(L)}$. The $K^{(\switch)}$ corresponds to the switch gadget in the $\fAddPhaseWithswitch$ step; and for the remaining keys we note that there is one additional key pair and gadget with index $^{(0)}$ which is solely used for verification and will not appear in the output states \eqref{eq:cqtarget0}.\par
Overall speaking, our protocol goes as follows. Note that the standard basis test is probabilistically executed both before and after the $\fAddPhaseWithswitch$ step, while the other tests are executed after that.
\begin{toyprtl}\label{toyprtl:3}
\begin{enumerate}
	\item Both parties use $\fNTCF$ to create $2+L$ key-pair superpositions; the client gets key pairs $K^{(\switch)}=(x_0^{(\switch)},x_1^{(\switch)})$, $K^{(i)}=(x_0^{(i)},x_1^{(i)})$ for all $i\in [0,L]$ and the server gets
	 	\begin{equation}\label{eq:18mq}\frac{1}{\sqrt{2^{2+L}}}(\ket{x_0^{(\switch)}}+\ket{x_1^{(\switch)}})\otimes(\ket{x_0^{(0)}}+\ket{x_1^{(0)}})\otimes(\ket{x_0^{(1)}}+\ket{x_1^{(1)}})\otimes(\ket{x_0^{(2)}}+\ket{x_1^{(2)}})\otimes\cdots \otimes (\ket{x_0^{(L)}}+\ket{x_1^{(L)}})\end{equation}
	\item The client chooses one of the following two branches randomly:
	\begin{itemize}
		\item Execute the $\fStdBTest$;
		\item Execute $\fAddPhaseWithswitch$. This will consume the switch gadget and allow the honest server to prepare gadgets with phases in the form of
		 	\begin{equation}\label{eq:20}\small\frac{1}{\sqrt{2^{1+L}}}(e^{\theta_0^{(0)}\mi\pi/4}\ket{x_0^{(0)}}+e^{\theta_1^{(0)}\mi\pi/4}\ket{x_1^{(0)}})\otimes(e^{\theta_0^{(1)}\mi\pi/4}\ket{x_0^{(1)}}+e^{\theta_1^{(1)}\mi\pi/4}\ket{x_1^{(1)}})\otimes\cdots \otimes (e^{\theta_0^{(L)}\mi\pi/4}\ket{x_0^{(L)}}+e^{\theta_1^{(L)}\mi\pi/4}\ket{x_1^{(L)}})\end{equation}
		Then the client chooses one of the following five branches uniformly randomly:
		\begin{itemize}
		\item Execute the $\fStdBTest$;
		\item Execute the $\fInPhTest$;
		\item Execute the $\fCoPhTest$;
		\item Execute the $\fBNTest$;
		\item Reveal the keys and allow the server to output the state.
		\end{itemize}
	\end{itemize}
\end{enumerate}
\end{toyprtl}
The gadgets that these tests applied on could be illustrated as follows.
$$\overbrace{\underbrace{K^{(\switch)}}_{\text{switch gadget for $\fAddPhaseWithswitch$}},\underbrace{\underbrace{K^{(0)}}_{\fInPhTest},\underbrace{K^{(1)},K^{(2)},\cdots K^{(L)}}_{\fBNTest,\text{ output states}}}_{\text{2nd $\fStdBTest$, }\fCoPhTest}}^{\text{1st $\fStdBTest$}}$$
This overall protocol could only prepare the target state \eqref{eq:cqtarget} when the last case (``allow the server to output the state'') in Toy Protocol \ref{toyprtl:3} is reached; in addition to that, an honest server could not always win in each test of Toy Protocol \ref{toyprtl:3} due to an issue that will be discussed when we introduce the $\fInPhTest$. But a suitable repetition-based amplification of Toy Protocol \ref{toyprtl:3} (or formally, Protocol \ref{prtl:prerspv}) will lead to the formal RSPV protocol (Protocol \ref{prtl:rspvr}) that we want. (In Section \ref{sec:4} we will give the notion of pre-RSPV which captures these construction details.)
\begin{conv}
When we work on multiple key pairs, we use $K$ to denote $(K^{(i)})_{i\in [0,L]}$ and use $\tilde K$ to denote $(K^{(i)})_{i\in [L]}$. ($[L]=\{1,\cdots L\}$ and $[0,L]=\{0,1\cdots L\}$.)
\end{conv}

\subsection{Standard Basis Test ($\fStdBTest$)}\label{sec:2.5}
As said before, an intermediate target state of the protocol is of the form of \eqref{eq:gadget}. Expanding all of these states in the standard basis, each component in the expansion will have the form of key-vectors $x^{(1)}_{b^{(1)}}x^{(2)}_{b^{(2)}}\cdots x^{(L)}_{b^{(L)}}$, for some $b^{(1)}\cdots b^{(L)}\in \{0,1\}^L$. (We omit the symbols that separate these keys; we never multiply keys in this work.) This inspires us to define a test that verifies the state is of the following form up to a server-side isometry:
\begin{defn}[Basis-honest form]
Suppose the client holds a tuple of key pairs $(K^{(i)})_{i\in [L]}$ where each $K^{(i)}=(x_0^{(i)},x_0^{(i)})$, define the basis-honest form to be the form of state (where we omit the concatenation notation for simplicity):
\begin{equation}\label{eq:19ox}\sum_{b^{(1)}b^{(2)}\cdots b^{(L)}: \forall i\in [L],b^{(i)}\in \{0,1\}}\underbrace{\ket{x^{(1)}_{b^{(1)}}x^{(2)}_{b^{(2)}}\cdots x^{(L)}_{b^{(L)}}}}_{\text{some server side registers}}\underbrace{\ket{\varphi_{b^{(1)}b^{(2)}\cdots b^{(L)}}}}_{\text{other part}}\end{equation}
We call $\ket{x^{(1)}_{b^{(1)}}x^{(2)}_{b^{(2)}}\cdots x^{(L)}_{b^{(L)}}}\ket{\varphi_{b^{(1)}b^{(2)}\cdots b^{(L)}}}$ as the $\vec{x}_{\vec{b}}$-branch of this basis-honest state.\par
It could be naturally generalized to the $2+L$ key pairs appeared in Toy Protocol \ref{toyprtl:3}.
\end{defn}
Let's consider the tests on \eqref{eq:18mq}. The testing of basis-honest form can be achieved by the client simply asking the server to make a standard basis measurement to provide a key vector classically:
\begin{toyprtl}[Standard basis test ($\fStdBTest$)]
\begin{enumerate}
	\item The client asks the server to provide a key vector in the form of $x^{(\switch)}_{b^{(\switch)}}x^{(0)}_{b^{(0)}}x^{(1)}_{b^{(1)}}x^{(2)}_{b^{(2)}}\cdots x^{(L)}_{b^{(L)}}$, $b^{(\switch)}\in \{0,1\},b^{(i)}\in \{0,1\}$ for all $i\in [0,L]$.\par
	The honest server could pass the protocol by measuring \eqref{eq:18mq} in the standard basis.
\end{enumerate}	
\end{toyprtl}
The description above is the 1st standard basis test shown in Toy Protocol \ref{toyprtl:3}; the 2nd standard basis test is similarly defined on the remaining keys and gadgets.\par
Now we continue to discuss the other tests that could possibly be applied on the state \eqref{eq:20}. Importantly, we need a way to test whether the phases are really introduced by the server.

\subsection{Individual Phase Test ($\fInPhTest$)}\label{sec:2.6}
Let's start with the single state case and see how we could design a protocol that verifies a single state. Suppose:
\begin{itemize}
\item The client holds key pair $K=(x_0,x_1)$ and phase pair $\Theta=(\theta_0,\theta_1)$. $\theta_0,\theta_1\in \{0,1\cdots 7\}$.
\item In the honest setting the server should hold the state 
\begin{equation}\label{eq:7}e^{\theta_0\mi\pi/4}\ket{x_0}+e^{\theta_1\mi\pi/4}\ket{x_1}.\end{equation}
\end{itemize}
 The server wants to cheat. Let's first consider a restricted form of attack for simplicity of the introduction. Suppose the server's attack is just to add some different phases to the gadget. Instead of holding \eqref{eq:7}, it might hold
\begin{equation}\label{eq:8}e^{f(\theta_0)\mi\pi/4}\ket{x_0}+e^{g(\theta_1)\mi\pi/4}\ket{x_1}\end{equation}
for some arbitrary functions $f$, $g$. We want to design a test that verifies \eqref{eq:8} is isometric to \eqref{eq:7} (under an isometry that does not depend on the phases). This means, we want to design a test such that the server could pass the test from \eqref{eq:8} if and only if $f$, $g$ have the form:
\begin{equation}\label{eq:13n}f(\theta_0)\approx\theta_0+c_0,g(\theta_1)\approx\theta_1+c_1;c_0,c_1\text{ are constants.}\end{equation}
 Certainly \eqref{eq:8} does not capture all the possible attacks that the adversary can make; but it captures a non-trivial class of attacks that will be helpful for illustrating our ideas.\par
\paragraph{Remark} To make this type of simplification make sense, the switch gadget technique plays an important role here. After the switch gadget is measured and destroyed, the adversary could not decrypt the phase table any more and it is not able to do any $\theta$-related operation on state \eqref{eq:8}. Without the switch gadget technique, the adversary can change the phases in \eqref{eq:8} to $f^\prime(\theta_0),g^\prime(\theta_1)$ arbitrarily.\par
We first note that (without loss of generality) there is a simple way to verify the relation of the following two states, which correspond to the cases where the client side phase pair is $(\tilde\theta_0,\tilde\theta_1)$ and $(\tilde\theta_0,\tilde\theta_1+4)$:
\begin{equation}\label{eq:22pw}e^{f(\tilde\theta_0)\mi\pi/4}\ket{x_0}+e^{g(\tilde\theta_1)\mi\pi/4}\ket{x_1}\quad e^{f(\tilde\theta_0)\mi\pi/4}\ket{x_0}+e^{g(\tilde\theta_1+4)\mi\pi/4}\ket{x_1}\end{equation}
for which the honest states are
\begin{equation}\label{eq:12n}e^{\tilde\theta_0\mi\pi/4}\ket{x_0}+e^{\tilde\theta_1\mi\pi/4}\ket{x_1};\quad e^{\tilde\theta_0\mi\pi/4}\ket{x_0}-e^{\tilde\theta_1\mi\pi/4}\ket{x_1}(=e^{\tilde\theta_0\mi\pi/4}\ket{x_0}+e^{(\tilde\theta_1+4)\mi\pi/4}\ket{x_1})\end{equation}
Note that these two states in \eqref{eq:12n} are orthogonal.\par
Consider the following protocol, which aims at verifying the relation between states in \eqref{eq:22pw}:
 \begin{toyprtl}
 Suppose the client holds phase pair $(\tilde\theta_0,\tilde\theta_1)$ or $(\tilde\theta_0,\tilde\theta_1+4)$ in its $\Theta$ register with equal probability. The honest server holds \eqref{eq:12n} while the malicious server holds \eqref{eq:22pw}.
\begin{enumerate}
\item The client could simply reveal $\tilde\theta_1-\tilde\theta_0$ and the honest server could remove the phases from \eqref{eq:12n} and get $\ket{x_0}+\ket{x_1}$ or $\ket{x_0}-\ket{x_1}$ correspondingly; 
\item Then both parties do a Hadamard test. Suppose the server's response is $d$, the client will calculate $d\cdot (x_0+x_1)\mod 2$, whose result is deterministically $0$ for $\ket{x_0}+\ket{x_1}$ and deterministically $1$ for $\ket{x_0}-\ket{x_1}$. The client rejects if $d=0$ or  $d\cdot (x_0+x_1)\mod 2$ does not have the correct value.
\end{enumerate}
\end{toyprtl}
This test could be translated to a test on \eqref{eq:8} by a change of variables: the client will randomly choose $\tilde\theta_1=\theta_1$ or $\tilde\theta_1=\theta_1-4$. For describing this test (and tests later) we introduce the notion of $\delta$-bias Hadamard test as follows\footnote{As before, the formal version of this test will also have a random oracle padding; here we omit this part.}:
 \begin{toyprtl}[Hadamard test with extra bias]\label{toyprtl:hteb} The Hadamard test with $\delta$-extra-bias is defined as follows.\par
 Suppose the client holds key pair $(x_0,x_1)$ and phase pair $(\theta_0,\theta_1)$. Honestly the server should hold $e^{\theta_0\mi\pi/4}\ket{x_0}+e^{\theta_1\mi\pi/4}\ket{x_1}$. We call $\theta_1-\theta_0$ the \emph{relative phase} and $\delta$ the \emph{extra phase bias}.
 \begin{enumerate}\item The client reveals $\theta_1-\theta_0-\delta$ to the server. 
  \item The honest server adds a phase of $e^{-(\theta_1-\theta_0-\delta)}$ controlled by subscripts of keys\footnote{This is possible assuming there is some authentication information about the keys (for example, the $\fEn$ used in phase tables).} and prepares the following state up to a global phase:
  \begin{equation}\label{eq:24pz} \ket{x_0}+e^{\delta\mi\pi/4}\ket{x_1}\end{equation}
	\item Then both parties run the normal Hadamard test: the server measures all the bits in Hadamard basis and sends out the outcome $d$; the client outputs $\ffail$ if $d=0$ and otherwise could calculate $d\cdot (x_0+x_1)\mod 2$. 
	\end{enumerate}
	The client outputs the test results as follows:
	\begin{itemize}
		\item If $\delta=0$ the client outputs $\fpass$ to the flag register if $d\cdot (x_0+x_1)\mod 2=0$ and $\ffail$ otherwise.
		\item If $\delta=4$ the client outputs $\fpass$ to the flag register if $d\cdot (x_0+x_1)\mod 2=1$ and $\ffail$ otherwise.
	\end{itemize}
	The client's action for the other $\delta$ remains to be defined later.
\end{toyprtl}
 For a malicious server to pass this $\delta$-biased Hadamard test from \eqref{eq:8}, where $\delta\in \{0,4\}$, there has to be, on average of $ \theta_0,\theta_1\in  \{0,1\cdots 7\}^2$,
\begin{equation}\label{eq:13int}f(\theta_0+4)\approx f(\theta_0)+4,g(\theta_1+4)\approx g(\theta_1)+4\end{equation}
One important property of this test is that an honest server could pass deterministically, which implies, once the server fails in this test, the verifier will catch it cheating immediately.\par

 But this does not simply work generally for verifying the relations between states on different values of $\theta_0,\theta_1\in \{0,\cdots 7\}$. One obstacle is the Hadamard test does not give a deterministic answer (in the sense of  $d\cdot (x_0+x_1)\mod 2$) for a general state in the form of \eqref{eq:24pz} (for general $\delta$). Finding a test with one-sided error (which means the honest server could always pass) is also impossible since \eqref{eq:7} for different $\theta_0,\theta_1$ are not orthogonal in general.\par
  Here we generalize an idea from \cite{GVRSP,qfactory}: we do not restrict ourselves on verification processes with one-sided error; instead we turn to use a game where the optimal winning strategy is allowed to lose with some probability. \cite{GVRSP,qfactory} designed tests under this idea to verify single-qubit states; here we adapt their ideas to our setting and handle technical differences.\par
  In more detail, besides the $\fpass$/$\ffail$ flag, where a $\ffail$ result directly catches the server cheating, the $\fInPhTest$ will also (possibly) produce a $\fwin$/$\flose$ score. Then if both parties repeat such a game for many times (a large constant is sufficient to verify it to constant error tolerance), the client can calculate the winning ratio statistically and see whether the server's winning ratio is close to optimal. The test is designed to have a self-testing property, which says, any strategy that has close-to-optimal winning probability should also be close to the optimal strategy up to an isometry.\par
  Let's introduce the idea in more detail. 
  To summarize, our individual phase test goes as follows:
  \begin{toyprtl}
  The setup is the same as Toy Protocol \ref{toyprtl:hteb}.\par
  The client samples $\delta\leftarrow \{0,4,1\}$ and runs the protocol as given in Toy Protocol \ref{toyprtl:hteb}. The client's output for $\delta\in \{0,4\}$ is the same as Toy Protoocl \ref{toyprtl:hteb}. For $\delta=1$ case, the client outputs $\fwin$ to the score register if $d\cdot (x_0+x_1)\mod 2=0$ and $\flose$ otherwise. 
  \end{toyprtl}
  We could show the optimal winning probability (conditioned on a $\fwin/\flose$ score is generated) is $\cos^2(\pi/8)$, achieved by the honest initial state and the honest behavior. What's more, in the malicious setting, as said before, this test has a self-testing property:
  \begin{claim}
  Starting from \eqref{eq:8}, suppose the server does not fail in the protocol.\footnote{This condition is mainly on the $\delta\in \{0,4\}$ case; and for $\delta=1$ case it is required that $d\neq 0$.} Then:
  \begin{itemize} 
  \item The optimal winning probability conditioned on $\delta=1$ is $\cos^2(\pi/8)$.
  \item If the adversary could win in the $\delta=1$ case with probability $\approx\cos^2(\pi/8)$, then 
  $$\text{either }f(\theta_0)\approx\theta_0+c_0,g(\theta_1)\approx\theta_1+c_1$$
  $$\text{or }f(\theta_0)\approx-\theta_0+c_0,g(\theta_1)\approx-\theta_1+c_1$$
  	
  \end{itemize}
\end{claim}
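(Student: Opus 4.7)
My plan is to reduce the claim to a small trigonometric optimization. First I would model the most general phase-only server attack: starting from \eqref{eq:8}, upon learning the classical message $\tau:=\theta_1-\theta_0-\delta$, the server applies an arbitrary subscript-controlled phase $e^{\alpha(\tau)\mi\pi/4}$ to the $\ket{x_1}$ branch and then performs the bit-wise Hadamard measurement. A direct computation of the $d\cdot(x_0+x_1)\bmod 2$ statistics shows that the outcome only depends on the relative phase $\phi(\theta_0,\theta_1,\delta):=g(\theta_1)+\alpha(\tau)-f(\theta_0)\pmod 8$; specifically the ``even'' event occurs with probability $\tfrac{1+\cos(\phi\pi/4)}{2}$ and the ``odd'' event with probability $\tfrac{1-\cos(\phi\pi/4)}{2}$, up to $O(2^{-\kappa})$ error from conditioning on $d\ne 0$.

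Next I would extract the structural consequences of never failing the $\delta\in\{0,4\}$ tests. Not failing at $\delta=0$ forces $\phi(\theta_0,\theta_1,0)\approx 0$ for essentially all $(\theta_0,\theta_1)$, which is equivalent to $\alpha(\theta_1-\theta_0)\equiv -(g(\theta_1)-f(\theta_0))\pmod 8$. Because the left-hand side depends only on the difference $\theta_1-\theta_0$, the map $g(\theta_1)-f(\theta_0)$ must factor through this difference as some $h(\theta_1-\theta_0)$. A short cocycle-style manipulation (substituting $\theta_0=0$ and $\theta_1=0$ alternately and resubstituting) then shows that $h-h(0)$ is an additive homomorphism $\mathbb{Z}_8\to \mathbb{Z}_8$, hence $h(\theta)=m\theta+k$ for some slope $m\in\mathbb{Z}_8$ and constant $k$. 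Unwinding gives $f(\theta_0)\equiv m\theta_0+c_0$ and $g(\theta_1)\equiv m\theta_1+c_1\pmod 8$, with $\alpha$ fully determined. Substituting into the $\delta=4$ constraint, the relative phase collapses to the constant $4m$, so not failing there forces $4m\equiv 4\pmod 8$, i.e.\ $m$ odd.

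The final computation is then immediate. The $\delta=1$ relative phase simplifies to the constant $m$, making the winning probability $\tfrac{1+\cos(m\pi/4)}{2}$. Maximizing over odd $m\in\{1,3,5,7\}$ gives $\cos^2(\pi/8)$, attained exactly at $m\equiv \pm 1 \pmod 8$. These two solutions are precisely the two cases in the claim: $m=1$ gives Case 1 ($f(\theta_0)\approx\theta_0+c_0$, $g(\theta_1)\approx\theta_1+c_1$), while $m=-1$ gives Case 2, which is exactly the complex-conjugate attack discussed in the remark of Section \ref{sec:2.1}.

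The main technical obstacle I anticipate is making the chain of ``$\approx$'' quantitative: each step above is clean when the $\delta\in\{0,4\}$ tests pass perfectly and the $\delta=1$ probability is exactly $\cos^2(\pi/8)$, but the claim only assumes closeness. To propagate the approximations through the cocycle/homomorphism step I would use a Fourier expansion on $\mathbb{Z}_8\times\mathbb{Z}_8$ together with the earlier-derived constraint $f(\theta_0+4)\approx f(\theta_0)+4$ (analogously for $g$), which already cuts down the parameter space substantially. A standard hybrid linearity/dictator-testing argument should then turn each approximate equality into a quantitative closeness bound on $m$ and on the residual deviations of $f,g$ from their affine forms, at the cost of a modest polynomial blow-up in the error.
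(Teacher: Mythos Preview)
Your approach is correct for the restricted phase-only attack that this informal claim is about, and it takes a genuinely different route from the paper. The paper does not prove this overview claim directly; instead it subsumes it in the formal Theorems~\ref{thm:9.2} and~\ref{thm:9.8}, whose cores are the vector-space self-testing Lemmas~\ref{lem:optla} and~\ref{lem:10.2rr}. Those lemmas treat $\ket{\varphi_{b,i}}$ as arbitrary vectors, impose the $\delta\in\{0,4\}$ constraints as approximate vector identities, and reduce the $\delta=1$ optimum to a geometric ``5-state'' optimization (Lemma~\ref{lem:d1}, Corollaries~\ref{cor:d1}/\ref{cor:d4r}): one shows $|\ket{\varphi_0}+\ket{\varphi_1}|^2+\cdots+|\ket{\varphi_3}+\ket{\varphi_4}|^2\le 16\cos^2(\pi/8)$ given $\ket{\varphi_0}\approx-\ket{\varphi_4}$, and that near-saturation forces $\ket{\varphi_{j+1}}\approx(\ket{\varphi_j}+\ket{\varphi_{j+2}})/\sqrt{2}$. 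The $\pm$ decomposition then falls out via the Fourier-type definitions~\eqref{eq:163s}--\eqref{eq:166s}. Your argument instead stays entirely in $\bZ_8$: the $\delta=0$ pass forces $g-f$ to factor through the difference, a cocycle step yields the affine form with slope $m$, the $\delta=4$ pass forces $m$ odd, and the $\delta=1$ winning probability $\tfrac{1+\cos(m\pi/4)}{2}$ is maximized at $m\equiv\pm1$. This is more elementary and gives cleaner intuition for the toy claim; the paper's approach, by working with general branch vectors rather than scalar phases, is what is actually needed for the formal theorems where the attack is not assumed phase-only.

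One small comment on your last paragraph: invoking Fourier/linearity-testing machinery for the approximate version is probably overkill here. The domain is $\bZ_8$, so once the $\delta=0$ constraint gives $g(\theta_1)-f(\theta_0)\approx h(\theta_1-\theta_0)$, the ``cocycle to homomorphism'' step and the final optimization over odd $m$ are both finite problems; a direct computation with explicit error propagation (as the paper does in Appendix~\ref{sec:missing10}) is simpler and yields concrete constants.
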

Thus the test could only verify \eqref{eq:13int} up to a possible negation. This is as expected: as discussed in Section \ref{sec:2.1}, no classical channel protocol could rule out the complex conjugate attack. This is where the negation comes from. (After the keys $K$ are revealed, the complex-conjugate term is isometric to the honest output and the two terms could be merged together.)\par
  Finally we note our protocol could not only handle the simplified attack \eqref{eq:8} in the example above; it could also verify the initial state is close to a specific form in general. We give the following theorem which characterize the verifiability property of the $\fInPhTest$ protocol:
\begin{thm}[Properties of $\fInPhTest$, informal]\label{thm:2.3}
Suppose the client holds key pair $K^{(0)}=(x_0^{(0)},x_1^{(0)})$, phase pair $\Theta^{(0)}=(\theta_0^{(0)},\theta_1^{(0)})$. Suppose the client and server's purified joint state has necessary security properties and has the following form (here we make the $\Theta^{(0)}$ register explicit and make the client-side key register implicit):
$$\sum_{\theta_0^{(0)},\theta^{(0)}_1\in \{0,1\cdots 7\}^2}\underbrace{\ket{\theta_0^{(0)}}\ket{\theta_1^{(0)}}}_{\text{$\Theta^{(0)}$}}\otimes (\underbrace{\ket{x_0^{(0)}}}_{\text{server-side register required in the basis-honest form}}\ket{\varphi_{0,\theta_0^{(0)}}}+\ket{x_1^{(0)}}\ket{\varphi_{1,\theta_1^{(0)}}})$$
 Suppose $\fInPhTest$ with this initial state against an efficient adversary could pass (the client outputs $\fpass$ as flag) with probability close to $1$ and win (the client outputs $\fwin$ as score) with probability close to $\cos^2(\pi/8)$ conditioned on a $\fwin/\flose$ score is generated. Then there exist four states $\ket{\varphi_{0,+}},\ket{\varphi_{0,-}}$, $\ket{\varphi_{1,+}},\ket{\varphi_{1,-}}$ such that:
\begin{equation}\label{eq:22to}\text{on average over } \theta_0^{(0)}\in \{0,1\cdots 7\}:\ket{\varphi_{0,\theta_0^{(0)}}}\approx e^{\theta_0^{(0)}\mi\pi/4}\ket{\varphi_{0,+}}+e^{-\theta_0^{(0)}\mi\pi/4}\ket{\varphi_{0,-}}\end{equation}
\begin{equation}\label{eq:22to2}\text{on average over } \theta_1^{(0)}\in \{0,1\cdots 7\}:\ket{\varphi_{1,\theta_1^{(0)}}}\approx e^{\theta_1^{(0)}\mi\pi/4}\ket{\varphi_{1,+}}+e^{-\theta_1^{(0)}\mi\pi/4}\ket{\varphi_{1,-}}\end{equation}
\end{thm}
We will discuss its formalization in Section \ref{sec:2.8} and \ref{sec:2.9}.
\subsection{Collective Phase Test ($\fCoPhTest$)}\label{sec:2.7}
So far, we are only focusing on the simplified setting where only one gadget is considered. However a large part of the difficulties of this problem is to create a large number of such states verifiably, and guarantee the total complexity is still linear in the output number.\par
We highlight two limitations of existing works that focus on the verifiability of individual gadgets \cite{GVRSP,qfactory}:
\begin{itemize}
\item One frequently used technique for this and similar problems is the cut-and-choose technique, which is also the technique used in \cite{GVRSP,qfactory}. In this technique, the tests are all locally applied on single gadgets, and both parties repeat the single-gadget protocol for many rounds and choose a random subset from all the output gadgets. However, as far as we know, there seems to be an obstacle to make such type of protocols linear-time. The reason is, under this technique, to control the total error of $L$ gadgets down to a constant, the error tolerance of each gadget on average is no more than $O(1/L)$. This implies at least $O(L^2)$ repetitions are needed for a single gadget since the probability of detecting an $O(1/L)$ error from a single state scales with the square of the error norm. 
\item The performance becomes worse if we take the two-sided error issue appeared in Section \ref{sec:2.6}. The $\fInPhTest$ needs to be applied for many rounds to estimate the winning probability. For constant error tolerance this blowup is constant, but for $O(1/L)$ error tolerance this leads to further complexity blowup in the high-level protocol.
\end{itemize}
Here we develop a central sub-protocol for resolving these problems, which is called the collective phase test ($\fCoPhTest$). The structure of this protocol is called combine-and-test. With this test, we get rid of the obstacles in the following way:
\begin{itemize}
\item Before this test $O(|C|)$ number of gadgets are prepared in parallel. Then both parties combine these gadgets into a single big gadget, and test the combined gadget. Thus this test is not local on each individual gadget and not suffered from the first obstacle.
\item $\fCoPhTest$ has only one-sided error, which means, once the client sees a wrong answer, it knows the server is cheating right away.\par
Although $\fCoPhTest$ does not help us fully verify the phases, we will see it together with the $\fInPhTest$ achieves full verification on the phases of all the $L$ gadgets. In this overall phase testing protocol, with the help of the $\fCoPhTest$, the $\fInPhTest$ only needs to be applied on a single gadget to a constant error tolerance.\par
\end{itemize}

As an example of the combine-and-test technique, let's start from $2$ gadgets. Consider the initial state which honestly should be in the state
\begin{equation}\label{eq:29la}(e^{\mi\theta_0 \pi/4}\ket{x_0}+e^{\mi\theta_1 \pi/4}\ket{x_1})\otimes (e^{\mi\theta_0^\prime \pi/4}\ket{x_0^\prime}+e^{\mi\theta_1^\prime \pi/4}\ket{x_1^\prime}))\end{equation}
while a malicious server might deviate and prepare some other states. As what we did in the last section, for explaining the intuition, we will consider a specific attack where the server only tries to add different phases. That means, the state might be\footnote{From now on we interchangeably make the concatenation notation either explicit or implicit.}
\begin{equation}\label{eq:9rr}e^{\mi f_{00}(\theta_0,\theta^\prime_0) \pi/4}\ket{x_0}\ket{x_0^\prime}+e^{\mi f_{01}(\theta_0,\theta^\prime_1) \pi/4}\ket{x_0}\ket{x_1^\prime}+e^{\mi f_{10}(\theta_1,\theta^\prime_0) \pi/4}\ket{x_1}\ket{x_0^\prime}+e^{\mi f_{11}(\theta_1,\theta^\prime_1) \pi/4}\ket{x_1}\ket{x_1^\prime}\end{equation}
Instead of testing these two gadgets independently, the client will first combine these two gadgets into a single big gadget. This is achieved by sending a lookup table to instruct the server to decrypt and measure. Then both parties run Hadamard test on the combined gadget. In more detail:
\begin{toyprtl}\label{toyprtl:6}
\begin{enumerate}
\item The client samples $r_0,r_1\leftarrow \{0,1\}^\kappa$ and prepares the table
$$(x_0||x_0^\prime\rightarrow r_0,x_1||x_1^\prime\rightarrow r_0,$$
$$x_0||x_1^\prime\rightarrow r_1,x_1||x_0^\prime\rightarrow r_1)$$
The server decrypts the table with \eqref{eq:29la} measures the $r$ register and collapses the state into a superposition of two combined keys. Note the phases are also combined. This means in the honest setting the post-measurement states are:
$$\text{output }r_0:e^{\mi(\theta_0+\theta_0^\prime)\pi/4}\ket{x_0||x_0^\prime}+e^{\mi(\theta_1+\theta_1^\prime)\pi/4}\ket{x_1||x_1^\prime}$$
$$\text{output }r_1:e^{\mi(\theta_0+\theta_1^\prime)\pi/4}\ket{x_0||x_1^\prime}+e^{\mi(\theta_1+\theta_0^\prime)\pi/4}\ket{x_1||x_0^\prime}$$
Then the server will send back the measurement result $r$ to the client, and the client will check the validity of server's response (check it's in $\{r_0,r_1\}$) and calculate the keys and phases:
$$r=r_0:K^{(combined)}=(x_0||x^\prime_0,x_1||x^\prime_1);\quad r=r_1:K^{(combined)}=(x_0||x^\prime_1,x_1||x^\prime_0)$$
$$r=r_0:\Theta^{(combined)}=(\theta_0+\theta^\prime_0,\theta_1+\theta^\prime_1);\quad r=r_1:\Theta^{(combined)}=(\theta_0+\theta^\prime_1,\theta_1+\theta^\prime_0)$$
\end{enumerate}
Then the client can use a Hadamard test to test the combined gadget in the next step:
\begin{enumerate}

\item[2.] The client will reveal the relative phase (defined in Toy Protocol \ref{toyprtl:hteb}) of $\Theta^{(combined)}$ and the server could remove the joint phase of the combined gadget. Then the Hadamard test could be applied on the combined gadget.
 	 
\end{enumerate}
\end{toyprtl}
 Let's consider a malicious server. Starting from \eqref{eq:9rr}, the malicious server will end up in states:
\begin{equation}\label{eq:10}\text{output }r_0:e^{\mi f_{00}(\theta_0,\theta_0^\prime)\pi/4}\ket{x_0||x_0^\prime}+e^{\mi f_{11}(\theta_1,\theta_1^\prime)\pi/4}\ket{x_1||x_1^\prime}\end{equation}
\begin{equation}\text{output }r_1:e^{\mi f_{01}(\theta_0,\theta_1^\prime)\pi/4}\ket{x_0||x_1^\prime}+e^{\mi f_{10}(\theta_1,\theta_0^\prime)\pi/4}\ket{x_1||x_0^\prime}\end{equation}
  Without loss of generality let's assume the output is $r_0$ and the state is collapsed to \eqref{eq:10}. What's counter-intuitive here is to understand why the Hadamard test could test the joint phases. The observation is, the phase of $x_0x_0^\prime$ branch does not depend on the values of $\theta_1,\theta_1^\prime$, and the phase of $x_1x_1^\prime$ branch does not depend on the values of $\theta_0,\theta_0^\prime$. A more detailed calculation is as follows. To pass the test from \eqref{eq:10}, by the property of Hadamard test, there has to be
  \begin{equation}\label{eq:32lm}f_{11}(\theta_1,\theta_1^\prime)-f_{00}(\theta_0,\theta_0^\prime)\approx \text{the relative phase  in the honest setting} \end{equation}
  Recall the relative phase in the honest setting when $r=r_0$ is $(\theta_1+\theta^\prime_1)-(\theta_0+\theta^\prime_0)$. This together with \eqref{eq:32lm} implies\footnote{The detail is as follows. Fixing $\theta_0+\theta_0^\prime$, the right hand side of \eqref{eq:32lm} is fixed which implies the left hand side of \eqref{eq:32lm} is also fixed.}
  \begin{equation}\label{eq:36e}\forall \Delta\in \{0,1\cdots 7\}, f_{00}(\theta_0,\theta_0^\prime)\approx f_{00}(\theta_0-\Delta,\theta_0^\prime+\Delta)\end{equation}
A similar statement holds for all these four terms of \eqref{eq:9rr}, thus holds on average. This could be understood as follows: on average on each branch (term) in \eqref{eq:9rr}, the form of this branch is only a function of the honest joint phase (where the honest joint phase for branch $x_bx_{b^\prime}^\prime$ is $\theta_b+\theta_{b^\prime}^\prime$).\par More generally, we will see, when we consider the attack that does not follow the restricted form \eqref{eq:9rr}, $\fCoPhTest$ could still guarantee this property. 
Generalizing it to a combine-and-test process on all the $1+L$ gadgets in \eqref{eq:20} leads to a linear time phase sub-test, which is the $\fCoPhTest$:
\begin{toyprtl}
\begin{enumerate}
\item Both parties combine all the gadgets (with index from $0$ to $L$) to a single gadget;
\item The client computes the honest joint phase pair of the combined gadget. The client reveals the relative phase of the combined phase pair and both parties run the Hadamard test.	
\end{enumerate}	
\end{toyprtl}
Informally, $\fCoPhTest$ has the following properties, which could be seen as a generalization of \eqref{eq:36e}.
\begin{thm}[Properties of $\fCoPhTest$, informal]\label{thm:2.4}
	Suppose the client holds a tuple of key pairs $K=(K^{(i)})_{i\in [0,L]}$, $K^{(i)}=(x_0^{(i)},x_1^{(i)})$ and holds a tuple of phase pairs $\Theta=(\Theta^{(i)})_{i\in [0,L]}$, $\Theta^{(i)}=(\theta^{(i)}_0,\theta^{(i)}_1)$. Suppose the client and server's purified joint state has necessary security properties and has the following form:
\begin{equation}\label{eq:36}\sum_{\text{All valid values of } \Theta}\underbrace{\ket{\Theta}}_{client-side}\otimes \sum_{\vec{b}\in \{0,1\}^{1+L}}\underbrace{\ket{x_{b^{(0)}}^{(0)}x_{b^{(1)}}^{(1)}\cdots x_{b^{(L)}}^{(L)}}}_{\text{server-side register required in the basis-honest form}}\otimes\ket{\varphi_{\vec{b},\theta_{b^{(0)}}^{(0)}\theta_{b^{(1)}}^{(1)}\cdots \theta_{b^{(L)}}^{(L)}}}\end{equation}
where $b^{(0)}b^{(1)}\cdots b^{(L)}$ is the coordinate expansion of $\vec{b}$. Suppose in $\fInPhTest$ an efficient adversary could pass (make the client outputs $\fpass$ as flag) with probability close to $1$. Then on average over all the possible $\vec{b}$ in \eqref{eq:36}, consider the branch
 $$\sum_{\text{All valid values of } \Theta}\underbrace{\ket{\Theta}}_{client-side}\otimes\ket{x_{b^{(0)}}^{(0)}x_{b^{(1)}}^{(1)}\cdots x_{b^{(L)}}^{(L)}}\otimes\ket{\varphi_{\vec{b},\theta_{b^{(0)}}^{(0)}\theta_{b^{(1)}}^{(1)}\cdots \theta_{b^{(L)}}^{(L)}}}$$ 
there is, informally, the $\ket{\varphi_{\vec{b},\theta_{b^{(0)}}^{(0)}\theta_{b^{(1)}}^{(1)}\cdots \theta_{b^{(L)}}^{(L)}}}$ part is close to a state that only depends on the honest joint phase (instead of depending on all the phases here). Here the honest joint phase is $\theta_{b^{(0)}}^{(0)}+\theta_{b^{(1)}}^{(1)}+\cdots +\theta_{b^{(L)}}^{(L)}$, the phase information for this branch in the honest setting when the client-side phase tuple is $\Theta$.
\end{thm}
We will discuss its formalization in Section \ref{sec:2.8} and \ref{sec:2.9}.
\subsection{State Forms, and the Overall Implication of $\fCoPhTest$ and $\fInPhTest$ Applied on Multiple Gadgets}\label{sec:2.8}
To analyze the protocol formally, we define a series of \emph{state forms}, which are classes of states that satisfy some specific structures. We have already seen the \emph{basis-honest form} in Section \ref{sec:2.5}; below we will further define the \emph{basis-phase correspondence form}, \emph{pre-phase-honest form} and the \emph{phase-honest form}.
\subsubsection{Basis-phase correspondence form}
Recall by the end of $\fAddPhaseWithswitch$ in Toy Protocol \ref{toyprtl:3} the client holds a tuple of key pairs $K=(K^{(i)})_{i\in [0,L]}$, $K^{(i)}=(x_0^{(i)},x_1^{(i)})$ and holds a tuple of phase pairs $\Theta=(\Theta^{(i)})_{i\in [0,L]}$, $\Theta^{(i)}=(\theta^{(i)}_0,\theta^{(i)}_1)$. Honestly the server is instructed to hold the state \eqref{eq:20}, 
 while maliciously we assume the attacker's state $\ket{\varphi}$ is in the basis-honest form  (verified by the standard basis test). Expanding the state we can write
 \begin{equation}\label{eq:19ox2}\sum_{b^{(0)}b^{(1)}b^{(2)}\cdots b^{(L)}: \forall i\in [0,L],b^{(i)}\in \{0,1\}}\underbrace{\ket{x^{(0)}_{b^{(0)}}x^{(1)}_{b^{(1)}}x^{(2)}_{b^{(2)}}\cdots x^{(L)}_{b^{(L)}}}}_{\text{some server side registers}}\underbrace{\ket{\varphi_{b^{(0)}b^{(1)}b^{(2)}\cdots b^{(L)}}}}_{\text{other part}}\end{equation}
 For simplicity we make the client side register $K$ implicit.\par
  We define the \emph{basis-phase correspondence form} as follows, which characterize an intuitively property of output states of $\fAddPhaseWithswitch$. We assume the state in \eqref{eq:19ox2} corresponding to branch $$\vec{x}_{\vec{b}}:=x^{(0)}_{b^{(0)}}x^{(1)}_{b^{(1)}}x^{(2)}_{b^{(2)}}\cdots x^{(L)}_{b^{(L)}}$$ could depend on the values of $$\vec{\Theta}_{\vec{b}}:=\theta^{(0)}_{b^{(0)}}\theta^{(1)}_{b^{(1)}}\theta^{(2)}_{b^{(2)}}\cdots \theta^{(L)}_{b^{(L)}}$$ but independent of the values of $\theta^{(0)}_{1-b^{(0)}}\theta^{(1)}_{1-b^{(1)}}\theta^{(2)}_{1-b^{(2)}}\cdots \theta^{(L)}_{1-b^{(L)}}$. Recall that in $\fAddPhaseWithswitch$ the client sends many look-up tables and the server  could decrypt some rows of them with the keys it holds, and this property intuitively says the adversary could not decrypt the rows where it does not has the corresponding keys. Thus we can express \eqref{eq:19ox2} as
\begin{equation}\label{eq:25to}\eqref{eq:19ox2}=\sum_{\Theta\in \{0,1\cdots 7\}^{2(1+L)}}\underbrace{\ket{\Theta}}_{\text{client-side register that stores $\Theta$}}\otimes \ket{\varphi_\Theta},\qquad\ket{\varphi_{\Theta}}=\sum_{\vec{b}\in \{0,1\}^{1+L}}\ket{\vec{x}_{\vec{b}}}\ket{\varphi_{\vec{b},\vec{\Theta}_{\vec{b}}}}, \end{equation}
Note that in Theorem \ref{thm:2.3}, \ref{thm:2.4} we have already implicitly assume it.
 \subsubsection{Pre-phase-honest form and phase-honest form}\label{sec:2.8.2}
We have designed tests that aim at allowing the client to verify \eqref{eq:25to} actually has the form of \eqref{eq:20} (if we only consider the phases); but there is quite a big gap between them. To understand how our tests bridge the gap between \eqref{eq:20} and \eqref{eq:25to}, we define the following two forms of states.\par
 The first is the \emph{pre-phase-honest form}, which is a basis-honest form, and additionally, for each branch indexed by $\vec{b}$ in \eqref{eq:25to}, the state should be determined only by the honest joint phase. As before we assume the client holds a tuple of key pairs $K$ and a tuple of phase pairs $\Theta$.

\begin{defn}[Pre-Phase-honest form] We say a state $\ket{\varphi}$ is in the pre-phase-honest form if there exists a class of states $\ket{\varphi_{\vec{b},\theta}}$ for each $\vec{b}\in \{0,1\}^{1+L},\theta\in \{0,1\cdots 7\}$ such that in \eqref{eq:25to},
	\begin{equation}\label{eq:38islx}\ket{\varphi_{\Theta}}=\sum_{\vec{b}\in \{0,1\}^{1+L}}\ket{\vec{x}_{\vec{b}}}\otimes \ket{\varphi_{\vec{b},\tSUM(\vec{\Theta}_{\vec{b}})}},\text{ where }\tSUM(\vec{\Theta}_{\vec{b}})=\theta^{(0)}_{b^{(0)}}+\theta^{(1)}_{b^{(1)}}+\theta^{(2)}_{b^{(2)}}+\cdots +\theta^{(L)}_{b^{(L)}}\end{equation}
\end{defn}
Then we define the phase-honest form, which is a pre-phase-honest form, and for each branch, the phases on the server-side state is determined by the client-side phase information in a way similar to \eqref{eq:20}:
\begin{defn}[Phase-honest form]  We say a state $\ket{\varphi}$ is in the phase-honest form if there exists a class of states $\ket{\varphi_{\vec{b},+}},\ket{\varphi_{\vec{b},1}}$ for each $\vec{b}\in \{0,1\}^{1+L}$ such that 
	\begin{equation}\label{eq:39islx}\ket{\varphi_\Theta}=\sum_{\vec{b}\in \{0,1\}^{1+L}}\ket{\vec{x}_{\vec{b}}}\otimes (e^{\tSUM(\vec{\Theta}_{\vec{b}})\mi \pi/4}\ket{\varphi_{\vec{b},+}}+e^{-\tSUM(\vec{\Theta}_{\vec{b}})\mi \pi/4}\ket{\varphi_{\vec{b},-}})\end{equation}
\end{defn}
The second term comes from the fact that we could not rule out the complex-conjugate attack.\par
 
%

\subsubsection{A summary}
The relation of these forms of states are
\begin{center}
arbitrary states $\supseteq$ basis-honest form \eqref{eq:19ox} $\supseteq$ basis-phase correspondence form \eqref{eq:25to} $\supseteq$ pre-phase-honest form \eqref{eq:38islx} $\supseteq$ phase-honest-form \eqref{eq:39islx} $\supseteq$ \eqref{eq:20}
\end{center}
For the two state forms described in Section \ref{sec:2.8.2}, intuitively, the $\fCoPhTest$ aims at testing a basis-phase correspondence form is a pre-phase honest form, and $\fInPhTest$ aims at testing a pre-phase-honest form is a phase-honest form. Once the overall state is known to be in the form of \eqref{eq:39islx}, the verification of phase information of the server-side states has been completed up to a complex-conjugate ambiguity.\par
In the next section we discuss how we formally analyze our protocols to bridge these gap step-by-step.
%
\subsection{Security Proofs Structure}\label{sec:2.9}
The security proofs go as follows at a high level. Below $\ket{\varphi}$ stands for the output state of $\fAddPhaseWithswitch$. The goal is roughly to show this state, after the client reveals the key $K$, is approximately isometric to the honest state \eqref{eq:cqtarget0}.
\begin{align}
&\text{An unverified state }\ket{\varphi}\\
\text{($\fStdBTest$) }\Rightarrow &\text{Basis-honest form}\label{eq:38po}\\
\text{(Properties of $\fAddPhaseWithswitch$) }\Rightarrow &\text{Basis-phase correspondence form}\label{eq:39po}\\
(\fCoPhTest)\Rightarrow &\text{Pre-phase-honest form}\label{eq:40po}\\
(\fInPhTest)\Rightarrow &\text{Phase-honest form}\label{eq:41po}\\	
 (\fBNTest)\Rightarrow &\text{Form \eqref{eq:20} up to a complex-conjugate ambiguity}\label{eq:42po}\\
 (\text{Client reveals $K$})\Rightarrow &\text{Form \eqref{eq:cqtarget0}}\label{eq:43po}
\end{align}

where each arrow in \eqref{eq:38po}\eqref{eq:40po}\eqref{eq:41po}\eqref{eq:42po} means we make use of the fact that the adversary could pass these tests with high probability to derive that the initial state has a specific form, and each arrow in \eqref{eq:39po}\eqref{eq:43po} means the design of the protocol implies the initial state $\ket{\varphi}$ has the corresponding forms regardless of the adversary's passing probability.\par
However, there is a tricky problem during the proof of the arrows of \eqref{eq:39po}\eqref{eq:40po}\eqref{eq:41po}. In these arrows we implicitly assume the previous steps perfectly verifies the forms of states; but this is not the case, all the steps in  \eqref{eq:39po}\eqref{eq:40po}\eqref{eq:41po} are approximate, which leads to a composability issue between the analysis of each subprotocol. In more detail, for example, \eqref{eq:41po} says ``if the initial state is in a pre-phase-honest form, and it could pass the $\fInPhTest$ with high probability, then it's close to a phase honest form''; but it is not necessarily the case that if the initial state is only approximately in a pre-phase-honest form, the same statement still works! Note that for each arrow in \eqref{eq:39po}\eqref{eq:40po}\eqref{eq:41po}, the condition before the arrow are typically not the only conditions that we need when we want to formally prove the result after the arrow; many properties are needed, for example, efficiently preparable property or security of keys. These properties are typically not preserved by a general approximation on the state, which makes it sophisticated to work on server-side approximation directly.\par 
To address this problem, the first step is to work on the \emph{purified joint state} of both the client and the server. In the real execution the client is classical while the server holds quantum states; in the security analysis the purified joint state is defined to be the state where all the classical randomness are replaced by quantum superpositions (on which a collapsing measurement gives the same classical randomness; note that for this purification we do not introduce environment or reference system). \footnote{This treatment is slightly different from the usual notion of purification, where a classical register is purified by entangling it with the environment; here classical registers are replaced by quantum superpositions directly. However, we will see, the registers that hold these classical randomness, once initialized, will not be revised by any operation during the protocol including the final distinguisher; these registers are \emph{read-only} once initialized. In this setting two purifications look completely the same.}\footnote{This purification treatment of classical information is not new in our protocol. It is also used in several existing works like \cite{vidicknotes,compressedoracle}. The treatments after the purification are different.}\par
Then the observation is as follows. The purified joint state contains a large entanglement between the client and the server. Then a duality between the client-side and the server-side emerges:
\begin{align*}
	&\text{ Server-side state approximately has a form}\\
	\rightleftharpoons &\text{ Joint state is approximately invariant under an operation that revises the client-side registers}
\end{align*}
 In more detail, in our security proofs, we will design a series of \emph{randomization operators}. Corresponding to \eqref{eq:39po}\eqref{eq:40po}\eqref{eq:41po}, these operators are denoted as $\cR_1$, $\cR_2$, $\cP$. These operators are defined on the joint state of the client and the server, and revise the client-side registers (possibly controlled by registers in other parties). 
 These operators have the following properties: \begin{itemize}\item The honest state is invariant under these operators; \item The execution of the protocol or the ability to pass a test in each of \eqref{eq:39po}\eqref{eq:40po}\eqref{eq:41po} implies approximate invariance of the purified joint state under the corresponding operator; \footnote{The formal theorems corresponding to Theorem \ref{thm:2.3} and Theorem \ref{thm:2.4} will also be described in this way.} \item The output of a randomization perfectly has form that we aim at in each of \eqref{eq:39po}\eqref{eq:40po}\eqref{eq:41po}.\end{itemize}
 With these tool, when we analyze our subprotocols, the theorem statement will be ``if this test could be passed with high probability, the state will be approximately invariant under the corresponding randomization operator''. Approximate invariance under randomization operators composes with each other naturally and turns out to have much nicer properties than simply saying the server's state is close to a state that has a specific property: for example, randomization operators are efficient operators that operate on some specific registers, which allow us to prove some security properties that we need on the state are preserved.\par
To give the reader a feeling of our technique, we give a minimum example, which only contains one gadget, that illustrates the first property (the honest state is invariant) of $\cR_1$:
\begin{exmp}\label{exmp:r1}Expanding the honest state by writing down all the possible client side phases:
	$$\sum_{\theta_0,\theta_1\in \{0,1\cdots 7\}^2}\frac{1}{8}\underbrace{\ket{\theta_0}\ket{\theta_1}}_{\text{client}}\otimes \underbrace{\frac{1}{\sqrt{2}}(e^{\theta_0\mi\pi/4}\ket{x_0}+e^{\theta_1\mi\pi/4}\ket{x_1})}_{\text{server}}$$
	This is the purified joint state of the client and the server since the client-side phase registers are explicit and entangled with the server-side system. (Note that, the purified joint states have actually already been used in the previous subsections of this technical overview.) We make the client-side phase registers explicit since we will need to work on them, and omit the client-side key registers.\par
	 Introduce randomness $\Delta_0,\Delta_1\in_r \{0,1\cdots 7\}^2$, and write out their registers explicitly (after purification):
	\begin{equation}\label{eq:28to}\sum_{\Delta_0,\Delta_1\in \{0,1\cdots 7\}^2}\frac{1}{8}\ket{\Delta_0}\ket{\Delta_1}\otimes\sum_{\theta_0,\theta_1\in \{0,1\cdots 7\}^2}\frac{1}{8}\underbrace{\ket{\theta_0}\ket{\theta_1}}_{\text{client}}\otimes \underbrace{\frac{1}{\sqrt{2}}(e^{\theta_0\mi\pi/4}\ket{x_0}+e^{\theta_1\mi\pi/4}\ket{x_1})}_{\text{server}}\end{equation}
	Then \eqref{eq:28to} is invariant under the following controlled-swap operations controlled by the server-side branch subscripts:
	\begin{equation}\label{eq:45v}x_0\text{-branch: }\ket{\Delta_0}\ket{\Delta_1}\ket{\theta_0}\ket{\theta_1}\ket{x_0}\rightarrow \ket{\Delta_0}\ket{\theta_1}\ket{\theta_0}\ket{\Delta_1}\ket{x_0}\end{equation}
		\begin{equation}\label{eq:45v2}x_1\text{-branch: }\ket{\Delta_0}\ket{\Delta_1}\ket{\theta_0}\ket{\theta_1}\ket{x_1}\rightarrow \ket{\theta_0}\ket{\Delta_1}\ket{\Delta_0}\ket{\theta_1}\ket{x_1}\end{equation}
		which means in \eqref{eq:45v} the client-side value of $\theta_1$ is randomized by $\Delta_1$ and in \eqref{eq:45v2} the client-side value of $\theta_0$ is randomied by $\Delta_0$.\par
		What's more, we can also show this randomization operations takes an arbitrary basis-honest form to a basis-phase correspondence form.
\end{exmp}

\subsection{Basis Uniformity Test ($\fBNTest$)}\label{sec:2.10}
We have developed a set of tools for verifying the phases, which could verify the client and server's joint state is approximately in the form of a phase-honest state. Compare to the target state \eqref{eq:gadget}, what remains to be verified is the norm of each branch is close to each other.\par
Let's again start with the simple single-gadget case to explain the initial intuition. Suppose the client holds keys $K=(x_0,x_1)$ and the server-side state is in the form of
\begin{equation}\label{eq:11}\alpha_0\ket{x_0}+\alpha_1\ket{x_1}\end{equation}
If the client wants to verify $\alpha_0\approx \alpha_1$, the protocol used here is still the (RO-padded) Hadamard test:
\begin{claim}[Informal]\label{claim:2.7} If the server could pass the Hadamard test with initial state in the form of \eqref{eq:11}, there has to be $\alpha_0\approx \alpha_1$.\end{claim}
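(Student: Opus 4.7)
The plan is to analyze the honest Hadamard measurement strategy directly, then upgrade the result to arbitrary malicious strategies. First I would apply the identity $\fH^{\otimes \kappa}\ket{x}=\frac{1}{\sqrt{2^\kappa}}\sum_d(-1)^{d\cdot x}\ket{d}$ bit-wise to $\alpha_0\ket{x_0}+\alpha_1\ket{x_1}$, obtaining $\frac{1}{\sqrt{2^\kappa}}\sum_d(\alpha_0(-1)^{d\cdot x_0}+\alpha_1(-1)^{d\cdot x_1})\ket{d}$. The set of $d$ splits evenly according to the parity of $d\cdot(x_0\oplus x_1)$: the ``passing'' branch $d\cdot x_0=d\cdot x_1\bmod 2$ contains $2^{\kappa-1}$ strings with amplitude $\pm(\alpha_0+\alpha_1)/\sqrt{2^\kappa}$, and the ``failing'' branch contains $2^{\kappa-1}$ strings with amplitude $\pm(\alpha_0-\alpha_1)/\sqrt{2^\kappa}$. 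Discarding the $d=0$ term costs only a $\fneg(\kappa)$ correction.

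Next I would collect the passing probability under the honest strategy: it equals $\frac{2^{\kappa-1}-1}{2^\kappa}|\alpha_0+\alpha_1|^2=\tfrac{1}{2}|\alpha_0+\alpha_1|^2-\fneg(\kappa)$. Using normalization $|\alpha_0|^2+|\alpha_1|^2=1$, this expands as $\tfrac12(1+2\nRe(\alpha_0^*\alpha_1))$, so $\Pr[\fpass]\geq 1-\epsilon$ forces $\nRe(\alpha_0^*\alpha_1)\geq \tfrac12-\epsilon-\fneg(\kappa)$. Coupling this with the chain $\nRe(\alpha_0^*\alpha_1)\leq|\alpha_0||\alpha_1|\leq\tfrac{|\alpha_0|^2+|\alpha_1|^2}{2}=\tfrac12$ makes both Cauchy--Schwarz and AM--GM nearly tight, which pins $|\alpha_0|\approx|\alpha_1|\approx 1/\sqrt{2}$ and forces the phase of $\alpha_0^*\alpha_1$ to be approximately $0$. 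Hence $\alpha_0\approx\alpha_1$ with quantitative slack $O(\sqrt{\epsilon})+\fneg(\kappa)$. The same calculation extends to the case where \eqref{eq:11} is entangled with auxiliary registers $\ket{\psi_0},\ket{\psi_1}$: the cross term becomes $2\alpha_0\alpha_1\nRe\langle\psi_0|\psi_1\rangle$, and the argument additionally yields $\ket{\psi_0}\approx\ket{\psi_1}$, which is what the \fBNTest\ actually wants to certify in its application context.

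The main obstacle is that a malicious server is not bound to use the honest Hadamard measurement, so I cannot a priori equate its pass probability with $\tfrac12|\alpha_0+\alpha_1|^2$. My plan for closing this gap is to invoke the (possibly RO-padded) Hadamard-test soundness framework developed in the preceding sections, which says that for any efficient NTCF-key-pair adversary, any POVM producing a valid non-zero $d$ is dominated, up to $\fneg(\kappa)$ error, by the projector onto the passing outcomes of the honest bit-wise Hadamard measurement; in other words, the adaptive hardcore bit property of NTCF \cite{BCMVV} prevents any cheating strategy from outperforming the honest one. Applying this black-box to the initial state $\alpha_0\ket{x_0}+\alpha_1\ket{x_1}$ reduces the general case to the honest-strategy calculation above, yielding $\alpha_0\approx\alpha_1$ as claimed. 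I expect essentially all the work to be in the first two paragraphs, while the third is an invocation of existing machinery.
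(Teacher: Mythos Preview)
Your honest-strategy calculation is fine, but the third paragraph contains a genuine gap. You write that ``the adaptive hardcore bit property of NTCF \cite{BCMVV} prevents any cheating strategy from outperforming the honest one,'' yet the paper explicitly declares (Definition~\ref{defn:ntcf}) that it does \emph{not} assume the adaptive hardcore bit property. The whole reason for the random-oracle padding in Protocol~\ref{prtl:phadamard} is to obtain soundness from plain claw-freeness alone, and the mechanism is not ``no POVM outperforms honest Hadamard.'' So the black box you plan to invoke does not exist in this paper, and your reduction of the malicious case to the honest calculation is unsupported.

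What the paper actually proves (Lemma~\ref{lem:prepht} and Corollary~\ref{cor:prepht}) is a branch-by-branch statement. Write the semi-honest state as $\ket{\varphi_0}+\ket{\varphi_1}$ with $\ket{\varphi_b}$ the $\bx_b$-branch. On branch $b$ the server knows $x_b$ but, by claw-freeness and Lemma~\ref{lem:5.2}, cannot query $H(\bpad\|x_{1-b})$; since the check bit \eqref{eq:htt} depends on this fresh random value, it is uniform conditioned on any non-zero $d$, regardless of the adversary's POVM. This yields $|\Pi_{\fpass}\fHadamardTest^{\fAdv}\ket{\varphi_b}|\approx\tfrac{1}{\sqrt2}|\ket{\varphi_b}|$ for every efficient $\fAdv$: an exact $50$--$50$ split on each branch separately, not a comparison against the honest strategy. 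The conclusion then comes from Fact~\ref{fact:3} (if $|\ket{\chi_0}|^2+|\ket{\chi_1}|^2\le\tfrac12$ and $|\ket{\chi_0}+\ket{\chi_1}|\ge 1-\epsilon$ then $\ket{\chi_0}\approx_{\sqrt\epsilon}\ket{\chi_1}$) applied to $\ket{\chi_b}=\Pi_{\fpass}\fHadamardTest^{\fAdv}\ket{\varphi_b}$. The paper never isolates the honest-strategy pass probability at all; once you have the per-branch $50$--$50$ lemma, your Cauchy--Schwarz/AM--GM detour is unnecessary and Fact~\ref{fact:3} gives $\alpha_0\approx\alpha_1$ (indeed $\ket{\chi_0}\approx\ket{\chi_1}$, which is the stronger statement actually used in the $\fBNTest$ analysis) in one line.
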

The difficulty is still in the multi-gadget case. Suppose the client holds $L$ pairs of keys $K^{(i)}=(x_0^{(i)},x_1^{(i)}),i\in [L]$ and the server-side state is already verified to have the form:
\begin{equation}\label{eq:bn18}\sum_{b^{(1)}b^{(2)}\cdots b^{(L)}\in \{0,1\}^L}\alpha_{b^{(1)}b^{(2)}\cdots b^{(L)}}\ket{x^{(1)}_{b^{(1)}}x^{(2)}_{b^{(2)}}\cdots x^{(L)}_{b^{(L)}}}\end{equation}
where the coefficients are non-negative real numbers. The client wants to verify the state is close to
\begin{equation}\label{eq:12r}\sum_{b^{(1)}b^{(2)}\cdots b^{(L)}\in \{0,1\}^L}\frac{1}{\sqrt{2^L}}\ket{x^{(1)}_{b^{(1)}}x^{(2)}_{b^{(2)}}\cdots x^{(L)}_{b^{(L)}}}\end{equation}
in linear time.\par
Note that here we assume the honest state is \eqref{eq:12r}, while in Toy Protocol \ref{toyprtl:3} the phases have already been added when $\fBNTest$ is executed, which seem incompatible; in real protocol in the $\fBNTest$ the client will first simply reveal all the phase information on these gadgets to allow the honest server to remove the phases.\par
Again, we will use the global combine-and-test method to achieve this goal. Informally, the \emph{basis uniformity test} ($\fBNTest$) is as follows:
\begin{toyprtl}\label{toyprtl:7}
\begin{enumerate}
\item The client chooses a random subset of index $I\subseteq [L]$;
\item The client instructs the server to combine the gadgets with index in $I$ into a single gadget (using the lookup tables discussed in $\fCoPhTest$); for the gadgets with index outside $I$, the client instructs the server to make a standard basis measurement and check the results.
\item Both parties execute a Hadamard test on the combined gadget.	
\end{enumerate}
\end{toyprtl}
Informally we have the following claim that captures the power of basis uniformity test.
\begin{claim}
	If a protocol of form \eqref{eq:bn18} could pass the basis uniformity test with high probability, the state is close to \eqref{eq:12r}.
\end{claim}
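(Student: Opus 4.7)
The plan is to reduce the multi-gadget test to an averaged collection of single-gadget Hadamard tests (Claim~\ref{claim:2.7}) and then aggregate the resulting pairwise equality constraints on the $|\alpha_{\vec{b}}|$'s into a global uniformity statement. Fix the adversary's state in the form \eqref{eq:bn18}; absorbing any environment register into the $\alpha_{\vec b}$ does not affect the Hadamard-test acceptance, so we may treat them as scalars. For a fixed random $I\subseteq[L]$, the standard-basis measurement on the gadgets in $[L]\setminus I$ outputs $\vec{b}_{\bar I}$ with probability $q_{\vec{b}_{\bar I}} := \sum_{\vec{b}_I}|\alpha_{(\vec{b}_I,\vec{b}_{\bar I})}|^2$ and leaves the $I$-positions in the normalized superposition $\frac{1}{\sqrt{q_{\vec{b}_{\bar I}}}}\sum_{\vec{b}_I}\alpha_{(\vec{b}_I,\vec{b}_{\bar I})}\ket{x_{\vec{b}_I}^{I}}$.

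Next I analyze the combine step following the structure of Toy Protocol~\ref{toyprtl:6}: the client's randomly sampled lookup table partitions the $2^{|I|}$ basis vectors into $2^{|I|-1}$ pairs with a fresh random key per pair, and the server's measurement of the table-output register collapses the state onto one such pair $(\vec{b}_I^{(1)},\vec{b}_I^{(2)})$, yielding a two-branch state with coefficients proportional to $\bigl(\alpha_{(\vec{b}_I^{(1)},\vec{b}_{\bar I})},\alpha_{(\vec{b}_I^{(2)},\vec{b}_{\bar I})}\bigr)$. Claim~\ref{claim:2.7} applied to the final Hadamard test then certifies $|\alpha_{(\vec{b}_I^{(1)},\vec{b}_{\bar I})}|\approx |\alpha_{(\vec{b}_I^{(2)},\vec{b}_{\bar I})}|$ up to an error controlled by the conditional test-failure probability.

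The remaining step is to aggregate these pairwise conditional constraints into a global statement on $q(\vec{b}):=|\alpha_{\vec{b}}|^2$. The idea is to choose the distribution over $I$ so that, by assigning constant weight to $|I|=1$, each bit position $i$ and each conditioning $\vec{b}_{-i}$ contribute a constraint $q(\vec{b}_{-i},0)\approx q(\vec{b}_{-i},1)$; a chain-rule / martingale argument over the bit positions then upgrades these conditional marginals to $\|q-\mathrm{Unif}_{\{0,1\}^L}\|_1\ll 1$. Combined with the phase information already verified by the preceding phase tests, and converting amplitudes to probabilities via a Fuchs--van~de~Graaf-type bound, this yields trace-distance closeness of the adversary's state to \eqref{eq:12r}.

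The main obstacle will be the aggregation step: each single-gadget Hadamard test yields only a noisy equality $|\alpha_{\vec b}|\approx |\alpha_{\vec{b}'}|$ up to some $\epsilon$, and chaining $L$ such constraints along a spanning tree of pairs would naively blow up the error by a factor of $L$. To keep the overall soundness constant inside a linear-time protocol, the accumulation must be done in an $\ell_2$ rather than an $\ell_1$ sense, exploiting that a single run of $\fBNTest$ produces, in expectation over its internal randomness, a quantity that directly upper-bounds $\sum_{i\in[L]}\mathbb{E}_{\vec{b}_{-i}}\bigl[q(\vec{b}_{-i},0)-q(\vec{b}_{-i},1)\bigr]^2$. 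Using subadditivity of this conditional variance across bit positions, rather than a naive triangle inequality over $2^L$ pairs, is the key to closing the argument without a $\mathrm{poly}(L)$ loss.
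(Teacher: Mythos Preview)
Your aggregation step has a real gap, and the proposed fix via single-bit flips plus Efron--Stein/subadditivity does not close it. If the test puts constant weight on $|I|=1$ and has total failure probability $\epsilon$, one obtains $\tfrac{1}{L}\sum_i\sum_{\vec b}(c_{\vec b}-c_{\vec b\oplus e_i})^2\lesssim\epsilon$, hence $\sum_i\sum_{\vec b}(c_{\vec b}-c_{\vec b\oplus e_i})^2\lesssim L\epsilon$; Efron--Stein then only gives $2^{L}\mathrm{Var}[c]\lesssim L\epsilon$, i.e.\ $\sum_{\vec b}(c_{\vec b}-2^{-L/2})^2\lesssim L\epsilon$. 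The factor $L$ is intrinsic to the single-coordinate subadditivity bound, so this route cannot give constant error inside a linear-time protocol.

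The paper's argument works differently and is where the ``random $I$'' really earns its keep. The test draws $I\subseteq[L]$ \emph{uniformly}, and the combine step does not randomly pair the $2^{|I|}$ branches: it always pairs $\vec c$ with its bitwise complement on $I$. Thus after the standard-basis measurement outside $I$, the two surviving branches are exactly $\vec b$ and $X(I)\vec b$ (all $I$-bits flipped). The Hadamard test then yields, after the bookkeeping in the formal proof,
\[
\frac{1}{2^L}\sum_{I\subseteq[L]}\ \sum_{\vec b\in\{0,1\}^L}\bigl(c_{\vec b}-c_{X(I)\vec b}\bigr)^2\ \lesssim\ \sqrt{\epsilon}.
\]
For fixed $\vec b$, as $I$ ranges over all $2^L$ subsets, $X(I)\vec b$ ranges over all of $\{0,1\}^L$; so the left-hand side is exactly $\tfrac{1}{2^L}\sum_{\vec b_0,\vec b_1}(c_{\vec b_0}-c_{\vec b_1})^2$. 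An elementary identity (Fact~\ref{fact:bntesttool}) converts this all-pairs $\ell_2$ bound directly into $\sum_{\vec b}(c_{\vec b}-2^{-L/2})^2\lesssim\sqrt{\epsilon}$ with no chaining and no $L$ anywhere. Restricting attention to $|I|=1$ discards precisely the terms that make this identity work.

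Two smaller points: the bound is naturally on amplitudes $c_{\vec b}=|\alpha_{\vec b}|$ rather than probabilities $q(\vec b)$, so no Fuchs--van~de~Graaf conversion is needed; and phases play no role in $\fBNTest$ (they are revealed and removed before the test), so the appeal to ``phase information already verified'' is a detour.
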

Let's first compare the basis uniformity test with the collective phase test, and discuss its intuitions.
\paragraph{Comparison to the Collective Phase Test} We note that there is an important difference of this test and the collective phase test constructed previously, even if both tests have the combine-and-test structure. In the collective phase test all the gadgets are combined together; while in the basis uniformity test the client samples a random subset of gadgets. The importance of this difference is illustrated by the following example, in which the combine-all test could not detect the deviation, while the combine-a-subset could detect.
\begin{exmp}
Consider the state
\begin{equation}\label{eq:12}\frac{1}{\sqrt{2}}(\ket{x_0^{(1)}x_0^{(2)}\cdots x_0^{(L)}}+\ket{x_1^{(1)}x_1^{(2)}\cdots x_1^{(L)}})\end{equation}
We can see the state \eqref{eq:12} is far from the target state \eqref{eq:12r}. And we have:
\begin{itemize}
\item It passes the combine-all protocol (that is, choose $I=[L]$ in Toy Protocol \ref{toyprtl:7}).
\item It could not pass the test in Toy Protocol \ref{toyprtl:7}: intuitively, if the client chooses some subset $I$ of all the indices, and instructs the server to make a standard basis measurement on the remaining registers, the state in registers with indices in $I$ will also collapse and the server will not be able to pass the Hadamard test using the remaining state. 
\end{itemize}
 \end{exmp}
 Thus we can see the random selection of $I$ is necessary for the basis uniformity test. An intuition for the basis uniformity test is as follows. We note that, just before the Hadamard test step, the server-side state is expected to be in the form of
 \begin{equation}\label{eq:38r}\alpha_{\vec{b}_0}\ket{\vec{x}_{\vec{b}_0}}+\alpha_{\vec{b}_1}\ket{\vec{x}_{\vec{b}_1}},\quad \vec{b}_0,\vec{b}_1\in \{0,1\}^L\end{equation}
where $\vec{x}_{\vec{b}_0}$, $\vec{x}_{\vec{b}_1}$ represent two branches of \eqref{eq:bn18}, and the randomness of $\vec{b}_0$, $\vec{b}_1$ come from the random choice of $I$ and the random collapsing in step 2 of Toy Protocol \ref{toyprtl:7}. \par
By Claim \ref{claim:2.7} intuitively we know  
 \begin{equation}\label{eq:50ll}\alpha_{\vec{b}_0}\approx \alpha_{\vec{b}_1}\end{equation}
 However we note \eqref{eq:50ll} only holds on average. We further note the probability that $\vec{b}_0,\vec{b}_1$ appear are in turn determined by the values of $\alpha_{\vec{b}_0}$, $\alpha_{\vec{b}_1}$ themselves. What's more, each of these probabilities is only exponentially small, which leads to additional obstacles in the security proof. (A re-normalized state of an exponentially-small state does not necessarily follow the formal version of Claim \ref{claim:2.7} since the state might not even be efficiently-preparable.) Thus we need a careful analysis of the protocol that addresses these problems. Finally we could prove, the high passing probability of the $\fBNTest$ implies:
 \begin{equation}\label{eq:51ll}\frac{1}{2^L}\sum_{\vec{b}_0\in \{0,1\}^L,\vec{b}_1\in \{0,1\}^L}|\alpha_{\vec{b}_0}- \alpha_{\vec{b}_1}|^2\leq O(1),\end{equation}
 which could be understood as a suitable average version of \eqref{eq:50ll}. Then by linear algebra \eqref{eq:51ll} implies
  $$\sum_{\vec{b}\in \{0,1\}^L}|\alpha_{\vec{b}}-\frac{1}{\sqrt{2^L}}|^2\leq O(1)$$
  which completes the proof.
 \subsection{Amplification to RSPV}
 We do not construct RSPV protocol directly; instead, we define an intermediate notion called \emph{pre-RSPV}. We will use the techniques described so far to design a pre-RSPV protocol, and then use a repetition-based amplification procedure to get an RSPV protocol.\par
 In more detail, our techniques so far have the following limitations, which do not fit into the RSPV notion, but are allowed in the pre-RSPV notion:
 \begin{itemize}
 	\item As said in Section \ref{sec:2.6}, the honest server does not necessarily win the test; what our protocol can verify is, if the server passes and wins with close-to-optimal probability, the output state should have the verifiability property we want.
 	\item The protocol construction framework in Section \ref{sec:2.4} does not necessarily generate an output state; it only generates output state in the last case of Toy Protocol \ref{toyprtl:3}. 
 \end{itemize}
We formalize the notion of pre-RSPV in Section \ref{sec:4.5}. Informally:
\begin{itemize}
	\item In pre-RSPV in the beginning of the protocol the client will randomly choose a round type in $\{\ftest,\fquiz,\fcomp\}$. 
	\item In the end of the protocol the client will output a $flag\in \{\fpass,\ffail\}$ and a $score\in \{\fwin,\flose,\perp\}$. 
	\item In all round types, a $flag$ will be generated. The honest server will not lead to a $\ffail$ flag (except with negligible probability). Thus if the client outputs $\ffail$ as the flag, it directly shows the server is cheating.\par
	In $\fquiz$ round, the client will possibly write $\fwin$ or $\flose$ as the score. (In the other round types the score is $\perp$ by default.) The honest server should $\fwin$ with probability $\OPT$ conditioned on a $\fwin/\flose$ score is generated.\par
	In $\fcomp$ round, the client will get output keys and the server will get output states.
\end{itemize}
Then the amplification of pre-RSPV to RSPV is achieved in Section \ref{sec:13.1}. This is achieved in the following way:
\begin{enumerate}
	\item Run many rounds of pre-RSPV and the client calculates the total score (the number of $\fwin$). The client outputs $\ffail$ if any pre-RSPV subprotocol returns $\ffail$ or the total score is significantly smaller than the expected value of honest behavior.
	\item The client chooses a random round and if it's a $\fcomp$ round, use the output keys and output state of this round as the output keys and output state of the RSPV protocol. If it's not a $\fcomp$ round, go back to step $1$.
\end{enumerate}

\section{Preliminaries}\label{sec:3}
\subsection{Basic Notations and Facts}
We refer to \cite{NielsenChuangs} for basics of quantum computation. Here we clarify some notations that will be used in our works.
\subsubsection{Quantum gates}
We choose the elementary gate set to be $\{\fX,\fY,\fZ,\fH,\fP,\fT,\fCN,\fToffoli\}$ (which is a typical choice). \footnote{To avoid ambiguity we note $\fP=\begin{bmatrix}
	1&\\&\mi
\end{bmatrix}$.}
\subsubsection{Basic notations}
\begin{nota}\label{nota:l}
We use $[N]$ to denote set $\{1,2\cdots N\}$. Use $[0,N]$ to denote set $\{0,1\cdots N\}$. When an algorithm iterates through all the elements in these sets it iterates from the smaller to the bigger.
\end{nota}
\begin{nota}
A normalized vector is defined to be a vector with norm $1$; a sub-normalized vector is defined to be a vector with norm $\leq 1$. A sub-normalized probability distribution is defined to be a non-negative real vector whose sum of coordinates is $\leq 1$. 	
\end{nota}

\begin{nota}\label{nota:register}
	We use bold font like $\bK,\bTheta$ to denote registers. And we use normal font like $K,\Theta$ to denote values of the corresponding registers.
For a register $\bR$, we use $\Domain(\bR)$ to denote the set of its valid values.\par
When we write equations on registers, for example, $\bS=\bx$, there are two possible meanings: (1) the subspace where the values of $\bS$ is equal to the values of $\bx$, or (2) a statement which says the state that we are studying falls completely in the space where the values of $\bS$ is equal to the values of $\bx$. The choice of meanings is determined by the context.
\end{nota}
\begin{nota}
	We use $\tSUM(\vec{e})$ to denote the sum of all the terms of $\vec{e}$. Note that in this work this notation is only applied on phase information and the addition is in $\bZ_8$.
\end{nota}
\begin{nota}We use $\Pi$ to denote projections. We use the superscript to denote the registers that the projection applied on and the subscript to denote the space that it projects on. For example, $\Pi^{\bS}_0$ projects onto the space that the register $\bS$ is in value $0$.
\end{nota}

We use $\bbI$ to denote the identity. Thus $(\bbI-\Pi)$ is the complementary projection of $\Pi$.

\begin{nota}
We use $|\cdot|$ to denote the norm of a state, length of a string, size of a set, and the number of random oracle queries of an operator (in the quantum random oracle model). 	
\end{nota}
\begin{nota}
Use $\Pi_E$ to denote the projection onto some space $E$. We call $|\Pi_{E}\ket{\varphi}|^2$ the probability that event $E$ happens when the state of the system is described by $\ket{\varphi}$, or simply the probability that $E$ happens. Note that we only require $\ket{\varphi}$ to be sub-normalized to make this notion well-defined.
\end{nota}
\begin{nota}
Suppose $A,B$ are two quantum operations. We use $AB$ or $A\circ B$ to denote the composition (matrix multiplication when they are represented as matrices) of these two operators. (We make $\circ$ explicit or implicit interchangeably.)
\end{nota}

The fonts used in this paper for describing operations could be normal (like $A,B,U$), sans-serif (like $\fH,\fX$, or $\fPrtl,\fAdv$)  or calligraphic (like $\cP$). Typically sans-serif fonts are used for elementary gates and protocol execution steps, calligraphic fonts are used for abstract operations, and normal fonts are mainly used as intermediate symbols, but we do not put strict rule for their usage.\par

\begin{nota}
In cryptographic protocols there is often a parameter $\kappa$ called security parameter. Then we say an operator parameterized by $\kappa$ (denoted by $(O_\kappa)_{\kappa\in \bN}$, or $O$ if we make the security parameter implicit), is \emph{efficient} if there exists a polynomial time Turing machine that takes $1^\kappa$ as input and outputs the description of $O_\kappa$.\par
A state family $(\ket{\varphi_\kappa})_{\kappa\in \bN}$ is efficiently-preparable if there exists a efficient family of polynomial time operators (which could include projections) $(O_\kappa)_{\kappa\in \bN}$ such that $\ket{\varphi_\kappa}=O_\kappa\ket{0}$. Similarly in later proofs we make the security parameter implicit.\par
Negligible function $\fneg(\kappa)$ means a function that decreases to $0$ faster than any polynomial when $\kappa\rightarrow +\infty$.
\end{nota}
\begin{nota}
As seen in the introduction, in this work we need to work on many key pairs and phase pairs; we use superscript with parentheses to index them: for example, $K^{(1)}$, $K^{(2)}$, etc. Other types of information like the time step counter could also appear in the superscript position, but they do not have parentheses.
\end{nota}
\begin{nota}
We use $\in_r$ or $\leftarrow_r$ to mean an element is randomly sampled from a domain.	
\end{nota}

\subsubsection{Indistinguishability notations}
The following indistinguishability notations are used in our work.
\begin{nota}
We write
$\ket{\varphi}\approx_\epsilon\ket{\phi}$	
if $|\ket{\varphi}-\ket{\phi}|\leq \epsilon$.	
\end{nota}
Note the $\approx_\epsilon$ notation could also be used for two real numbers.


\begin{nota}
Let $\cF$ be a set of operators. We write $\ket{\varphi}\approx^{ind:\cF}_\epsilon\ket{\phi}$ if for any $\fAdv\in \cF$ that outputs a bit in a fixed register (denoted by $\bS$), there is
$$|\Pi_0^\bS\fAdv\ket{\varphi}|\approx_{\epsilon}|\Pi_0^\bS\fAdv\ket{\phi}|$$
\end{nota}
\begin{nota} The states and operators below are implicitly parameterized by the security parameter $\kappa$.\par
We write $\ket{\varphi}\approx^{ind}_\epsilon\ket{\phi}$ if $\ket{\varphi}\approx^{ind:\cF}_\epsilon\ket{\phi}$ where $\cF$ contains all the efficient operations on some registers (the choices of registers should be from the context).
\end{nota}
\subsubsection{Approximate invariance}
\begin{defn}
	If $O\ket{\varphi}\approx_\epsilon\ket{\varphi}$, we say $\ket{\varphi}$ is $\epsilon$-invariant under $O$.
\end{defn}
\subsubsection{CQ-states and purified joint states}\label{sec:2.1.1}
In this work since the client and the random oracle are classical and the server is quantum, the overall states of all the parties are generally described by CQ-states. However, CQ-states could be unnatural to work on; for security proofs in this work, we will mainly work on their purifications. In more detail, we introduce the following (which is similar to \cite{jiayu20}). 
\begin{nota}\label{nota:cq}
Consider a cq-state where the set of possible values for the classical part is $\cC$, the classical register, denoted by $\bbC$, is in value $c\in \cC$ with probability $p_c$, and the quantum part is in state $\ket{\varphi_c}$ correspondingly. Then the overall cq-state is denoted as
$$\sum_{c\in \cC} p_c\ket{c}\bra{c}\otimes \ket{\varphi_c}$$	
Generally the corresponding purified state is defined to be
\begin{equation}\label{eq:54nb}\sum_c\sqrt{p_c}\underbrace{\ket{c}}_{\bbC}\otimes \ket{\varphi_c}\underbrace{\otimes \ket{c}}_{\text{environment}}\end{equation}
In this work we consider its purified state to be
\begin{equation}\label{eq:55nb}\sum_c\sqrt{p_c}\underbrace{\ket{c}}_{\bbC}\otimes \ket{\varphi_c}\end{equation}
\end{nota}
In general these two purifications are not equivalent; but we could see they are equivalent under a specific class of operators:
\begin{defn}[Read-only]
We say an operator $O$ operates on a register $\bbC$ in a read-only way if the elementary gates in $O$ that are applied on $\bbC$ are solely $\fCN$ and $\fToffoli$, and $\bbC$ is only used as the control wire.
\end{defn}

\begin{fact}\label{fact:rra}
Define $\cF$ as the set of operators that operate on $\bbC$ in a read-only  way. Then $\eqref{eq:54nb}\approx^{ind:\cF}\eqref{eq:55nb}$.
\end{fact}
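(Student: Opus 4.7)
The plan is to reduce the indistinguishability claim to an exact equality of output norms, by exploiting the read-only structure of operators in $\cF$. First I would observe that any $\fAdv \in \cF$ touches $\bbC$ only through $\fCN$ and $\fToffoli$ gates in which $\bbC$ serves as a control wire, so each such elementary gate commutes with the projector $\ket{c}\bra{c}_\bbC$ for every $c \in \cC$. Composing across the whole circuit then yields a block-diagonal decomposition
$$\fAdv = \sum_{c \in \cC} \ket{c}\bra{c}_\bbC \otimes \fAdv_c,$$
where each $\fAdv_c$ acts on the remaining system registers only, and in particular not on the environment register introduced in \eqref{eq:54nb}, which by the convention underlying the $\approx^{ind:\cF}$ notation lies outside the registers accessible to operators in $\cF$.

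Next I would plug this decomposition into both states. For \eqref{eq:55nb},
$$\fAdv \sum_c \sqrt{p_c}\, \ket{c}_\bbC \otimes \ket{\varphi_c} = \sum_c \sqrt{p_c}\, \ket{c}_\bbC \otimes \fAdv_c \ket{\varphi_c},$$
and orthogonality of $\{\ket{c}_\bbC\}_{c\in\cC}$ gives $|\Pi_0^\bS \fAdv \eqref{eq:55nb}|^2 = \sum_c p_c\, |\Pi_0^\bS \fAdv_c \ket{\varphi_c}|^2$. For \eqref{eq:54nb}, because $\fAdv$ does not touch the environment, the analogous computation gives
$$\fAdv \sum_c \sqrt{p_c}\, \ket{c}_\bbC \otimes \ket{\varphi_c} \otimes \ket{c}_{\text{env}} = \sum_c \sqrt{p_c}\, \ket{c}_\bbC \otimes \fAdv_c \ket{\varphi_c} \otimes \ket{c}_{\text{env}},$$
and the joint basis $\{\ket{c}_\bbC \otimes \ket{c}_{\text{env}}\}_{c\in\cC}$ is still orthogonal across $c$, so $|\Pi_0^\bS \fAdv \eqref{eq:54nb}|^2$ evaluates to the same sum $\sum_c p_c\, |\Pi_0^\bS \fAdv_c \ket{\varphi_c}|^2$. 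The two output norms coincide exactly, giving the desired (zero-error) indistinguishability.

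The only mildly delicate points are: (i) verifying that $\fCN$ and $\fToffoli$ acting with $\bbC$ as a control literally commute with $\ket{c}\bra{c}_\bbC$, which is immediate since each such gate decomposes as $\sum_c \ket{c}\bra{c}_\bbC \otimes U_c^{(\mathrm{target})}$; and (ii) making explicit the convention that the purifying environment register of \eqref{eq:54nb} lies outside the registers on which operators in $\cF$ may act. Beyond these bookkeeping items the argument is a direct calculation, and I do not anticipate a substantive analytic obstacle.
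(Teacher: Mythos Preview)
Your argument is correct. The paper states this fact without proof, treating it as an immediate consequence of the definition of read-only access; your block-diagonal decomposition $\fAdv = \sum_{c} \ket{c}\bra{c}_{\bbC}\otimes \fAdv_c$ and the subsequent norm computation are precisely the standard verification one would supply, and the two bookkeeping points you flag (commutation of controlled gates with $\ket{c}\bra{c}_{\bbC}$, and the environment register being outside the reach of $\cF$) are the right ones to make explicit.
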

We also have the following fact.
\begin{fact}\label{fact:overalltool}
Define $\cF$ as the set of operators that operate on $\bbC$ in a read-only  way. Then for any set of real values $(\alpha_{c})_{c\in \cC}$, $\sum_{c\in \cC}\underbrace{\ket{c}}_{\bbC}\otimes \ket{\varphi_c}\approx^{ind:\cF}\sum_{c\in \cC}\underbrace{\ket{c}}_{\bbC}\otimes e^{\alpha_c\mi\pi}\ket{\varphi_c}$.
\end{fact}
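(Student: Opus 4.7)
The plan is to show the two cq-state purifications become equal (not just approximately so) after any read-only adversary $\fAdv$ acts, which trivially gives the $\approx^{\tind:\cF}$ relation. The key observation is that ``read-only on $\bbC$'' means every elementary gate touching $\bbC$ is diagonal in the computational basis of $\bbC$ (since $\fCN$ and $\fToffoli$ with $\bbC$ as a control wire preserve each $\ket{c}_\bbC$). Consequently any $\fAdv\in\cF$ admits a decomposition
$$\fAdv=\sum_{c\in\cC}\ket{c}\bra{c}_\bbC\otimes U_c$$
for some operators $U_c$ that do not touch $\bbC$. This is the one structural fact I would prove first by induction on the gate sequence of $\fAdv$.

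Let $P:=\sum_{c\in\cC}e^{\alpha_c\mi\pi}\ket{c}\bra{c}_\bbC$, so that the second state equals $P$ applied to the first. Using the decomposition above, a direct calculation gives $\fAdv\,P=P\,\fAdv$, since both products equal $\sum_c e^{\alpha_c\mi\pi}\ket{c}\bra{c}_\bbC\otimes U_c$. Next, because $\fAdv$ is supposed to write its output bit into a designated register $\bS$, and without loss of generality $\bS\neq\bbC$ (the adversary's output register is distinct from the classical-history register of the cq-state), the projection $\Pi_0^{\bS}$ commutes with $P$ as well.

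Putting these commutations together,
$$\Pi_0^{\bS}\fAdv\Bigl(\sum_c\ket{c}_\bbC\otimes e^{\alpha_c\mi\pi}\ket{\varphi_c}\Bigr)=\Pi_0^{\bS}\fAdv\,P\Bigl(\sum_c\ket{c}_\bbC\otimes\ket{\varphi_c}\Bigr)=P\,\Pi_0^{\bS}\fAdv\Bigl(\sum_c\ket{c}_\bbC\otimes\ket{\varphi_c}\Bigr).$$
Since $P$ is unitary (diagonal with unit-modulus entries), taking norms yields exact equality of the two acceptance amplitudes, which is stronger than the $\approx^{\tind:\cF}$ claim (with $\epsilon=0$).

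The only step requiring genuine care is the structural lemma that a read-only operator decomposes as $\sum_c\ket{c}\bra{c}_\bbC\otimes U_c$. I expect this to be the main, though still routine, obstacle: one verifies it for a single $\fCN$ or $\fToffoli$ controlled by wires of $\bbC$, notes that gates not touching $\bbC$ trivially factor through the decomposition, and then composes gate-by-gate, using that products of $\bbC$-diagonal operators stay $\bbC$-diagonal. Everything after that reduces to the commutation argument above.
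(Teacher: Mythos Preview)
Your proof is correct. The paper states this as a ``Fact'' without proof (it is placed alongside the equally-unproven Fact about the two purifications \eqref{eq:54nb} and \eqref{eq:55nb}, and is treated as an immediate observation), so there is no paper-proof to compare against; your commutation argument via the block-diagonal decomposition $\fAdv=\sum_c\ket{c}\bra{c}_\bbC\otimes U_c$ is exactly the intended reason the statement holds.
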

We introduce the following notion for simplicity of later discussions.
\begin{nota}\label{nota:compo}
	In the setting of Notation \ref{nota:cq}, for state \eqref{eq:55nb}, we call $\sqrt{p_c}\ket{c}\otimes \ket{\varphi_c}$ the component of $\ket{\varphi}$ when the register $\bbC$ is in value $c$.
\end{nota}
Then we give the notion of a state does not depend on (or is independent to) the value of some registers, as follows.
\begin{nota}\label{nota:3.16}
We say a purified joint state $\ket{\varphi}$ does not depend on the value of register $\bbC$ if it can be written as
$$\ket{\varphi}=\sum_{c\in \cC}\underbrace{\ket{c}}_{\bbC}\otimes \ket{\psi}$$	
\end{nota}


\subsubsection{Basic facts from linear algebra}\label{sec:factla}
The following facts from linear algebra will be used in the later proofs. These facts will be proved in Appendix \ref{app:a}.
\begin{fact}\label{fact:3}
If $|\ket{\varphi}|^2+|\ket{\phi}|^2\leq \frac{1}{2}$,  and $1-\epsilon\leq|\ket{\varphi}+\ket{\phi}|\leq 1$, then
$$\ket{\varphi}\approx_{\sqrt{\epsilon}}\ket{\phi}$$	
\end{fact}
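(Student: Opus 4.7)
The plan is to invoke the parallelogram identity
\[|\ket{\varphi}+\ket{\phi}|^2 + |\ket{\varphi}-\ket{\phi}|^2 = 2(|\ket{\varphi}|^2 + |\ket{\phi}|^2),\]
which cleanly connects the two pieces of data in the hypothesis (a bound on the sum of squared norms and a lower bound on the norm of the sum) to the quantity $|\ket{\varphi}-\ket{\phi}|$ that appears in the conclusion. This identity is the natural tool because neither the Cauchy–Schwarz nor the triangle inequality alone lets us turn a lower bound on $|\ket{\varphi}+\ket{\phi}|$ into an upper bound on the difference without also using the norm control on the individual vectors.

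First I would solve the parallelogram identity for the desired quantity to obtain
\[|\ket{\varphi}-\ket{\phi}|^2 = 2(|\ket{\varphi}|^2+|\ket{\phi}|^2) - |\ket{\varphi}+\ket{\phi}|^2.\]
Substituting the hypothesis $|\ket{\varphi}|^2+|\ket{\phi}|^2 \le \tfrac{1}{2}$ bounds the first term by $1$, and substituting $|\ket{\varphi}+\ket{\phi}| \ge 1-\epsilon$ gives
\[|\ket{\varphi}-\ket{\phi}|^2 \le 1 - (1-\epsilon)^2 = 2\epsilon - \epsilon^2 \le 2\epsilon,\]
so $|\ket{\varphi}-\ket{\phi}| \le \sqrt{2\epsilon}$. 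The statement as written is $\approx_{\sqrt{\epsilon}}$, which matches this bound up to a $\sqrt{2}$ factor; I read the paper's informal $\approx$-notation as absorbing such multiplicative constants, but in a written-out version I would either make the $\sqrt{2}$ explicit or rescale $\epsilon$ accordingly.

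There is essentially no obstacle: the argument is three lines of elementary linear algebra. The only minor observation worth recording is that the upper bound $|\ket{\varphi}+\ket{\phi}| \le 1$ in the hypothesis is automatic from $|\ket{\varphi}|^2+|\ket{\phi}|^2 \le \tfrac{1}{2}$ (by Cauchy–Schwarz, $|\ket{\varphi}+\ket{\phi}|^2 \le 2(|\ket{\varphi}|^2+|\ket{\phi}|^2) \le 1$), so it need not be used and is included only for context. Thus the essential content is the parallelogram identity applied to the pair $(\ket{\varphi},\ket{\phi})$, which works without any assumption beyond what is stated.
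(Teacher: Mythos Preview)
Your approach is correct and essentially identical to the paper's: the paper's proof is the single line ``This is implied by $|\ket{\varphi}-\ket{\phi}|^2=2(|\ket{\varphi}|^2+|\ket{\phi}|^2)-|\ket{\varphi}+\ket{\phi}|^2$,'' i.e., exactly the parallelogram identity you invoke. Your remark about the $\sqrt{2}$ factor is also apt, since the paper's one-line proof gives the same $\sqrt{2\epsilon}$ bound and does not explicitly address the constant.
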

The following two lemmas come from linear algebra and will be used in Section \ref{sec:11}.
\begin{fact}\label{fact:bntesttool1}
	Suppose $\vec{c}$, $\vec{d}$ are two vectors of non-negative real numbers of the same dimensions. Suppose there exists a vector $\vec{c^\prime}$ such that each coordinate of it is no bigger than the corresponding coordinate of $\vec{c}$, and $\vec{c^\prime}\approx_{\epsilon_1}\vec{d}$; and there exists a vector $\vec{d^\prime}$ such that each coordinate of it is no bigger than the corresponding coordinate of $\vec{d}$, and $\vec{d^\prime}\approx_{\epsilon_2}\vec{c}$. Then there is $\vec{c}\approx_{\sqrt{\epsilon_1^2+\epsilon_2^2}}\vec{d}$
\end{fact}

\begin{fact}\label{fact:bntesttool}
If non-negative real numbers $(c_d)_{d\in D}$ (where $D$ is the set of valid values of $d$) satisfy $$\sum_{d\in D}|c_d|^2\leq 1$$
	$$\frac{1}{{|D|}}\sum_{d_1\in D}\sum_{d_2\in D}|c_{d_1}-c_{d_2}|^2\approx_\epsilon 0$$ where $|D|$ is the size of set $D$. Then
	$$\sum_{d\in D} |c_{d}-\frac{1}{\sqrt{|D|}}c|^2\approx_{\min\{4c,4\epsilon/c+\frac{2}{D}\}} 0,\text{where }c:=\sqrt{\sum_{d\in D}|c_d|^2}$$
\end{fact}
The following lemma roughly says the normalization of the output of an efficiently preparable operator is also efficiently preparable. This lemma will be used in Section \ref{sec:13}.
\begin{fact}\label{fact:simfa}
(The states and operators in this fact are implicitly parameterized by $\kappa$.) Suppose an efficient quantum operation $O$ satisfies
$$O\ket{0}=\sum_{c\in \cC}\ket{c}\otimes \ket{\varphi_c}$$
where $\cC$ is a fixed set. Define $p_{c}=|\ket{\varphi_c}|^2$. Then for any $c\in \cC$, there exists an efficient quantum operation $\fSim^{O,c}$, a state $\ket{aux}$ such that 
$$\sqrt{p_c}\fSim^{O,c}\ket{0}\approx_{\fneg(\kappa)}\ket{\varphi_c}\otimes \ket{aux}$$	
\end{fact}

\subsubsection{Basic facts from probability theory}
\begin{lem}[Chernoff's bounds for streaming samples]\label{lem:chernoff}
	Consider a stream of samples $(s_i)_{i\in [N]}$ that are sampled sequentially. Each $s_i$ is sampled from $\{0,1\}$, and the probability of getting $1$ when the previous samples are $s_{<i}=s_1s_2\cdots s_{i-1}$ is $p_{i,s_{<i}}$. Suppose there exists a constant $p<1$ such that for each $i\in [N]$, each possible history of samples $s_{<i}$, there is $p_{i,s_{<i}}\leq p$. 
	Then
	$$\Pr[|\{i:s_i=1\}|\geq (1+\delta)pN]\leq e^{-\delta^2N/4}$$
\end{lem}
\begin{cor}[Chernoff's bounds for streaming samples, with noise]\label{cor:chernoff}
	Similarly consider a stream of samples $(s_i)_{i\in [N]}$ and define $p_{i,s_{<i}}$ similarly. Suppose there exists a constant $p<1$ such that for each $i\in [N]$, there exists a set of possible sample histories $S_i$, and: 
	\begin{itemize}
	\item The total probability that the sample sequences in $S_i$ appear is $\leq \epsilon$;	\item For each sample history $s_{<i}\not\in S_i$, there is $p_{i,s_{<i}}\leq p$.
	\end{itemize}
	Then
	$$\Pr[|\{i:s_i=1\}|\geq (1+\delta)pN]\leq e^{-\delta^2N/4}+N\epsilon$$
\end{cor}

\subsection{Noisy Trapdoor Claw-free Functions}\label{sec:ntcf}
We need to use the noisy trapdoor claw-free functions raised in \cite{BCMVV}. Note that we do not need the adaptive hardcore-bit property. Let's review the definition of $\fNTCF$ below.\footnote{Our formalization has a slightly different form from various existing works \cite{BCMVV,BKVV}; our formalism is no stronger than existing formalisms.}
\begin{defn}[$\fNTCF$]\label{defn:ntcf}
We define trapdoor claw-free function family $\fNTCF$ with post-quantum security as follows. It is parameterized by security parameter $\kappa$ and is defined to be a class of polynomial time algorithms as below. $\fNTCF.\fKg$ is a sampling algorithm. $\fNTCF.\fDc$, $\fNTCF.\fCHK$ are deterministic algorithms. $\fNTCF.\fEv$ is allowed to be a sampling algorithm. $ \fpoly^\prime$ is a polynomial that determines the the range size. $$\fNTCF.\fKg(1^\kappa)\rightarrow (\sk,\pk),$$ $$ \fNTCF.\fEv_\pk: \{0,1\}\times \{0,1\}^{\kappa}\rightarrow \{0,1\}^{\fpoly^\prime(\kappa)},$$ $$\fNTCF.\fDc_\sk: \{0,1\}\times \{0,1\}^{\fpoly^\prime(\kappa)}\rightarrow \{0,1\}^{\kappa}\cup \{\bot\},$$ $$\fNTCF.\fCHK_{\pk}: \{0,1\}\times \{0,1\}^{\kappa}\times \{0,1\}^{\fpoly^\prime(\kappa)}\rightarrow \{\ftrue ,\ffalse\}$$ And they satisfy the following properties:\par
\begin{itemize}
\item (Correctness) 
\begin{itemize}
\item (Noisy 2-to-1) For all possible $(\sk,\pk)$ in the range of $\fNTCF.\fKg(1^\kappa)$ there exists a sub-normalized probability distribution $(p_y)_{y\in \{0,1\}^{\fpoly^\prime(\kappa)}}$ that satisfies: for any $y$ such that $p_y\neq 0$, $\forall b\in \{0,1\}$, there is $\fNTCF.\fDc_\sk(b,y)\neq \bot$, and
$$\fNTCF.\fEv_\pk(\ket{+}^{\otimes \kappa})\approx_{\fneg(\kappa)}\sum_{y:p_y\neq 0}\frac{1}{\sqrt{2}}(\ket{\fNTCF.\fDc_\sk(0,y)}+\ket{\fNTCF.\fDc_\sk(1,y)})\otimes \sqrt{p_y}\ket{y}$$
\item (Correctness of $\fCHK$) For all possible $(\sk,\pk)$ in the range of $\fNTCF.\fKg(1^\kappa)$, $\forall x\in \{0,1\}^{\kappa}, \forall b\in \{0,1\}$: 	
$$\fNTCF.\fCHK_\pk(b,x,y)=\ftrue\Leftrightarrow\fNTCF.\fDc_{\sk}(b,y)=x$$

\end{itemize}
\item (Claw-free) For any BQP adversary $\fAdv$,
\begin{equation}\Pr\left[\begin{aligned}&(\sk,\pk)\leftarrow \fNTCF.\fKg(1^\kappa),\\&\fAdv(\pk,1^\kappa)\rightarrow (x_0,x_1,y):\quad x_0\neq \bot,x_1\neq \bot, x_0\neq x_1\\&\fNTCF.\fDc_\sk(0,y)=x_0,\fNTCF.\fDc_\sk(1,y)=x_1\end{aligned}\right]\leq \fneg(\kappa)\end{equation}
	\end{itemize}
\end{defn}
Then we have the following assumption about the existence of $\fNTCF$.
\begin{assump}\label{assump:1}
	There exists an efficient post-quantum $\fNTCF$ family.
\end{assump}
$\fNTCF$ can be instantiated using the Learning-with-Errors assumption: \cite{BCMVV}
\begin{thm}[Review of \cite{BKVV}]
	Assuming QLWE (post-quantum hardness of the Learing-with-Errors assumption) with suitable parameters, Assumption \ref{assump:1} holds.
\end{thm}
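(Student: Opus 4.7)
The plan is to instantiate the $\fNTCF$ from Definition \ref{defn:ntcf} using the LWE-based construction of \cite{BCMVV}, as simplified in \cite{BKVV}, and then verify each bullet of the definition in turn. First I would set up the construction: $\fNTCF.\fKg(1^\kappa)$ samples a matrix $A\in \bZ_q^{m\times n}$ together with a trapdoor $\fT$ (via the standard trapdoor sampling of Micciancio--Peikert), samples a uniform secret $s\in \bZ_q^n$, and sets $\pk=(A, As+e)$ and $\sk=(\fT,s)$, with LWE error $e$ drawn from a discrete Gaussian with width tuned so that inversion with $\fT$ succeeds whp. The function $\fNTCF.\fEv_{\pk}(b,x)$ outputs $y=Ax + b\cdot(As+e) + e'$ where $e'$ is a discrete Gaussian used to hide whether branch $0$ or $1$ was used. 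Decoding $\fNTCF.\fDc_{\sk}(b,y)$ subtracts $b\cdot(As+e)$ and runs the trapdoor-inversion of $A$, and $\fCHK_{\pk}$ simply re-evaluates $\fEv_{\pk}$ and checks closeness.

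Next I would verify the correctness properties. For noisy 2-to-1, the observation is that $\fEv_{\pk}(b,x)$ and $\fEv_{\pk}(1-b, x-s)$ produce statistically close distributions over $y$ when parameters are chosen so the error distributions are smoothed (this is where the ``noisy'' in NTCF comes from). Preparing $\ket{+}^{\otimes\kappa}$ in the $x$-register and applying a purified version of $\fEv_{\pk}$ into a $y$-register and auxiliary noise register, then tracing / discarding the noise ancilla (or absorbing it into the $\sqrt{p_y}$ weights), yields the state
\begin{equation*}
\sum_{y:p_y\neq 0} \tfrac{1}{\sqrt 2}\bigl(\ket{\fDc_{\sk}(0,y)} + \ket{\fDc_{\sk}(1,y)}\bigr)\otimes \sqrt{p_y}\ket{y}
\end{equation*}
up to negligible trace distance, which is exactly the $\fneg(\kappa)$-approximate identity required. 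Correctness of $\fCHK$ is immediate from the definitions: $\fCHK_{\pk}(b,x,y)=\ftrue$ iff $y$ is in the support of $\fEv_{\pk}(b,x)$ iff trapdoor-inverting with branch $b$ returns $x$.

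The claw-free property is the heart of the argument and reduces to QLWE. Given an efficient adversary $\fAdv$ that outputs $(x_0,x_1,y)$ with $x_0\neq x_1$ and both branches decoding to $y$, observe that $x_0 - x_1 = \pm s \pmod{q}$ up to a short error vector (since $Ax_0 \approx y$ and $A x_1 \approx y - (As+e)$). Thus a claw yields $s$, which breaks LWE on the challenge $(A, As+e)$: a reduction embeds the LWE challenge into $\pk$, runs $\fAdv$, and uses $x_0-x_1$ to recover $s$ and hence distinguish from uniform. The reduction preserves efficiency and loses at most a polynomial factor in success probability, so a non-negligible claw-finder yields a non-negligible QLWE distinguisher, contradicting Assumption (post-quantum LWE).

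The main technical obstacle is parameter calibration: the Gaussian widths for $e$ and $e'$ must be simultaneously large enough to smooth out the dependence of $y$ on the branch bit $b$ (for the 2-to-1 property) and small enough that trapdoor-inversion succeeds and that $x_0-x_1$ is close enough to $\pm s$ for the LWE reduction to work. This is precisely the tightrope walked in \cite{BCMVV, BKVV}, and I would import their parameter setting verbatim rather than re-deriving the bounds; once the parameters are in hand, the three bullets of Definition \ref{defn:ntcf} follow by the steps above.
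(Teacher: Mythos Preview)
The paper does not prove this theorem at all: it is stated purely as a citation of \cite{BKVV} (and implicitly \cite{BCMVV}), with no argument given. Your proposal correctly sketches the standard LWE-based NTCF construction and its analysis, which is indeed what those references do; so your approach is fine, but it goes well beyond the paper, which treats the result as a black box import.
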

And we further note that, based on the construction in \cite{BCMVV,BKVV}, assuming a suitable version of hardness of Ring-LWE, the running time of $\fNTCF$ could be only $\tilde O(\kappa)$.
\subsubsection{Evaluation of $\fNTCF$ functions}
A typical protocol for $\fNTCF$ evaluation is as follows.
\begin{prtl}[$\fNTCF$ evaluation, review of subprotocols in \cite{BCMVV}]\label{prtl:ntcf}
	Suppose the security parameter is $\kappa$.
	\begin{enumerate}
		\item The client runs $\fNTCF.\fKg(1^\kappa)$ and gets $\sk,\pk$. Send $\pk$ to the server.
		\item The server evaluates $\fNTCF.\fEv_\pk(\ket{+}^{\otimes \kappa})$ and measures to get $y$ and $\frac{1}{\sqrt{2}}(\ket{x_0}+\ket{x_1})$ where $x_0\neq x_1$, $\fNTCF.\fCHK_\pk(0,x_0,y)=\fNTCF.\fCHK_\pk(1,x_1,y)=\ftrue$. Send $y$ back to the client.
		\item The client uses $\sk$ to decrypt and gets $\fNTCF.\fDc_\sk(0,y)=x_0,\fNTCF.\fDc_\sk(1,y)=x_1$.
	\end{enumerate}
\end{prtl}
\subsection{Random Oracle Model}\label{sec:3.3}
In this work we will use the quantum random oracle model (QROM). We give a simple review here.\par
The random oracle model is an ideal cryptographic model for symmetric encryption schemes or hash functions. \cite{roreliable} In this model there is a global oracle that encodes a random function in $\{0,1\}^*\rightarrow \{0,1\}^{\infty}$. Security notions in this model are typically defined with respect to attackers that could only query the oracle for polynomial (or subexponential) times. In practice the random oracle is instantiated by a symmetric encryption scheme or hash function, and the security of the protocol can be conjectured heuristically by the \emph{random oracle methodology}. Albeit there exist artificial uninstantiable constructions \cite{rorevisited,ES20}, this methodology turns out to be very successful in practice \cite{roreliable}: it has been used extensively in cryptography, and becomes the foundation of many famous protocols \cite{roreliable,GW11,DF13,CP18}.\par
The quantum random oracle model is raised as the quantum analog of the classical random oracle model. \cite{QRO} This model allows quantum access to the random oracle, which captures a natural analog of the random oracle model in quantum world. This model is also used in a series of works from post-quantum security of classical protocols \cite{cms,lz,dfms} to design of quantum protocols \cite{BKVV,CCT,ACGH}.
\paragraph{Input and output length of the random oracle} We do a cut-off on the input and output length of the random oracle to make its description finite. Parameterized by $\kappa$, we assume the maximal input length of the random oracle is $2^{10\kappa}$. Then we assume the maximal allowed output length for input $x$ is the square of the length of $x$. These are sufficient for our work.
\paragraph{Description of the random oracle} We consider the random oracle as a stand-alone party that holds a tuple of random strings. When the security parameter is $\kappa$, the content of the random oracle after the cut-off above could be expressed as a tuple $(H(x))_{x\in \{0,1\}^{10\kappa}}$ where each $H(x)\leftarrow_r \{0,1\}^{|x|^2}$ where $|x|$ is the length of $x$. With this tuple-description, we could explicitly say the output value of the random oracle for input $x$ is stored in register $\bH(x)$. And the tuple of all the random oracle content registers is denoted by $\bH$.

Especially, we do not need on-the-fly simulation techniques of the random oracle like \cite{compressedoracle}.

\paragraph{Register-oriented formalism of purified joint states in the quantum random oracle model} Recall that output values of $H$ are stored in a tuple of registers, whose purified state is:
$$\frac{1}{\sqrt{|\Domain(\bH)|}}\sum_{H\in \Domain(\bH)}\underbrace{\ket{H(1)}\ket{H(2)}\cdots \ket{H(d)}\cdots \ket{H(2^\kappa)}\cdots}_{\text{ random oracle outputs}}$$
where $H$ denotes the tuple of all the random oracle outputs, $H(x)$ denotes the values of the $x$-th coordinates of this tuple, and $\Domain(\bH)$ denotes the set of all the possible values of $\bH$. Note this is compatible with the usual formalism of the random oracle.\par
One property of these random oracle registers is that initially they are all set to hold uniformly distributed random values. Thus a purified joint state that can be prepared in the quantum random oracle model should satisfy the following property (Recall Notation \ref{nota:compo} for ``component''):
\begin{defn}\label{defn:validityro}
	We say a sub-normalized purified joint state is valid in the quantum random oracle model if the norm of any component when $\bH=H$ is no more than $\frac{1}{\sqrt{|\Domain(\bH)|}}$. 
\end{defn}
\begin{fact}
	If a sub-normalized purified joint state $\ket{\varphi}$ is valid in the quantum random oracle model, then $O\ket{\varphi}$ is valid in the quantum random oracle model where $O$ could contain any operation of the client and the server and random oracle queries.
\end{fact}
\paragraph{Notations for multiple registers}
\begin{nota}\label{nota:3.18}
We use $\bH(D)$ as a simplified notation for the tuple of all the registers $\bH(x),x\in D$. 
\end{nota}
\begin{nota}
Additionally, we introduce the following notation to denote a subset of the domain: as an example, $\{0,1\}^\kappa||K||\cdots$ where $K$ contains a tuple of keys, is defined to be the set of input entries in the following form: the first $\kappa$ bits are arbitrary; then the remaining part has a prefix equal to one of the keys in $K$.	
\end{nota}
\paragraph{Remark} In Notation \ref{nota:3.18} $D$ is fixed. But in proofs later $D$ might come out of the values of some registers $\bD$. But as long as $\bD$ is read-only we could still use $\bH(D)$ without problems, by interpreting ``$\bH(D)$ satisfies some properties'' as ``for each value $D$ of $\bD$, $\bH(D)$ satisfies some properties''. 
\subsubsection{Blinded oracle}
One tool that we need in this work is the \emph{blinded oracle}. The blinded oracle replaces the output values on some entries of the original oracle by freshly new values.
\begin{defn}[Blinded oracle]\label{defn:blind}
	Suppose $D$ is a subset of the input domain of the random oracle. We define \emph{the blinded oracle where entries in $D$ are blinded} as follows:\par
	Denote this blinded oracle by $\bH^{blind}$ (which could be understood as a tuple of registers in the random oracle party), for each query input $x$, take the output value (or, more precisely here, the register in the random oracle party that stores the output value) to be:
	\begin{itemize}
	\item If $x\in D$, $\bH^{blind}(x)$ is a new register that stores a freshly new random string (that is, after purification, a uniform superposition over all the possible outputs).
	\item If $x\not\in D$, $\bH^{blind}(x)$ is the same as $\bH(x)$.	
	\end{itemize}
\end{defn}
\subsubsection{Freshly-new oracle and approximate freshly-new oracle by random padding}
We need a way to say some part of the random oracle contains freshly new random strings. This will happen if this part of the random oracle is not queried. Following Notation \ref{nota:3.16}, we can say the overall state does not depend on the values of registers $\bH(D)$.

The following property will be very useful. Starting from an efficiently-preparable state, sampling a long enough padding could make the random oracle approximately freshly new on the salted inputs (which mean inputs with these paddings as prefixes):
\begin{lem}\label{lem:padro}
Suppose a sub-normalized purified joint state $\ket{\varphi}=O\ket{0}$ where $O$ is a polynomial-time operator (implicitly parameterized by security parameter $\kappa$). Then a tuple of random paddings, stored in register $\bpads$ that is read-only once initialized, is sampled as follows: $\bpads$ has $\fpoly(\kappa)$ sub-registers, and the value of each sub-register is sampled from $\{0,1\}^\kappa$ independently uniformly randomly. Thus the overall state is
\begin{equation}\label{eq:61tsl}\sum_{pads\in \Domain(\bpads)}\frac{1}{\sqrt{|\Domain(\bpads)|}}\underbrace{\ket{pads}}_{\bpads}\otimes\ket{\varphi}\end{equation}
 Then there exists an efficiently preparable state $\ket{\tilde\varphi}$ independent to $\bH(pads||\cdots)$ and $$\ket{\tilde\varphi}\approx_{\fneg(\kappa)}\eqref{eq:61tsl}$$
\end{lem}
We put a proof in Appendix \ref{app:a}.\par
\subsection{Lookup Tables and Phase Tables}
In this work we will need to use some simple lookup table notations.\par
Let's first formalize the symmetric encryption scheme that we will use in this work.
\begin{defn}\label{defn:enc} We formally define $\fEn$ in the quantum random oracle model as follows:
\begin{equation}\label{eq:enc}\fEn_k(p;1^\kappa):=(R,H(R||k)+ p); (R^\prime,H(R^\prime||k)),\quad R\leftarrow_r \{0,1\}^\kappa, R^\prime\leftarrow_r \{0,1\}^\kappa\end{equation}	
where the output length of $H(R||k)$ is the same as length of $p$ and the output length of $H(R^\prime||k)$ is $\kappa$.
\end{defn}
Note that we sometimes need to use multi-key encryption; we simply use the concatenated key as the encryption key.\par

Then we introduce the notation for look up tables:

\begin{defn}[Lookup tables]\label{defn:lt}
 $\fLT(x_1\rightarrow r_1,x_2\rightarrow r_2,\cdots x_D\rightarrow r_D;1^\kappa)$ is defined as the tuple $$(\fEn_{x_1}(r_1;1^\kappa),\fEn_{x_2}(r_2;1^\kappa),\cdots, \fEn_{x_D}(r_D;1^\kappa)).$$
\end{defn}
\section{Quantum Computation Verification: Problem Set-up}\label{sec:4r}
\subsection{Models of Protocol Formalizations}\label{sec:4.1}
\paragraph{Parties} The set-up of our protocol contains the following parties.
\begin{itemize}
\item Client (also called verifier);
\item Server (considered as the attacker in the malicious setting);
\item Random oracle: \par
As described in the introduction and Section \ref{sec:3.3}.
\item Transcript registers: holds the transcripts of the protocol. Both the client and the server could read all the transcript registers; both the client and the server use it to transmit messages by copying (bit-wise $\fCN$) their own registers into empty transcript registers; each of the transcript registers could only be written once and becomes read-only (for both the client and the server) after that.
\item Environment.
\end{itemize}
\paragraph{States} We use the following notations to describe the joint states of these parties.
\begin{itemize}
\item For describing the honest setting behavior we use the natural notation: for example, after one application of $\fNTCF$ (Protocol \ref{prtl:ntcf}) in the honest setting we say the client gets key pair $(x_0,x_1)$ and the server gets state $\frac{1}{\sqrt{2}}(\ket{x_0}+\ket{x_1})$.
\item To study the malicious setting we turn to use \emph{purified joint states} of all the parties to describe the overall states. In this notation everything including the client side keys is purified to an entangled state. (Recall the purification is in the sense of Notation \ref{nota:cq}.) 
\begin{exmp}[Purified joint states after an evaluation of $\fNTCF$] After an application of $\fNTCF$, we say a client holds a key pair in register $\bK=(\bx_0,\bx_1)$. (Recall that we use bold font for registers.) Then the purified joint state in the malicious setting could be expanded into
\begin{equation}\label{eq:53m}\sum_{(x_0,x_1)\in \Domain(\bK)}\underbrace{\underbrace{\ket{x_0}}_{\bx_0}\underbrace{\ket{x_1}}_{\bx_1}}_{\bK\text{(client side)}}\ket{\varphi_{x_0,x_1}}\end{equation}
where we explicitly write out the values of the client side register $\bK$, and the $\ket{\varphi_{x_0,x_1}}$ part contains states in all the other parts (all the other parties including the server, transcripts, environment, the random oracle, and also the registers of the client that are not $\bx_0,\bx_1$).
\end{exmp}
\end{itemize}
In the very beginning of our protocol, the client and server both are in all-zero states, and the randomness of the random oracle have been sampled out. When we describe this initial situation we simply use $\ket{0}$ and make the random oracle party implicit.\par
In general, if both parties execute a protocol $\fPrtl$ where the inputs are keys in client-side register $\bK$, has parameter $1^\kappa$ and the initial state is a purified joint state $\ket{\varphi}$ (for example, $\ket{0}$ described in the last paragraph), the final post-execution state $\ket{\varphi^\prime}$ could be denoted as:
$$\ket{\varphi^\prime}=\fPrtl^\fAdv(\bK;1^\kappa)\ket{\varphi}$$

\paragraph{Executions} In general, protocols in our work have the following structure. In each time step, one of the followings happen:
\begin{itemize}
	\item The server does some server-side operations. Here \emph{server-side operations} contain operations that have full access to the registers in the server party, have read-only access to the transcript party, and could query the random oracle.
	\item The server sends back a response to the client. Denote this operation as $\fResponse$, which copies a specific server-side register to a specific transcript register. 
	\item The client does some client-side operations on its own registers. Similarly \emph{client-side operations} are defined to be operations that have full access to the registers in the client party, have read-only access to the transcript party, and could query the random oracle.\par
		\item The client sends a message to the server. If the client sends the content of register $\bK$ to the server, we use $\odot \bK$ to denote this operation. (For example, if the initial state is $\ket{\varphi}$, the state after this sending message operation is $\ket{\varphi}\odot \bK$.)\par
		 If $\fAuxInf$ is an algorithm that takes client-side registers as its inputs, $\llbracket\fAuxInf\rrbracket$ denotes the output of this algorithm. Then for a protocol $\fPrtl$, we use $\llbracket\fPrtl\rrbracket$ to denote the tuple of for all the client-side messages generated by client-side operations in $\fPrtl$. Thus $\ket{\varphi}\odot \llbracket\fPrtl\rrbracket$ denotes the state after the client sends out all of its messages to the server. 
\end{itemize}
\paragraph{Outputs} In the end of the protocol, the outputs are:
\begin{itemize}
	\item The client outputs a \emph{flag} $\in\{\fpass,\ffail\}$ (into the transcript). The flag registers are denoted by symbol $\bflag$. The default value is $\fpass$.\par
	\item We will see, in some protocols the client could also output a \emph{score} $\in \{\fwin,\flose,\perp\}$ (into the transcript). The score registers are denoted by symbol $\bscore$. The default value of score registers is $\perp$ and when the client writes a score into it the value becomes either $\fwin$ or $\flose$.
	\item In the protocols designed in our work the client typically gets a tuple of keys or phases and the honest server gets some states.
\end{itemize}
There could be many subprotocol calls in our work  and each subprotocol call could output its own flag or score. We use subscripts (for example, $\bflag_1,\bflag_2,\bscore_1,\bscore_2$) to distinguish these different registers. Then we use, for example, $\Pi_{\fpass}^{\bflag_1}$ to denote the projection onto the space that the value of $\bflag_1$ is $\fpass$. We omit the superscripts here when the register that  $\Pi_{\fpass}$ is applied on could be determined from the context: for example, when we write $\Pi_{\fpass}\fPrtl\ket{\varphi}$, $\Pi_{\fpass}$ is applied onto the overall flag register of $\fPrtl$.
\subsection{Quantum Computation Verification}
In this subsection we formalize the quantum computation verification problem in our setting.\par
The following definition reviews the definition of quantum computation verification given in \cite{ABEM}, with simple adaptation to our setting:
\begin{defn}[Classical Verification of Quantum Computation, adapted from \cite{ABEM}]
We say a protocol that takes (1) a quantum circuit $C$ and a proposed output $o$, (2) a security parameter $1^\kappa$ as inputs, is a CVQC protocol with completeness $c$ and soundness $s$ against BQP adversaries in QROM if:
\begin{itemize}
\item (Completeness) For $(C,o)$ such that $\Pr[C\ket{0}=o]\geq \frac{99}{100}$, the verifier accepts with probability $\geq c-\fneg(\kappa)$.
\item (Soundness) For any malicious BQP server that makes at most polynomial query to the random oracle, for $(C,o)$ such that $\Pr[C\ket{0}=o]\leq \frac{1}{100}$, the verifier rejects with probability $\geq 1-s-\fneg(\kappa)$.
\item (Efficiency) Honestly both parties run in time polynomial in the size of the inputs.
\end{itemize}

\end{defn}
As said in the introduction, we construct a linear-time CVQC protocol. We repeat the main theorem here:
\begin{thm}[Repeat of Theorem \ref{thm:main} in Section \ref{sec:1}]\label{thm:mainf}
Assuming the existence of noisy trapdoor claw-free functions \cite{BCMVV}, there exists a single server CVQC protocol in QROM such that:
\begin{itemize}
\item The protocol has completeness $\frac{2}{3}$.
	\item For verifying a circuit of size $|C|$, the total time complexity is $O(\fpoly(\kappa)|C|)$, where $\kappa$ is the security parameter.
	\item The protocol has soundness $\frac{1}{3}$ against BQP adversaries in QROM. 
\end{itemize}
\end{thm}
\subsection{Existing Gadget-assisted Verification Protocol}
As reviewed in the introduction, to prove Theorem \ref{thm:mainf}, we make use of an existing quantum computation verification protocol \cite{FKD} where the server initially holds states
  \begin{equation}\label{eq:24}\ket{+_{\theta^{(1)}}}\otimes \ket{+_{\theta^{(2)}}}\otimes \cdots \otimes \ket{+_{\theta^{(L)}}} \end{equation}
  where each of $\theta^{(1)}\cdots \theta^{(L)}$ is uniformly independently random from $\{0,1\cdots 7\}$, and is known by the client. We formalize it as a theorem.
  \begin{thm}[\cite{FKD}]\label{thm:gauvbqc}There exists a quantum computation verification protocol such that, for any quantum circuit $C$, take $L=O(|C|)$, initially  the server holds \eqref{eq:24} where $\theta^{(1)}\cdots \theta^{(L)}$ are all independently random from $\{0,1\cdots 7\}$ and known by the client, and the protocol only uses classical interactions later, and it satisfies:
  \begin{itemize}
  	\item It has completeness $\frac{9}{10}$;
  	\item It has soundness $\frac{1}{10}$;
  	\item The time complexity is $O(|C|)$.
  \end{itemize}
  	
  \end{thm}
  The parameters needed could be obtained by choosing suitable parameters in the constructions and statements in \cite{FKD}\footnote{In fact, the construction of \cite{FKD} has perfect completeness; we do not need it here.}. Based on this protocol, what we need to do is to construct a protocol for remote preparation of gadgets \eqref{eq:24}. We formalize the notion of remote state preparation with verifiability (RSPV) for \eqref{eq:24}, as follows.
\subsection{Our Notion of RSPV}\label{sec:4.3}
We define our RSPV as follows. The target state, where the client holds phase tuple $\theta^{(1)}\theta^{(2)}\cdots \theta^{(L)}$, and server holds \eqref{eq:24}, could be written jointly as
\begin{equation}\label{eq:cqtarget}\sum_{\theta^{(1)}\theta^{(2)}\cdots \theta^{(L)}\in \{0,1\cdots 7\}^L}\frac{1}{8^L}\underbrace{\ket{\theta^{(1)}}\bra{\theta^{(1)}}\ket{\theta^{(2)}}\bra{\theta^{(2)}}\cdots \ket{\theta^{(L)}}\bra{\theta^{(L)}}}_{\text{client}}\otimes\underbrace{\ket{+_{\theta^{(1)}}}\otimes \ket{+_{\theta^{(2)}}}\otimes \cdots \otimes \ket{+_{\theta^{(L)}}}}_{\text{server}} \end{equation}
The RSPV protocol for \eqref{eq:cqtarget} takes a gadget number $1^L$ and a security parameter $1^\kappa$ as inputs. 
\begin{defn}[Correctness of RSPV]\label{defn:rspv}
We say an RSPV for a target state defined in \eqref{eq:cqtarget} has correctness in QROM if in the honest setting:
\begin{itemize}
\item The server could make the client outputs $\fpass$ with probability $\geq \frac{9}{10}-\fneg(\kappa)$;
\item In the honest setting, conditioned on the client outputs $\fpass$, with probability $\geq 1-\fneg(\kappa)$ the joint state of the client and the server is \eqref{eq:cqtarget}.	
\end{itemize}
\end{defn}
To define the verifiability, we note the purified target state could be written as
\begin{equation}\label{eq:target}\sum_{\theta^{(1)}\theta^{(2)}\cdots \theta^{(L)}\in \{0,1\cdots 7\}^L}\frac{1}{\sqrt{8^L}}\underbrace{\ket{\theta^{(1)}\theta^{(2)}\cdots \theta^{(L)}}}_{\text{client}}\otimes\underbrace{\ket{+_{\theta^{(1)}}}\otimes \ket{+_{\theta^{(2)}}}\otimes \cdots \otimes \ket{+_{\theta^{(L)}}}}_{\text{server}}\end{equation}
\begin{defn}[Verifiability of RSPV]\label{defn:rspvv}
The operators in this definition are implicitly parameterized by the security parameter $\kappa$.\par
We say an RSPV protocol $\fRSPV$ with target state \eqref{eq:target} has verifiability in QROM if for any polynomial time adversary $\fAdv$, 
there exists a server-side operation $\fSim^{\fAdv}$ such that:
$$\Pi_{\fpass}\fRSPV^\fAdv(1^L,1^\kappa)\ket{0}\approx^{ind}_{\frac{1}{5}+\fneg(\kappa)}\Pi_{\fpass}\fSim^{\fAdv} \ket{\text{Equation \eqref{eq:target}}}$$ 
where the distinguisher in the indistinguishability symbol could operate on the client-side registers\footnote{More naturally we can define the distinguisher to only has access to the $\btheta^{(1)}\cdots\btheta^{(L)}$ registers shown in \eqref{eq:target} for the client-side access. But we do not put this condition here for simplicity, and this is fine: we could always assume in the end of the protocol $\fRSPV$ the client disgards all the temporary registers (that are used during the protocol execution) into the environment, and only keeps the phase tuple registers in \eqref{eq:target}.}, the transcript and the server-side systems, only has read-only access to the client and the transcript, and is of polynomial time with polynomial random oracle queries.

\end{defn}
We clarify that $\fSim$ in the definition above could query the random oracle, could write into the transcript registers and could disgard registers to the environment, and is not required to be in polynomial time. (It is desirable to have this property, but we do not aim at it in this work.)
\paragraph{Remark}  We will make $\kappa$ implicit in the remaining definitions and theorems and will not repeat it every time.\par
\subsection{Pre-RSPV}\label{sec:4.5}
Aiming at constructing the RSPV protocol, it will be very convenient to define a relaxed notion that captures low-level details arose from the construction. We will introduce the notion of \emph{pre-RSPV}. We will see (1) it's easier to construct a pre-RSPV (compared to a direct construction of RSPV); (2) a repetition-based amplification of a pre-RSPV protocol leads to an RSPV protocol.\par 
 Compared to the RSPV protocol, pre-RSPV further relaxes the correctness and verifiability in the definition in the following way:
 \begin{itemize}
 \item It works under the following protocol design framework: in our protocol the client will secretly sample a round type in $\{\ftest,\fquiz,\fcomp\}$ in the beginning, with some fixed probability. These round types are revealed in the end of the protocol but not during the protocol. We use $\btype$ to denote the transcript register that holds the round type, $\Pi^{\btype}_\fcomp$ (for example) to denote the projection onto the space that $\btype$ register has value $\fcomp$, and $\Pi_{\fcomp}$ if there is no ambiguity on the register. (For example, when we write $\Pi_\fcomp\fpreRSPV$.)\par
 Only in the $\fcomp$ round the correct target state is generated (in the honest setting). In all rounds the client will produce a flag of $\fpass$ or $\ffail$; and only in the $\fquiz$ round the client could additionally generate a \emph{score} whose value is $\fwin$ or $\flose$. They have the following difference: if the server behaves honestly, it will $\fpass$ with probability $1-\fneg(\kappa)$. However, an honest server does not always $\fwin$. Instead, it $\fwin$ with probability $\OPT$ which is a fixed upper bound. Additionally, no malicious attacker could $\fwin$ with probability bigger than that. Thus when we construct the $\fRSPV$ protocol from repetitions of a $\fpreRSPV$ protocol, once the client sees some subprotocol $\ffail$, it could output $\ffail$ directly; however both parties have to repeat the protocol for many times so that the client can calculate the ratio of $\fwin$ and $\flose$ statistically.
 \item  Correspondingly, we need to adapt the definition of verifiability to take the probability of generating a score of $\fwin$ in the $\fquiz$ round into account.
 \end{itemize}
 Formally speaking, a pre-RSPV protocol is defined as follows. It takes a security parameter $1^\kappa$ and the output number $1^L$ as the inputs.
 \begin{defn}[Correctness of pre-RSPV]\label{defn:prerspv}
We say a pre-RSPV for the target state defined as Definition \ref{defn:rspv} has correctness in the quantum random oracle model if:\par
 In the honest setting, a round type $\in \{\ftest,\fquiz,\fcomp\}$ is sampled with fixed probabilities $\{p_\ftest,p_\fquiz,p_\fcomp\}$ and:
\begin{itemize}
\item In all round types, the probability that the client outputs $\fpass$ is $\geq 1-\fneg(\kappa)$.
\item In $\fquiz$ round, the probability that the client outputs $\fwin$ as the score is $\geq \OPT-\fneg(\kappa)$ where $\OPT$ is a constant.
\item In $\fcomp$ round, with probability $\geq 1-\fneg(\kappa)$ the target state \eqref{eq:cqtarget} is generated.
\end{itemize}

 \end{defn}
The constants $p_\ftest$, $p_{\fquiz}$, $p_{\fcomp}$ and $\OPT$ will be explicit when we formalize the correctness property of our concrete pre-RSPV protocol.
\begin{defn}[Verifiability of pre-RSPV]\label{defn:prerspvv}
	We say a protocol $\fpreRSPV$ is a pre-RSPV protocol for target state \eqref{eq:target} in QROM with error tolerance $(\epsilon_1,\epsilon_2)$ if:\par
	 For any polynomial time adversary $\fAdv$, any initial state $O\ket{0}$ where $O$ is efficient, at least one of the following three cases is true:
	\begin{itemize}
\item (Small passing probability) 
\begin{equation}\label{eq:59q}|\Pi_{\fpass}\fpreRSPV^\fAdv(1^L,1^\kappa)\circ O\ket{0}|^2\leq 1-\epsilon_1\end{equation}
\item (Small winning probability) 
\begin{equation}\label{eq:60q}|\Pi_{\fwin}\fpreRSPV^\fAdv(1^L,1^\kappa)\circ O\ket{0}|^2\leq p_\fquiz\cdot (\OPT-\epsilon_2)\end{equation}
where $\OPT$ is the same as the constant in the correctness (Definition \ref{defn:prerspv})
\item (Verifiability) There exists a server-side operation $\fSim^{\fAdv,O}$ such that:
\begin{equation}\label{eq:61q}\Pi_{\fcomp}\fpreRSPV^\fAdv(1^L,1^\kappa) \circ O\ket{0}\approx^{ind}_{\frac{1}{10}\sqrt{p_\fcomp}+\fneg(\kappa)}\sqrt{p_\fcomp}\fSim^{\fAdv,O} \ket{\text{Equation \eqref{eq:target}}}\end{equation}
where the distinguisher has read-only access to the client side registers and full access to the server-side registers and is polynomial time.
\end{itemize}
\end{defn}
Additionally, for the amplification from pre-RSPV to RSPV, we also need to require the constant $\OPT$ is indeed optimal:
\begin{defn}[Optimality of $\OPT$ in pre-RSPV]\label{defn:opt}
	We say a protocol $\fpreRSPV$ has optimal winning probability $\OPT$ with error tolerance $(\epsilon_1,\epsilon_2)$ if:\par
	 For any polynomial time adversary $\fAdv$, any initial state $O\ket{0}$ where $O$ is efficient, at least one of the following two cases is true:
	\begin{itemize}
\item (Small passing probability) 
\begin{equation}\label{eq:62q}|\Pi_{\fpass}\fpreRSPV^\fAdv(1^L,1^\kappa)\circ O\ket{0}|^2\leq 1-\epsilon_1\end{equation}
\item (Bounded winning probability) 
\begin{equation}\label{eq:63q}|\Pi_{\fwin}\fpreRSPV^\fAdv(1^L,1^\kappa)\circ O\ket{0}|^2\leq p_\fquiz\cdot (\OPT+\epsilon_2+\fneg(\kappa))\end{equation}

\end{itemize}
\end{defn}
\section{Formalization of Our Pre-RSPV Protocol}\label{sec:4}
In this subsection we formalize our construction of pre-RSPV protocol.
\subsection{High Level Construction and $\fAddPhaseWithswitch$}
 As said in the introduction, we will use $\fNTCF$-based techniques to create gadgets in the form of key pair superpositions, and use lookup-table-tabled techniques to add phases to them. Let's give a notation for such form of gadgets.
\begin{nota} For a key pair $K=(x_0,x_1)$, phase pair $\Theta=(\theta_0,\theta_1)$, define
$$\fgadget(K,\Theta)=\frac{1}{\sqrt{2}}(e^{\theta_0\mi\pi/4}\ket{x_0}+e^{\theta_1\mi\pi/4}\ket{x_1})$$
We call $\theta_1-\theta_0$ the \emph{relative phase}.\par
And define
$$\fgadget(K)=\frac{1}{\sqrt{2}}(\ket{x_0}+\ket{x_1})$$
\end{nota}

Now we formalize our pre-RSPV protocol below.
 \paragraph{Overall Protocol} This protocol is a pre-RSPV protocol with target state in the form of \eqref{eq:cqtarget}.
\begin{mdframed}
\begin{prtl}[Pre-RSPV]\label{prtl:prerspv}
	Suppose the security parameter is $\kappa$. Output number is $L$.\par
	\begin{enumerate}
	\item (Generation of key-pair-superpositions) 
	 As discussed in Section \ref{sec:ntcf}, both parties run $2+L$ blocks of $\fNTCF$ evaluations in parallel.\par
	 In the end the server gets $2+L$ key-pair-superpositions, and the client gets the corresponding key pairs. 
	The client names these key pairs as follows:
	\begin{itemize}
	\item The first key pair is denoted as $K^{(\switch)}=(x^{(\switch)}_0,x^{(\switch)}_1)$, $x^{(\switch)}_0\neq x^{(\switch)}_1$.
	\item The remaining $(1+L)$ key pairs are denoted as $K=(K^{(i)})_{i\in [0,L]}$; for each $i\in [0,L]$, $ K^{(i)}=( x_0^{(i)}, x_1^{(i)})$, $ x_0^{(i)}\neq  x_1^{(i)}$.
	\end{itemize}
	The honest server holds
	\begin{equation}\label{eq:25prep}\fgadget(K^{(\switch)})\otimes \fgadget(K^{(0)})\otimes  \fgadget(K^{(1)})\otimes  \fgadget(K^{(2)})\otimes \cdots\fgadget(K^{(L)}) \end{equation}

	\item The client randomly chooses to run one of the following two with the server:
	\begin{itemize}
	\item (Standard basis test) Both parties execute $\fStdBTest((K^{(i)})_{i\in \{\switch\}\cup [0,L]})$. 

	\item (Verifiable state preparation) 
	
	\begin{enumerate}
	\item (State transformation with the switch gadget) For each $i\in [0,L]$, the client samples $\Theta^{(i)}=(\theta^{(i)}_0,\theta_1^{(i)})\leftarrow_r \{0,1\cdots 7\}^2$. Denote $\Theta =(\Theta^{(i)})_{i\in [0,L]}$.\par
	Both parties execute $\fAddPhaseWithswitch((K^{(\switch)},K),\Theta;1^\kappa)$, which consumes the switch gadget.\par
	After this step the honest server's state is: \begin{equation}\label{eq:27new}\fgadget(K^{(0)},\Theta^{(0)})\otimes\fgadget(K^{(1)},\Theta^{(1)})\otimes \fgadget(K^{(2)},\Theta^{(2)})\otimes\cdots\otimes \fgadget(K^{(L)},\Theta^{(L)})\end{equation}
	\item The client randomly chooses to run one of the following five subprotocols with the server:
	\begin{itemize}
	\item (Standard basis test) Both parties execute $\fStdBTest(K)$.
		\item (Collective phase test) 
	 Both parties execute $\fCoPhTest(K,\Theta;1^\kappa)$.
		\item (Individual phase test)  Both parties execute $\fInPhTest(K,\Theta;1^\kappa)$. This is considered as the $\fquiz$ round and the client will possibly write $\fwin$ or $\flose$ in the score register.
		\item (basis uniformity test) 
		Both parties execute $\fBNTest((K^{(i)})_{i\in [L]},(\Theta^{(i)})_{i\in [L]};1^\kappa)$ on these states.
\item (Output states) The client reveals $(K^{(i)})_{i\in [L]}$ and the honest server could decode the gadgets and get the following state up to a global phase: 
$$\ket{+_{\theta^{(1)}}}\otimes \ket{+_{\theta^{(2)}}}\otimes \cdots \otimes \ket{+_{\theta^{(L)}}} $$
where $\theta^{(i)}=\theta^{(i)}_1-\theta^{(i)}_0$ for each $i\in [L]$. And the client could also calculate these phases from $\Theta$. All the other client-side registers are disgarded.\par

This is considered as the $\fcomp$ round.

\end{itemize}
 \end{enumerate}
 \end{itemize}
	\end{enumerate}

\end{prtl}
\end{mdframed}
The subprotocols used in this protocol are formalized below. We first formalize the standard basis test and the $\fAddPhaseWithswitch$ step.\par
\paragraph{Standard basis test} The $\fStdBTest$ is formalized below. Note in this protocol we use $D$ to denote the set of indices; when we use this protocol it could be $\{\switch\}\cup [0,L]$ (as in the first usage in Protocol \ref{prtl:prerspv}) or $[0,L]$ (as in the first case in step 2.b in Protocol \ref{prtl:prerspv}).
\begin{mdframed}
\begin{prtl}[$\fStdBTest$]
	The client holds a tuple of key pairs $(K^{(i)})_{i\in D}$ where $D$ is a set of indices. For each $i\in D$, the honest server holds state $\fgadget(K^{(i)},\cdots)$ for some phase pair omitted in ``$\cdots$''  (where the values of these phase pairs do not have influence in this protocol).
	\begin{enumerate}
		\item The client asks the server to measure all the gadgets in the standard basis. For each $i\in D$, the server measures $\fgadget(K^{(i)},\cdots)$ and sends back the response $r^{(i)}$.
		\item The client checks for each $i\in D$, $r^{(i)}\in K^{(i)}$.
	\end{enumerate}
\end{prtl}
	
\end{mdframed}

\paragraph{Add phases under the switch gadget technique} Below we formalize the state transformation subprotocol, which guarantees the honest server could add phases to \eqref{eq:25prep}. This step will use the switch gadget technique.
\begin{mdframed}
\begin{prtl}[$\fAddPhaseWithswitch$]\label{prtl:stwh} 
Suppose the security parameter is $\kappa$. Gadget number is controlled by $L$. \par
The client holds a tuple of key pairs $K^{(\switch)}=(x^{(\switch)}_0,x_1^{(\switch)})$, $K=(K^{(i)})_{i\in [0,L]}$, $K^{(i)}=(x_0^{(i)},x_1^{(i)})$ and a tuple of phase pairs $\Theta=(\Theta^{(i)})_{i\in [0,L]}$, $\Theta^{(i)}=(\theta_0^{(i)},\theta_1^{(i)})$.\par
Honest server holds: 		 	$$\fgadget(K^{(\switch)})\otimes \fgadget(K^{(0)})\otimes  \fgadget(K^{(1)})\otimes  \fgadget(K^{(2)})\otimes \cdots\fgadget(K^{(L)}) $$
\begin{enumerate}
\item (Add phases)  For each $i\in [0,L]$, the client prepares the following table and sends it to the server:
$$\fLT(x_{0}^{(\switch)}x^{(i)}_{0}\rightarrow \theta^{(i)}_{0},x_{1}^{(\switch)}x^{(i)}_{0}\rightarrow \theta^{(i)}_{0},x_{0}^{(\switch)}x^{(i)}_{1}\rightarrow \theta^{(i)}_{1},x_{1}^{(\switch)}x^{(i)}_{1}\rightarrow \theta^{(i)}_{1};1^\kappa)$$
The honest server should do the following mapping for each $i\in [0,L]$ to add the phases:
\begin{align}
&(\ket{x_0^{(\switch)}}+\ket{x_1^{(\switch)}})\otimes (\ket{x_0^{(i)}}+\ket{x_1^{(i)}})\\
\rightarrow & (\ket{x_0^{(\switch)}}+\ket{x_1^{(\switch)}})\otimes (\ket{x_0^{(i)}}\ket{\theta_0^{(i)}}+\ket{x_1^{(i)}}\ket{\theta_1^{(i)}})\\
\rightarrow & (\ket{x_0^{(\switch)}}+\ket{x_1^{(\switch)}})\otimes (e^{\theta_0^{(i)}\mi\pi/4}\ket{x_0^{(i)}}\ket{\theta_0^{(i)}}+e^{\theta^{(i)}_1\mi\pi/4}\ket{x_1^{(i)}}\ket{\theta_1^{(i)}})\\
\rightarrow & (\ket{x_0^{(\switch)}}+\ket{x_1^{(\switch)}})\otimes (e^{\theta_0^{(i)}\mi\pi/4}\ket{x_0^{(i)}}+e^{\theta^{(i)}_1\mi\pi/4}\ket{x_1^{(i)}})
\end{align}
\item Both parties execute $\fHadamardTest(K^{(\switch)};1^\kappa)$ (formalized below) on the switch gadget. 
\end{enumerate}
An honest server holds the following state in the end:
$$\fgadget(K^{(0)},\Theta^{(0)})\otimes\fgadget(K^{(1)},\Theta^{(1)})\otimes \fgadget(K^{(2)},\Theta^{(2)})\otimes\cdots\otimes \fgadget(K^{(L)},\Theta^{(L)})$$
\end{prtl}
	
\end{mdframed}

%
%
The Hadamard test with random oracle padding is defined as follows. 
\subsection{Subprotocols: Hadamard Tests and Gadget Combination}
\paragraph{RO-padded Hadamard test} The Hadamard tests used in our protocols are defined below.  For this work, we need to define different versions of RO-padded Hadamard tests, which deal with the extra phases in different ways. In each of these tests, the client asks the server to add an extra random oracle paddings before the Hadamard operation. As discussed in the introduction and \cite{jiayu20}, this allows the client to control the server-side states in a way that the un-padded version could not give. 
\begin{itemize}
\item (Unphased Hadamard test) $\fHadamardTest(K;1^\kappa)$, where $K$ is a key pair: this is the most basic form of Hadamard test used in our protocol. Here the phases are trivial: In this protocol the honest server is suppose to hold $\fgadget(K)$ in the beginning. This test has only one-sided error in the sense that if the honest server could pass with probability $1$ thus if the server fails the cheating behavior will be caught immediately. This version will be used in the $\fAddPhaseWithswitch$ and the $\fBNTest$. 
\item (Phased Hadamard test) $\fHadamardTest(K,\Theta;1^\kappa)$, where $K$ is a key pair, $\Theta$ is a phase pair: in this version, the honest server is suppose to hold $\fgadget(K,\Theta)$ in the beginning. The client will first reveal the relative phase to allow the server to dephase the gadget and run the unphased Hadamard test in the previous bullet. This version will be used in the $\fCoPhTest$.
\item (Extra-phase-biased Hadamard test) $\fHadamardTest(K,\Theta,\delta;1^\kappa)$, where $K$ is a key pair, $\Theta$ is a phase pair, $\delta\in \{0,4,1\}$: similar to the previous bullet, the honest server is suppose to hold $\fgadget(K,\Theta)$. Different from the previous bullet, when the client reveals the phase, it adds an \emph{extra-phase-bias} $\delta$ ($\delta$ itself is hidden from the server): suppose $\Theta=(\theta_0,\theta_1)$, it will reveal $\theta_1-\theta_0-\delta$. Then corresponding to different $\delta$, the client will produce results in different ways:
\begin{itemize}
	\item If $\delta=0$, the behavior of the protocol is the same as the following protocol. The honest server will first de-phase the gadget as the last version and both parties do unphased Hadamard test on $\fgadget(K)$. 
	\item If $\delta=4$, the dephasing will give $\ket{x_0}-\ket{x_1}$  (up to a global phase, where $K=(x_0,x_1)$). Then the unphased Hadamard test on this state gives opposite $\fpass/\ffail$ flag to $\ket{x_0}+\ket{x_1}$. Thus for this case the client will reverse the $\fpass/\ffail$ from the $\delta=0$ case.
	\item If $\delta=1$:  the dephasing will give $\ket{x_0}+e^{\mi\pi/4}\ket{x_1}$. Then the server will go through an unphased Hadamard test from this state (note that $\delta$ is hidden from the server so the server does not know the state description it holds). The client will  use the equation deciding the $\fpass/\ffail$ flag from the unphased version, but it will only record $\fwin$ or $\flose$ in this case. \par
	 The honest server could also $\flose$ in this test with some probability, and what we want to guarantee is the winning probability is not far from the optimal value.
\end{itemize}

 This version will be used in the $\fInPhTest$.
\end{itemize}

\begin{mdframed}
\begin{prtl}[$\fHadamardTest(K;1^\kappa)$, unphased]\label{prtl:phadamard}
Suppose the security parameter is $\kappa$.\par
Client holds a pair of keys $K=(x_0,x_1)$, where the length of each key is $|x|$.\par
Honest server should hold state $\frac{1}{\sqrt{2}}(\ket{x_0}+\ket{x_1})$.\par
\begin{enumerate}
\item The client samples $\pad\leftarrow_r \{0,1\}^\kappa$. Send it to the server.
\item The server is suppose to use the padding to map the state into
$$\frac{1}{\sqrt{2}}(\ket{x_0}\underbrace{\ket{H(\pad || x_0)}}_{\kappa\text{ qubits}}+\ket{x_1}\ket{H(\pad || x_1)})$$
and do Hadamard measurements on all the qubits above.\par
Suppose the measurement result is $d\in \{0,1\}^{|x|+\kappa}$. The server sends back $d$.
\item The client sets $\bflag=\ffail$ if the last $\kappa$ bits of $d$ are all-zero.\par
Otherwise the client calculates
 \begin{equation}\label{eq:htt}d\cdot (x_0||H(pad||x_0))+d\cdot (x_1||H(pad||x_1))\mod 2\end{equation} If \eqref{eq:htt} is $0$, set $\bflag=\fpass$; if \eqref{eq:htt} is $1$, set $\bflag=\ffail$.
\end{enumerate}

\end{prtl}
\begin{prtl}[$\fHadamardTest(K,\Theta;1^\kappa)$, phased]
Suppose the security parameter is $\kappa$.\par
Client holds a pair of keys $K=(x_0,x_1),\Theta=(\theta_0,\theta_1)$.\par
Honest server should hold state $\fgadget(K,\Theta)$.\par

\begin{enumerate}
	\item The client reveals the relative phase $\theta_1-\theta_0$ to the server. The server could de-phase and get the state $\fgadget(K)$ up to a global phase.
	\item Both parties run $\fHadamardTest(K;1^\kappa)$ on $\fgadget(K)$.
\end{enumerate}
	
\end{prtl}

	\begin{prtl}[$\fHadamardTest(K,\Theta,\delta;1^\kappa)$, the Hadamard test with extra phase bias]\label{prtl:hadamard}
		Suppose the security parameter is $\kappa$.\par
Client holds a pair of keys $K=(x_0,x_1)$, a pair of phases $\Theta=(\theta_0,\theta_1)$, and an extra phase bias $\delta\in \{0,4,1\}$.\par
Honest server should hold state $\fgadget(K,\Theta)$.
\begin{enumerate}
\item The client reveals $\theta_1-\theta_0-\delta$ to the server. The honest server could use it to transform $\fgadget(K,\Theta)$ to the following state up to a global phase:
\begin{equation}\label{eq:30ht}\frac{1}{\sqrt{2}}(\ket{x_0}+e^{\delta\mi\pi/4}\ket{x_1})\end{equation}
\item Both parties execute the non-phase-bias Hadamard test $\fHadamardTest(K;1^\kappa)$ above, where the server uses \eqref{eq:30ht}. This subprotocol call returns a measurement result $d$ (see \eqref{eq:htt}) to the client. The client sets $\bflag=\ffail$ if the last $\kappa$ bits of $d$ are all-zero. Otherwise, depending on the value of $\delta$, the client determines the output flag and score as follows:
\begin{itemize}
	\item If $\delta=0$, the client sets $\bflag=\fpass$ if \eqref{eq:htt} is $0$ and $\bflag=\ffail$ if \eqref{eq:htt} is $1$.
	\item If $\delta=4$, the client sets $\bflag=\ffail$ if \eqref{eq:htt} is $0$ and $\bflag=\fpass$ if \eqref{eq:htt} is $1$.
	\item If $\delta=1$, the client sets $\bscore=\fwin$ if \eqref{eq:htt} is $0$ and $\bscore=\flose$ if \eqref{eq:htt} is $1$.
\end{itemize}
The unspecified flag is $\fpass$ and the unspecified score is $\perp$ by default.
\end{enumerate}

	\end{prtl}

	
\end{mdframed}
To formalize the collective phase test and basis uniformity test, we will use subprotocols for combining multiple gadgets into one single gadget. This is defined as follows.
\paragraph{$\fCombine$: subprotocols for collective phase tests and basis uniformity tests} $\fCombine$ subprotocols combines multiple gadgets to a single gadget. These subprotocols will be used in $\fCoPhTest$ and $\fBNTest$:
\begin{itemize}
\item In $\fCoPhTest$, the indices of gadgets to be combined are $[0,L]$. Correspondingly, $K=(K^{(i)})_{i\in [0,L]}$, $\Theta=(\Theta^{(i)})_{i\in [0,L]}$. The protocol is denoted by $\fCombine(K,\Theta;1^\kappa)$.
\item In $\fBNTest$, $\Theta$ are taken to be all-zero, and the indices of gadgets to be combined are $[L]$. The protocol is denoted by $\fCombine(\tilde K,I;1^\kappa)$, where $\tilde K=(K^{(i)})_{i\in [L]}$. Here $I$ is a subset of $[L]$.

\end{itemize}
\begin{mdframed}
\begin{prtl}[$\fCombine$ for $\fCoPhTest$]\label{prtl:combine}
	Suppose the security parameter is $\kappa$.\par
	Client holds a tuple of key pairs $K=(K^{(i)})_{i\in [0,L]};K^{(i)}=(x_0^{(i)},x_1^{(i)})$. Correspondingly it also holds a tuple of phase pairs $\Theta=(\Theta^{(i)})_{i\in [0,L]};\Theta^{(i)}=(\theta_0^{(i)},\theta_1^{(i)})$. Each phase is in $\{0,1\cdots 7\}$.\par
	Honest server should hold
	$$\otimes_{i\in [0,L]}\fgadget(K^{(i)},\Theta^{(i)})$$
	\begin{enumerate}
		\item For each $i\in [L]$:\par
		The client samples different $r_0^{(i)},r_1^{(i)}\leftarrow \{0,1\}^\kappa$, prepares the table
		$$\fLT^{(i)}:=\fLT(x_0^{(0)}x^{(i)}_{0}\rightarrow r_0^{(i)},x_{1}^{(0)}x^{(i)}_{1}\rightarrow r_0^{(i)},x_{0}^{(0)}x^{(i)}_{1}\rightarrow r_1^{(i)},x_{1}^{(0)}x^{(i)}_{0}\rightarrow r_1^{(i)};1^\kappa)$$
		and sends the table to the server.
		\item The client defines $K^{(combined)}=(x_0^{(combined)},x_1^{(combined)})$, sets it to be $K^{(0)}$ in the beginning. And it defines $\Theta^{(combined)}=(\theta_0^{(combined)},\theta_1^{(combined)})$, sets it to be $\Theta^{(0)}$ in the beginning.\par
		For each $i\in [L]$:
	\begin{enumerate}
	\item If the server is honest, by this time it should hold $\fgadget(K^{(combined)},\Theta^{(combined)})$ and $\fgadget(K^{(i)},\Theta^{(i)})$. The keys in $K^{(combined)}$ have prefix in $K^{(0)}$. It will combine these two gadgets into a single gadget by decrypting $\fLT^{(i)}$ and measures to get $r^{(i)}\in \{r_0^{(i)},r_1^{(i)}\}$. In more detail, the following operations are applied by the honest server:
		\begin{align}
		&\fgadget(K^{(combined)},\Theta^{(combined)})\otimes\fgadget(K^{(i)},\Theta^{(i)})\\
		=&\frac{1}{2}(e^{\theta_0^{(combined)}\mi\pi/4}\ket{x_0^{(combined)}}+e^{\theta_1^{(combined)}\mi\pi/4}\ket{x_1^{(combined)}})\otimes (e^{\theta_0^{(i)}\mi\pi/4}\ket{x_0^{(i)}}+e^{\theta_1^{(i)}\mi\pi/4}\ket{x_1^{(i)}})\\
		=&\frac{1}{2}\sum_{b^{(combined)}b^{(i)}\in \{0,1\}^2}e^{(\theta_{b^{(combined)}}^{(combined)}+\theta_{b^{(i)}}^{(i)})\mi\pi/4}\ket{x_{b^{(combined)}}^{(combined)}x_{b^{(i)}}^{(i)}}\\
		&\text{(Decrypt $\fLT^{(i)}$ with $K^{(0)}$ (in the prefix of $K^{(combined)}$) and $K^{(i)}$):}\\
		\rightarrow &\frac{1}{2}\sum_{b^{(combined)}b^{(i)}\in \{0,1\}^2}e^{(\theta_{b^{(combined)}}^{(combined)}+\theta_{b^{(i)}}^{(i)})\mi\pi/4}\ket{x_{b^{(combined)}}^{(combined)}x_{b^{(i)}}^{(i)}}\ket{r^{(i)}_{b^{(combined)}+b^{(i)}}}\\
		\rightarrow &\text{measure and get $r^{(i)}\in \{r_0^{(i)},r_1^{(i)}\}$}:\\
		&(r^{(i)}=r_0^{(i)}): e^{(\theta_0^{(combined)}+\theta_0^{(i)})\mi\pi/4}\ket{x_0^{(combined)}x_0^{(i)}}+e^{(\theta_1^{(combined)}+\theta_1^{(i)})\mi\pi/4}\ket{x_1^{(combined)}x_1^{(i)}}\label{eq:35cb}\\
		&(r^{(i)}=r_1^{(i)}): e^{(\theta_0^{(combined)}+\theta_1^{(i)})\mi\pi/4}\ket{x_0^{(combined)}x_1^{(i)}}+e^{(\theta_1^{(combined)}+\theta_0^{(i)})\mi\pi/4}\ket{x_1^{(combined)}x_0^{(i)}}\label{eq:36cb}\\
		\end{align}
The server sends back $r^{(i)}$ to the client.\par
\item	The client checks the server's response $r^{(i)}$ is in $\{r_0^{(i)},r_1^{(i)}\}$ and stores
\begin{itemize}
	\item If $r^{(i)}=r_0^{(i)}$:
	$$K^{(combined)}=(x_0^{(combined)}x_0^{(i)},x_1^{(combined)}x_1^{(i)}),$$
	$$\Theta^{(combined)}=(\theta_0^{(combined)}+\theta_0^{(i)},\theta_1^{(combined)}+\theta_1^{(i)})$$
	\item If $r^{(i)}=r_1^{(i)}$:
	$$K^{(combined)}=(x_0^{(combined)}x_1^{(i)},x_1^{(combined)}x_0^{(i)}),$$
	$$\Theta^{(combined)}=(\theta_0^{(combined)}+\theta_1^{(i)},\theta_1^{(combined)}+\theta_0^{(i)})$$
	Correspondingly the honest server's state is
		$\fgadget(K^{(combined)},\Theta^{(combined)})$.
\end{itemize}

		
		\end{enumerate}
In the end all these gadgets are combined together; the client holds $K^{(combined)}$, $\Theta^{(combined)}$ and the server holds $\fgadget(K^{(combined)},\Theta^{(combined)})$. 
 
	\end{enumerate}

\end{prtl}

\end{mdframed}
\begin{mdframed}
\begin{prtl}[$\fCombine$ for $\fBNTest$]\label{prtl:combine2}
	Suppose the security parameter is $\kappa$. $I$ is a tuple of indices $i_1i_2\cdots i_{|I|}$ which is a subset of $[L]$ arranged in increasing order.\par
	Client holds a tuple of key pairs $\tilde K=(K^{(i)})_{i\in [L]};K^{(i)}=(x_0^{(i)},x_1^{(i)})$. 
	Honest server should hold
	$$\otimes_{i\in [L]}\fgadget(K^{(i)})$$
	\begin{enumerate}
		\item For each $i\in i_2,i_3\cdots i_{|I|}$:\par
		The client samples different $r_0^{(i)},r_1^{(i)}\leftarrow \{0,1\}^\kappa$, prepares the table
		$$\fLT^{(i)}:=\fLT(x_0^{(i_1)}x^{(i)}_{0}\rightarrow r_0^{(i)},x_{1}^{(i_1)}x^{(i)}_{1}\rightarrow r_0^{(i)},x_{0}^{(i_1)}x^{(i)}_{1}\rightarrow r_1^{(i)},x_{1}^{(i_1)}x^{(i)}_{0}\rightarrow r_1^{(i)};1^\kappa)$$
		and sends the table to the server.
	\item The client defines $K^{(combined)}=(x_0^{(combined)},x_1^{(combined)})$, sets it to be $K^{(i_1)}$ in the beginning.\par	
	For each $i\in i_2,i_3\cdots i_{|I|}$:
	\begin{enumerate}
	\item If the server is honest, by this time it should hold $\fgadget(K^{(combined)})$ and $\fgadget(K^{(i)})$. The keys in $K^{(combined)}$ have prefix in $K^{(i_1)}$. It will combine these two gadgets into a single gadget by decrypting $\fLT^{(i)}$ and measures to get $r^{(i)}\in \{r_0^{(i)},r_1^{(i)}\}$. In more detail, the following operations are applied by the honest server:

				\begin{align}
		&\fgadget(K^{(combined)})\otimes\fgadget(K^{(i)})\\
		=&\frac{1}{2}(\ket{x_0^{(combined)}}+\ket{x_1^{(combined)}})\otimes (\ket{x_0^{(i)}}+\ket{x_1^{(i)}})\\
		=&\frac{1}{2}\sum_{b^{(combined)}b^{(i)}\in \{0,1\}^2}\ket{x_{b^{(combined)}}^{(combined)}x_{b^{(i)}}^{(i)}}\\
		&\text{(Decrypt $\fLT^{(i)}$ with $K^{(i_1)}$ (in the prefix of $K^{(combined)}$) and $K^{(i)}$):}\\
		\rightarrow &\frac{1}{2}\sum_{b^{(combined)}b^{(i)}\in \{0,1\}^2}\ket{x_{b^{(combined)}}^{(combined)}x_{b^{(i)}}^{(i)}}\ket{r^{(i)}_{b^{(combined)}+b^{(i)}}}\\
		\rightarrow &\text{measure and get $r^{(i)}\in \{r_0^{(i)},r_1^{(i)}\}$}:\\
		&(r^{(i)}=r_0^{(i)}): \ket{x_0^{(combined)}x_0^{(i)}}+\ket{x_1^{(combined)}x_1^{(i)}}\label{eq:35cb}\\
		&(r^{(i)}=r_1^{(i)}): \ket{x_0^{(combined)}x_1^{(i)}}+\ket{x_1^{(combined)}x_0^{(i)}}\label{eq:36cb}\\
		\end{align}
The server sends back $r^{(i)}$ to the client.\par
\item	The client checks the server's response $r^{(i)}$ is in $\{r_0^{(i)},r_1^{(i)}\}$ and stores
\begin{itemize}
	\item If $r^{(i)}=r_0^{(i)}$:
	$$K^{(combined)}=(x_0^{(combined)}x_0^{(i)},x_1^{(combined)}x_1^{(i)}),$$
	\item If $r^{(i)}=r_1^{(i)}$:
	$$K^{(combined)}=(x_0^{(combined)}x_1^{(i)},x_1^{(combined)}x_0^{(i)}),$$
\end{itemize}
Correspondingly, the honest server's state is
		$\fgadget(K^{(combined)})$.
\end{enumerate}

	In the end both parties combine the gadgets with indices in $I$ into a single gadget, the client holds $K^{(combined)}$ and the server holds $\fgadget(K^{(combined)})$. The gadgets with indices outside $I$ remain unchanged.
	\end{enumerate}
\end{prtl}
	
\end{mdframed}

\subsection{Sub-tests}

\paragraph{Collective Phase Test} The collective phase test is formalized as follows. The input gadgets are indexed by $[0,L]$. The client first runs $\fCombine$ to combine all these $1+L$ gadgets into a single gadget. Then one of the following two is randomly selected:
\begin{itemize}
	\item A standard basis test of the combined keys;
	\item As the main step of this test, both parties do an (RO-padded) Hadamard test on the combined gadget. 
\end{itemize}
\begin{mdframed}
	\begin{prtl}[$\fCoPhTest$]
		Suppose the security parameter is $\kappa$ and the gadget number is controlled by $L$.\par
Client holds a tuple of key pairs $K=(K^{(i)})_{i\in [0,L]}$ a tuple of phase pairs $\Theta=(\Theta^{(i)})_{i\in [0,L]}$.\par
Honest server holds:
$$\otimes_{i\in [0,L]}\fgadget(K^{(i)},\Theta^{(i)})$$
\begin{enumerate}
\item 
Both parties run $\fCombine(K,\Theta;1^\kappa)$. The client gets $K^{(combined)}$ and $\Theta^{(combined)}$ and the server gets $\fgadget(K^{(combined)},\Theta^{(combined)})$.
%
\item The client chooses to run one of the following two randomly, and the honest server could use $\fgadget(K^{(combined)},\Theta^{(combined)})$ to pass the tests:
\begin{itemize}
\item $\fStdBTest(K^{(combined)})$.
\item  $\fHadamardTest(K^{(combined)},\Theta^{(combined)};1^\kappa)$.
\end{itemize}
\end{enumerate}

	\end{prtl}
\end{mdframed}

\paragraph{Individual Phase Test} The individual phase test is defined as follows.
\begin{mdframed}
\begin{prtl}[$\fInPhTest$]\label{prtl:rephtest}
			Suppose the security parameter is $\kappa$ and the gadget number is $L$.\par
Client holds a tuple of key pairs $K=(K^{(i)})_{i\in [0,L]}$ a tuple of phase pairs $\Theta=(\Theta^{(i)})_{i\in [0,L]}$.\par
Honest server holds:
$$\otimes_{i\in [0,L]}\fgadget(K^{(i)},\Theta^{(i)})$$
This protocol will only use the first gadget, which corresponds to $\fgadget(K^{(0)},\Theta^{(0)})$.
\begin{enumerate}
\item With probability $\frac{1}{3}$ each, the client executes the following with the server without telling the server which is the case:
\begin{itemize}
\item Both parties execute $\fHadamardTest(K^{(0)},\Theta^{(0)},0;1^\kappa)$;
\item Both parties execute $\fHadamardTest(K^{(0)},\Theta^{(0)},4;1^\kappa)$;
\item Both parties execute $\fHadamardTest(K^{(0)},\Theta^{(0)},1;1^\kappa)$; Note ``$\fwin$'' or ``$\flose$'' are recorded as the score corresponding to the client's checking result.
	
\end{itemize}

\end{enumerate}

\end{prtl}
	
\end{mdframed}
\paragraph{basis uniformity test} Finally we formalize the basis uniformity test as follows. The input gadgets are indexed by $1$ to $L$. The client selects a random subset of index $I$ from $[L]$, which represents the indices of the gadgets that will be used for the combine-and-test process; then it uses $\fCombine$ to combine these gadgets into a single gadget. Then:
\begin{enumerate}
\item For the gadgets outside $I$, the client will ask the server to measure them in the standard basis.
\item For the combined part, one of the following two is randomly selected:
\begin{itemize}
\item A standard basis test on the combined gadget.	
\item As the main step of this test, both parties execute a Hadamard test on this combined gadget.\par
\end{itemize}

\end{enumerate}
Note that the main body of $\fBNTest$ is on the gadgets without phases ($\fgadget(K^{(\cdots)})$). But when we use the $\fBNTest$ in Protocol \ref{prtl:prerspv} the phases are already added. Thus we first formalize a version of $\fBNTest$ that additionally takes a tuple of phase information as parameters that does the following: it simply reveals the phases and allows the server to de-phase the gadgets and then calls the main body of $\fBNTest$.
\begin{mdframed}
\begin{prtl}[$\fBNTest$ starting from gadgets with phases]
Suppose the security parameter is $\kappa$ and the gadget number is $L$.\par
Client holds a tuple of key pairs $\tilde K=(K^{(i)})_{i\in [L]}$ and a tuple of phase pairs $(\Theta^{(i)})_{i\in [L]}$.\par
Honest server holds:
\begin{equation}\label{eq:99}\fgadget(K^{(1)},\Theta^{(1)})\otimes \fgadget(K^{(2)},\Theta^{(2)})\otimes\cdots\otimes \fgadget(K^{(L)},\Theta^{(L)})\end{equation}
\begin{enumerate}
\item The client reveals $\Theta^{(1)},\Theta^{(2)}\cdots \Theta^{(L)}$ and the server could remove the phases from \eqref{eq:99} and get
		 $$\fgadget(K^{(1)})\otimes  \fgadget(K^{(2)})\otimes \cdots\fgadget(K^{(L)}) $$
		 \item Both parties execute $\fBNTest(\tilde K^{(i)};1^\kappa)$ defined below.
		 \end{enumerate}	
\end{prtl}

\begin{prtl}[$\fBNTest$]\label{prtl:bntest}
Suppose the security parameter is $\kappa$ and the gadget number is $L$.\par
Client holds a tuple of key pairs $\tilde K=(K^{(i)})_{i\in [L]}$.\par
Honest server holds:
\begin{equation}\fgadget(K^{(1)})\otimes \fgadget(K^{(2)})\otimes\cdots\otimes \fgadget(K^{(L)})\end{equation}
\begin{enumerate}
\item The client samples a random subset of index $I\subseteq [L]$.\par 
 Both parties execute $\fCombine(K,I;1^\kappa)$. This combines the gadgets with superscripts in $I$ into a single gadget. The client gets $K^{(combined)}$ and the server gets $\fgadget(K^{(combined)})$. The gadgets with superscripts in $[L]-I$ do not change.
\item The client asks the server to measure all the gadgets excluding $\fgadget(K^{(combined)})$ in the standard basis and send back the results (denoted as $rp^{(j)}$ for each $j\in [L]-I$). The client checks $rp^{(j)} \in K^{(j)}$ for each $j\in [L]-I$, and rejects if it's not satisfied.
\item The client chooses one of the following two randomly:
\begin{itemize}
\item The client asks the server to measure  $K^{(combined)}$ in the computational basis and send back the result. The client checks the server's response is within $K^{(combined)}$.
\item Both parties execute $\fHadamardTest(K^{(combined)};1^\kappa)$ on the combined gadget.	
\end{itemize}

\end{enumerate}

\end{prtl}
	
\end{mdframed}
So far we have completed the formalization of our pre-RSPV protocol. Below we give its correctness, efficiency and verifiability, which corresponds to Definition \ref{defn:prerspv}, \ref{defn:prerspvv}.
\subsection{Properties of Our $\fpreRSPV$ Protocol}\label{sec:5.4r}
Protocol \ref{prtl:prerspv} has the following properties.
\paragraph{Round type probability} The round type is chosen from $(\ftest,\fquiz,\fcomp)$ with probability $(\frac{4}{5},\frac{1}{10},\frac{1}{10})$ correspondingly. Denote them as $p_{\ftest},p_{\fquiz},p_{\fcomp}$.
\paragraph{Correctness} The protocol in the honest settings prepares the target state
$$\ket{+_{\theta^{(1)}}}\otimes \ket{+_{\theta^{(2)}}}\otimes \cdots \otimes \ket{+_{\theta^{(L)}}} $$
in the $\fcomp$ round. The client gets $\theta^{(1)}\theta^{(2)}\cdots \theta^{(L)}\in_r \{0,1\cdots 7\}^L$. This is as required in Definition \ref{defn:prerspv}.
\paragraph{Winning Probability} The winning probability in the quiz round in the honest setting is
$$\OPT=\frac{1}{3}|\frac{1}{2}+\frac{1}{2}e^{\mi\pi/4}|^2=\frac{1}{3}\cos^2(\pi/8)= 0.28451779686\cdots$$
where the first $1/3$ comes from the fact that in the extra-biased Hadamard test, if the extra phase bias $\delta\in \{0,4\}$ the protocol does not generate $\fwin/\flose$ output.
\paragraph{Efficiency} The complexity of both parties is $O(\fpoly(\kappa)|C|)$. Note that since we are working on the MBQC model where long-range interactions come with a cost, the complexity analysis needs to be careful. By analyzing the honest mapping we can confirm that the total complexity of honest behavior is $O(\fpoly(\kappa)|C|)$ even in the MBQC model.\footnote{One place that needs to be additionally careful is how to model the cost of random oracle queries. In our protocol there are constant number of queries where the input is a long string, whose length is linear to $L$. It's reasonable to consider the cost of this action to be within $O(\fpoly(\kappa)L)$.}\par
\paragraph{Optimality of $\OPT$} \begin{thm}\label{thm:opt}$\OPT$ is optimal with error tolerance $(10^{-2000},10^{-220})$ (as formalized in Definition \ref{defn:opt}).\end{thm}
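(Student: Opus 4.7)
} The plan is to reduce to a two-branch Fourier-theoretic optimization for the extra-phase-biased Hadamard test inside $\fInPhTest$. Suppose toward contradiction that both alternatives of Definition \ref{defn:opt} fail, so that $|\Pi_\fpass\fpreRSPV^\fAdv\ket{0}|^2 > 1-\epsilon_1$ yet $|\Pi_\fwin\fpreRSPV^\fAdv\ket{0}|^2 > p_\fquiz(\OPT+\epsilon_2+\fneg(\kappa))$. Decomposing the overall pass probability through the round-type register $\btype$ in Protocol \ref{prtl:prerspv} and the inner uniform $\delta$-selection in $\fInPhTest$, and using $p_\fquiz=1/10$, I conclude that the pass probabilities conditional on the quiz branch with $\delta=0$ and with $\delta=4$ are each at least $1-30\epsilon_1$. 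Crucially, the client's message $v=\theta_1^{(0)}-\theta_0^{(0)}-\delta$ has uniform marginal on $\bZ_8$ for every $\delta\in\{0,4,1\}$, so the adversary's post-message unitary family $\{U_v\}$ is forced to be the same function of $v$ in all three branches.

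Next, the state-characterization consequences of the $\fStdBTest$ and $\fAddPhaseWithHelper$ analyses (Sections \ref{sec:5}--\ref{sec:11}) imply that, on the $0$-th gadget's registers, the state entering $\fInPhTest$ is within $\fpoly(\epsilon_1^{1/c})$ trace distance of a joint state of the form
\[
\sum_{\Theta^{(0)}}\ket{\Theta^{(0)}}\otimes\sum_{b^{(0)}\in\{0,1\}}\alpha_{b^{(0)}}\ket{x_{b^{(0)}}^{(0)}}\otimes\ket{\varphi_{b^{(0)},\theta_{b^{(0)}}^{(0)}}},
\]
where $|\alpha_0|^2+|\alpha_1|^2=1$ and $\ket{\varphi_{b^{(0)},\theta}}$ depends only on $\theta_{b^{(0)}}^{(0)}$; here $c$ is a cumulative error-blowup constant from the chain of Fact \ref{fact:3}-type square-root losses and the NTCF/QROM slack (Lemma \ref{lem:padro}). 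Writing $\gamma_{v,b}(\theta)=\alpha_b\,U_v^{(b)}\ket{\varphi_{b,\theta}}$ and translating the three $\delta$-probabilities into expectations over $(v,\theta_0)$, the $\delta=0$ and $\delta=4$ pass constraints together with Cauchy--Schwarz pin down $|\alpha_0|\approx|\alpha_1|\approx 1/\sqrt{2}$, the cross-branch alignment $\gamma_{v,0}(\theta_0)\approx\gamma_{v,1}(\theta_0+v)$, and the antiperiodicity $\gamma_{v,1}(\theta+4)\approx-\gamma_{v,1}(\theta)$. The $\delta=1$ winning probability then reduces to $\tfrac{1}{2}+\Re\,\bE_{v,\theta}\langle\gamma_{v,1}(\theta+1)|\gamma_{v,1}(\theta)\rangle$, and the antiperiodicity restricts $\gamma_{v,1}$'s $\bZ_8$-Fourier support to odd frequencies, giving the upper bound $\tfrac{1}{2}+\sqrt{2}/4=\cos^2(\pi/8)$ with additive $O(\sqrt{\epsilon_1})$ slack from the approximations. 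Multiplying by $p_\fquiz/3$ yields overall winning probability $\leq p_\fquiz(\OPT+O(\sqrt{\epsilon_1})+\fneg(\kappa))$; since $\sqrt{\epsilon_1}=10^{-1000}\ll 10^{-220}=\epsilon_2$, this contradicts the assumption.

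The main obstacle is the second step's rigorous execution. First, I must handle that $U_v$ can in general mix the key register with the auxiliary register, so the branch-preserving decomposition $U_v\ket{\psi_\Theta}=\sum_b\alpha_b\ket{x_b^{(0)}}\otimes U_v^{(b)}\ket{\varphi_{b,\theta_b}}$ needs justification; I expect this to follow from the fact that the Hadamard measurement only reveals the parity $d\cdot(x_0^{(0)}+x_1^{(0)})$ in the key basis, so without loss of generality $U_v$ may be taken key-preserving up to a negligible correction via Lemma \ref{lem:padro} and the NTCF claw-free property of Definition \ref{defn:ntcf}. Second, I must verify that the error-blowup constant $c$ satisfies $\fpoly(\epsilon_1^{1/c})\ll\epsilon_2$; with $\epsilon_1=10^{-2000}$ and $\epsilon_2=10^{-220}$, any $c\leq 9$ suffices, which is comfortably within the reach of the iterated-square-root analyses in Sections \ref{sec:5}--\ref{sec:11}. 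A secondary subtlety is that the $\delta=1$ branch contributes $\bflag=\fpass$ by default, so no direct pass-probability constraint applies there; the winning bound must therefore be extracted entirely from the structural rigidity imposed on $\{U_v\}$ by the $\delta\in\{0,4\}$ branches.
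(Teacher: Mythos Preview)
Your overall structure mirrors the paper's proof: chain the pass probability through $\fStdBTest$ (Theorem \ref{thm:sbt}), then $\fAddPhaseWithHelper$ and the helper-gadget blinding (Theorem \ref{thm:7.3}), then the $\cR_1$ randomization (Theorem \ref{thm:indofr1}) to reach a state in Setup \ref{setup:4} whose $\bx^{(0)}_{1-b}$-branch is independent of $\theta^{(0)}_b$, and finally invoke the $\fInPhTest$ optimality lemma (Theorem \ref{thm:9.2}). Your $\bZ_8$-Fourier argument is a pleasant alternative to the paper's geometric Lemma \ref{lem:optla}/Corollary \ref{cor:d1}: once antiperiodicity forces the support onto odd frequencies, the shift-$1$ autocorrelation bound $\cos(\pi/4)$ and hence $\cos^2(\pi/8)$ falls out in one line, whereas the paper works through a $5$-state extremal problem.

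There is, however, a genuine gap in how you connect the winning probability to an overlap. You write it as $\tfrac{1}{2}+\Re\,\bE\langle\gamma_{v,1}(\theta+1)\,|\,\gamma_{v,1}(\theta)\rangle$ with $\gamma_{v,b}=\alpha_b U_v^{(b)}\ket{\varphi_{b,\theta}}$ the \emph{pre-response} branch states. That identity holds only if the adversary, after $U_v$, performs the honest Hadamard-basis measurement on the key-plus-padding register; a malicious server is free to output any $d$ it likes, so the map from the post-$U_v$ state to $d$ is unconstrained and the overlap formula is unjustified. Your proposed fix---arguing $U_v$ is key-preserving because the client only checks a parity---does not close this: even a key-preserving $U_v$ can be followed by an arbitrary $d$-producing channel. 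The paper does \emph{not} reduce to an honest measurement. Instead it defines the objects of the linear-algebra lemma to be the \emph{post}-$\fHadamardTest$ branch states $\ket{\psi'_{b,\theta}}$ (equations \eqref{eq:174rw}--\eqref{eq:172j}), i.e.\ $\Pi^{\bd}_{\eqref{eq:htt}=0}$ applied to the full adversarial output on branch $b$, and uses the RO-padding property (Lemma \ref{lem:prepht}/\ref{lem:prephtre}) to get $|\ket{\psi'_{b,\theta}}|^2\approx\tfrac{1}{2}|\text{branch }b|^2$: on branch $b$ the value $H(pad\|x_{1-b})$ is fresh, so $d_2\cdot H(pad\|x_{1-b})$ is a uniform bit regardless of how $d$ was produced. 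The $\delta\in\{0,4\}$ constraints then become relations between these post-test states via Corollary \ref{cor:prepht}, and it is to \emph{these} states that your Fourier argument (or the paper's Lemma \ref{lem:optla}) applies. Once you redirect your analysis to the post-projection states and invoke Lemma \ref{lem:prepht} for the $\tfrac{1}{2}$ baseline, your argument goes through with the same error budget.
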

For the verifiability we have:
\paragraph{Verifiability} \begin{thm}\label{thm:prerspvv} Protocol \ref{prtl:prerspv} has verifiability with error tolerance $(10^{-2000},10^{-200})$ (as formalized in Definition \ref{defn:prerspvv}).\end{thm}
We prove Theorem \ref{thm:opt} and \ref{thm:prerspvv} in Section \ref{sec:12}.
%
%
\section{Basic Notions and Analysis of Key-Pair Preparation and Standard Basis Test}\label{sec:5}
In this section we develop tools for analyzing the part of Protocol \ref{prtl:prerspv} before the verifiable state preparation step.
\begin{itemize}
\item First, in Section \ref{sec:6.1r} we give symbols for registers that appear in the protocols. Then we get a clearer set-up for later proofs.\par
\item Then in Section \ref{sec:5.1} we will give some basic notions for characterizing the state's properties. They include the \emph{efficiently preparable}, \emph{key checkable} and \emph{claw-free} properties, which are basically reviews or re-formalizations of existing notions; and we define the \emph{strong-claw-free} to characterize the uncorrelated claw-freeness of multiple key pairs. These notions will also be the basis of later proofs.
 \item Then in Section \ref{sec:6.3q} we analyze the first step of Protocol \ref{prtl:prerspv}, the parallel application of $\fNTCF$. We will show the output state of this step satisfies the properties above.\par
 \item Then in Section \ref{sec:6.2} we give the notion of \emph{basis-honest form}, which means the server holds exactly one key from each key pair that the client holds. We also define the \emph{branch} of basis-honest form formally, which will be useful in later proofs.\par
\item  In Section \ref{sec:6.3} we analyze the properties of standard basis test ($\fStdBTest$). 
  We will show the ability of passing the standard basis test with high probability implies the state is approximately isometric to a basis-honest form via a server-side isometry.
  \item Finally in Section \ref{sec:6.6} we prove some lemmas that will be used in later proofs.
  \end{itemize}
  Note that we will introduce the \emph{Setup} to organize the properties of a state family.
 \subsection{Symbols for Different Registers}\label{sec:6.1r}
 By the time of the completion of step $1$ in Protocol \ref{prtl:prerspv}, the registers include:
 \begin{itemize}
 	\item The client-side key registers: denoted by $\bK^{(\switch)}=(\bx_0^{(\switch)},\bx_1^{(\switch)}),\bK=(\bK^{(i)})_{i\in [0,L]}$, $\bK^{(i)}=(\bx_0^{(i)},\bx_1^{(i)})$.\par
 	When the security parameter is $\kappa$, the size of each $\bx_b^{(i)}$ register is $\kappa$.\par
 	For each $i\in \{\switch, [0,L]\}$, define $\Domain(\bK^{(i)})$ as $\{(x_0,x_1):x_0\in \{0,1\}^{\kappa},x_1\in \{0,1\}^{\kappa},x_0\neq x_1\}$.\par
 	As a convention, we use $\tilde\bK$ to denote $(\bK^{(i)})_{i\in [L]}$.
 	\item The transcript registers hold the server's response in the first step, which is denoted by $y^{(\switch)}y^{(0)}\cdots y^{(L)}$ in the protocol; denote the corresponding registers by $\bY=(\bY^{(\switch)},\bY^{(0)},\bY^{(1)},\cdots \bY^{(L)})$.
 	\item The client-side phase registers: although these phases are not sampled and used in the protocol until the step (a) of the verifiable state preparation step, we could assume the client has already sampled them out in advance. These information is stored in registers $\bTheta=(\bTheta^{(i)})_{i\in [0,L]}$. $\bTheta^{(i)}=(\btheta_0^{(i)},\btheta_1^{(i)})$. The domain of each $\btheta$ register is $\{0,1\cdots 7\}$.
 	\item The server's registers. In general these registers are denoted by symbol $\bS$; and we note there are many different registers appeared in different steps of the protocol. We will use superscripts to refer to these different registers. Especially, denote the registers that holds the $\fgadget(K^{(i)})$  in \eqref{eq:25prep} in the honest setting as $\bS_{bsh}^{(i)}$. Denote $\bS_{bsh}=(\bS_{bsh}^{(i)})_{i\in [0,L]}$.
 	\item The random oracle registers that store the random oracle outputs for each input. The symbols are as used in Section \ref{sec:3.3}.
 \end{itemize}
%
\subsection{Basic Notions on Joint States}\label{sec:5.1}
We give the following basic notions for characterizing the properties of a state.
\begin{defn}[Efficiently preparable states, repeated] 
We say a purified joint state $\ket{\varphi}$ is efficiently-preparable if there exists a polynomial time operator (which could include projections) $O$ such that $\ket{\varphi}=O\ket{0}$.	
\end{defn}

\begin{defn}[Key-checkable]
Suppose $\bx$ is a client-side register that holds a key. We say a purified joint state $\ket{\varphi}$ is key-checkable for $\bx$ if there exists an efficient server-side operation\footnote{In general this server-side operation is implemented with the help of transcript registers (which are the $\bY$ registers).}  that implements $\Pi^{\bS}_{=\bx}$ on $\ket{\varphi}$ where $\bS$ is an arbitrary server-side register, $\Pi^{\bS}_{=\bx}$ denotes the projection onto the space that the content of $\bS$ is equal to the content of $\bx$.\par
Suppose the client holds a tuple of keys in registers $\bK$. We say a purified joint state $\ket{\varphi}$ is key-checkable for $\bK$ if for any single key register $\bx$ within $\bK$, $\ket{\varphi}$ is key-checkable for $\bx$.

\end{defn}

Then we review the notion of claw-freeness.
\begin{defn}[Claw-free \cite{BCMVV}]
Suppose the client holds a key pair in register $\bK=(\bx_0,\bx_1)$. We say a purified joint state $\ket{\varphi}$ is claw-free for $\bK$ against adversary family $\cF$ if for any adversary $\fAdv\in \cF$, for an arbitrary server-side register $\bS$,
$$|\Pi_{=\bx_0||\bx_1}^{\bS}\fAdv\ket{\varphi}|\leq \fneg(\kappa)$$
where $\Pi^{\bS}_{=\bx_0||\bx_1}$ is a projection onto the subspace that the content of $\bS$ is equal to the content of $\bx_0||\bx_1$.\par
We omit $\cF$ when it is taken to be the set of polynomial time server-side operations.
\end{defn}
And we further generalize it to the multi key pair setting:
\begin{defn}[Strongly claw-free]
	Suppose the client holds a tuple of key pairs in register $(\bK^{(i)})_{i\in D}$ where $\bK^{(i)}=(\bx_0^{(i)},\bx_1^{(i)})$. We say a purified joint state $\ket{\varphi}$ is strongly-claw-free for any key pair in $(\bK^{(i)})_{i\in D}$ if for any $i\in D$, $\ket{\varphi}\odot ((\bK^{(i)})_{i\in D}-\bK^{(i)})$ is claw-free for $\bK^{(i)}$, where $\ket{\varphi}\odot ((\bK^{(i)})_{i\in D}-\bK^{(i)})$ means, starting from  $\ket{\varphi}$, the client sends all the keys except the $i$-th key pair to the server.
\end{defn}
\subsection{Analysis of the Key-pair Superposition Preparation Step}\label{sec:6.3q}
We have the following theorem which characterizes the properties of output states from the first step of Protocol \ref{prtl:prerspv} (a parallel $\fNTCF$ evaluation).
\begin{thm}\label{thm:gp}
For any polynomial time adversary $\fAdv$, denote the output state of the first step of Protocol \ref{prtl:prerspv} as $\ket{\varphi^1}$. Then registers described in Section \ref{sec:6.1r} are initialized, and $\ket{\varphi^1}$ is:
\begin{itemize}
\item efficiently preparable;
\item key checkable for both $\bK^{(\switch)}$ and $\bK$;
\item strongly-claw-free for any key pair in $(\bK^{(\switch)},\bK)$.
\end{itemize}
\end{thm}
Note that we will interchangeably use $(\bK^{(\switch)},\bK)$ or $(\bK^{(i)})_{i\in \{\switch,[0,L]\}}$ to denote this tuple of $2+L$ output key pairs.
\begin{proof}
The efficiently preparable property comes from the efficiency of $\fAdv$ and the protocol; the key-checkable property comes from the correctness of $\fNTCF$. We only need to prove the strong claw-freeness.\par
	If this is not true, there will be an index $i\in (\switch)\cup [0,L]$ such that an efficient operation $V$ can output both keys in $\bK^{(i)}$ given the other keys. Then we can construct an adversary that breaks the claw-free property of $\fNTCF$ as follows: \begin{enumerate}\item Instead of interacting with the client, the server simulates all the other $\fNTCF$ evaluations on its own excluding the $i$-th evaluation. In the simulated state the server has access to $((\bK^{(\switch)},\bK)-\bK^{(i)})$ since the client-side key registers for this part are simulated. Denote the simulated state as $\ket{\varphi}$.
	\item Then by the assumption $V$ applied on $\ket{\varphi}\odot ((\bK^{(\switch)},\bK)-\bK^{(i)})$ \footnote{Here we slightly abuse the notation: we use the $\odot$ notation on the simulated state to mean that the simulated client side registers are copied to the transcript registers.} outputs both keys in $\bK^{(i)}$. This contradicts the claw-free property of $\fNTCF$.
 \end{enumerate}
\end{proof}
Now we are ready to formalize a \emph{set-up} that abstracts the property of the output state of the first step of Protocol \ref{prtl:prerspv}. In the remaining proofs we could only refer to this set-up instead of applying Theorem \ref{thm:gp} again and again.
\begin{setup}\label{setup:1}
	We use Setup \ref{setup:1} to denote the set of states that satisfy:
	\begin{itemize}
	\item The parties are as described in Section \ref{sec:4.1}.
		\item The registers are as described in Section \ref{sec:6.1r}.
		\item The state is efficiently preparable;
		\item The state is key checkable for each key  in $(\bK^{(\switch)},\bK)$;
		\item The state is strongly-claw-free for each key pair in $(\bK^{(\switch)},\bK)$.
	\end{itemize}
\end{setup} 
\subsection{Basis-honest Form}\label{sec:6.2}
We will define the basis-honest form as below. Recall that in the honest setting, the client holds a tuple of keys, and the server holds superpositions of keys. Correspondingly, in the malicious setting, if the server holds keys in superpositions (which are possibly entangled with some auxiliary registers since it's malicious), we call this form the basis-honest form.\par
Let's first introduce a convenient notation, the \emph{key vector}.
\begin{nota}[Key vector and subscript vector]\label{nota:kv}
Suppose the client holds a tuple of key pairs $\tilde K=(K^{(i)})_{i\in [L]}$, $K^{(i)}=(x_0^{(i)},x_1^{(i)}),i\in [L]$. For subscript vector $\vec{b}\in \{0,1\}^L$, denote
$$\vec{x}_{\vec{b}}:=x^{(1)}_{b^{(1)}}x^{(2)}_{b^{(2)}}\cdots x^{(L)}_{b^{(L)}}$$ 
where $b^{(1)}b^{(2)}\cdots b^{(L)}$ are coordinates of $\vec{b}$, as the key vector of $\tilde K$ under subscript vector $\vec{b}$.\par
If the superscripts of keys start from $0$, the subscript vector will be $\vec{b}\in \{0,1\}^{1+L}$, and the key vector is defined correspondingly as $x^{(0)}_{b^{(0)}}x^{(1)}_{b^{(1)}}x^{(2)}_{b^{(2)}}\cdots x^{(L)}_{b^{(L)}}$.
\end{nota}

\begin{defn}[Basis-honest form]\label{defn:bsh} Below we define the basis-honest form of a tuple of key pairs $\bK$ (or $\tilde\bK$), and the basis-honest form of a single key pair. For the first two bullets we use the register setup in Setup \ref{setup:1}.
\begin{itemize}
	\item Suppose the client holds a tuple of key pairs in registers $\bK$, and correspondingly the server holds register $\bS_{bsh}$. 
 We say a purified joint state $\ket{\varphi}$ is in a basis-honest form of $\bK$ if it has the form
\begin{equation}\label{eq:basishonest}\ket{\varphi}=\sum_{K\in \Domain(\bK)}\underbrace{\ket{K}}_{\text{client-side }\bK}\otimes \sum_{\vec{b}\in \{0,1\}^{1+L}}\underbrace{\ket{\vec{x}_{\vec{b}}}}_{\text{server-side }\bS_{bsh}}\underbrace{\ket{\varphi_{K,\vec{b}}}}_{\text{remaining registers}}\end{equation}
We define $\Pi_{\basishonest(\bK)}^{\bS_{bsh}}$, or simply $\Pi_{\basishonest(\bK)}$, as the projection onto the subspace that the content of $\bS_{bsh}$ is a valid key vector of $\bK$.\par
\item Correspondingly, for key tuple $\tilde\bK=(\bK^{(i)})_{i\in [L]}$, the basis-honest form is defined to be the state where the server holds a key vector of $\tilde\bK$. $\Pi_{\basishonest(\tilde\bK)}$ is define similarly.\par
\item For this bullet we consider a general notion that is not necessarily under Setup \ref{setup:1}. Suppose the client holds a key pair $\bK=(\bx_0,\bx_1)$ while honestly the server holds a key in $\bK$ in register $\bS_{bsh}$. Define the basis-honest form of $\bK$ to be the states in the form of 
\begin{equation}\label{eq:basishonest2}\ket{\varphi}=\sum_{K\in \Domain(\bK)}\underbrace{\ket{K}}_{\text{client-side }\bK}\otimes \sum_{b\in \{0,1\}}\underbrace{\ket{x_{b}}}_{\text{server-side }\bS_{bsh}}\underbrace{\ket{\varphi_{K,b}}}_{\text{remaining registers}}\end{equation}
And we use $\Pi_{\basishonest(\bK)}^{\bS_{bsh}}$ to denote the projection onto the space that $\bS_{bsh}$ holds a valid key in $\bK$, and we could omit the register superscript when there is no ambiguity.
\end{itemize}
\end{defn}
\begin{defn}[Approximate basis-honest form]
	If $|(\bbI-\Pi_{\basishonest(\bK)})\ket{\varphi}|\leq \epsilon$, we say $\ket{\varphi}$ is in an $\epsilon$-basis-honest form of $\bK$.
\end{defn}

\begin{defn}[Branch of a basis-honest form]\label{defn:branch}
Suppose $\ket{\varphi}$ is in a basis-honest form  of a key pair $\bK$ as shown in \eqref{eq:basishonest2}. For each $b\in \{0,1\}$, we call 
$$\sum_{K\in \Domain(\bK)}\underbrace{\ket{K}}_{\text{client}}\otimes \underbrace{\ket{x_{b}}}_{\bS_{bsh}}\ket{\varphi_{K,b}}$$
the $\bx_b$-branch of this state. 

Suppose $\ket{\varphi}$ is in a basis-honest form  of a tuple of key pair $\bK$ as shown in \eqref{eq:basishonest}. For each $\vec{b}\in \{0,1\}^{1+L}$, we call
$$\sum_{K\in \Domain(\bK)}\underbrace{\ket{K}}_{\text{client}}\otimes \underbrace{\ket{\vec{x}_{\vec{b}}}}_{\bS_{bsh}}\ket{\varphi_{K,\vec{b}}}$$
the $\vec{\bx}_{\vec{b}}$-branch of this state. 
\end{defn}
Note when we replace $\bK=(\bK^{(i)})_{i\in [0,L]}$ by $\tilde\bK=(\bK^{(i)})_{i\in [L]}$ or $(\bK^{(i)})_{i\in \{\switch,[0,L]\}}$, the definition could be adapted correspondingly.\par


\subsection{$\fStdBTest$ Implies Approximate Basis-honest Form}\label{sec:6.3}
Below we analyze the implications of the standard basis test. 
We have the following lemma, which allows us to analyze the structure of a state on the server side assuming it can pass the standard basis test:
\begin{thm}\label{thm:sbt}
Suppose the client holds a tuple of key pairs in register $(\bK^{(i)})_{i\in D}$, where $D$ is a set of indices in Setup \ref{setup:1}. Suppose an efficient adversary $\fAdv$, operated on a sub-normalized purified joint state $\ket{\varphi}$, could pass $\fStdBTest((\bK^{(i)})_{i\in D})$ with probability $p$. Then there exists an efficient server-side operation $O$ such that $O\ket{\varphi}$ is in a $\sqrt{1-p}$-basis-honest form of $(\bK^{(i)})_{i\in D}$.
\end{thm}
\begin{proof}
Run the adversary's operation but do not do the final measurement. Swap the response register and $(\bS_{bsh}^{(i)})_{i\in D}$ (See Definition \ref{defn:bsh}). The fact that it passes the standard basis test with failure probability $1-p$ implies the state on subspace $\bbI-\Pi_{\basishonest((\bK^{(i)})_{i\in D})}$ has norm at most $\sqrt{1-p}$.	
\end{proof}
\subsection{Useful Lemmas}\label{sec:6.6}
\subsubsection{Collapsing property}
One useful lemma is the collapsing property, which is used in different forms in many previous works \cite{MahadevVerification,GVRSP}. Intuitively it says if the initial state is claw-free for a key pair then the superposition of two branches is indistinguishable to the mixture of two branches.
\begin{lem}\label{lem:collapse}
	Suppose a subnormalized state $\ket{\varphi}$ is in Setup \ref{setup:1}, and is in the basis-honest form of $\bK^{(0)}$. Denote the state of the $\bx_0^{(0)}$-branch as $\ket{\varphi_0}$ and $\bx_1^{(0)}$-branch as $\ket{\varphi_1}$. (That is, $\ket{\varphi_0}:=\Pi^{\bS_{bsh}^{(0)}}_{\bx^{(0)}_0}\ket{\varphi}$, $\ket{\varphi_1}:=\Pi^{\bS_{bsh}^{(0)}}_{\bx^{(0)}_1}\ket{\varphi}$.) Then for any efficient server-side operation $O$ that output a single bit in a register $\bS$:
	$$|\Pi_0^{\bS}O\ket{\varphi}|^2\approx_{\fneg(\kappa)}|\Pi_0^{\bS}O\ket{\varphi_0}|^2+|\Pi_0^{\bS}O\ket{\varphi_1}|^2$$
\end{lem}
\subsubsection{Look-up table encryptions do not affect claw-freeness}
Then we give a lemma that formalizes the following intuition. If a basis-honest form state is claw-free for $\bK^{(i)}$, if we only consider one branch, for example, $\bx^{(i)}_0$-branch, since the server already knows the value of $x^{(i)}_0$, it won't be able to predict $x^{(i)}_1$ by claw-freeness. This is still true even if the server additionally gets polynomial number of ciphertexts encrypted under $x_{1}^{(i)}$. Formally, we have the following lemma, which will be useful later.
\begin{lem}\label{lem:5.2a}
Suppose a purified joint state $\ket{\varphi}$ is in Setup \ref{setup:1} and is in a basis-honest form of $\bK^{(i)}$ with only the $\bx^{(i)}_b$-branch, for some $i\in \{\switch\}\cup [0,L],b\in \{0,1\}$. $N=\fpoly(\kappa)$. $(p_{pre}^{(t)})_{t\in [N]}$ is a tuple of strings in $\{0,1\}^\kappa$, $(p_{post}^{(i)})_{t\in [N]}$ is a tuple of strings in $\{0,1\}^\kappa\cup \{\emptyset\}$. Define $\fAuxInf$ as the following algorithm that generates a tuple of salted hash values of $\bx_{1-b}^{(i)}$:
$$\forall t\in [N]:\text{sample $R^{(t)}\leftarrow_r \{0,1\}^\kappa$, output }(R^{(t)},H(R^{(t)}||p_{pre}^{(t)}||x_{1-b}^{(i)}),H(R^{(t)}||x_{1-b}^{(i)}||p_{post}^{(t)}))$$

Then $\ket{\varphi}\odot \llbracket\fAuxInf\rrbracket$ is claw-free for $\bK$.
\end{lem}
Finally we can show, if some key is unpredictable, the blinded oracle where these entries are blinded looks the same as the original oracle:
\begin{lem}\label{lem:5.2b}
Suppose a purified joint state $\ket{\varphi}$ is in Setup \ref{setup:1} and is in a basis-honest form of $\bK^{(i)}$ with only the $\bx^{(i)}_b$-branch, for some $b\in \{0,1\}$. suppose $\bH^\prime$ is the blinded oracle where $\cdots||\bx^{(i)}_{1-b}||\cdots$ are blinded, where ``$\cdots$'' represents arbitrary strings of a fixed length.   For any efficient adversary $\fAdv$, denote $\fAdv^\prime$ as the operation that each query in $\fAdv$ is replaced by a query to $\bH^\prime$. Then
$$
\fAdv\ket{\varphi}\approx_{\fneg(\kappa)}\fAdv^\prime\ket{\varphi}$$
\end{lem}
We put the proofs to these lemmas in Appendix \ref{app:mp6}.
\subsubsection{Rigidity of basis-honest form with strong-claw-free condition}
We show a useful lemma that will be used for multiple times in later proofs. It says, if a state is in a basis-honest form with good security properties, if the adversary transforms it to another basis-honest form, each of the outcome branch should only come from the corresponding input branch:
\begin{lem}\label{lem:multisbt}
Consider a sub-normalized purified joint state $\ket{\varphi}$ in Setup \ref{setup:1}.  
 Then for any efficient server-side operation $D$ there is
\begin{equation}\label{eq:multisbt}\Pi_{\basishonest(\bK)}D\Pi_{\basishonest(\bK)}\ket{\varphi}\approx_{\fneg(\kappa)}\sum_{\vec{b}\in \{0,1\}^{1+L}}\Pi_{\vec{\bx}_{\vec{b}}}^{\bS_{bsh}}D\Pi_{\vec{\bx}_{\vec{b}}}^{\bS_{bsh}}\ket{\varphi}\end{equation}
where $\Pi_{\vec{\bx}_{\vec{b}}}^{\bS_{bsh}}$ denotes the projection onto $\vec{\bx}_{\vec{b}}$-branch (Definition \ref{defn:branch}).
\end{lem}

\begin{proof}
 $\forall i\in [0,L]$ define: 
$$\ket{\varphi^i}=\sum_{b^{(0)}b^{(1)}\cdots b^{(i)}\in \{0,1\}^{1+i}}\Pi_{b^{(0)}b^{(1)}\cdots b^{(i)}}\Pi_{\basishonest(\bK)}D\Pi_{b^{(0)}b^{(1)}\cdots b^{(i)}}\Pi_{\basishonest(\bK)}\ket{\varphi}$$
where $\Pi_{b^{(0)}b^{(1)}\cdots b^{(i)}}$ denotes the projection onto the space that the values of ${\bS_{bsh}^{(0)}\bS_{bsh}^{(1)}\cdots \bS^{(i)}_{bsh}}$ are equal to the values of $\bx_{b^{(0)}}^{(0)}\bx_{b^{(1)}}^{(1)}\cdots \bx_{b^{(i)}}^{(i)}$.\par
Additionally define
$$\ket{\varphi^{-1}}=\Pi_{\basishonest(\bK)}D\Pi_{\basishonest(\bK)}\ket{\varphi}$$
To prove \eqref{eq:multisbt} we only need to prove: \begin{equation}\label{eq:37}\forall i\in [0,L],\ket{\varphi^{i}}\approx_{\fneg(\kappa)}\ket{\varphi^{i-1}}\end{equation} 
The observation is, although $\ket{\varphi^{i-1}},\ket{\varphi^i}$ are written as linear sums, there exists an efficient operation that exactly prepares this state from $\ket{\varphi}$. Recall $\ket{\varphi}$ has the form shown in \eqref{eq:basishonest}. Let's give an operation that transforms it to $\ket{\varphi^{i}}$.\par
 Initialize server-side auxiliary registers $\bA^{(0)}\bA^{(1)}\cdots \bA^{(i)}$, and define the following operations:
\begin{itemize}\item Define $\fCOPY_{0\sim i}$ as the operation that copies (bit-wise $\fCN$) the contents of $\bS_{bsh}^{(0)}\bS_{bsh}^{(1)}\cdots \bS_{bsh}^{(i)}$ to registers $\bA^{(0)}\bA^{(1)}\cdots \bA^{(i)}$. Define $\fCOPY_{i}$ as the operator that only copies $\bS_{bsh}^{(i)}$ to $\bA^{(i)}$.
\item Define $\Pi_{\bS_{bsh}^{(0)}\bS_{bsh}^{(1)}\cdots \bS_{bsh}^{(i)}=\bA^{(0)}\bA^{(1)}\cdots \bA^{(i)}}$ as the operator that projects onto the space that the values of $\bS_{bsh}^{(0)}\bS_{bsh}^{(1)}\cdots \bS_{bsh}^{(i)}$ are equal to $\bA^{(0)}\bA^{(1)}\cdots \bA^{(i)}$. Define $\Pi_{\bS_{bsh}^{(i)}=\bA^{(i)}}$ as the operator that projects onto the space that $\bS_{bsh}^{(i)}$ is equal to $\bA^{(i)}$.
\end{itemize}
 Then:
$$\ket{\varphi^i}=\Pi_{\basishonest(\bK)}\fCOPY_{0\sim i}\circ \Pi_{\bS_{bsh}^{(0)}\bS_{bsh}^{(1)}\cdots \bS_{bsh}^{(i)}=\bA^{(0)}\bA^{(1)}\cdots \bA^{(i)}}D\circ \fCOPY_{0\sim i}\circ \Pi_{\basishonest(\bK)}\ket{\varphi}$$
Similarly
$$\ket{\varphi^{i-1}}=\Pi_{\basishonest(\bK)}\fCOPY_{0\sim i-1}\circ \Pi_{\bS_{bsh}^{(0)}\bS_{bsh}^{(1)}\cdots \bS_{bsh}^{(i-1)}=\bA^{(0)}\bA^{(1)}\cdots \bA^{(i-1)}}D\circ \fCOPY_{0\sim i-1}\circ \Pi_{\basishonest(\bK)}\ket{\varphi}$$
Define $\cP=\fCOPY_{0\sim i-1}\circ \Pi_{\bS_{bsh}^{(0)}\bS_{bsh}^{(1)}\cdots \bS_{bsh}^{(i-1)}=\bA^{(0)}\bA^{(1)}\cdots \bA^{(i-1)}}D\circ \fCOPY_{0\sim i-1}$. Then \eqref{eq:37} is reduced to proving
\begin{equation}\label{eq:38}\Pi_{\basishonest(\bK)}\cP\Pi_{\basishonest(\bK)}\ket{\varphi}\approx_{\fneg(\kappa)}\Pi_{\basishonest(\bK)}\fCOPY_{i}\circ \Pi_{\bS_{bsh}^{(i)}=\bA^{(i)}}\circ\cP\circ \fCOPY_{i}\Pi_{\basishonest(\bK)}\ket{\varphi}\end{equation}
Denote $\Pi_{\bx_b^{(i)}}^{\bS_{bsh}^{(i)}}$ as the operator that projects onto the $\bx_b^{(i)}$-branch of the basis-honest form. Then the left hand side of \eqref{eq:38} is
$$\Pi_{\basishonest(\bK)}\cP\Pi_{\basishonest(\bK)}\ket{\varphi}=\sum_{b,b^\prime\in \{0,1\}^2}\Pi_{\basishonest(\bK)}\Pi_{\bx_{b^\prime}^{(i)}}^{\bS_{bsh}^{(i)}}\cP\Pi_{\bx_b^{(i)}}^{\bS_{bsh}^{(i)}}\Pi_{\basishonest(\bK)}\ket{\varphi}$$
and the right hand side of \eqref{eq:38} is 
$$\sum_{b\in \{0,1\}}\Pi_{\basishonest(\bK)}\Pi_{\bx_{b}^{(i)}}^{\bS_{bsh}^{(i)}}\cP\Pi_{\bx_b^{(i)}}^{\bS_{bsh}^{(i)}}\Pi_{\basishonest(\bK)}\ket{\varphi}$$
then \eqref{eq:38} is further reduced to
\begin{equation}\label{eq:103}\sum_{b,b^\prime\in \{0,1\}^2,b\neq b^\prime}\Pi_{\bx^{(i)}_{b^\prime}}^{\bS_{bsh}^{(i)}}\circ\cP\circ \Pi_{\bx^{(i)}_b}^{\bS_{bsh}^{(i)}}\Pi_{\basishonest(\bK)}\ket{\varphi}\approx_{\fneg(\kappa)}0\end{equation}
which holds by the claw-free property of $\ket{\varphi}$.
\end{proof}

\section{The Switch Gadget Technique}\label{sec:6}
In this section we analyze the switch gadget technique and how it affects the security properties of Protocol \ref{prtl:prerspv}. We have discussed the switch gadget technique briefly in the introduction; below we will review it and discuss its formal analysis. The basic lemma behind the switch gadget technique is given in Section \ref{sec:7.1q}, based on the notion of \emph{blinded oracle}. In Section \ref{sec:7.2q} we give several basic lemmas on RO-padded Hadamard test. The basic lemma behind the switch gadget technique is proved in Section \ref{sec:7.3q}.\par
 As discussed in the introduction, in the switch gadget technique we design protocols that go as follows:
 \begin{enumerate}\item Encode a mapping onto a switch gadget;\item Do a Hadamard test (Protocol \ref{prtl:hadamard}) on the switch gadget. This destroys the switch gadget.\end{enumerate} The switch gadget is used as a ``switch'' that controls whether the adversary could make use of the mapping. With the switch gadget, the honest server could implement the mapping; the problem is how to characterize the adversaries' view, assuming it wants to pass the Hadamard test with high probability.\par
\paragraph{An intuitive discussion on Hadamard test} We need to formalize the intuition that
 \begin{center}
 \emph{$K$ is a key pair, and the initial state is claw-free for $K$. If the server wants to pass the RO-padded Hadamard test for $K$, it loses the keys after the test.}	
 \end{center}

As \cite{jiayu20} pointed out, passing the RO-padded Hadamard test implies the server could not do powerful things about $K$ from the post-test state. However the lemmas given in \cite{jiayu20} are not suitable for our purpose here. Here we give a lemma that captures the properties of RO-padded Hadamard test in terms of \emph{blinded oracle} (Section \ref{sec:3.3}). Our analysis of the RO-padded Hadamard test could also be of independent interest elsewhere.\par
Suppose the Hadamard test is on key pair $K^{(\switch)}$ and initial state $\ket{\varphi}$. Suppose the post-test state is $\ket{\varphi^\prime}$. To formalize the intuition above, we consider a \emph{blinded oracle} $H^\prime$ where the entries in the form of $\{0,1\}^\kappa||K^{(\switch)}||\cdots$ are blinded. Then we will show, informally, if the test passes with high probability:
\begin{center}\emph{Starting from $\ket{\varphi^\prime}$, for any  efficient adversary, querying $H$ and querying $H^\prime$ should end up with similar states. }\end{center}
The formal statement goes roughly as follows. For any efficient operator $D$ on the post-test state, define $D^\prime$ as the blinded version of $D$ where all the random oracle queries in $D$ are replaced by queries to $H^\prime$, there is 
\begin{equation}\label{eq:77}D^\prime\ket{\varphi^\prime}\approx D\ket{\varphi^\prime}\end{equation}
How is  \eqref{eq:77} related to the intuition above? If the server could still predict a key in $K$ from the post-test state, it can query the oracle with this key and make two sides of \eqref{eq:77} quite different. Thus \eqref{eq:77} describes (and strengthens) the intuition above.\par
This statement gives what we want from the switch gadget technique: the ciphertexts within our lookup tables (which encrypt the phases) are encrypted under $\{0,1\}^\kappa||K^{(\switch)}||\cdots$; and after applying this lemma, we only need to analyze a blinded oracle where this form of entries are blinded, which means the phase information encrypted under these lookup tables will become secret again.\par
The formal theorem is given below in Theorem \ref{thm:pht}.

%
\subsection{Basic Lemma Behind the Switch Gadget Technique}\label{sec:7.1q}
We first formalize the set-up of the theorem as follows. We note in this subsection we do not follow the register symbols in Section \ref{sec:6.1r}; specifically, $\bK$ will be used to denote a key pair, to make our lemmas more general.
\begin{setup}\label{setup:2}
	Suppose the parties are formalized in Section \ref{sec:4.1}. Suppose the client holds a key pair in register $\bK=(\bx_0,\bx_1)$. Define $\SETUP^2(\bK)$ as the set of purified joint states that are \begin{itemize}\item efficiently preparable;\item  key checkable for $\bK$;
	\item claw-free for $\bK$.
 \end{itemize}
In addition, a transcript register $\bd$ is initialized, whose length is equal to the length of keys in $\bK$ plus $\kappa$ (security parameter).
\end{setup}
\begin{thm}\label{thm:pht}
 Suppose the client holds a key pair in register $\bK^{(\switch)}$.  Suppose a sub-normalized purified joint state $\ket{\varphi}\in \SETUP^2(\bK^{(\switch)})$ is in the $\epsilon$-basis-honest form of $\bK^{(\switch)}$. For any polynomial time adversary $\fAdv$, denote the post-execution state of the RO-padded Hadamard test as
$$\ket{\varphi^\prime}=\fHadamardTest^\fAdv(\bK^{(\switch)};1^\kappa)\ket{\varphi}$$
Then at least one of the following two is true:
\begin{itemize}
\item (Small passing probability)
$$|\Pi_{\fpass}\ket{\varphi^\prime}|^2\leq 1-p$$
\item (RO hiding) Suppose $H^\prime$ is the blinded oracle where the entries $\{0,1\}^\kappa||K^{(\switch)}||\cdots$ are blinded. For any efficient operation $D$, suppose $D^{\prime}$ is the blinded version of $D$ where all the oracle queries are replaced by queries to $H^\prime$. Then
\begin{equation}\label{eq:pht}D^{\prime} \ket{\varphi^\prime}\approx_{8\sqrt{p+2\epsilon}+\fneg(\kappa)}D\ket{\varphi^\prime}\end{equation}	
\end{itemize}
	
\end{thm}
To prove this theorem, we need to analyze the RO-padded Hadamard test. We first prove some basic lemmas on this test. Note these lemmas will also be independently useful in later proofs.
\subsection{Analysis of RO-padded Hadamard test}\label{sec:7.2q}
We first introduce the following notations which explicitly describe the passing/failing and winning/losing conditions of these Hadamard test. Below we will use these notations to describe the output of Hadamard test.
\begin{nota}\label{nota:setup2aux}
	Suppose the client holds a key pair in register  $\bK=(\bx_0,\bx_1)$. Suppose a sub-normalized purified joint state $\ket{\varphi}\in \SETUP^2(\bK)$. Recall that in $\fHadamardTest$ the client first sample a random padding in $\{0,1\}^\kappa$, denote the register storing it as $\bpad$. Define the following operators on system $\bd$ (introduced in $\SETUP^2$):
	\begin{itemize}
		\item $\Pi_{\eqref{eq:htt}=0}^{\bd}$ denotes the projection onto the space that the values of client side registers $\bx_0,\bx_1$, the transcript registers $\bd$, $\bpad$, and the random oracle registers satisfy: the result of calculation of \eqref{eq:htt} is $0$:
		$$d\cdot (x_0||\underbrace{H(pad||x_0)}_{\text{$\kappa$ bits}})+d\cdot(x_1||H(pad||x_1))=0$$
		\item $\Pi_{\eqref{eq:htt}=1}^{\bd}$ is defined similarly.
		\item $\Pi_{= 0}^{\text{last $\kappa$ bits of }\bd}$ is the projection onto the space that the last $\kappa$ bits of $\bd$ is all zero. $\Pi_{\neq 0}^{\text{last $\kappa$ bits of }\bd}$ is defined as its complement. Note this is one of the client's checking in $\fHadamardTest$.
	\end{itemize}
\end{nota}
The following lemma studies the output property of $\fHadamardTest$ on a single branch:
\begin{lem}\label{lem:prepht}
	Suppose the client holds a key pair in register  $\bK=(\bx_0,\bx_1)$. Suppose a sub-normalized purified joint state $\ket{\varphi}\in \SETUP^2(\bK)$ and is in the basis-honest form of $\bK$ with only $\bx_0$ branch. Then for any efficient adversary $\fAdv$ there is
	$$|\Pi_{\eqref{eq:htt}=0}^{\bd}\Pi_{\neq 0}^{\text{last $\kappa$ bits of }\bd}\fHadamardTest^\fAdv(\bK;1^\kappa)\ket{\varphi}|\approx_{\fneg(\kappa)}\frac{1}{\sqrt{2}}|\Pi_{\neq 0}^{\text{last $\kappa$ bits of }\bd}\fHadamardTest^\fAdv(\bK;1^\kappa)\ket{\varphi}|$$
	$$|\Pi_{\eqref{eq:htt}=1}^{\bd}\Pi_{\neq 0}^{\text{last $\kappa$ bits of }\bd}\fHadamardTest^\fAdv(\bK;1^\kappa)\ket{\varphi}|\approx_{\fneg(\kappa)} \frac{1}{\sqrt{2}}|\Pi_{\neq 0}^{\text{last $\kappa$ bits of }\bd}\fHadamardTest^\fAdv(\bK;1^\kappa)\ket{\varphi}|$$
\end{lem}
Then we have the following corollary, which studies the relations of two branches if a state could pass the $\fHadamardTest$:
\begin{cor}\label{cor:prepht}
	Suppose the client holds a key pair in register  $\bK=(\bx_0,\bx_1)$. Suppose a sub-normalized purified joint state $\ket{\varphi}\in \SETUP^2(\bK)$ and is in an $\epsilon$-basis-honest form of $\bK$. Denote the $\bx_0$-branch as $\ket{\varphi_0}$ and $\bx_1$-branch as $\ket{\varphi_1}$. If an efficient adversary $\fAdv$ could make the client output $\fpass$ in the Hadamard test (Protocol \ref{prtl:phadamard}) from initial state $\ket{\varphi}$ with probability $\geq 1-p$, then
	\begin{align}\label{eq:98y}&\Pi_{\eqref{eq:htt}=0}^{\bd}\Pi_{\neq 0}^{\text{last $\kappa$ bits of }\bd}\fHadamardTest^\fAdv(\bK;1^\kappa)\ket{\varphi_0}\nonumber\\\approx_{\sqrt{2(p+2\epsilon)}+\fneg(\kappa)}&\Pi_{\eqref{eq:htt}=0}^{\bd}\Pi_{\neq 0}^{\text{last $\kappa$ bits of }\bd}\fHadamardTest^\fAdv(\bK;1^\kappa)\ket{\varphi_1}\end{align}
	\begin{equation*}\Pi_{=0}^{\text{last $\kappa$ bits of }\bd}\fHadamardTest^\fAdv(\bK;1^\kappa)\ket{\varphi_0}\approx_{\sqrt{p+2\epsilon}+\fneg(\kappa)}0,\end{equation*}\begin{equation}\label{eq:78ol} \Pi_{= 0}^{\text{last $\kappa$ bits of }\bd}\fHadamardTest^\fAdv(\bK;1^\kappa)\ket{\varphi_1}\approx_{\sqrt{p+2\epsilon}+\fneg(\kappa)} 0\end{equation}
		\begin{align}&\Pi_{\eqref{eq:htt}=1}^{\bd}\Pi_{\neq 0}^{\text{last $\kappa$ bits of }\bd}\fHadamardTest^\fAdv(\bK;1^\kappa)\ket{\varphi_0}\nonumber\\\approx_{\sqrt{p+\epsilon}+\fneg(\kappa)}&-\Pi_{\eqref{eq:htt}=1}^{\bd}\Pi_{\neq 0}^{\text{last $\kappa$ bits of }\bd}\fHadamardTest^\fAdv(\bK;1^\kappa)\ket{\varphi_1}\end{align}
\end{cor}
These lemmas are proved in Appendix \ref{app:b}.
\subsection{Proof of Theorem \ref{thm:pht}}\label{sec:7.3q}
Now we give the proof of Theorem \ref{thm:pht} with this lemma.
\begin{proof}[Proof of Theorem \ref{thm:pht}]
Suppose $|\Pi_{\fpass}\ket{\varphi^\prime}|^2\geq 1-p$. Denote the $\bx_0^{(\switch)}$-branch of $\ket{\varphi}$ as $\ket{\varphi_0}$ and $\bx_1^{(\switch)}$-branch of $\ket{\varphi}$ as $\ket{\varphi_1}$. Then $\ket{\varphi}\approx_{\sqrt{\epsilon}}\Pi_{\basishonest(\bK^{(\switch)})}\ket{\varphi}=\ket{\varphi_0}+\ket{\varphi_1}$. Define the corresponding output term:
$$\ket{\varphi^\prime_0}=\fHadamardTest^\fAdv(\bK^{(\switch)};1^\kappa)\ket{\varphi_0}$$
$$\ket{\varphi^\prime_1}=\fHadamardTest^\fAdv(\bK^{(\switch)};1^\kappa)\ket{\varphi_1}$$
Then \begin{equation}\label{eq:112pq}\ket{\varphi^\prime}\approx_{\sqrt{\epsilon}}\ket{\varphi^\prime_0}+\ket{\varphi_1^\prime}\end{equation} We will blind the oracle in two steps.
\begin{enumerate}
\item Define $\bH^{mid}$ as the blinded oracle where the entries $\{0,1\}^\kappa||x_0^{(\switch)}||\cdots$ are blinded. Define $D^{mid}$ as the adversary	where all the random oracle queries in $D$ are replaced by queries to $H^{mid}$. Our goal is to prove
\begin{equation}\label{eq:hgm1}D^{mid} \ket{\varphi^\prime}\approx_{4\sqrt{p+2\epsilon}+\fneg(\kappa)}D\ket{\varphi^\prime}\end{equation}
The reason is as follows.\par
We will first analyze the effect of replacing $D$ by $D^{mid}$ on $\ket{\varphi^\prime_1}$. Intuitively $\ket{\varphi_1}$ is the $\bx_1^{(\switch)}$-branch, the server could not predict $\bx_0^{(\switch)}$ by claw-freeness thus the blinding operation on $\bx_0^{(\switch)}$-related entries will not be detected. Formally speaking, the following is implied by Lemma \ref{lem:5.2b}:
\begin{equation}\label{eq:switch1}D^{mid} \ket{\varphi^\prime_1}\approx_{\fneg(\kappa)}D\ket{\varphi^\prime_1}\end{equation}
Then since the $\bd$ register, after generated in the Hadamard test, is read-only, both sides of \eqref{eq:switch1} are still close to each other if a projection on a subset of possible values of $\bd$ is applied. Concretely:
\begin{equation}\label{eq:switch2}D^{mid}\Pi_{\eqref{eq:htt}=0}^{\bd}\Pi_{\neq 0}^{\text{last $\kappa$ bits of }\bd} \ket{\varphi^\prime_1}\approx_{\fneg(\kappa)}D\Pi_{\eqref{eq:htt}=0}^{\bd}\Pi_{\neq 0}^{\text{last $\kappa$ bits of }\bd}\ket{\varphi^\prime_1},\end{equation}\begin{equation}\label{eq:82ol}D^{mid}\Pi_{\eqref{eq:htt}=1}^{\bd}\Pi_{\neq 0}^{\text{last $\kappa$ bits of }\bd} \ket{\varphi^\prime_1}\approx_{\fneg(\kappa)}D\Pi_{\eqref{eq:htt}=1}^{\bd}\Pi_{\neq 0}^{\text{last $\kappa$ bits of }\bd}\ket{\varphi^\prime_1}\end{equation}
Now we use properties of the Hadamard test to argue about the $\ket{\varphi^\prime_0}$ branch. By Corollary \ref{cor:prepht} we have
\begin{equation}\label{eq:switch3}\Pi_{\eqref{eq:htt}=0}^{\bd}\Pi_{\neq 0}^{\text{last $\kappa$ bits of }\bd}\ket{\varphi_0^\prime}\approx_{\sqrt{2(p+2\epsilon)}+\fneg(\kappa)}\Pi_{\eqref{eq:htt}=0}^{\bd}\Pi_{\neq 0}^{\text{last $\kappa$ bits of }\bd}\ket{\varphi_1^\prime}\end{equation}
\begin{equation}\label{eq:switch4}\Pi_{\eqref{eq:htt}=1}^{\bd}\Pi_{\neq 0}^{\text{last $\kappa$ bits of }\bd}\ket{\varphi_0^\prime}\approx_{\sqrt{p+\epsilon}+\fneg(\kappa)}-\Pi_{\eqref{eq:htt}=1}^{\bd}\Pi_{\neq 0}^{\text{last $\kappa$ bits of }\bd}\ket{\varphi_1^\prime}\end{equation}
which together with \eqref{eq:switch2} \eqref{eq:82ol} implies
\begin{equation}D^{mid}\Pi_{\eqref{eq:htt}=0}^{\bd}\Pi_{\neq 0}^{\text{last $\kappa$ bits of }\bd} \ket{\varphi^\prime_0}\approx_{\sqrt{2(p+2\epsilon)}+\fneg(\kappa)}D\Pi_{\eqref{eq:htt}=0}^{\bd}\Pi_{\neq 0}^{\text{last $\kappa$ bits of }\bd}\ket{\varphi^\prime_0},\end{equation}\begin{equation}D^{mid}\Pi_{\eqref{eq:htt}=1}^{\bd}\Pi_{\neq 0}^{\text{last $\kappa$ bits of }\bd} \ket{\varphi^\prime_0}\approx_{\sqrt{p+\epsilon}+\fneg(\kappa)}D\Pi_{\eqref{eq:htt}=1}^{\bd}\Pi_{\neq 0}^{\text{last $\kappa$ bits of }\bd}\ket{\varphi^\prime_0}\end{equation}
which together with \eqref{eq:78ol} implies
\begin{equation}\label{eq:88ol}D^{mid} \ket{\varphi^\prime_0}\approx_{2.5\sqrt{p+2\epsilon}+\sqrt{p+\epsilon}+\fneg(\kappa)}D\ket{\varphi^\prime_0}\end{equation}
which together with \eqref{eq:switch1}\eqref{eq:112pq} implies \eqref{eq:hgm1}.
\item Now we hide the $\bx_1^{(\switch)}$ part using similar techniques. Consider $\bH^{\prime}$ as the blinded oracle of $\bH^{mid}$ where the entries $\{0,1\}^\kappa||x_1^{(\switch)}||\cdots$ are blinded. Then $\bH^\prime$ could also be seen as a blinded version of $\bH$ where $\{0,1\}^\kappa||K^{(\switch)}||\cdots$ are blinded. Correspondingly, $D^{\prime}$ is defined to be: in $D^{mid}$ all the queries to $\bH^{mid}$ are replaced by queries to $\bH^\prime$. Then $D^\prime$ is also the blinded version of $D$ where all the queries to $\bH$ are replaced by queries to $\bH^\prime$. Our goal is to prove
\begin{equation}\label{eq:hgm2}D^{\prime}\ket{\varphi^\prime}\approx_{4\sqrt{p+2\epsilon}+\fneg(\kappa)} D^{mid}\ket{\varphi^\prime}\end{equation}
Here we will start with the $\bx_0^{(\switch)}$ branch of $\ket{\varphi^\prime}$. By Lemma \ref{lem:5.2b} there is
\begin{equation}\label{eq:90ol}D^{\prime} \ket{\varphi^\prime_0}\approx_{\fneg(\kappa)}D^{mid}\ket{\varphi^\prime_0}\end{equation}
Then arguments \eqref{eq:switch2}-\eqref{eq:88ol} hold after all the appearances of $D$ in them are replaced by $D^\prime$, $\ket{\varphi_0^\prime}$ are replaced by $\ket{\varphi_1^\prime}$ and $\ket{\varphi_1^\prime}$ are replaced by $\ket{\varphi_0^\prime}$. Thus we get
\begin{equation}D^{mid} \ket{\varphi^\prime_1}\approx_{2.5\sqrt{p+2\epsilon}+\sqrt{p+\epsilon}+\fneg(\kappa)}D^\prime\ket{\varphi^\prime_1}\end{equation}
Combining it with \eqref{eq:90ol} completes the proof of \eqref{eq:hgm2}.
\end{enumerate}
Now combining \eqref{eq:hgm1}\eqref{eq:hgm2} completes the proof.
\end{proof}
\section{Analysis of $\fAddPhaseWithswitch$, the Switch Gadget Technique Part}\label{sec:7.3}
Now we return to the $\fpreRSPV$ protocol and see how the switch gadget technique is used to argue about the security.\par
\subsection{Switch Gadget Technique Implies the Output Closeness of Original Adversary and Blinded Adversary in Later Steps}
In the following theorem, we use the switch gadget technique theorem to study the behavior of the output state of $\fAddPhaseWithswitch$ in later steps. We will see, in later steps, we can replace the adversary by its blinded version up to an approximation.

\begin{thm}\label{thm:7.3}
 Suppose a sub-normalized purified joint state $\ket{\varphi}$ in Setup \ref{setup:1} and is in an $\epsilon$-basis-honest form for $\bK$. For any polynomial-time adversary $\fAdv$, suppose the state after the $\fAddPhaseWithswitch$ step is $\ket{\varphi^{2.a}}$:
$$\ket{\varphi^{2.a}}=\fAddPhaseWithswitch^{\fAdv_{2.a}}((\bK^{(i)})_{i\in (\switch)\cup [0,L]},\bTheta;1^\kappa)\ket{\varphi}$$
where $\fAdv_{2.a}$ is the part of adversary in $\fAdv$ in the $\fAddPhaseWithswitch$ step. Suppose 
\begin{equation}\label{eq:switchprtlpassing}|\Pi_{\fpass}\ket{\varphi^{2.a}}|^2\geq 1-p.\end{equation}
Use $\bH^\prime$ to denote the blinded version of $\bH$ where the entries  $\{0,1\}^\kappa||K^{(\switch)}||\cdots$ are blinded. Denote $\fAdv_{\geq 2.b}^{blind}$ as the adversary where all the oracle queries in $\fAdv_{\geq 2.b}$ are replaced by queries to $\bH^\prime$. Then we have
\begin{equation}\label{eq:switchprtl}\fpreRSPV_{\geq 2.b}^{\fAdv_{\geq 2.b}}\ket{\varphi^{2.a}}\approx_{8\sqrt{p+2\epsilon}+\fneg(\kappa)}\fpreRSPV_{\geq 2.b}^{\fAdv_{\geq 2.b}^{blind}}\ket{\varphi^{2.a}}\end{equation}
\end{thm}

The difference of this theorem from Theorem \ref{thm:pht} is the client might send additional messages to the server. Notice that the client-side messages in $\fpreRSPV_{\geq 2.b}$ do not take $\bK^{(\switch)}$ as inputs,  we are able to reduce this theorem to Theorem \ref{thm:pht} by constructing an adversary that simulates these messages.
\begin{proof}[Proof of Theorem \ref{thm:7.3}]
Note that in each step of $\fpreRSPV_{\geq 2.b}$, the client's messages are output of some algorithms which take $(\bK^{(i)})_{i\in [0,L]}$ and $\bTheta$ as inputs, and $\bK^{(\switch)}$ is not used any more. Consider an adversary $(\fAdv_{2.a},\fAdv_{\geq 2.b})$ that violates \eqref{eq:switchprtl}. We can construct an adversary for breaking Theorem \ref{thm:pht} as follows:
\begin{enumerate}
\item[0.] The initial state is \begin{equation}\label{eq:94ol}\ket{\varphi}\odot (\bK^{(i)})_{i\in [0,L]}\odot \bTheta\odot \llbracket\fAddPhaseWithswitch_{1}((\bK^{(i)})_{i\in (\switch)\cup [0,L]},\bTheta;1^\kappa)\rrbracket\end{equation}
where $\fAddPhaseWithswitch_{1}$ is the first step in $\fAddPhaseWithswitch$ (where the client sends lookup tables that encode the phases).\par
By Lemma \ref{lem:5.2a} we know \eqref{eq:94ol} is claw-free for $\bK^{(\switch)}$.
\item The adversary executes $\fHadamardTest(\bK^{(\switch)};1^\kappa)$ with the client. It runs the code of $\fAdv_{2.a}$ in this step. 
\item The adversary simulates all the client side messages in $\fpreRSPV_{\geq 2.b}$ using $(\bK^{(i)})_{i\in [0,L]}$ and $\bTheta$.
 Run $\fAdv_{\geq 2.b}$ with the simulated messages.
\end{enumerate}
\eqref{eq:94ol} satisfies the conditions required in Theorem \ref{thm:pht}. Since \eqref{eq:switchprtlpassing} holds we know the first case in Theorem \ref{thm:pht} is not true. Then the violation of \eqref{eq:switchprtl} implies a violation of the second case in Theorem \ref{thm:pht}, where $D$, $D^\prime$ in Theorem \ref{thm:pht} translate to:
\begin{itemize}
\item $D$ corresponds to $\fAdv_{\geq 2.b}$ run on simulated messages.
\item $D^\prime$ corresponds to $\fAdv_{\geq 2.b}^{blind}$ run on simulated messages.	
\end{itemize}
This completes the proof.
\end{proof}
\paragraph{Implication for later proofs} The implication of Theorem \ref{thm:7.3} is, in the protocol analysis later, we can first use this theorem to replace the adversary by an adversary that only queries the blinded oracle.  Especially, the phase information used in $\fAddPhaseWithswitch$ is encrypted under the switch gadget keys, blinding this part of the oracle implies the secrecy of phase information in later steps.

\subsection{Set-up for the Output State of $\fAddPhaseWithswitch$}
In this section we formalize a set-up that captures the basic properties of the output states of $\fAddPhaseWithswitch$. In the later proofs when we need to further analyze the output state of $\fAddPhaseWithswitch$ we could simply refer to this set-up.
\begin{setup}\label{setup:3}
	Setup \ref{setup:3} is defined as the set of states that could be expressed as
	\begin{equation}\fAddPhaseWithswitch^\fAdv((\bK^{(\switch)},\bK),\bTheta;1^\kappa)\ket{\varphi^1},\end{equation}
	where $\fAdv$ is efficient, $\ket{\varphi^1}$ is in Setup \ref{setup:1}.
	\end{setup}
	Accompanied with Setup \ref{setup:3}, we introduce the following symbols for registers and describe the property of states in Setup \ref{setup:3}:
	\begin{nota}	We introduce the following notations that describe the transcript registers initialized in the first step of $\fAddPhaseWithswitch$:\par
	The client-side messages in the first step of $\fAddPhaseWithswitch$ contain ciphertexts that encrypt $\theta^{(i)}_b$ under $x^{(\switch)}_{b^{(\switch)}}||x_b^{(i)}$ for each $i\in [0,L],b\in \{0,1\},b^{(\switch)}\in \{0,1\}$. In the protocol it is denoted as
	\begin{equation}\label{eq:123}x^{(\switch)}_{b^{(\switch)}}||x_b^{(i)}\rightarrow \theta_b^{(i)}\end{equation}
	Recall in Definition \ref{defn:lt}, \ref{defn:enc}, the ciphertext part of \eqref{eq:123} is defined as
	\begin{equation}\label{eq:124}R^{(i)}_{b^{(\switch)},b},H(R^{(i)}_{b^{(\switch)},b}||x^{(\switch)}_{b^{(\switch)}}||x_b^{(i)})+ \theta_b^{(i)}\end{equation}
	where $R^{(i)}_{b^{(\switch)},b}$ is uniformly sampled from $\{0,1\}^\kappa$.\par
	 Denote the transcript registers that store \eqref{eq:124} during the protocol by $\bR^{(i)}_{b^{(\switch)},b},\bct^{(i)}_{b^{(\switch)},b}$ correspondingly. Then  related registers for \eqref{eq:124} are:
	\begin{equation}\label{eq:125ta}\underbrace{\bx^{(\switch)}_{b^{(\switch)}},\bx^{(i)}_{b},\btheta_b^{(i)}}_{\text{client}},\underbrace{\bH(R^{(i)}_{b^{(\switch)},b}||x^{(\switch)}_{b^{(\switch)}}||x^{(i)}_{b})}_{\text{random oracle}},\underbrace{\bR^{(i)}_{b^{(\switch)},b},\bct_{b^{(\switch)},b}^{(i)}}_{\text{transcript}}\end{equation}
	and there is (recall Notation \ref{nota:register}):
	\begin{equation}\label{eq:117x}\bct^{(i)}_{b^{(\switch)},b}=\bH(R^{(i)}_{b^{(\switch)},b}||x^{(\switch)}_{b^{(\switch)}}||x^{(i)}_{b})+\btheta_b^{(i)}\end{equation}
	 
\end{nota}
	In summary, for any state in Setup \ref{setup:3}, for any $i\in [0,L]$, $b\in \{0,1\}$, $b^{(\switch)}\in \{0,1\}$, there are registers shown in \eqref{eq:125ta} and their values satisfy \eqref{eq:117x}.
	
We also introduce the following blinded oracle that accompanies Setup \ref{setup:3}:
\begin{nota}
	Under Setup \ref{setup:3}, define $\bH^\prime$ as the blinded oracle where entries in the form of $\{0,1\}^\kappa||K^{(\switch)}||\cdots$ are blinded. Define $\cF_{blind}$ as the set of server-side operators that could only query this blinded oracle.
\end{nota}
\subsection{Preparation for the Later Proofs: De-correlate the $\bH$ Registers by $\fReviseRO$ Operator}

We introduce an operation as a preparation for later proofs. Recall that in Example \ref{exmp:r1} we discuss the randomization operators that operate on the client side phase registers and keep the state approximately invariant; but the state that we use in the example has an important difference from the outcome of the real protocol (characterized by Setup \ref{setup:3}). In a state in Setup \ref{setup:3}, the messages of $\fAddPhaseWithswitch$ are stored in the transcript registers, and the server also has access to it. Taking this into consideration, the purified joint state of both parties is generally described as
\begin{equation}\label{eq:125x} \forall i\in [0,L],b\in \{0,1\},b^{(\switch)}\in \{0,1\},\quad \underbrace{\ket{H(R^{(i)}_{b^{(\switch)},b}||x^{(\switch)}_{b^{(\switch)}}||x_b^{(i)})}}_{\bH(R^{(i)}_{b^{(\switch)},b}||x^{(\switch)}_{b^{(\switch)}}||x_b^{(i)})}\underbrace{\ket{\theta_b^{(i)}}}_{\btheta_b^{(i)}}\underbrace{\ket{H(R^{(i)}_{b^{(\switch)},b}||x^{(\switch)}_{b^{(\switch)}}||x_b^{(i)})+\theta_b^{(i)}}}_{\bct^{(0)}_{b^{(\switch)},b}}\end{equation}
together with other registers. Taking these ciphertexts and the related registers into consideration, the swapping operator that swaps $\btheta$ with freshly new randomness (as discussed in the example) does not keep the state invariant.\footnote{Note after the swapping the random oracle content, $\btheta$ register, and the $\bct$ register does not necessarily satisfy Equation \eqref{eq:117x} given in Setup \ref{setup:3}.} We introduce the following operation to make this swap-based randomization work again. This operator will erase the content of the $\bH$ registers, and thus de-correlate this register with the other parts; this operator will significantly change the state, but we can show, the original state and the new state are indistinguishable under the family of distinguishers that we care about.
\begin{defn}[$\fReviseRO$] Under Setup \ref{setup:3}, for each $i\in [0,L],b^{(\switch)}\in \{0,1\},b\in \{0,1\}$, the $\fReviseRO^{(i)}_{b^{(\switch)},b}$ operator is defined as follows. 
	\begin{align}
&\underbrace{\ket{H(R^{(i)}_{b^{(\switch)},b}||x^{(\switch)}_{b^{(\switch)}}||x_b^{(i)})}}_{\bH(R^{(i)}_{b^{(\switch)},b}||x^{(\switch)}_{b^{(\switch)}}||x_b^{(i)})}\underbrace{\ket{\theta_b^{(i)}}}_{\btheta_b^{(i)}}\underbrace{\ket{H(R^{(i)}_{b^{(\switch)},b}||x^{(\switch)}_{b^{(\switch)}}||x_b^{(i)})+\theta_b^{(i)}}}_{\bct^{(i)}_{b^{(\switch)},b}}\\
\xrightarrow{\fReviseRO^{(i)}_{b^{(\switch)},b}}&\underbrace{\ket{0}}_{\bH(R^{(i)}_{b^{(\switch)},b}||x^{(\switch)}_{b^{(\switch)}}||x_b^{(i)})}\underbrace{\ket{\theta_b^{(i)}}}_{\btheta_b^{(i)}}\underbrace{\ket{H(R^{(i)}_{b^{(\switch)},b}||x^{(\switch)}_{b^{(\switch)}}||x_b^{(i)})+\theta_b^{(i)}}}_{\bct^{(i)}_{b^{(\switch)},b}}\label{eq:136se}
\end{align}
Then define $\fReviseRO$ to be the following operation:
\begin{enumerate}
\item For each $i\in [0,L], b^{(\switch)}\in \{0,1\}, b\in \{0,1\}$:
\begin{enumerate}
\item Apply $\fReviseRO^{(i)}_{b^{(\switch)},b}$.
\item Apply Hadamard gates on each bit of RO register  $\bH(R^{(i)}_{b^{(\switch)},b}||x^{(\switch)}_{b^{(\switch)}}||x_b^{(i)})$.
\end{enumerate}
\end{enumerate}
The second step above is to re-create uniformly random coins for the random oracle, and thus preserve the validity of the state in the random oracle model (Definition \ref{defn:validityro}). 
\end{defn}
\begin{fact}
	Suppose $\ket{\varphi}$ is in Setup \ref{setup:3}. Then $\fReviseRO\ket{\varphi}$ is a valid state in QROM.
\end{fact}
\begin{proof}
We only need to prove for each 	$i\in [0,L], b^{(\switch)}\in \{0,1\}, b\in \{0,1\}$, the corresponding operation in the construction preserves the validity of the state. Suppose for the original state $\ket{\varphi}$ when  $\bH(R^{(i)}_{b^{(\switch)},b}||x^{(\switch)}_{b^{(\switch)}}||x_b^{(i)})=h$, the state in registers excluding $\bH(R^{(i)}_{b^{(\switch)},b}||x^{(\switch)}_{b^{(\switch)}}||x_b^{(i)})$ is $\ket{\varphi_h}$. Then $\ket{\varphi_h}$ for different $h$ is orthogonal to each other by \eqref{eq:117x}. By the construction of $\fReviseRO$, in $\fReviseRO\ket{\varphi}$, when $\bH(R^{(i)}_{b^{(\switch)},b}||x^{(\switch)}_{b^{(\switch)}}||x_b^{(i)})=h$, the state in registers excluding $\bH(R^{(i)}_{b^{(\switch)},b}||x^{(\switch)}_{b^{(\switch)}}||x_b^{(i)})$ is a superposition of each $\ket{\varphi_h}$ with different phases. Then since $\ket{\varphi}$ is valid the new state is also valid.
\end{proof}
Below we show the application of $\fReviseRO$ keeps the state indistinguishable under a class of operations that is sufficiently big to cover the distinguishers that we care about in the main proof.
\begin{defn}
$\cF_{cq\land blind}$ is defined to be the set of operators that take the client-side $\bTheta$ registers and the transcript registers read-only and only query the blinded oracle.
\end{defn}
The requirement ``take $\bTheta$ read-only'' corresponds to the client-side read-only requirement on the distinguisher in Definition \ref{defn:rspvv}. What's more, $\cF_{cq\land blind}\subseteq \cF_{blind}$.
\begin{fact}\label{fact:8}
	Suppose $\ket{\varphi}$ is in Setup \ref{setup:3}. Then $\fReviseRO\ket{\varphi}\approx^{ind:\cF_{cq\land blind}}\ket{\varphi}$.
\end{fact}
\begin{proof}
	By Fact \ref{fact:rra} both $\ket{\varphi}$ and $\fReviseRO\ket{\varphi}$ are indistinguishable under $\cF_{cq\land blind}$ to a state where $\bTheta$, $\bct$ registers are all cloned to the environment. Then $\fReviseRO$ becomes an operation that only operates on the registers that $\cF_{cq\land blind}$ never uses thus keeps the state indistinguishable.
\end{proof}

\subsection{Outcome of $\fReviseRO$ is Almost Efficiently-preparable}
In the subsections above we introduce the $\fReviseRO$ operators, and show the output state from the protocol remains indistinguishable under $\cF_{cq\land blind}$. But there is still one more thing to worry about: the output of $\fReviseRO$ is not obviously efficiently-preparable, since it revises the random oracle registers in a way that is not allowed in the definition of efficient-preparation. Below we will show the output state is still efficiently preparable up to an exponentially small error, and show many lemmas that we need still hold for this type of states.
	\subsubsection{Approximate efficient preparation of the output state}
	Recall that a state $\ket{\varphi}$ in Setup \ref{setup:3} could be written as
	\begin{align}
		\ket{\varphi}&=\fAddPhaseWithswitch^{\fAdv}\ket{\varphi^1}\\
		&=\fPadHadamard^{\fAdv}(\ket{\varphi^1}\odot \llbracket \fAddPhaseWithswitch^1\rrbracket)
	\end{align}
	where $\ket{\varphi^1}$ is in Setup \ref{setup:1}, $\llbracket \fAddPhaseWithswitch^1\rrbracket$ is the client-side messages in the first step of $\fAddPhaseWithswitch$ (that is, the lookup tables).\par
	Suppose the set of random padding in $\llbracket \fAddPhaseWithswitch^1\rrbracket$ is $R$. By Lemma \ref{lem:padro} we have, there exists $\ket{\tilde\varphi^1}$ that is independent to $\bH(R||\cdots)$, and $\ket{\tilde\varphi^1}\approx_{\fneg(\kappa)}\ket{\varphi^1}$.\par
	Now we claim the state
	\begin{equation}\label{eq:prq}\fReviseRO\circ\fPadHadamard^\fAdv(\ket{\tilde\varphi^1}\odot \llbracket \fAddPhaseWithswitch^1\rrbracket)\end{equation}
	is efficiently-preparable. We give the following construction. 
	\begin{enumerate}
		\item Starting from $\ket{\tilde\varphi^1}$, fill uniform superpositions in all the registers in $\bct$.
		\item Run $\fPadHadamard^\fAdv$, and for each  query to $H$, replace it by the following operation: if the input has the form of $\bH(R^{(i)}_{b^{(\switch)},b}||x^{(\switch)}_{b^{(\switch)}}||x_b^{(i)})$, use $\bct_{b^{(\switch)},b}^{(i)}-\btheta^{(i)}_b$ as the query outcome; otherwise use the corresponding output of $H$.
	\end{enumerate}
	The reason is as follows. Starting from \eqref{eq:prq}, first since in $\fPadHadamard^\fAdv$ registers $\bct,\btheta$ are read-only, and the $\bH$ registers, $\bct$ registers and $\btheta$ registers satisfy \eqref{eq:117x}, we can replace each query by the construction in the second step above, and the state does not change. Then since $\ket{\tilde\varphi^1}$ does not depend on $\bH(R||\cdots)$, $\fReviseRO$ together with the preparation of $\fAddPhaseWithswitch$ commutes with the preparation of $\ket{\tilde\varphi^1}$ and $\fPadHadamard^\fAdv$. Then a direct calculation of $\fReviseRO$ on the superpositions of all the basis components that satisfy \eqref{eq:117x} shows the output of this operator is the uniform superposition in all the $\bct$ registers and $\btheta$ registers.
	\subsubsection{A list of useful lemmas}
	The discussions above imply that, many lemmas that we proved before under the efficiently-preparable property, still hold for states in the form of
	\begin{equation}\label{eq:138r}
		\fReviseRO\ket{\phi},\ket{\phi}\in \text{ Setup \ref{setup:3}}
	\end{equation}
	Note that when we construct the efficient-preparable operator that approximately prepare \eqref{eq:138r} the operator also operates on the client-side register $\bTheta$ to simulate operators that are originally solely server-side. But we can only focus on lemmas that remain true even if all these $\bTheta$ registers are considered server-side registers. We list the following lemmas that will be used in later proofs.
	\begin{lem}[Analog of Lemma \ref{lem:multisbt}]\label{lem:multisbtre}
Consider a sub-normalized purified joint state $\ket{\varphi}$ in \eqref{eq:138r}. Suppose $\ket{\varphi^\prime}=\fPrtl^\fAdv\ket{\varphi}$ where the protocol $\fPrtl$ and adversary $\fAdv$ are both efficient and do not query the blinded part of $H$. 

Use $\Pi_{\vec{\bx}_{\vec{b}}}^{\bS_{bsh}}$
to denote the projection onto $\vec{\bx}_{\vec{b}}$-branch (Definition \ref{defn:branch}). Then for any efficient server-side operation $D\in \cF_{blind}$ there is
\begin{equation}\label{eq:multisbt}\Pi_{\basishonest(\bK)}D\Pi_{\basishonest(\bK)}\ket{\varphi}\approx_{\fneg(\kappa)}\sum_{\vec{b}\in \{0,1\}^L}\Pi_{\vec{\bx}_{\vec{b}}}^{\bS_{bsh}}D\Pi_{\vec{\bx}_{\vec{b}}}^{\bS_{bsh}}\ket{\varphi}\end{equation}
\end{lem}
The two lemmas below are analog of Lemma \ref{lem:prepht} and Corollary \ref{cor:prepht}, and we use $\bK^\prime$ to replace $\bK$ in these lemmas since the symbol $\bK$ is occupied by the key tuple appeared in the protocol.
\begin{lem}[Analog of Lemma \ref{lem:prepht}]\label{lem:prephtre}
	Consider a sub-normalized purified joint state $\ket{\varphi}$ in \eqref{eq:138r}. Suppose $\ket{\varphi^\prime}=\fPrtl^{\fAdv_0}\ket{\varphi}$ where the protocol $\fPrtl$ and adversary $\fAdv_0$ are both efficient and do not query the blinded part of $H$. Then suppose the client holds a key pair in register  $\bK^\prime=(\bx_0^\prime,\bx_1^\prime)$, and $\ket{\varphi^\prime}$ satisfies all the conditions in  $\SETUP^2(\bK^\prime)$ except the first bullet, and is in the basis-honest form of $\bK^\prime$ with only $\bx_0^\prime$ branch. Then for any efficient adversary $\fAdv\in \cF_{blind}$ there is
	$$|\Pi_{\eqref{eq:htt}=0}^{\bd}\Pi_{\neq 0}^{\text{last $\kappa$ bits of }\bd}\fHadamardTest^\fAdv(\bK^\prime;1^\kappa)\ket{\varphi^\prime}|\approx_{\fneg(\kappa)}\frac{1}{\sqrt{2}}|\Pi_{\neq 0}^{\text{last $\kappa$ bits of }\bd}\fHadamardTest^\fAdv(\bK^\prime;1^\kappa)\ket{\varphi^\prime}|$$
	$$|\Pi_{\eqref{eq:htt}=1}^{\bd}\Pi_{\neq 0}^{\text{last $\kappa$ bits of }\bd}\fHadamardTest^\fAdv(\bK^\prime;1^\kappa)\ket{\varphi^\prime}|\approx_{\fneg(\kappa)} \frac{1}{\sqrt{2}}|\Pi_{\neq 0}^{\text{last $\kappa$ bits of }\bd}\fHadamardTest^\fAdv(\bK^\prime;1^\kappa)\ket{\varphi^\prime}|$$
\end{lem}
\begin{cor}[Analog of Corollary \ref{cor:prepht}]\label{cor:prephtre}
	Consider a sub-normalized purified joint state $\ket{\varphi}$ in \eqref{eq:138r}. Suppose $\ket{\varphi^\prime}=\fPrtl^{\fAdv_0}\ket{\varphi}$ where the protocol $\fPrtl$ and adversary $\fAdv_0$ are both efficient and do not query the blinded part of $H$. Then suppose the client holds a key pair in register  $\bK^\prime=(\bx_0^\prime,\bx_1^\prime)$, and $\ket{\varphi^\prime}$ satisfies all the conditions in  $\SETUP^2(\bK^\prime)$ except the first bullet, and is in an $\epsilon$-basis-honest form of $\bK^\prime$. Denote the $\bx_0^\prime$-branch as $\ket{\varphi_0^\prime}$ and $\bx_1^\prime$-branch as $\ket{\varphi_1^\prime}$. If an efficient adversary $\fAdv\in \cF_{blind}$ could make the client output $\fpass$ in the Hadamard test (Protocol \ref{prtl:phadamard}) from initial state $\ket{\varphi^\prime}$ with probability $\geq 1-p$, then
		\begin{align}\label{eq:98y2}&\Pi_{\eqref{eq:htt}=0}^{\bd}\Pi_{\neq 0}^{\text{last $\kappa$ bits of }\bd}\fHadamardTest^\fAdv(\bK^\prime;1^\kappa)\ket{\varphi_0^\prime}\nonumber\\\approx_{\sqrt{2(p+2\epsilon)}+\fneg(\kappa)}&\Pi_{\eqref{eq:htt}=0}^{\bd}\Pi_{\neq 0}^{\text{last $\kappa$ bits of }\bd}\fHadamardTest^\fAdv(\bK^\prime;1^\kappa)\ket{\varphi_1^\prime}\end{align}
	\begin{equation*}\Pi_{=0}^{\text{last $\kappa$ bits of }\bd}\fHadamardTest^\fAdv(\bK^\prime;1^\kappa)\ket{\varphi_0^\prime}\approx_{\sqrt{p+2\epsilon}+\fneg(\kappa)}0,\end{equation*}\begin{equation}\label{eq:78ol2} \Pi_{= 0}^{\text{last $\kappa$ bits of }\bd}\fHadamardTest^\fAdv(\bK^\prime;1^\kappa)\ket{\varphi_1^\prime}\approx_{\sqrt{p+2\epsilon}+\fneg(\kappa)} 0\end{equation}
		\begin{align}&\Pi_{\eqref{eq:htt}=1}^{\bd}\Pi_{\neq 0}^{\text{last $\kappa$ bits of }\bd}\fHadamardTest^\fAdv(\bK^\prime;1^\kappa)\ket{\varphi_0^\prime}\nonumber\\\approx_{\sqrt{p+\epsilon}+\fneg(\kappa)}&-\Pi_{\eqref{eq:htt}=1}^{\bd}\Pi_{\neq 0}^{\text{last $\kappa$ bits of }\bd}\fHadamardTest^\fAdv(\bK^\prime;1^\kappa)\ket{\varphi_1^\prime}\end{align}
\end{cor}
We put their proofs in Appendix \ref{sec:addms}.
%
%

\section{Analysis of $\fAddPhaseWithswitch$, the Lookup-table Part}\label{sec:8}
 As described in the introduction, we will prove the joint state of the client and the server remain indistinguishable under a series of randomization operators.\par
In this section, we will focus on what the $\fAddPhaseWithswitch$ protocol itself could tell us. The analysis will be based on the structure of phase tables.\par
Recall that each row of the phase tables used in this protocol has the following structure:
\begin{equation}\label{eq:50}x^{(\switch)}_{b^{(\switch)}}||x^{(i)}_b\rightarrow \theta^{(i)}_b;\quad i\in [0,L],b^{(\switch)},b\in \{0,1\}\end{equation}
Intuitively, the theorems that we prove in this section is based on the following intuition: if the server does not hold $x_b^{(i)}$, it could not decrypt  \eqref{eq:50}, thus $\theta^{(i)}_b$ remains secure.	(Recall by the claw-freeness the server does not know $\bx_b$ on the $\bx^{(i)}_{1-b}$-branch, and $\btheta_{b}^{(i)}$ is the client-side register that stores the corresponding phase information.)
Furthermore, for each $\vec{b}=b^{(0)}b^{(1)}\cdots b^{(L)}\in \{0,1\}^{1+L}$, on the $\vec{\bx}_{\vec{b}}$-branch, the server holds $x^{(0)}_{b^{(0)}}x^{(1)}_{b^{(1)}}\cdots x^{(L)}_{b^{(L)}}$ but does not know $x^{(0)}_{1-b^{(0)}}x^{(1)}_{1-b^{(1)}}\cdots x^{(L)}_{1-b^{(L)}}$. Thus intuitively:
\begin{center}
\emph{On the $\vec{\bx}_{\vec{b}}$-branch, the server-side state should not depend on the values of $\btheta^{(0)}_{1-b^{(0)}}\btheta^{(1)}_{1-b^{(1)}}\cdots \btheta^{(L)}_{1-b^{(L)}}$.}	
\end{center}
In this section we build the bridge between this intuition and the protocol  as follows.
\begin{itemize}
	\item In Section \ref{sec:8.1} we define the \emph{basis-phase correspondence form}, which characterizes the states that perfectly satisfy the intuition above.
	\item In Section \ref{sec:8.2r} we construct a randomization operator $\cR_1$, which transforms a general basis-honest form to a basis-phase-correspondence form.
	\item In Section \ref{sec:8.3} we show the output states of $\fAddPhaseWithswitch$ are approximately invariant under $\cR_1$.
	\item Finally in Section \ref{sec:8.4} we formalize a new setup for the outcome of $\cR_1$, which will be used in later analysis.
\end{itemize}
\subsection{Basis-phase Correspondence Form}\label{sec:8.1}
As discussed in the technical overview, we define the \emph{basis-phase correspondence form} as follows.
\begin{defn}
Assume the parties are as in Section \ref{sec:4.1} and the registers are as in Section \ref{sec:6.1r}. We say a state $\ket{\varphi}$ is in a basis-phase correspondence form if there exists states $\ket{\varphi_{K,\vec{b},\vec{\theta}}}$ for each $K\in \Domain(\bK)$, $\vec{b}\in \{0,1\}^{1+L}$, $\vec{\theta}\in \{0,1\cdots 7\}^{1+L}$ such that 
$$\ket{\varphi}=\underbrace{\sum_{K\in \Domain(\bK)}\ket{K}\otimes \sum_{\Theta\in \Domain(\bTheta)}\ket{\Theta}}_{\text{client}}
\otimes \underbrace{\sum_{\vec{b}\in \{0,1\}^{1+L}}\ket{\vec{x}_{\vec{b}}}}_{\text{server-side register $\bS_{bsh}$}}\otimes \ket{\varphi_{K,\vec{b},\vec{\Theta}_{\vec{b}}}}$$
Recall $\vec{\Theta}_{\vec{b}}$ denotes $\theta^{(0)}_{b^{(0)}}\theta^{(1)}_{b^{(1)}}\cdots \theta^{(L)}_{b^{(L)}}$.
\end{defn}

\subsection{Randomization operator $\cR_1$}\label{sec:8.2r}
In this subsection we define the randomization operator $\cR_1$.
\subsubsection{Intuitive discussion}
\begin{exmp}First we can recall that an intuitive introduction of the $\cR_1$ operator is given in Example \ref{exmp:r1}. Considering the $x_0$-branch in the example, if we omit the unused registers, Equation \eqref{eq:45v} in the example becomes
	\begin{equation}\label{eq:119g}\ket{\Delta_1}\ket{\theta_1}\ket{x_0}\rightarrow \ket{\theta_1}\ket{\Delta_1}\ket{x_0}\end{equation}
Under Setup \ref{setup:3}, considering the $\bx^{(0)}_0$-branch as an example, \eqref{eq:119g} becomes
	\begin{equation}\label{eq:120g}\ket{\Delta_1^{(0)}}\underbrace{\ket{\theta_1^{(0)}}}_{\btheta^{(0)}_0}\underbrace{\ket{x_0^{(0)}}}_{\text{server-side register $\bS_{bsh}^{(0)}$}}\rightarrow \ket{\theta_1^{(0)}}\ket{\Delta_1^{(0)}}\ket{x_0^{(0)}}\end{equation}\end{exmp}
 We will see $\cR_1$ is defined to be this type of operations applied on each possible superscript in $[0,L]$ and each possible subscript in $\{0,1\}$.
 
\subsubsection{Formal definition}
To formalize this operator, let's define an operator that operates on a $\bx^{(i)}_{1-b}$ branch of a basis-honest state, and randomizes the phase information register $\btheta_{b}^{(i)}$ by swapping it with a completely new random value $\Delta^{(i)}_{b}$:
\begin{defn}\label{defn:swap}
Recall in Setup \ref{setup:3} the client holds a tuple of key pairs in register $\bK=(\bK^{(i)})_{i\in [0,L]}$, $\bK^{(i)}=(\bx^{(i)}_b)_{b\in \{0,1\}}$. And the client additionally holds a tuple of phase pairs in register $\bTheta=(\bTheta^{(i)})_{i\in [0,L]}$, $\bTheta^{(i)}=(\btheta^{(i)}_{b})_{b\in \{0,1\}}$, $\btheta^{(i)}_b\in \{0,1\cdots 7\}$. 
  For a purified joint state $\ket{\varphi}$, expand the basis-honest part:
\begin{equation}\label{eq:r1}\Pi_{\basishonest(\bK)}\ket{\varphi}=\underbrace{\sum_{K\in \Domain(\bK)}\ket{K}\otimes \sum_{\Theta\in \Domain(\bTheta)}\ket{\Theta}}_{\text{client}}
\otimes \underbrace{\sum_{\vec{b}\in \{0,1\}^{1+L}}\ket{\vec{x}_{\vec{b}}}}_{\text{server-side register $\bS_{bsh}$}}\otimes \ket{\varphi_{K,\Theta,\vec{b}}}\end{equation}

For any $i\in [0,L], b\in \{0,1\}$, define $\fSWAP_{1-b,b}^{(i)}$ as follows. First initialize randomness register $\bDelta^{(i)}_{b}$ to hold uniformly distributed value $\in \{0,1\cdots 7\}$. Then $\fSWAP_{1-b,b}^{(i)}$ is the following control-swap operation that acts nontrivially on the $\bx^{(i)}_{1-b}$ branch of \eqref{eq:r1} and swaps the value of $\btheta^{(i)}_{b}$ with the value of $\bDelta_b^{(i)}$:
\begin{align}&\underbrace{\ket{\Delta^{(i)}_b}}_{\bDelta^{(i)}_{b}}\underbrace{\ket{K}\ket{\cdots,\underbrace{\theta^{(i)}_{b}}_{\btheta^{(i)}_b},\cdots  }}_{\text{client}}\otimes \underbrace{\ket{\cdots x^{(i)}_{1- b}\cdots }}_{\text{server-side register $\bS_{bsh}$}}\quad\quad(K=(K^{(i)})_{i\in [0,L]},K^{(i)}=(x^{(i)}_b)_{b\in \{0,1\}})\\
\xrightarrow{\fSWAP_{1-b,b}^{(i)}} &	\ket{\theta_b^{(i)}}\ket{K}\ket{\cdots,\Delta^{(i)}_b,\cdots  }\otimes \ket{\cdots x^{(i)}_{1- b}\cdots }
\end{align}
The operator acts as identity on the other branch and outside $\Pi_{\basishonest(\bK)}$.\par
\end{defn}

Then the randomization operator $\cR_1$ is to apply $\fSWAP^{(i)}_{1-b,b}$ for all the possible $i,b$: 
\begin{defn} Consider the same register set-up as Definition \ref{defn:swap}. Additionally introduce registers $\bDelta^{(i)}_b$, $i\in [0,L],b\in \{0,1\}$, which are initialized to hold the state
$$\ket{\$_1}=\frac{1}{\sqrt{8^{2(L+1)}}}\sum_{\forall i\in [0,L],\forall b\in \{0,1\}: \Delta^{(i)}_{b}\in \{0,1\cdots 7\}}\ket{(\underbrace{\Delta^{(i)}_{b}}_{\bDelta_b^{(i)}})_{i\in [0,L], b\in \{0,1\}}}$$
Define randomization operator $\cR_1$ as
$$\circ_{\forall i\in [0,L],\forall b\in \{0,1\}}\fSWAP^{(i)}_{1-b,b}$$

\end{defn}
Note that (1) for each $i,b$ the $\fSWAP$ operator uses freshly new randomness; (2) $\circ$ denote the operator composition; since these $\fSWAP$ operators commute with each other the order of applying these $\fSWAP$ operators does not matter; (3) when we say ``randomizing a state $\ket{\varphi}$ with $\cR_1$'', we mean applying $\cR_1$ on $\ket{\$_1}\otimes \ket{\varphi}$.\par
 We will show:
 \begin{itemize}\item Applying $\cR_1$ to the output of $\fAddPhaseWithswitch$ keeps the state indistinguishable;
 	\item Applying $\cR_1$ takes the state to a  class of states that satisfy specific properties. 
 \end{itemize}
\subsection{Phase Table Structure Implies Approximate Invariance Under Randomization of $\cR_1$}\label{sec:8.3}
The following theorem says the application of $\cR_1$ keeps the state approximately invariant.
\begin{thm}\label{thm:indofr1}
	 Suppose a sub-normalized purified joint state $\ket{\varphi^{2.a}}$ is in Setup \ref{setup:3}. 
 Then
\begin{equation}\label{eq:124v}\cR_1(\ket{\$_1}\otimes \fReviseRO\ket{\varphi^{2.a}})\approx_{\fneg(\kappa)}\fReviseRO\ket{\varphi^{2.a}}\end{equation}
\end{thm}
\subsubsection{Linear algebra fact that connects state form to approximate-invariance of operator}
To prove this theorem, we give the following linear algebra fact that says, if the state is close to a specific form, it is approximate-invariant under a randomization operator.
\begin{fact}\label{fact:otpro}
Suppose 	$\ket{\varphi}$ satisfies:\begin{equation}\label{eq:otpro1}\ket{\varphi}\approx_{\epsilon}\frac{1}{\sqrt{8}}\sum_{\theta\in \{0,1\cdots 7\}}\ket{\theta}\otimes\ket{\psi}\end{equation}
Then
$$\frac{1}{\sqrt{8}}\sum_{\Delta\in \{0,1\cdots 7\}}\ket{\Delta}\otimes \ket{\varphi}$$
is $2\epsilon$-invariant under the following operator:
\begin{equation}
	\ket{\Delta}\ket{\theta}\rightarrow \ket{\theta}\ket{\Delta}
\end{equation}
\end{fact}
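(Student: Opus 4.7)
The plan is a short three-line triangle-inequality argument organized around the ``ideal'' symmetric reference state. First I would set up notation. Let $\ket{\varphi_0} := \tfrac{1}{\sqrt{8}}\sum_{\theta}\ket{\theta}\otimes\ket{\psi}$ denote the right-hand side of \eqref{eq:otpro1}, so that the hypothesis reads $\ket{\varphi}\approx_\epsilon\ket{\varphi_0}$. Let $S$ denote the swap operator $\ket{\Delta}\ket{\theta}\mapsto\ket{\theta}\ket{\Delta}$ (a unitary), and define
\[
\ket{\Phi}:=\tfrac{1}{\sqrt{8}}\sum_{\Delta}\ket{\Delta}\otimes\ket{\varphi},\qquad \ket{\Phi_0}:=\tfrac{1}{\sqrt{8}}\sum_{\Delta}\ket{\Delta}\otimes\ket{\varphi_0}=\tfrac{1}{8}\sum_{\Delta,\theta}\ket{\Delta}\ket{\theta}\ket{\psi}.
\]
The whole point of introducing $\ket{\Phi_0}$ is that it is manifestly symmetric under exchanging the $\bDelta$ and $\btheta$ registers, so exactly $S\ket{\Phi_0}=\ket{\Phi_0}$.

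Next I would transfer the closeness hypothesis from $\ket{\varphi}$ to $\ket{\Phi}$. Since tensoring a unit state on the $\bDelta$-register with a fixed normalized superposition $\tfrac{1}{\sqrt 8}\sum_\Delta\ket{\Delta}$ is an isometry, the displacement passes through with the same norm:
\[
\bigl|\ket{\Phi}-\ket{\Phi_0}\bigr|^2 \;=\; \tfrac{1}{8}\sum_{\Delta}\bigl|\ket{\varphi}-\ket{\varphi_0}\bigr|^2 \;=\; \bigl|\ket{\varphi}-\ket{\varphi_0}\bigr|^2 \;\le\; \epsilon^2,
\]
so $\ket{\Phi}\approx_\epsilon\ket{\Phi_0}$. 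Because $S$ is unitary it preserves norms, giving $S\ket{\Phi}\approx_\epsilon S\ket{\Phi_0}=\ket{\Phi_0}$. Then the triangle inequality closes the argument:
\[
\bigl|\ket{\Phi}-S\ket{\Phi}\bigr| \;\le\; \bigl|\ket{\Phi}-\ket{\Phi_0}\bigr|+\bigl|\ket{\Phi_0}-S\ket{\Phi}\bigr| \;\le\; \epsilon+\epsilon \;=\; 2\epsilon,
\]
which is the desired $2\epsilon$-invariance.

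There is essentially no obstacle here; the only subtle point is being careful that the claimed invariance is with respect to the composed state $\tfrac{1}{\sqrt 8}\sum_\Delta\ket{\Delta}\otimes\ket{\varphi}$ rather than $\ket{\varphi}$ itself, which is exactly what makes the exact fixed point $\ket{\Phi_0}$ of $S$ available as an anchor for the triangle inequality. The factor of $2$ in the conclusion is tight for this argument and comes from the two triangle-inequality hops through $\ket{\Phi_0}$.
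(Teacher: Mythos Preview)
Your proposal is correct and follows essentially the same approach as the paper: the paper's proof is the one-line observation that the uniform superposition on the $\btheta$ register (and hence $\ket{\Phi_0}$) is exactly invariant under the swap, leaving the triangle-inequality step implicit. Your write-up simply makes that triangle-inequality argument explicit.
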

\subsubsection{Proof of Theorem \ref{thm:indofr1}}
\begin{proof}
	Unrolling the definition of $\cR_1$, we only need to prove, for all $i\in [0,L]$,
	$b\in \{0,1\}$:
	\begin{equation}\label{eq:115pl}\fSWAP^{(i)}_{1-b,b}(\frac{1}{\sqrt{8}}\sum_{\Delta_b^{(i)}\in \{0,1\cdots 7\}}\underbrace{\ket{\Delta_b^{(i)}}}_{\bDelta^{(i)}_b}\otimes \fReviseRO\ket{\varphi^{2.a}})\approx_{\fneg(\kappa)}\fReviseRO\ket{\varphi^{2.a}}\end{equation}
	Denote the $\bx^{(i)}_{1-b}$ branch of $\ket{\varphi^{2.a}}$ as $\ket{\varphi_{(i),1-b}^{2.a}}$. Since $\fSWAP^{(i)}_{1-b,b}$ only operates nontrivially on $\ket{\varphi_{(i),1-b}^{2.a}}$, \eqref{eq:115pl} is further reduced to proving 
	\begin{equation}\label{eq:76r}\fSWAP^{(i)}_{1-b,b}(\frac{1}{\sqrt{8}}\sum_{\Delta_b^{(i)}\in \{0,1\cdots 7\}}\ket{\Delta_0^{(i)}}\otimes\fReviseRO\ket{\varphi_{(i),1-b}^{2.a}})\approx_{\fneg(\kappa)}\fReviseRO\ket{\varphi_{(i),1-b}^{2.a}}\end{equation}
 Recall the definition of $\ket{\varphi^{2.a}}$ in Setup \ref{setup:3}:
$$\ket{\varphi^{2.a}}=\fAddPhaseWithswitch^{\fAdv}(\bK^{(\switch)},\bK,\bTheta;1^\kappa)\ket{\varphi^1}$$
where $\ket{\varphi^1}$ is in Setup \ref{setup:1}. Applying Lemma \ref{lem:multisbt} we get

\begin{equation}\label{eq:133d}\ket{\varphi^{2.a}_{(i),1-b}}\approx_{\fneg(\kappa)}\fAddPhaseWithswitch^{\fAdv}(\bK^{(\switch)},\bK,\bTheta;1^\kappa)\ket{\varphi^1_{(i),1-b}}\end{equation}
where we use $\ket{\varphi^1_{(i),1-b}}$ to denote the $\bx_{1-b}^{(i)}$-branch of $\ket{\varphi^1}$.\par
To prove it, we are going to use Fact \ref{fact:otpro}, a linear algebra fact that is proved previously for the preparation of this proof. We need to show $\ket{\varphi_{(i),1-b}^{2.a}}$ is close to a state that satisfies the condition of Fact \ref{fact:otpro} (that is, does not depend on $\btheta^{(i)}_b$). This is by replacing state and operations in \eqref{eq:133d} by states and operators that does not depend on the values of $\btheta^{(i)}_b$ step by step: 
	\begin{enumerate}
\item Suppose the random paddings used within the $\fAddPhaseWithswitch$ are sampled and stored in client-side register $\bR$. Since $\ket{\varphi_{(i),1-b}^1}$ is efficiently-preparable, by Lemma \ref{lem:padro}:
\begin{equation}\label{eq:130s}\exists \ket{\tilde\varphi^1_{(i),1-b}}\text{ independent of $\bH(R||\cdots)$ }: \ket{\tilde\varphi^1_{(i),1-b}}\approx_{\fneg(\kappa)}\ket{\varphi^1_{(i),1-b}}\end{equation}
\item Define $\fAdv^\prime$ as the adversary that compared to $\fAdv$, all the queries to $\bH$ are replaced by $\bH^{mid}$ where $\bH^{mid}$ is a blinded oracle where entries in the form of $\{0,1\}^{2\kappa} || x^{(i)}_{b}$ are blinded. (Recall the key length of the switch gadget is also $\kappa$.)  Intuitively, recall we are studying the $\bx^{(i)}_{1-b}$-branch thus $\bx^{(i)}_{b}$ is not predictable by the server. Formally, by Lemma \ref{lem:5.2b} we have
\begin{align}&\fAddPhaseWithswitch^\fAdv(\bK^{(\switch)},\bK,\bTheta;1^\kappa)\ket{\varphi_{(i),1-b}^1}\\\approx_{\fneg(\kappa)}&\fAddPhaseWithswitch^{\fAdv^\prime}(\bK^{(\switch)},\bK,\bTheta;1^\kappa)\ket{\varphi_{(i),1-b}^1}\label{eq:131s}\end{align}
\item Combining \eqref{eq:130s}\eqref{eq:131s} above we have
$$\ket{\tilde\varphi^{2.a}_{(i),1-b}}:=\fAddPhaseWithswitch^{\fAdv^\prime}(\bK^{(\switch)},\bK,\bTheta;1^\kappa)\ket{\tilde\varphi^1_{(i),1-b}}\approx_{\fneg(\kappa)}\ket{\varphi^{2.a}_{(i),1-b}}$$
and by its definition it is independent to  random oracle output register $\bH(R||\{0,1\}^{\kappa}||x_b^{(i)})$. 
Recall the encryptions of $\theta_b^{(i)}$ in the client-side messages of $ \fAddPhaseWithswitch$ have the form
$$ct^{(i)}_{b^{(\switch)},b}=h+\theta_b^{(i)},h=H(R^{(i)}_{b^{(\switch)},b}||x^{(\switch)}_{b^{(\switch)}}||x_b^{(i)})$$
Thus after the application of $\fReviseRO^{(i)}_{b^{(\switch)},b}$ on $\ket{\tilde\varphi^{2.a}_{(i),1-b}}$ the overall state could be written as
$$\sum_{\theta^{(i)}_{b^{(\switch)},b}\in \{0,1\cdots 7\}}\sum_{h\in \{0,1\cdots 7\}}\underbrace{\ket{\theta^{(i)}_{b^{(\switch)},b}}}_{\btheta^{(i)}_{b^{(\switch)},b}}\underbrace{\ket{ct^{(i)}_{b^{(\switch)},b}}}_{\bct^{(i)}_{b^{(\switch)},b}}\underbrace{\ket{\psi_{ct^{(i)}_{b^{(\switch)},b}}}}_{\text{other parts}}$$
That is, the rightmost term does not depend on the value of register $\btheta^{(i)}_{b^{(\switch)},b}$ (but could depend on the transcript register $\bct^{(i)}_{b^{(\switch)},b}$). Now the condition for applying Fact \ref{fact:otpro} is satisfied. This completes the proof of \eqref{eq:76r}. 
\end{enumerate}
This completes the proof of \eqref{eq:115pl} and completes the whole proof.
\end{proof}
\subsection{New Set-up}\label{sec:8.4}
Now we are going to formalize a new setup that captures the properties of states after the randomization of $\cR_1$.
\begin{setup}\label{setup:4}
	
	Define Setup \ref{setup:4} as the set of states that could be written as 
	$$\cR_1(\ket{\$_1}\otimes \fReviseRO\ket{\varphi}),\ket{\varphi}\text{ is in Setup \ref{setup:3}}.$$
	\end{setup}
Since there have been lots of nesting in the definition, let's give a recap of the properties of states in Setup \ref{setup:4}.\par
In Setup \ref{setup:4} the client holds a tuple of key pairs in register $\bK=(\bK^{(i)})_{i\in [0,L]}$, $\bK^{(i)}=(\bx^{(i)}_b)_{b\in \{0,1\}}$. And the client additionally holds a tuple of phase pairs in register $\bTheta=(\bTheta^{(i)})_{i\in [0,L]}$, $\bTheta^{(i)}=(\btheta^{(i)}_{b})_{b\in \{0,1\}}$, $\btheta^{(i)}_b\in \{0,1\cdots 7\}$. $\bH^\prime$ is defined to be the blinded oracle where entries in the form of $\{0,1\}^\kappa||K^{(\switch)}||\cdots$ are blinded, which covers the target registers of $\fReviseRO$. $\cF_{blind}$ is defined to be the set of server-side operators that could only query this blinded oracle.\par
 A state $\ket{\varphi}$ in Setup \ref{setup:4} satisfies:
	\begin{itemize}
		\item It is key checkable for each key pair in $\bK$ by an operator in $\cF_{blind}$;
		\item It is strongly-claw-free for each key pair in $\bK$ against efficient operators in $\cF_{blind}$;
		\item For each possible value $\Theta$ of registers $\bTheta$, the corresponding component of $\ket{\varphi}$ has norm $\frac{1}{\sqrt{|\Domain(\bTheta)|}}$.
		\item For any $i\in [0,L],b\in \{0,1\}$, the $\bx^{(i)}_{1-b}$-branch of $\ket{\varphi}$ does not depend on the value of $\btheta^{(i)}_{b}$.
	\end{itemize}
\section{Analysis of Collective Phase Test ($\fCoPhTest$)}
In this section, we will analyze the implication of collective phase test.\par
Before going to the protocol analysis, in Section \ref{sec:9.1} we will define a series of notions including the \emph{pre-phase-honest form}, and the \emph{phase-honest form}. 
Then in Section \ref{sec:8.2}, we will define a new randomization operator $\cR_2$.
\begin{itemize}
\item Following the definition, we will show this operator will transform a state in Setup \ref{setup:4} (a basis-honest form with a specific property) to a pre-phase-honest form.\par
\item In Section \ref{sec:9.3} we will show if an efficient adversary could pass the collective phase test from a state in Setup \ref{setup:4}, the overall state could be further randomized under $\cR_2$.
\end{itemize}

\subsection{Pre-phase-honest Form and Phase-honest Form}\label{sec:9.1}
Let's assume the Setup \ref{setup:4} and assume the state is in a basis-honest form for $\bK$. We say this state is a pre-phase-honest form or a phase-honest form if it is a basis-honest form and has additional structure related to the phase information in $\bTheta$. Recall by Definition \ref{defn:branch} and Notation \ref{nota:compo} the basis-honest form could be written as a linear sum of different branches, and each branch could be written as the sum of different components based on different values of $\bTheta$ register.\par 
For preparation, let's first define the \emph{honest joint phase} of a branch:
\begin{defn}[Honest joint phase]\label{defn:hjp}
Suppose the client holds a tuple of key pairs $\bK=(\bK^{(i)})_{i\in [0,L]}$ and a tuple of phase pairs $\bTheta=(\bTheta^{(i)})_{ i\in [0,L]},\bTheta^{(i)}=(\btheta_0^{(i)},\btheta^{(i)}_1)$. For subscript vector $\vec{b}=b^{(0)}b^{(1)}b^{(2)}\cdots b^{(L)}\in \{0,1\}^{1+L}$, we call $\tSUM(\vec{\Theta}_{\vec{b}})=\theta^{(0)}_{b^{(0)}}+\theta^{(1)}_{b^{(1)}}+\theta^{(2)}_{b^{(2)}}+\cdots +\theta^{(L)}_{b^{(L)}}$  the \emph{honest joint phase} for $\vec{\bx}_{\vec{b}}$-branch when the value of $\bTheta$ is $\Theta$.
\end{defn}

Then informally:\begin{itemize} \item A pre-phase honest form satisfies: if $\tSUM(\vec{\Theta}_{\vec{b}_1})=\tSUM(\vec{\Theta}_{\vec{b}_2})$, then the $\vec{x}_{\vec{b}_1}$ branch is the same as the $\vec{x}_{\vec{b}_2}$ branch, excluding the registers that are necessarily different (which is the server-side key vector register $\bS_{bsh}$ and the client side phase register $\bTheta$). But we do not restrict the phases of states with different $\tSUM(\vec{\Theta}_{\vec{b}})$.
\item A state in the phase-honest form means the branch $\vec{\bx}_{\vec{b}}$ has phase $\tSUM(\vec{\Theta}_{\vec{b}})$ besides the requirement in pre-phase-honest form.
\end{itemize}
\begin{defn}[Pre-phase-honest form]\label{defn:pphf}
		We say a purified joint state $\ket{\varphi}$ in Setup \ref{setup:4} is in the pre-phase honest form if there exists a family of states $\ket{\varphi_{K,\vec{b},sum}}$ for each $K\in \Domain(\bK),\vec{b}\in \{0,1\}^{1+L},sum\in \{0,1\cdots 7\}$ such that $\ket{\varphi}$ could be written as
\begin{equation}\label{eq:pphf}\underbrace{\sum_{K\in \Domain(\bK)}\ket{K}\otimes \sum_{\Theta\in \Domain(\bTheta)}\ket{\Theta}}_{\text{client}}
\otimes \underbrace{\sum_{\vec{b}\in \{0,1\}^{1+L}}\ket{\vec{x}_{\vec{b}}}}_{\text{server-side $\bS_{bsh}$}}\otimes \ket{\varphi_{K,\vec{b},\tSUM(\vec{\Theta}_{\vec{b}})}}\end{equation}
\end{defn}

\begin{defn}[Phase-honest form]\label{defn:phf}	
	We say a purified joint state $\ket{\varphi}$ in Setup \ref{setup:4} is in the pre-phase honest form if there exists a family of states $\ket{\varphi_{K,\vec{b},+}},\ket{\varphi_{K,\vec{b},-}}$ for each $K\in \Domain(\bK),\vec{b}\in \{0,1\}^{1+L}$ such that $\ket{\varphi}$ could be written as
\begin{equation}\label{eq:phf}\underbrace{\sum_{K\in \Domain(\bK)}\ket{K}\otimes \sum_{\Theta\in \Domain(\bTheta)}\ket{\Theta}}_{\text{client}}\otimes \underbrace{\sum_{\vec{b}\in \{0,1\}^{1+L}}\ket{\vec{x}_{\vec{b}}}}_{\text{server-side $\bS_{bsh}$}}\otimes (e^{\tSUM(\vec{\Theta}_{\vec{b}})\mi \pi/4}\ket{\varphi_{K,\vec{b},+}}+e^{-\tSUM(\vec{\Theta}_{\vec{b}})\mi \pi/4}\ket{\varphi_{K,\vec{b},-}})\end{equation}

\end{defn}
As expected, in the formal definition we need to take the complex-conjugate term into consideration.
\subsection{Randomization Operator $\cR_2$}\label{sec:8.2}
Let's start to define the randomization operator that randomizes a basis-honest form in Setup \ref{setup:4} to a pre-phase-honest form. To do that, we will define an operator $\fAdd_{\vec{b}}$ operated on the $\vec{\bx}_{\vec{b}}$-branch, for $\vec{b}\in \{0,1\}^{1+L}$.
\begin{defn} 
Consider a purified joint state $\ket{\varphi}$ in Setup \ref{setup:4} and a basis-honest form of $\bK$:
$$\ket{\varphi}=\underbrace{\sum_{K\in \Domain(\bK)}\ket{K}\otimes \sum_{\Theta\in \Domain(\bTheta)}\ket{\Theta}}_{\text{client}}\otimes \underbrace{\sum_{\vec{b}\in \{0,1\}^{1+L}}\ket{\vec{x}_{\vec{b}}}}_{\text{server-side $\bS_{bsh}$}}\otimes \ket{\varphi_{K,\Theta,\vec{b}}}$$
 Define the operator $\fAdd_{\vec{b}}$ controlled on a specific branch indexed by $\vec{b}=b^{(0)}b^{(1)}b^{(2)}\cdots b^{(L)}$, with randomness $\Delta^{(1)}_{b^{(1)}}\cdots \Delta^{(L)}_{b^{(L)}}$:
\begin{align}&\underbrace{\ket{\Delta^{(1)}_{b^{(1)}}\cdots \Delta^{(L)}_{b^{(L)}}}}_{\bDelta^{(1)}_{b^{(1)}}\cdots \bDelta^{(L)}_{b^{(L)}}}\underbrace{\ket{\theta^{(0)}_{b^{(0)}}\theta^{(1)}_{b^{(1)}}\cdots \theta^{(L)}_{b^{(L)}}}}_{\text{client-side registers $\btheta^{(0)}_{b^{(0)}}\btheta^{(1)}_{b^{(1)}}\cdots \btheta^{(L)}_{b^{(L)}}$}}\underbrace{\ket{\vec{x}_{\vec{b}}}}_{\text{server-side $\bS_{bsh}$}}\\\xrightarrow{\fAdd_{\vec{b}}}& \ket{\theta^{(1)}_{b^{(1)}}\cdots \theta^{(L)}_{b^{(L)}}}\underbrace{\ket{(\sum_{i\in [0,L]}\theta^{(i)}_{b^{(i)}}-\sum_{i\in [L]}\Delta^{(i)}_{b^{(i)}})\Delta^{(1)}_{b^{(1)}}\cdots \Delta^{(L)}_{b^{(L)}}}}_{\btheta^{(0)}_{b^{(0)}}\btheta^{(1)}_{b^{(1)}}\cdots \btheta^{(L)}_{b^{(L)}}}\ket{\vec{x}_{\vec{b}}}\end{align}
and acts as identity on the other branches and outside $\Pi_{\basishonest(\bK)}$.\end{defn}
This means:
\begin{itemize}\item This operator only operates on the ${\vec{\bx}_{\vec{b}}}$-branch of the state, and randomizes the $\theta^{(0)}_{b^{(0)}}\theta^{(1)}_{b^{(1)}}\cdots \theta^{(L)}_{b^{(L)}}$ information stored on the client side with randomness $\Delta$. Recall that in the $\fpreRSPV$ protocol $\theta_{b^{(i)}}^{(i)}$ could be decrypted with keys in $\vec{\bx}_{\vec{b}}$. Thus this operator aims at randomizing the phases that \emph{could} be decrypted by the server's keys, which is different from the operator $\cR_1$.
\item The randomization is done in a way that for each branch, the honest joint phase (Definition \ref{defn:hjp}) of this branch remains the same.

\end{itemize}
Then define the overall randomization operator for the collective phase test as the composition of $\fAdd_{\vec{b}}$ for each $\vec{b}$, with suitable choice of randomness:
\begin{defn}[$\cR_2$]
For each $i\in [L]$, $b\in \{0,1\}$, initialize register $\bDelta^{(i)}_{b}$ to store a uniform superposition of $\{0,1\cdots 7\}$. Overall these registers are initialized to hold the state
\begin{equation}\label{eq:139k}\ket{\$_2}=\frac{1}{\sqrt{8^{2L}}}\sum_{\forall b\in \{0,1\},i\in [L]:\Delta^{(i)}_{b}\in \{0,1\cdots 7\}}\ket{(\underbrace{\Delta_{b}^{(i)}}_{\bDelta_{b}^{(i)}})_{i\in [L],b\in \{0,1\}}}\end{equation}
Define $\cR_2$ as
$$\cR_2=\circ_{\vec{b}\in \{0,1\}^{1+L}}\fAdd_{\vec{b}}$$
which operates on a state in Setup \ref{setup:4} together with $\ket{\$_2}$.
\end{defn}
We have the following theorems about $\cR_2$.
\begin{thm}
On state $\ket{\varphi}$ in Setup \ref{setup:4}, $\cR_2$ could be efficiently implemented with access to the transcript registers, $\bS_{bsh}$ and registers of $\ket{\$_2}$.
\end{thm}
\begin{proof}
$\cR_2$ can be implemented through the following operations.
\begin{enumerate}
	\item Use the key-checkable operators to calculate the subscript vector for keys in $\bS_{bsh}$.
	\item For the $\vec{\bx}_{\vec{b}}$-branch, controlled by the subscript vector $\vec{b}$, apply operator $\fAdd_{\vec{b}}$.
	\item Redo the first step to erase the temporary register that stores the subscripts.
\end{enumerate}
\end{proof}
\begin{thm}\label{thm:9.pre}
If $\ket{\varphi}$ is in Setup \ref{setup:4} and is in a basis-honest form, $\cR_2(\ket{\$_2}\otimes\ket{\varphi})$ is in a pre-phase-honest form.
\end{thm}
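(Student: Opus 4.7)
The plan is to reduce the theorem to a branch-by-branch analysis of the $\vec{\bx}_{\vec{b}}$-branches of the semi-honest decomposition, compute the action of $\fAdd_{\vec{b}}$ on each branch by a change of summation variables, and then read off the pre-phase-honest family.

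First I would observe that for distinct $\vec{b},\vec{b}'$ the operators $\fAdd_{\vec{b}}$ and $\fAdd_{\vec{b}'}$ are controlled by the orthogonal server-side conditions $\bS_{smh}=\vec{x}_{\vec{b}}$ and $\bS_{smh}=\vec{x}_{\vec{b}'}$, so even when they share target registers (e.g.\ the randomness registers $\bDelta^{(i)}_{b^{(i)}}$ whenever two branches agree in coordinate $i$) they commute pairwise. Hence the order of composition in $\cR_2$ is irrelevant and, applied to a semi-honest state, the effect is: on each branch $\vec{\bx}_{\vec{b}}$ apply exactly $\fAdd_{\vec{b}}$ using the subtuple $(\bDelta^{(i)}_{b^{(i)}})_{i\in[L]}$ drawn uniformly from $\ket{\$_2}$, and leave the other branches untouched.

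Next, I would invoke the fourth bullet of Setup~\ref{setup:4}: the $\bx^{(i)}_{1-b}$-branch is independent of $\btheta^{(i)}_{b}$. Intersecting over $i\in[0,L]$, the $\vec{\bx}_{\vec{b}}$-branch of $\ket{\varphi}$ is independent of $\Theta_{-\vec{b}}:=(\btheta^{(i)}_{1-b^{(i)}})_{i\in[0,L]}$, so on this branch the inner state can be written as $\ket{\varphi_{K,\beta,\gamma,\vec{b}}}$ where $\beta=\theta^{(0)}_{b^{(0)}}$ and $\gamma=(\theta^{(i)}_{b^{(i)}})_{i\in[L]}$. I would then compute $\fAdd_{\vec{b}}$ on this branch: writing $\alpha$ for the content of $\bDelta_{\vec{b}}$, the swap is the bijection $(\alpha,\beta,\gamma)\mapsto (\gamma,\beta+\tSUM(\gamma)-\tSUM(\alpha),\alpha)$ on $\{0,\dots,7\}^{2L+1}$. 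After relabeling post-swap values as the new summation variables, the inner state attached to client-side label $(\beta,\gamma)$ and randomness $\alpha$ becomes
\begin{equation*}
\ket{\varphi_{K,\; \beta+\tSUM(\gamma)-\tSUM(\alpha),\; \alpha,\; \vec{b}}},
\end{equation*}
whose second argument depends on $(\beta,\gamma)$ only through the combination $\beta+\tSUM(\gamma)=\tSUM(\vec{\Theta}_{\vec{b}})$.

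Finally I would define, for each $K$, $\vec{b}$, and $sum\in\{0,\dots,7\}$,
\begin{equation*}
\ket{\varphi_{K,\vec{b},sum}}\;:=\;\tfrac{1}{\sqrt{8^{2L}}}\sum_{\alpha,\,\Delta_{-\vec{b}}}\ket{\alpha}_{\bDelta_{\vec{b}}}\ket{\Delta_{-\vec{b}}}\otimes\ket{\varphi_{K,\;sum-\tSUM(\alpha),\;\alpha,\;\vec{b}}},
\end{equation*}
absorbing the normalization of $\ket{\$_2}$ into this family. Substituting this back into the branch-wise expression and summing over $\vec{b}$ matches Definition~\ref{defn:pphf} exactly. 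The main obstacle I anticipate is the bookkeeping: making the commutation argument for $\fAdd_{\vec{b}}$'s with shared target registers completely rigorous, and keeping the change-of-variables step on the combined $(\btheta_{\vec{b}},\bDelta_{\vec{b}})$ registers clean enough that the resulting dependence really collapses onto $\tSUM(\vec{\Theta}_{\vec{b}})$ and nothing else. Once those are organized, the identification with Definition~\ref{defn:pphf} is essentially by inspection, and no $\fneg(\kappa)$ slack is introduced at this step because the relevant approximate invariances are already part of the defining properties of $\SETUP^4$.
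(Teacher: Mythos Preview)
Your proposal is correct and follows essentially the same route as the paper: reduce to a branch-by-branch analysis, use the $\SETUP^4$ property that the $\vec{\bx}_{\vec{b}}$-branch is independent of $(\btheta^{(i)}_{1-b^{(i)}})_i$, then compute $\fAdd_{\vec{b}}$ on that branch via the same change of summation variables to see that the residual state depends on the new $\Theta$-content only through $\tSUM(\vec{\Theta}_{\vec{b}})$. Your explicit commutation remark about the $\fAdd_{\vec{b}}$'s sharing $\bDelta$-targets is a clean justification of a point the paper leaves implicit, but otherwise the arguments coincide.
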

\begin{proof}
By the definition of pre-phase-honest form we can study the structure of each branch separately. For $\vec{b}\in \{0,1\}^{1+L}$, denote the $\vec{\bx}_{\vec{b}}$-branch of $\ket{\varphi}$ as
$$\ket{\varphi_{\vec{b}}}=\sum_{\Theta\in \Domain(\bTheta)}\ket{\Theta}\otimes \ket{\varphi_{\Theta,\vec{b}}}$$
where we make the client-side key registers implicit and make the phase registers explicit. By the condition that $\ket{\varphi}$ is in Setup \ref{setup:4}, we know $\ket{\varphi_{\vec{b}}}$ does not depend on the values of registers $\btheta_{1-b^{(0)}}^{(0)}\btheta_{1-b^{(1)}}^{(1)}\cdots \btheta_{1-b^{(L)}}^{(L)}$. Thus we could write $\ket{\varphi_{\vec{b}}}$ as:
$$\ket{\varphi_{\vec{b}}}=\sum_{\Theta\in \Domain(\bTheta)}\ket{\Theta}\otimes \ket{\varphi_{\vec{b},\theta_{b^{(0)}}^{(0)}\theta_{b^{(1)}}^{(1)}\cdots \theta_{b^{(L)}}^{(L)}}}$$
By direct calculation, after the application of $\fAdd_{\vec{b}}$ it becomes (note that in the calculation below we omit some unused registers)
\begin{align}&\fAdd_{\vec{b}}(\ket{\$_2}\otimes\ket{\varphi_{\vec{b}}})\\
=&\sum_{\forall i\in [L]:\Delta^{(i)}_{b^{(i)}}\in \{0,1\cdots 7\}}\sum_{\forall i\in [0,L]:\theta^{(i)}_{b^{(i)}}\in \{0,1\cdots 7\}}\underbrace{\ket{\theta_{b^{(0)}}^{(0)}\theta^{(1)}_{b^{(1)}}\cdots \theta^{(L)}_{b^{(L)}}}}_{\bDelta^{(0)}_{b^{(0)}}\bDelta^{(1)}_{b^{(1)}}\cdots \bDelta^{(L)}_{b^{(L)}}}\underbrace{\ket{(\sum_{i\in [0,L]}\theta^{(i)}_{b^{(i)}}-\sum_{i\in [L]}\Delta^{(i)}_{b^{(i)}})\Delta^{(1)}_{b^{(1)}}\cdots \Delta^{(L)}_{b^{(L)}}}}_{\btheta^{(0)}_{b^{(0)}}\btheta^{(1)}_{b^{(1)}}\cdots \btheta^{(L)}_{b^{(L)}}}\otimes \ket{\varphi_{\vec{b},\theta_{b^{(0)}}^{(0)}\theta_{b^{(1)}}^{(1)}\cdots \theta_{b^{(L)}}^{(L)}}}\\
	=&\sum_{\forall i\in [0,L]:\tilde\Delta^{(i)}_{b}\in \{0,1\cdots 7\},\theta^{(i)}_{b^{(i)}}\in \{0,1\cdots 7\},\sum_{i\in [0,L]}\tilde\Delta^{(i)}_{b^{(i)}}=\sum_{i\in 0,L]}\theta^{(i)}_{b^{(i)}}}\ket{\theta_{b^{(0)}}^{(0)}\theta^{(1)}_{b^{(1)}}\cdots \theta^{(L)}_{b^{(L)}}}\ket{\tilde\Delta^{(0)}_{b^{(0)}}\tilde\Delta^{(1)}_{b^{(1)}}\cdots \tilde\Delta^{(L)}_{b^{(L)}}}\otimes \ket{\varphi_{\vec{b},\theta_{b^{(0)}}^{(0)}\theta_{b^{(1)}}^{(1)}\cdots \theta_{b^{(L)}}^{(L)}}}
\end{align}

which has the form required in the pre-phase-honest form if we define $$\ket{\varphi_{\vec{b},sum}}=\sum_{\forall i\in [0,L]:\theta^{(i)}_{b^{(i)}}\in \{0,1\cdots 7\},\sum_{i\in 0,L]}\theta^{(i)}_{b^{(i)}}=sum}\ket{\theta_{b^{(0)}}^{(0)}\theta^{(1)}_{b^{(1)}}\cdots \theta^{(L)}_{b^{(L)}}}\ket{\varphi_{\vec{b},\theta_{b^{(0)}}^{(0)}\theta_{b^{(1)}}^{(1)}\cdots \theta_{b^{(L)}}^{(L)}}}$$
\end{proof}

\subsection{$\fCoPhTest$ Implies Approximate Invariance Under Randomization of $\cR_2$}\label{sec:9.3}
Now we give the following theorem, which says the ability of passing $\fCoPhTest$ implies approximate invariance of the initial state under $\cR_2$.
\begin{thm}\label{thm:8.1}
 Suppose a sub-normalized purified joint state $\ket{\varphi}$ is in Setup \ref{setup:4} and is in $\epsilon_0$-basis-honest form. 
Suppose $\fAdv$ is an efficient adversary that could make the client output $\fpass$ in the collective phase test with probability $\geq 1-\epsilon_1$ from initial state $\ket{\varphi}$. Then there is
\begin{equation}\label{eq:cophtest59}\cR_2(\ket{\$_2}\otimes \ket{\varphi})\approx_{(12\epsilon_1^{1/4}+\epsilon_0+\fneg(\kappa))}\ket{\$_2}\otimes\ket{\varphi}\end{equation}
\end{thm}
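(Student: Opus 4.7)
My plan is to reduce to Fact \ref{fact:cophtool} applied branch by branch. Since $\cR_2 = \circ_{\vec{b}} \fAdd_{\vec{b}}$ acts separately on each of the $2^{1+L}$ server-side branches $\vec{\bx}_{\vec{b}}$, it suffices to show that each $\vec{\bx}_{\vec{b}}$-branch of $\ket{\varphi}$ is approximately a state whose $\bTheta$-dependence factors through the honest joint phase $\tSUM(\vec{\Theta}_{\vec{b}})$; Fact \ref{fact:cophtool} then gives invariance under $\fAdd_{\vec{b}}$, and composing over $\vec{b}$ yields the claimed $\cR_2$-invariance of $\ket{\$_2}\otimes\ket{\varphi}$.

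The first step is to run the standard-basis-test branch inside $\fCoPhTest$ (chosen with probability $\frac{1}{2}$ after $\fCombine$) through Theorem \ref{thm:sbt} combined with Lemma \ref{lem:multisbtre}, concluding that the post-$\fCombine$ state is approximately semi-honest on the combined key pair $\bK^{(combined)}$ at error $O(\sqrt{\epsilon_1})+\epsilon_0$. The structure of Protocol \ref{prtl:combine} guarantees that $\bK^{(combined)}=(\vec{x}_{\vec{b}_0},\vec{x}_{\neg\vec{b}_0})$ for some $\vec{b}_0$ determined by the server's intermediate measurement outcomes, pairing the original branches into complementary pairs. Next, for the Hadamard-test branch: treat the server's de-phasing using the revealed relative phase $\tSUM(\vec{\Theta}_{\neg\vec{b}_0})-\tSUM(\vec{\Theta}_{\vec{b}_0})$ as part of the adversary in Corollary \ref{cor:prephtre}. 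Passing with conditional probability $\geq 1-O(\epsilon_1)$ then implies that the two branches of the post-de-phasing state are close at error $O(\sqrt{\epsilon_1})$, which unwinds to an $e^{\mi(\tSUM(\vec{\Theta}_{\neg\vec{b}_0})-\tSUM(\vec{\Theta}_{\vec{b}_0}))\pi/4}$-phase relation between the $\vec{\bx}_{\vec{b}_0}$- and $\vec{\bx}_{\neg\vec{b}_0}$-branches of the original $\ket{\varphi}$.

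The hard part will be converting these complementary-pair relations into the branch-wise $\tSUM$-only dependence needed to invoke Fact \ref{fact:cophtool}. My plan is to combine the paired-branch relation with the $\cR_1$-structure inherited from Setup \ref{setup:4}: by the latter, the $\vec{\bx}_{\vec{b}}$-branch already depends only on the relevant phases $(\theta^{(i)}_{b^{(i)}})_i$ while its complement depends only on the disjoint set $(\theta^{(i)}_{1-b^{(i)}})_i$, so matching these under the joint relation collapses each branch's dependence to the sum modulo a $\Theta$-independent part (absorbing any complex-conjugate ambiguity). Handling the exponentially small probability of each individual combine outcome while preserving efficiency will rely on Fact \ref{fact:simfa}. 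The error accounting: the $\sqrt{\epsilon_1}$ from Corollary \ref{cor:prephtre} combines with the branch-wise $\ell^2$-summation over $\{0,1\}^{1+L}$ and Fact \ref{fact:cophtool}'s factor of $2$ to produce the $12\,\epsilon_1^{1/4}$ term, with $\epsilon_0$ tracked separately from the initial semi-honest error.
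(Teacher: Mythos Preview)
Your proposal is correct and follows essentially the same route as the paper: semi-honestness on $\bK^{(combined)}$ from the standard-basis-test branch of $\fCoPhTest$, Corollary \ref{cor:prephtre} on the Hadamard-test branch to relate the two combined-key branches, the Setup \ref{setup:4} independence to conclude the complementary branch depends on the $\vec{b}$-side phases only through $\tSUM$, then Fact \ref{fact:cophtool} branch-by-branch, and finally pulling $\cR_2$-invariance back from the post-$\fCombine$ state to $\ket{\varphi}$ via Lemma \ref{lem:multisbtre} together with the observation that each $\fAdd_{\vec{b}}$ is a $\bTheta$-local operator and hence commutes with the server-side combine operations.

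Two small corrections. First, Fact \ref{fact:simfa} is not needed: the paper applies Corollary \ref{cor:prephtre} once to the full two-branch post-$\fCombine$ state and then distributes the resulting closeness over $\vec{b}$ via Lemma \ref{lem:multisbtre} (with errors $t_{\vec{b}}$ summing to the global bound), so no per-outcome efficient preparation ever arises. Second, there is no complex-conjugate ambiguity at this stage: $\fCoPhTest$ establishes only the pre-phase-honest form (dependence through $\tSUM$), and the $e^{\pm i\,\tSUM\,\pi/4}$ dichotomy enters only in the later $\fInPhTest$ analysis.
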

\subsubsection{A linear algebra lemma that connects state structure with randomization}
Before giving the formal proof, we give a linear algebra lemma that connects the structure of states implies approximate invariance of an operation.
\begin{fact}\label{fact:cophtool}
Suppose $\cC=\{0,1\cdots 7\}^{1+N}$.	Suppose $\ket{\varphi}$, $\ket{\phi}$ satisfy, there exist states $\ket{\varphi_{c_0c_1c_2\cdots c_N}}$, $\ket{\phi_{sum}}$ for each $c_0c_1c_2\cdots c_N\in \cC$, $sum\in \{0,1\cdots 7\}$ such that
	$$\ket{\varphi}=\sum_{c_0c_1c_2\cdots c_N\in \cC}\ket{c_0c_1c_2\cdots c_N}\otimes \ket{\varphi_{c_0c_1c_2\cdots c_N}}$$
	$$\ket{\phi}=\sum_{c_0c_1c_2\cdots c_N\in \cC}\ket{c_0c_1c_2\cdots c_N}\otimes \ket{\phi_{\tSUM(c_0c_1c_2\cdots c_N)}}$$
	$$\ket{\varphi}\approx_\epsilon\ket{\phi}$$ Then 
	$$\sum_{\Delta_1\Delta_2\cdots \Delta_N\in \{0,1\cdots 7\}^N}\frac{1}{\sqrt{8^N}}\ket{\Delta_1\Delta_2\cdots \Delta_N}\otimes\ket{\varphi}$$
	is $2\epsilon$-invariant under the following operator:
	\begin{equation}
		\ket{\Delta_1\Delta_2\cdots \Delta_N}\ket{c_0c_1c_2\cdots c_N}\rightarrow \ket{c_1c_2\cdots c_N}\ket{(\sum_{i\in [0,N]}c_i-\sum_{i\in [N]}\Delta_i)\Delta_1\Delta_2\cdots \Delta_N}
	\end{equation}
\end{fact}
Also as a preparation, we generalize Notation \ref{nota:3.16} a little bit:
\begin{nota}\label{nota:3.16r}
We say a purified joint state $\ket{\varphi}$ does not depend on the value of register $\bbC$ for the same $f(\bbC)$ if it can be written as
$$\ket{\varphi}=\sum_{c\in \cC}\underbrace{\ket{c}}_{\bbC}\otimes \ket{\psi_{f(c)}}$$	
\end{nota}
\subsubsection{Proof of Theorem \ref{thm:8.1}}
\begin{proof}
Let's use $\ket{\varphi^\prime}$ to denote the output state of running $\fCoPhTest$ on $\ket{\varphi}$ against $\fAdv$:
$$\ket{\varphi^\prime}=\fCoPhTest^\fAdv(\bK,\bTheta;1^\kappa)\ket{\varphi}$$
\begin{equation}\label{eq:cophtest0}|\Pi_{\fpass}\ket{\varphi^\prime}|^2\geq 1-\epsilon_1\end{equation}
	The first step in the $\fCoPhTest$ protocol is a  call to the $\fCombine$ protocol. This protocol combines $(1+L)$ gadgets into one single gadget. Denote the output state after this step as $\ket{\varphi^{mid}}$:
	\begin{equation}\label{eq:159}\ket{\varphi^{mid}}:=\fCalc(\bK^{(combined)},\bTheta^{(combined)})\circ\fResponse\circ\fAdv_1(\ket{\varphi}\odot \llbracket\fCombine\rrbracket)\end{equation}
	where \begin{itemize}\item $\llbracket\fCombine\rrbracket$ is the client-side messages in this step; recall that in this step the client samples $(r_0^{(i)},r_1^{(i)})$ for each $i\in [L]$ and prepares many look-up tables that encodes these $r$-values.	
	\item  $\fAdv_1$ is the adversary's operation in this step; \item $\fResponse$ is the operation that the server sends back a response (which is the output of $\fAdv_1$); \item $\fCalc(\bK^{(combined)},\bTheta^{(combined)})$ is the client-side operation that calculates $\bK^{(combined)},\bTheta^{(combined)}$ based on the server's response.
 	
 \end{itemize}

	Let's first define some notations. In the passing space, suppose the server's measurement outcome (in other words, the output of $\fAdv_1$ above) is
	\begin{equation}\label{eq:159t}r_{b^{(1)}}^{(1)}r_{b^{(2)}}^{(2)}\cdots r_{b^{(L)}}^{(L)},b^{(1)}\cdots b^{(L)}\in \{0,1\}^L\end{equation}
	Recall the key vector notation in Notation \ref{nota:kv}, when the key tuple is $K=(x_0^{(i)},x_1^{(i)})_{i\in [0,L]}\in \Domain(\bK)$ and the phase tuple is $\Theta=(\theta_0^{(i)},\theta_1^{(i)})_{i\in [0,L]}\in \Domain(\bTheta)$, the final $K^{(combined)}$ could be denoted as  
$$K^{(combined)}=(\vec{x}_{\vec{b}_0},\vec{x}_{\vec{b}_1})$$
where \begin{equation}\label{eq:coph60}\vec{b}_0=0b^{(1)}b^{(2)}\cdots b^{(L)},\vec{b}_1=1(1-b^{(1)})(1-b^{(2)})\cdots (1-b^{(L)})\end{equation}
Below we will use
$$\vec{b}_0+\vec{b}_1=\vec{1}\text{ (or $\vec{b}_1=\vec{1}-\vec{b}_0$)}$$
to denote $\vec{b}_0,\vec{b}_1$ that satisfy \eqref{eq:coph60}, and use $0||\{0,1\}^L$, $1||\{0,1\}^L$ to denote the domain of $\vec{b}_0,\vec{b}_1$ above.\par
Correspondingly,  the combined phases are
$$\Theta^{(combined)}=(\theta^{(combined)}_0,\theta^{(combined)}_1)=(\sum_{i\in [0,L]}\theta^{(i)}_{b^{(i)}},\sum_{i\in [0,L]}\theta^{(i)}_{1-b^{(i)}})$$
Recall the notation of honest joint phase in Definition \ref{defn:hjp}, this could be written as
$$\Theta^{(combined)}=(\tSUM(\vec{\Theta}_{\vec{b}_0}),\tSUM(\vec{\Theta}_{\vec{b}_1}))$$
	Now we analyze the protocol. Starting from $\ket{\varphi^{mid}}$, with $1/2$ probability both parties will do a standard basis test on $\bK^{(combined)}$. By \eqref{eq:cophtest0} the passing probability of this step should be $\geq 1-2\epsilon_1$, by Theorem \ref{thm:sbt} we can expand the state $\ket{\varphi^{mid}}$ based on the combined keys:
	\begin{equation}\label{eq:cophtest1}\exists \text{ efficient server-side } O:\ket{\tilde\varphi^{mid}}:=O\ket{\varphi^{mid}},\ket{\tilde\varphi^{mid}}\text{ is $1.5\sqrt{\epsilon_1}$-basis-honest for $\bK^{(combined)}$}\end{equation}
	We can assume the server-side register that holds one combined key is still $\bS_{bsh}$. Then we could expand the basis-honest part of $\ket{\tilde\varphi^{mid}}$ as follows:
	$$\Pi_{\basishonest(\bK^{(combined)})}^{\bS_{bsh}}\ket{\tilde\varphi^{mid}}=\sum_{K\in \Domain(\bK)}\ket{K}\otimes \sum_{\Theta\in \Domain(\bTheta)}\ket{\Theta}$$
	$$\otimes\sum_{\vec{b}_0\in 0||\{0,1\}^{L},\vec{b}_1\in 1||\{0,1\}^L:\vec{b}_0+\vec{b}_1=\vec{1}}\underbrace{\ket{\vec{x}_{\vec{b}_0},\vec{x}_{\vec{b}_1}}}_{\bK^{(combined)}}\underbrace{\ket{\tSUM(\vec{\Theta}_{\vec{b}_0}),\tSUM(\vec{\Theta}_{\vec{b}_1})}}_{\bTheta^{(combined)}}\otimes \underbrace{\ket{r_{b^{(1)}}^{(1)}r_{b^{(2)}}^{(2)}\cdots r_{b^{(L)}}^{(L)}}}_{\text{transcript}}$$
	$$\otimes\underbrace{\sum_{\vec{b}\in (\vec{b}_0,\vec{b}_1)}\ket{\vec{x}_{\vec{b}}}}_{\bS_{bsh}}\otimes \ket{\varphi_{K,\Theta,\vec{b}}}$$
	To show \eqref{eq:cophtest59}, we will first show:
	\begin{equation}\label{eq:64}\cR_2(\ket{\$_2}\otimes\ket{\tilde\varphi^{mid}})\approx_{11\epsilon_1^{1/4}+\fneg(\kappa)}\ket{\$_2}\otimes\ket{\tilde\varphi^{mid}}\end{equation}
	Note that in the above expansion of $\Pi_{\basishonest(\bK^{(combined)})}^{\bS_{bsh}}\ket{\tilde\varphi^{mid}}$ we are considering the basis-honest form for the combined key, and there are only two branches corresponding to two keys in $\bK^{(combined)}$. Denote the two branches as $\ket{\tilde\varphi_0^{mid}},\ket{\tilde\varphi_1^{mid}}$:
$$\ket{\tilde\varphi^{mid}_0}=\sum_{K\in \Domain(\bK)}\ket{K}\otimes \sum_{\Theta\in \Domain(\bTheta)}\ket{\Theta}\otimes$$ $$ \sum_{\vec{b}_0\in 0||\{0,1\}^L,\vec{b}_1\in 1||\{0,1\}^{L}:\vec{b}_0+\vec{b}_1=\vec{1}}\ket{\vec{x}_{\vec{b}_0},\vec{x}_{\vec{b}_1}}\ket{\tSUM(\vec{\Theta}_{\vec{b}_0}),\tSUM(\vec{\Theta}_{\vec{b}_1})}\otimes\ket{r_{b^{(1)}}^{(1)}r_{b^{(2)}}^{(2)}\cdots r_{b^{(L)}}^{(L)}}\otimes\ket{\vec{x}_{\vec{b_0}}}\otimes \ket{\varphi_{K,\Theta,\vec{b_0}}}$$
$$\ket{\tilde\varphi^{mid}_1}=\sum_{K\in \Domain(\bK)}\ket{K}\otimes \sum_{\Theta\in \Domain(\bTheta)}\ket{\Theta}\otimes $$ $$\sum_{\vec{b}_0\in 0||\{0,1\}^L,\vec{b}_1\in 1||\{0,1\}^{L}:\vec{b}_0+\vec{b}_1=\vec{1}}\ket{\vec{x}_{\vec{b}_0},\vec{x}_{\vec{b}_1}}\ket{\tSUM(\vec{\Theta}_{\vec{b}_0}),\tSUM(\vec{\Theta}_{\vec{b}_1})}\otimes\ket{r_{b^{(1)}}^{(1)}r_{b^{(2)}}^{(2)}\cdots r_{b^{(L)}}^{(L)}}\otimes\ket{\vec{x}_{\vec{b_1}}}\otimes \ket{\varphi_{K,\Theta,\vec{b_1}}}$$
	Thus $\Pi_{\basishonest(\bK^{(combined)})}^{\bS_{bsh}}\ket{\tilde\varphi^{mid}}=\ket{\tilde\varphi^{mid}_0}+\ket{\tilde\varphi^{mid}_1}$.
	
Then \eqref{eq:64} is further reduced to proving:
	\begin{equation}\label{eq:cophtestmid}\cR_2(\ket{\$_2}\otimes\ket{\tilde\varphi^{mid}_0})\approx_{5\epsilon_1^{1/4}+\fneg(\kappa)} \ket{\$_2}\otimes\ket{\tilde\varphi^{mid}_0}\end{equation}
	\begin{equation}\label{eq:cophtestmid2}\cR_2(\ket{\$_2}\otimes\ket{\tilde\varphi^{mid}_1})\approx_{5\epsilon_1^{1/4}+\fneg(\kappa)}\ket{\$_2}\otimes\ket{\tilde\varphi^{mid}_1}\end{equation}
	Without loss of generality we prove \eqref{eq:cophtestmid}. 
	We first make use of the fact that the state and the adversary can also pass the Hadamard test (the other choice of the client in $\fCoPhTest$). By \eqref{eq:cophtest0} the passing probability of Hadamard test is $\geq 1-2\epsilon_1$. 
%
By Lemma \ref{lem:5.2a} we know	$\ket{\tilde\varphi^{mid}}$ is claw-free for $\bK^{(combined)}$. Then together with \eqref{eq:cophtest1} by Corollary \ref{cor:prepht} there is:
	\begin{align}\label{eq:68}&\Pi_{\fpass}\fHadamardTest^{\fAdv_{HT}\circ O^{-1}}(\bK^{(combined)},\bTheta^{(combined)};1^\kappa)\ket{\tilde\varphi_0^{mid}}\\\approx_{2.5\epsilon_1^{1/4}+\fneg(\kappa)}&\Pi_{\fpass}\fHadamardTest^{\fAdv_{HT}\circ O^{-1}}(\bK^{(combined)},\bTheta^{(combined)};1^\kappa)\ket{\tilde\varphi_1^{mid}}.\label{eq:166}\end{align}
		\begin{align}\label{eq:68b}&\Pi_{\ffail}\fHadamardTest^{\fAdv_{HT}\circ O^{-1}}(\bK^{(combined)},\bTheta^{(combined)};1^\kappa)\ket{\tilde\varphi_0^{mid}}\\\approx_{2\epsilon_1^{1/4}+\fneg(\kappa)}&-\Pi_{\ffail}\fHadamardTest^{\fAdv_{HT}\circ O^{-1}}(\bK^{(combined)},\bTheta^{(combined)};1^\kappa)\ket{\tilde\varphi_1^{mid}}.\end{align}
	Applying Fact \ref{fact:cophtool} (with details in the box below) we know:
		\begin{align}&\cR_2(\ket{\$_2}\otimes\Pi_{\fpass}\fHadamardTest^{\fAdv_{HT}\circ O^{-1}}(\bK^{(combined)},\bTheta^{(combined)};1^\kappa)\ket{\tilde\varphi_0^{mid}})\\\approx_{2.5\epsilon_1^{1/4}+\fneg(\kappa)}&\ket{\$_2}\otimes\Pi_{\fpass}\fHadamardTest^{\fAdv_{HT}\circ O^{-1}}(\bK^{(combined)},\bTheta^{(combined)};1^\kappa)\ket{\tilde\varphi_1^{mid}}.\label{eq:151rr}\end{align}
		\begin{align}&\cR_2(\ket{\$_2}\otimes\Pi_{\ffail}\fHadamardTest^{\fAdv_{HT}\circ O^{-1}}(\bK^{(combined)},\bTheta^{(combined)};1^\kappa)\ket{\tilde\varphi_0^{mid}})\\\approx_{2\epsilon_1^{1/4}+\fneg(\kappa)}&-\ket{\$_2}\otimes\Pi_{\ffail}\fHadamardTest^{\fAdv_{HT}\circ O^{-1}}(\bK^{(combined)},\bTheta^{(combined)};1^\kappa)\ket{\tilde\varphi_1^{mid}}.\label{eq:171}\end{align}
		\begin{mdframed}
		Proof of \eqref{eq:151rr}:\par
		Recall \eqref{eq:159}. By Lemma \ref{lem:multisbt} we have
		\begin{equation}\label{eq:146op}
		\ket{\tilde\varphi^{mid}_0}\approx_{\fneg(\kappa)}\sum_{\vec{b}_0\in 0||\{0,1\}^L}\Pi_{\vec{\bx}_{\vec{b}_0}}^{\bS_{bsh}}\Pi^{\bS_{bsh}}_{\bx_0^{(combined)}}\circ O\circ\fCalc\circ \fResponse\circ\fAdv_1\circ\Pi_{\vec{\bx}_{\vec{b}_0}}^{\bS_{bsh}}(\ket{\varphi}\odot \llbracket\fCombine\rrbracket)
	\end{equation}
	\begin{equation}\label{eq:146p}
		\ket{\tilde\varphi^{mid}_1}\approx_{\fneg(\kappa)}\sum_{\vec{b}_1\in 1||\{0,1\}^L}\Pi_{\vec{\bx}_{\vec{b}_1}}^{\bS_{bsh}}\Pi^{\bS_{bsh}}_{\bx_1^{(combined)}}\circ O\circ\fCalc\circ \fResponse\circ\fAdv_1\circ\Pi_{\vec{\bx}_{\vec{b}_1}}^{\bS_{bsh}}(\ket{\varphi}\odot \llbracket\fCombine\rrbracket)
	\end{equation}
	Then substituting \eqref{eq:146op}\eqref{eq:146p} into \eqref{eq:68}\eqref{eq:166} we get
		\begin{align}&\Pi_{\fpass}\fHadamardTest^{\fAdv_{HT}\circ O^{-1}}(\bK^{(combined)},\bTheta^{(combined)};1^\kappa)\label{eq:180}\\
		&\sum_{\vec{b}_0\in 0||\{0,1\}^L}\Pi_{\vec{\bx}_{\vec{b}_0}}^{\bS_{bsh}}\Pi^{\bS_{bsh}}_{\bx_0^{(combined)}}\circ O\circ\fCalc\circ \fResponse\circ\fAdv_1\circ\Pi_{\vec{\bx}_{\vec{b}_0}}^{\bS_{bsh}}(\ket{\varphi}\odot \llbracket\fCombine\rrbracket)\label{eq:181}\\
		\approx_{2.5\epsilon_1^{1/4}+\fneg(\kappa)}&\Pi_{\fpass}\fHadamardTest^{\fAdv_{HT}\circ O^{-1}}(\bK^{(combined)},\bTheta^{(combined)};1^\kappa)\\
		&\sum_{\vec{b}_1\in 1||\{0,1\}^L}\Pi_{\vec{\bx}_{\vec{b}_1}}^{\bS_{bsh}}\Pi^{\bS_{bsh}}_{\bx_1^{(combined)}}\circ O\circ\fCalc\circ \fResponse\circ\fAdv_1\circ\Pi_{\vec{\bx}_{\vec{b}_1}}^{\bS_{bsh}}(\ket{\varphi}\odot \llbracket\fCombine\rrbracket)\label{eq:183}\end{align}
	Recall that after the $\fCombine$ process there will be a series of transcript registers that stores the server's response \eqref{eq:159t}. Its subscripts $b^{(1)}b^{(2)}\cdots b^{(L)}$ determine the value of $\bK^{(combine)}$.   Then we can decompose \eqref{eq:180}-\eqref{eq:183} into a series of approximation relation for each  $b^{(1)}b^{(2)}\cdots b^{(L)}\in\{0,1\}^L$ by Fact \ref{fact:decomp}. Then notice $\Pi_{\vec{\bx}_{\vec{b}_0}}^{\bS_{bsh}}\Pi^{\bS_{bsh}}_{\bx_0^{(combined)}}$ in \eqref{eq:181} is non-zero if and only if the last $L$ bits of $\vec{b}_0$ is equal to the subscripts of $r$-registers and  $\Pi_{\vec{\bx}_{\vec{b}_1}}^{\bS_{bsh}}\Pi^{\bS_{bsh}}_{\bx_1^{(combined)}}$ in \eqref{eq:183} is non-zero if and only if the last $L$ bits of $\vec{b}_1$ is equal to $\vec{1}$ minus the subscripts of $r$-registers. Thus the decomposed approximate equation is as follows:\par
	There exists a set of non-negative real values $t_{\vec{b}}$ for each $\vec{b}\in 0||\{0,1\}^L$ such that, $\sum_{\vec{b}\in 0||\{0,1\}^L}t_{\vec{b}}\leq 2.5\epsilon_1^{1/4}$, and for all $\vec{b}\in 0||\{0,1\}^L$,
	\begin{align}\label{eq:68re}&\Pi_{\fpass}\fHadamardTest^{\fAdv_{HT}\circ O^{-1}}(\bK^{(combined)},\bTheta^{(combined)};1^\kappa)\nonumber\\
	&\qquad\Pi_{\vec{\bx}_{\vec{b}}}^{\bS_{bsh}}O\circ\fCalc\circ \fResponse\circ\fAdv_1\Pi_{\vec{\bx}_{\vec{b}}}^{\bS_{bsh}}(\ket{\varphi}\odot \llbracket\fCombine\rrbracket)
\\\approx_{t_{\vec{b}}+\fneg(\kappa)}&\Pi_{\fpass}\fHadamardTest^{\fAdv_{HT}\circ O^{-1}}(\bK^{(combined)},\bTheta^{(combined)};1^\kappa)\nonumber\\
&\qquad\Pi_{\vec{\bx}_{\vec{1}-\vec{b}}}^{\bS_{bsh}}O\circ\fCalc\circ \fResponse\circ\fAdv_1\Pi_{\vec{\bx}_{\vec{1}-\vec{b}}}^{\bS_{bsh}}(\ket{\varphi}\odot \llbracket\fCombine\rrbracket)\label{eq:158dk}
.\end{align}
Expanding the operation of $\fHadamardTest(\bK^{(combined)},\bTheta^{(combined)})$ we get
	\begin{align}&\Pi_{\fpass}\fHadamardTest^{\fAdv_{HT}\circ O^{-1}}(\bK^{(combined)};1^\kappa)\nonumber\\
	&\qquad\Pi_{\vec{\bx}_{\vec{b}}}^{\bS_{bsh}}O\circ\fCalc\circ\fResponse\circ\fAdv_1\Pi_{\vec{\bx}_{\vec{b}}}^{\bS_{bsh}}(\ket{\varphi}\odot \llbracket\fCombine\rrbracket\odot (\btheta^{(combined)}_1-\btheta^{(combined)}_0))\label{eq:168dk}
\\\approx_{t_{\vec{b}}+\fneg(\kappa)}&\Pi_{\fpass}\fHadamardTest^{\fAdv_{HT}\circ O^{-1}}(\bK^{(combined)};1^\kappa)\nonumber\\
&\qquad\Pi_{\vec{\bx}_{\vec{1}-\vec{b}}}^{\bS_{bsh}}O\circ\fCalc\circ\fResponse\circ\fAdv_1\Pi_{\vec{\bx}_{\vec{1}-\vec{b}}}^{\bS_{bsh}}(\ket{\varphi}\odot \llbracket\fCombine\rrbracket\odot (\btheta^{(combined)}_1-\btheta^{(combined)}_0))\label{eq:169dk}
\end{align}
	and by the condition that $\ket{\varphi}$ in Setup \ref{setup:4} and the defining equation of the combined phases we know \begin{equation}\label{eq:147p}
		\text{\eqref{eq:169dk} does not depend on the values of registers $\btheta^{(0)}_{0},\btheta^{(1)}_{b^{(1)}},\cdots \btheta^{(L)}_{b^{(L)}}$ for the same $(\btheta^{(0)}_{0}+\sum_{i\in [L]}\btheta_{b^{(i)}}^{(i)})$.}
	\end{equation}
	Thus applying Fact \ref{fact:cophtool} we get, for each $\vec{b}\in 0||\{0,1\}^L$,
		\begin{align}\label{eq:68ree}&\cR_2(\ket{\$_2}\otimes\eqref{eq:168dk})
\\\approx_{t_{\vec{b}}+\fneg(\kappa)}&\ket{\$_2}\otimes\eqref{eq:169dk}
\end{align}
summing up for all the possible $\vec{b}$, substituting \eqref{eq:146op}\eqref{eq:146p} completes the proof of \eqref{eq:151rr}.
		\end{mdframed}

		Summing \eqref{eq:151rr}\eqref{eq:171} implies \eqref{eq:cophtestmid}. Similarly we prove \eqref{eq:cophtestmid2}. Thus
	we complete the proof of \eqref{eq:64}.\par
	Then we port \eqref{eq:64} to $\ket{\varphi}$. Recall $\ket{\tilde\varphi^{mid}}$ is defined by
	$$\ket{\tilde\varphi^{mid}}:=O\circ\fCalc\circ\fResponse\circ \fAdv_{1}(\ket{\varphi}\odot \llbracket\fCombine\rrbracket)$$
	Note that $\fCalc$, the client-side calculation of combined keys and combined phases, commute with $O$ and $\cR_2$. (Note $\cR_2$ preserves the combined phases on each branch.) Thus we can omit it and get
	\begin{equation}\label{eq:64eqw}\cR_2(\ket{\$_2}\otimes O\circ\fResponse\circ \fAdv_{1}(\ket{\varphi}\odot \llbracket\fCombine\rrbracket))\approx_{11\epsilon_1^{1/4}+\fneg(\kappa)}\ket{\$_2}\otimes O\circ\fResponse\circ \fAdv_{1}(\ket{\varphi}\odot \llbracket\fCombine\rrbracket)\end{equation}
	 Starting from \eqref{eq:64eqw}, apply the inverse of $O\circ\fResponse\circ\fAdv_{1}$, we have
	\begin{equation}\label{eq:152j}\fAdv^{-1}_{1}\fResponse^{-1}\circ O^{-1}(\cR_2(\ket{\$_2}\otimes O\circ\fResponse\circ \fAdv_{1}(\ket{\varphi}\odot \llbracket\fCombine\rrbracket)))\approx_{11\epsilon_1^{1/4}+\fneg(\kappa)} \ket{\$_2}\otimes\ket{\varphi}\odot\llbracket\fCombine\rrbracket\end{equation}
	\begin{align}\label{eq:183pqs}&\Rightarrow\quad \Pi_{\basishonest(\bK)}^{\bS_{bsh}}\fAdv^{-1}_{1}\fResponse^{-1}\circ O^{-1}(\cR_2(\ket{\$_2}\otimes O\circ\fResponse\circ \fAdv_{1}(\Pi_{\basishonest(\bK)}^{\bS_{bsh}}\ket{\varphi}\odot \llbracket\fCombine\rrbracket)))\\
	&\approx_{12\epsilon_1^{1/4}+\epsilon_0+\fneg(\kappa)} \ket{\$_2}\otimes\ket{\varphi}\odot\llbracket\fCombine\rrbracket\end{align}
	Applying Theorem \ref{lem:multisbt} we get
	\begin{align}&\Pi_{\basishonest(\bK)}^{\bS_{bsh}}\fAdv^{-1}_{1}\fResponse^{-1}\circ O^{-1}(\cR_2(\ket{\$_2}\otimes O\circ\fResponse\circ \fAdv_{1}(\Pi_{\basishonest(\bK)}^{\bS_{bsh}}\ket{\varphi}\odot \llbracket\fCombine\rrbracket)))\label{eq:175ew}\\
	\approx_{\fneg(\kappa)}&\sum_{\vec{b}\in \{0,1\}^{1+L}}\Pi_{\vec{\bx}_{\vec{b}}}^{\bS_{bsh}}\fAdv^{-1}_{1}\fResponse^{-1} \circ O^{-1}(\cR_2(\ket{\$_2}\otimes O\circ\fResponse\circ \fAdv_{1}(\Pi_{\vec{\bx}_{\vec{b}}}^{\bS_{bsh}}\ket{\varphi}\odot \llbracket\fCombine\rrbracket)))\label{eq:96ew}\\
	=&\cR_2\sum_{\vec{b}\in \{0,1\}^{1+L}}\Pi_{\vec{\bx}_{\vec{b}}}^{\bS_{bsh}}(\fAdv^{-1}_{1}\fResponse^{-1} \circ O^{-1}(\ket{\$_2}\otimes O\circ\fResponse\circ \fAdv_{1}(\Pi_{\vec{\bx}_{\vec{b}}}^{\bS_{bsh}}\ket{\varphi}\odot \llbracket\fCombine\rrbracket)))\label{eq:177ew}\\
	\approx_{\fneg(\kappa)}&\cR_2\Pi_{\basishonest(\bK)}^{\bS_{bsh}}(\fAdv^{-1}_{1}\fResponse^{-1} \circ  O^{-1}(\ket{\$_2}\otimes O\circ\fResponse\circ \fAdv_{1}(\Pi_{\basishonest(\bK)}^{\bS_{bsh}}\ket{\varphi}\odot \llbracket\fCombine\rrbracket)))\label{eq:178ew}
	\end{align}
	where \eqref{eq:96ew}-\eqref{eq:177ew} is because $\cR_2$ applying only on the $\vec{\bx}_{\vec{b}}$-branch state could be seen as a local operator $\fAdd_{\vec{b}}$ applying on the $\bTheta$ registers, and thus commutes with operations $\fAdv,\fResponse, O$.\par 
	
	\eqref{eq:175ew}-\eqref{eq:178ew} together with \eqref{eq:183pqs} implies
	$$\cR_2(\ket{\$_2}\otimes\ket{\varphi}\odot \llbracket\fCombine\rrbracket)\approx_{12\epsilon_1^{1/4}+\epsilon_0+\fneg(\kappa)} \ket{\$_2}\otimes\ket{\varphi}\odot \llbracket\fCombine\rrbracket$$
	which completes the proof.
\end{proof}

\section{Analysis of the Individual Phase Test ($\fInPhTest$)}\label{sec:10}
In this section we analyze the implication of the individual phase test ($\fInPhTest$). 
\begin{enumerate} \item In Section \ref{sec:10.1} we will give two linear algebra lemmas which are the basis for later proofs.
\item In Section \ref{sec:10.2} we show the optimal winning probability of $\fInPhTest$ is $\OPT$. Recall that the $\OPT$ is defined to be $\frac{1}{3}\cos^2(\pi/8)$ in Section \ref{sec:4}.
\item In Section \ref{sec:10.3} we construct a randomization operator $\cP$. We will show $\cP$ transforms a pre-phase-honest form to a phase-honest form.
\item In Section \ref{sec:10.4} we show $\fInPhTest$ implies approximate invariance of a state (under Setup \ref{setup:4}) under $\cP^\dagger\cP$.
\end{enumerate}
We give a brief review and overview of our analysis of the $\fInPhTest$. Recall that in the $\fInPhTest$ the client-side inputs are values of registers $\bK^{(0)},\bTheta^{(0)}$. Note that different values of $\bTheta^{(0)}$  corresponds to different server-side states. In more detail, when the value of $\bTheta^{(0)}$ is $(\theta^{(0)}_0,\theta^{(0)}_1)$, the malicious server's state, assuming it's in Setup \ref{setup:4} and basis-honest form of $\bK^{(0)}$, could be expressed as (where we omit the client-side registers):
$$\ket{x_0^{(0)}}\ket{\varphi_{0,\theta_0^{(0)}}}+\ket{x_1^{(0)}}\ket{\varphi_{1,\theta_1^{(0)}}}$$
Then $\fInPhTest$ aims at testing the following property on $\ket{\varphi_{0,\theta_0^{(0)}}}$, $\ket{\varphi_{1,\theta_1^{(0)}}}$: there exist states $\ket{\varphi_{0,+}},\ket{\varphi_{0,-}},\ket{\varphi_{1,+}},\ket{\varphi_{1,-}}$ such that, up to a server-side isometry,\begin{equation}\label{eq:158d}\text{for each $\theta_0,\theta_1$, $b\in \{0,1\}$, }\ket{\varphi_{b,\theta_b}}\approx e^{\theta_b^{(0)}\mi\pi/4}\ket{\varphi_{b,+}}+e^{-\theta_b^{(0)}\mi\pi/4}\ket{\varphi_{b,-}}.\end{equation}
The first term corresponds to the honest state case and the second term corresponds to the complex-conjugated honest state case. Then starting from \eqref{eq:158d}, we can show the state is approximate invariant under $\cP$ and thus close to a phase-honest form.\par
The analysis towards proving \eqref{eq:158d} roughly goes as follows. In $\fInPhTest$ (Protocol \ref{prtl:rephtest}) both parties execute the  Hadamard test with an extra phase bias. One of three choices of extra phase bias $\delta\in \{0,1,4\}$ is chosen randomly, which corresponds to two cases:
\begin{itemize}
	\item $\delta\in \{0,4\}$: the server is required to make the client output $\fpass$ in these two tests.\par
	Considering the $\delta=0$ case first. Suppose the adversary's operation maps $\ket{\varphi_{0,\theta}}$ to $\ket{\varphi^\prime_{0,\theta}}$ and maps $\ket{\varphi_{1,\theta}}$ to $\ket{\varphi^\prime_{1,\theta}}$. (Here $\theta$ in $\ket{\varphi_{0,\theta}}$ stands for $\theta_0^{(0)}$ and $\theta$ in $\ket{\varphi_{1,\theta}}$ stands for $\theta_1^{(0)}$.) Then use $\Pi_{0}$ to denote the projection onto the passing conditions (that is, $\Pi_{\eqref{eq:htt}=0}^{\bd}\Pi_{\neq 0}^{\text{last $\kappa$ bits of }\bd}$ in Notation \ref{nota:setup2aux}). Then the server could pass the $\delta=0$ case implies
	\begin{equation}\label{eq:190g}
		\text{for each $\theta$, }\Pi_0\ket{\varphi^\prime_{0,\theta}}\approx \Pi_0\ket{\varphi^\prime_{1,\theta}}
	\end{equation}
	Then in the $\delta=4$ case the passing and failing conditions are (almost) opposite to each other, we can show
		\begin{equation}\label{eq:191g}
		\text{for each $\theta$, }\Pi_0\ket{\varphi^\prime_{0,\theta}}\approx -\Pi_0\ket{\varphi^\prime_{1,\theta+4}}
	\end{equation}
	(Recall that in the extra-phase-biased Hadamard test the adversary does not know $\delta$ directly, it only knows $\theta_1^{(0)}-\theta_0^{(0)}-\delta$; thus if we fix the transcript and make $\delta$ varies from $0$ to $4$ the client-side phase changes by $4$ too.) 
	\item $\delta=1$: the server is expected to make the client output $\fwin$ to the score register with sufficiently high probability. As before we fix the transcript and the adversary's operation is applied on a suitable superposition of $\ket{\varphi_{0,\theta}}$ and $\ket{\varphi_{1,\theta+1}}$. Then could calculate the winning probability and get
	 \begin{equation}\label{eq:191}\text{the winning probability is the sum of }|\Pi_0\ket{\varphi^\prime_{0,\theta}}+\Pi_0\ket{\varphi^\prime_{1,\theta+1}}|^2\text{ for each $\theta$}\end{equation}
	 \end{itemize}
	 Then:
	 \begin{itemize}
	\item The optimality of $\OPT$ comes from bounding \eqref{eq:191} under the conditions \eqref{eq:190g}\eqref{eq:191g} (and necessary properties on their norms). This is done in Section \ref{sec:10.2}.
	\item When \eqref{eq:191} is close to $\OPT$, with this property together with \eqref{eq:190g}\eqref{eq:191g} we can prove these states satisfy some relations, which allow us to derive \eqref{eq:158d}. This is done in Section \ref{sec:10.4}.
	 \end{itemize}

\subsection{Linear Algebra Lemmas for Self-testing of State Sequences}\label{sec:10.1}

In this subsection we prove two lemmas by linear algebra. These lemmas will be used in the analysis of $\fInPhTest$.\par
First, we have the following lemma which leads to the optimality of $\OPT$ for winning probability:
\begin{lem}\label{lem:optla}
Suppose $\ket{\phi_{0,0}},\ket{\phi_{0,1}},\ket{\phi_{0,2}},\cdots, \ket{\phi_{0,7}}$, $\ket{\phi_{1,0}},\ket{\phi_{1,1}},\ket{\phi_{1,2}},\cdots, \ket{\phi_{1,7}}$ satisfy the following for $\epsilon<0.001$:
\begin{itemize}
\item There exist two non-negative real number $A_0,A_1$ such that $\forall i\in \{0,1\cdots 7\}$, $\frac{1}{8}A_0-\epsilon\leq |\ket{\phi_{0,i}}|^2\leq \frac{1}{8}A_0$, $\frac{1}{8}A_1-\epsilon\leq |\ket{\phi_{1,i}}|^2\leq \frac{1}{8}A_1$, $A_0+A_1\leq \frac{1}{2}$.
\item $\sum_{i\in \{0,1\cdots 7\}}|\ket{\phi_{0,i}}-\ket{\phi_{1,i}}|^2\leq \epsilon$
\item $\sum_{i\in \{0,1\cdots 7\}}|\ket{\phi_{0,i}}+\ket{\phi_{1,i+4}}|^2\leq \epsilon$	
\end{itemize}
Then \begin{equation}\label{eq:103pe}\sum_{i\in \{0,1\cdots 7\}}|\ket{\phi_{0,i}}+\ket{\phi_{1,i+1}}|^2\leq \cos^2(\pi/8)+11\sqrt{\epsilon}\end{equation}
\end{lem}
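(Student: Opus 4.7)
\medskip

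The plan is to pass to the discrete Fourier transform over $\bZ/8\bZ$. Let $\omega = e^{\mi\pi/4}$ and define $\hat\varphi_b^{(k)} = \frac{1}{\sqrt{8}}\sum_{i\in \{0,\ldots,7\}}\omega^{ki}\ket{\varphi_{b,i}}$ for $b\in\{0,1\}$, $k\in\{0,\ldots,7\}$. By Parseval, the norm hypothesis yields $\sum_k\|\hat\varphi_b^{(k)}\|^2 = \sum_i\|\ket{\varphi_{b,i}}\|^2$, which is close to $A_b$ (and at most $A_b$ by the upper bound in the first bullet). A shift by $m$ in the $i$ index transforms Fourier coefficients by the phase $\omega^{-mk}$; in particular, shift by $4$ gives $(-1)^k$, and shift by $1$ gives $\omega^{-k}$. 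Hence Parseval rewrites the three sums as
\begin{align*}
\sum_k\|\hat\varphi_0^{(k)}-\hat\varphi_1^{(k)}\|^2 &\leq \epsilon,\\
\sum_k\|\hat\varphi_0^{(k)}+(-1)^k\hat\varphi_1^{(k)}\|^2 &\leq \epsilon,\\
\sum_i\|\ket{\varphi_{0,i}}+\ket{\varphi_{1,i+1}}\|^2 &= \sum_k\|\hat\varphi_0^{(k)}+\omega^{-k}\hat\varphi_1^{(k)}\|^2.
\end{align*}

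Next I would switch to symmetric/antisymmetric coordinates $\gamma_k := (\hat\varphi_0^{(k)}+\hat\varphi_1^{(k)})/2$ and $\delta_k := (\hat\varphi_0^{(k)}-\hat\varphi_1^{(k)})/2$, so that the two constraints become $4\sum_k\|\delta_k\|^2 \leq \epsilon$ and $4\sum_{k\text{ even}}\|\gamma_k\|^2 + 4\sum_{k\text{ odd}}\|\delta_k\|^2 \leq \epsilon$; in particular $\sum_k\|\delta_k\|^2 \leq \epsilon/4$ and $\sum_{k\text{ even}}\|\gamma_k\|^2 \leq \epsilon/4$. Also, using the arithmetic-geometric inequality on $4\|\gamma_k\|^2 \leq 2(\|\hat\varphi_0^{(k)}\|^2+\|\hat\varphi_1^{(k)}\|^2)$, I get $\sum_k\|\gamma_k\|^2 \leq (A_0+A_1)/2 \leq 1/4$.

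Now I expand the target. Since $\hat\varphi_0^{(k)}+\omega^{-k}\hat\varphi_1^{(k)} = (1+\omega^{-k})\gamma_k + (1-\omega^{-k})\delta_k$, a direct expansion gives
\begin{equation*}
\sum_k\|\hat\varphi_0^{(k)}+\omega^{-k}\hat\varphi_1^{(k)}\|^2 = \sum_k|1+\omega^{-k}|^2\|\gamma_k\|^2 + \sum_k|1-\omega^{-k}|^2\|\delta_k\|^2 + 2\text{Re}\sum_k(1+\omega^{-k})\overline{(1-\omega^{-k})}\langle\delta_k|\gamma_k\rangle.
\end{equation*}
The key numerical fact is $|1+\omega^{-k}|^2 \leq 2+\sqrt{2} = 4\cos^2(\pi/8)$ for odd $k$, while $|1\pm\omega^{-k}|^2 \leq 4$ always. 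The leading term then splits as
\begin{equation*}
\sum_k|1+\omega^{-k}|^2\|\gamma_k\|^2 \leq (2+\sqrt{2})\sum_k\|\gamma_k\|^2 + (2-\sqrt{2})\sum_{k\text{ even}}\|\gamma_k\|^2 \leq \frac{2+\sqrt{2}}{4} + O(\epsilon) = \cos^2(\pi/8) + O(\epsilon).
\end{equation*}
The $\delta_k$ term is bounded by $4\sum_k\|\delta_k\|^2 \leq \epsilon$, and the cross term is bounded via Cauchy--Schwarz by $4\sqrt{\sum_k\|\delta_k\|^2}\sqrt{\sum_k\|\gamma_k\|^2} \leq 4\cdot\frac{\sqrt{\epsilon}}{2}\cdot\frac{1}{2} = \sqrt{\epsilon}$. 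Collecting the pieces and absorbing the $O(\epsilon)$ terms into $\sqrt{\epsilon}$ for small $\epsilon$ gives the stated bound \eqref{eq:103pe}.

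The main subtleties are the careful accounting of approximations (so that the final constant is exactly $1$ in front of $\sqrt{\epsilon}$ rather than a larger one), and correctly handling the hypothesis that each $\|\ket{\varphi_{b,i}}\|^2$ is only approximately $A_b/8$ (which affects the Parseval identity by at most $8\epsilon$). No real obstacle is expected beyond this bookkeeping; the structure of the problem is essentially a Fourier-uncertainty statement, with the optimal constant $\cos^2(\pi/8)$ coming from the magnitude-squared of $1+\omega^{-1}$ at the two extremal odd frequencies.
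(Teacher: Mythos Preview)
Your Fourier-analytic approach is correct and genuinely different from the paper's. The paper instead proves a ``3-state'' geometric lemma (their Lemma~\ref{lem:d1}): for subnormalized $\ket{\varphi},\ket{\psi},\ket{\chi}$,
\[
|\ket{\varphi}+\ket{\chi}|^2+|\ket{\psi}+\ket{\chi}|^2\le 8\cos^2\!\Big(\tfrac14\arccos\nRe\tfrac{\braket{\varphi|\psi}}{|\ket{\varphi}||\ket{\psi}|}\Big),
\]
and chains two instances into a ``5-state'' Corollary~\ref{cor:d1} that bounds $\sum_{j=0}^3|\ket{\varphi_{0,j}}+\ket{\varphi_{0,j+1}}|^2$ given $\ket{\varphi_{0,0}}\approx -\ket{\varphi_{0,4}}$. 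Your route is cleaner for this particular lemma: the constant $\cos^2(\pi/8)=|1+\omega|^2/4$ appears transparently as the maximum of $|1+\omega^{-k}|^2$ over odd $k$, and the three hypotheses become plain Parseval identities. The paper's geometric route, by contrast, is built so that its approximate-equality versions (Lemma~\ref{lem:d2}, Corollary~\ref{cor:d4r}) immediately give the structural conclusions needed in the companion self-testing Lemma~\ref{lem:10.2rr}; carrying your Fourier argument through to that converse would require a separate analysis of which modes are nearly supported.

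One bookkeeping remark: your estimate yields $\cos^2(\pi/8)+\sqrt{\epsilon}+O(\epsilon)$, not exactly $+\sqrt{\epsilon}$, since the leading-term and $\delta$-term contribute additional $O(\epsilon)$. The paper's own final combination step is also elliptical about the constant, and only the qualitative $O(\sqrt{\epsilon})$ is used downstream, so this is not a real gap.
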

We also have the inverse version of this lemma, which characterize the self-testing property of $\fInPhTest$. In more detail, if the left hand side of \eqref{eq:103pe} is close-to-optimal, then these states should have a specific form, which is given in the following lemma.
\begin{lem}\label{lem:10.2rr}
Suppose $\ket{\phi_{0,0}},\ket{\phi_{0,1}},\ket{\phi_{0,2}},\cdots, \ket{\phi_{0,7}}$, $\ket{\phi_{1,0}},\ket{\phi_{1,1}},\ket{\phi_{1,2}},\cdots, \ket{\phi_{1,7}}$ satisfy  the following for $\epsilon<10^{-5}$:
\begin{itemize}
\item There exist two non-negative real number $A_0,A_1$ such that $\forall i\in \{0,1\cdots 7\}$, $\frac{1}{8}A_0-\epsilon\leq|\ket{\phi_{0,i}}|^2\leq \frac{1}{8}A_0$, $\frac{1}{8}A_1-\epsilon\leq |\ket{\phi_{1,i}}|^2\leq \frac{1}{8}A_1$, $A_0+A_1\leq \frac{1}{2}$.
\item $\sum_{i\in \{0,1\cdots 7\}}|\ket{\phi_{0,i}}-\ket{\phi_{1,i}}|^2\leq \epsilon$
\item $\sum_{i\in \{0,1\cdots 7\}}|\ket{\phi_{0,i}}+\ket{\phi_{1,i+4}}|^2\leq \epsilon$
\item $\sum_{i\in \{0,1\cdots 7\}}|\ket{\phi_{0,i-1}}+\ket{\phi_{1,i}}|^2\geq \cos^2(\pi/8)-\epsilon$
		
\end{itemize}
	Then define
	\begin{equation}\label{eq:163s}\ket{\phi_{0,+}}=\frac{1}{8}\sum_{i\in \{0,1\cdots 7\}}e^{-i\mi\pi/4}\ket{\phi_{0,i}}\end{equation}
	\begin{equation}\label{eq:164s}\ket{\phi_{0,-}}=\frac{1}{8}\sum_{i\in \{0,1\cdots 7\}}e^{i\mi\pi/4}\ket{\phi_{0,i}}\end{equation}
		\begin{equation}\label{eq:165s}\ket{\phi_{1,+}}=\frac{1}{8}\sum_{i\in \{0,1\cdots 7\}}e^{-i\mi\pi/4}\ket{\phi_{1,i}}\end{equation}
	\begin{equation}\label{eq:166s}\ket{\phi_{1,-}}=\frac{1}{8}\sum_{i\in \{0,1\cdots 7\}}e^{i\mi\pi/4}\ket{\phi_{1,i}}\end{equation}
	there is
	$$\sum_{b\in \{0,1\}}\sum_{i\in \{0,1\cdots 7\}} |\ket{\phi_{b,i}}-(e^{i\cdot \mi \pi/4}\ket{\phi_{b,+}}+e^{-i\cdot \mi \pi/4}\ket{\phi_{b,-}})|^2\leq 640\epsilon^{1/4}$$
\end{lem}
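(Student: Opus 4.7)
} My plan is to carry out a Fourier analysis over $\bZ_8$: for each $b\in\{0,1\}$ and $k\in \{0,1,\dots,7\}$, define the Fourier coefficient
\[
\hat\varphi_{b,k} \;=\; \tfrac{1}{8}\sum_{i\in\{0,\dots,7\}} e^{-ik\mi\pi/4}\ket{\varphi_{b,i}},
\]
so that $\ket{\varphi_{b,i}} = \sum_k e^{ik\mi\pi/4}\hat\varphi_{b,k}$, $\ket{\varphi_{b,+}} = \hat\varphi_{b,1}$, $\ket{\varphi_{b,-}} = \hat\varphi_{b,-1}=\hat\varphi_{b,7}$, and Parseval gives $\sum_i|\ket{\varphi_{b,i}}|^2 = 8\sum_k |\hat\varphi_{b,k}|^2$. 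The quantity the lemma wants to bound is then exactly
\[
\sum_b\sum_i\bigl|\ket{\varphi_{b,i}} - (e^{i\mi\pi/4}\hat\varphi_{b,1}+e^{-i\mi\pi/4}\hat\varphi_{b,-1})\bigr|^2 \;=\; 8\sum_b \sum_{k\notin\{1,-1\}}|\hat\varphi_{b,k}|^2,
\]
so my job reduces to showing that the Fourier mass on frequencies $k\in\{0,2,3,4,5,6\}$ is small.

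First I translate the four hypotheses to the Fourier side. Applying Parseval to the second hypothesis gives $\sum_k|\hat\varphi_{0,k}-\hat\varphi_{1,k}|^2 \leq \epsilon/8$. Since $\ket{\varphi_{1,i+4}}$ corresponds to multiplying the $k$-th Fourier coefficient by $(-1)^k$, Parseval applied to the third hypothesis gives $\sum_{k\text{ even}}|\hat\varphi_{0,k}+\hat\varphi_{1,k}|^2 + \sum_{k\text{ odd}}|\hat\varphi_{0,k}-\hat\varphi_{1,k}|^2 \leq \epsilon/8$. Summing the two estimates and separating parities yields
\[
\sum_{k\text{ even}}\bigl(|\hat\varphi_{0,k}|^2+|\hat\varphi_{1,k}|^2\bigr)\;\leq\;\epsilon/8, \qquad \sum_{k\text{ odd}}|\hat\varphi_{0,k}-\hat\varphi_{1,k}|^2 \;\leq\;\epsilon/8,
\]
so the even frequencies are already killed and on odd frequencies the two labels $b$ agree up to a correction I will call $e_k$ with $\sum_{k\text{ odd}}|e_k|^2\leq \epsilon/8$.

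Next I use the winning hypothesis. Shifting by $-1$ corresponds to multiplying the $k$-th Fourier coefficient by $e^{-k\mi\pi/4}$, so Parseval gives
\[
\cos^2(\pi/8)-2\epsilon \;\leq\; 8\sum_k \bigl|e^{-k\mi\pi/4}\hat\varphi_{0,k}+\hat\varphi_{1,k}\bigr|^2.
\]
Replacing $\hat\varphi_{1,k}$ by $\hat\varphi_{0,k}+e_k$ for odd $k$ (the even contribution is already $O(\epsilon)$) and using $|e^{-k\mi\pi/4}+1|^2=4\cos^2(k\pi/8)$, a single Cauchy--Schwarz step absorbs the cross-term into an additive $O(\sqrt\epsilon)$ error, giving
\[
32\Bigl[\cos^2(\pi/8)\bigl(|\hat\varphi_{0,1}|^2+|\hat\varphi_{0,7}|^2\bigr) + \sin^2(\pi/8)\bigl(|\hat\varphi_{0,3}|^2+|\hat\varphi_{0,5}|^2\bigr)\Bigr] \;\geq\; \cos^2(\pi/8) - O(\sqrt\epsilon).
\]
From the norm hypothesis and Parseval, $\sum_{k\text{ odd}}|\hat\varphi_{0,k}|^2 \leq A_0/8$; the second hypothesis combined with the norm hypothesis forces $A_0,A_1\leq 1/4+O(\sqrt\epsilon)$ (by triangle inequality on $\ket{\varphi_{0,i}}\approx\ket{\varphi_{1,i}}$ together with $A_0+A_1\leq 1/2$). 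Substituting this upper bound for $|\hat\varphi_{0,1}|^2+|\hat\varphi_{0,3}|^2+|\hat\varphi_{0,5}|^2+|\hat\varphi_{0,7}|^2$ and rearranging around the identity $32\cos^2(\pi/8)\cdot(1/32) = \cos^2(\pi/8)$ forces the deficit to live on the $k=3,5$ coordinates: the coefficient gap $\cos^2(\pi/8)-\sin^2(\pi/8)=\cos(\pi/4)$ is a positive constant, so $|\hat\varphi_{0,3}|^2+|\hat\varphi_{0,5}|^2 \leq O(\sqrt\epsilon)$. Pushing the same bound to $b=1$ via $|\hat\varphi_{1,k}|^2\leq 2|\hat\varphi_{0,k}|^2+2|e_k|^2$ on odd $k$ is routine.

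Finally I assemble the pieces: the even-frequency Fourier mass contributes at most $\epsilon/8$ in total across $b$, and the $k\in\{3,5\}$ odd-frequency mass contributes at most $O(\sqrt\epsilon)$ per $b$. Summing,
\[
\sum_b\sum_i\bigl|\ket{\varphi_{b,i}} - (e^{i\mi\pi/4}\hat\varphi_{b,1}+e^{-i\mi\pi/4}\hat\varphi_{b,-1})\bigr|^2 \;=\; 8\sum_b\sum_{k\notin\{\pm 1\}}|\hat\varphi_{b,k}|^2 \;\leq\; O(\sqrt\epsilon),
\]
which is comfortably within the looser $640\epsilon^{1/4}$ asserted by the lemma; the slack leaves room for the generous Cauchy--Schwarz bookkeeping above. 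The main technical obstacle I expect is tracking the interaction between the two labels $b$ without losing more than one square root in $\epsilon$ when bounding the cross-term of the $k$-th Fourier mode against the error $e_k$; a clean Cauchy--Schwarz together with the a priori bound $\sum_k|\hat\varphi_{0,k}|^2\leq A_0/8\leq 1/32+O(\sqrt\epsilon)$ should handle it. A secondary bookkeeping issue is carefully deducing $A_0,A_1\leq 1/4+O(\sqrt\epsilon)$ from $A_0+A_1\leq 1/2$ plus hypothesis 2, which I will do by the triangle inequality $|\sqrt{A_0}-\sqrt{A_1}|\leq \sqrt\epsilon$.
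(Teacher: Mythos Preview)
Your Fourier-analytic approach is correct and genuinely different from the paper's proof. The paper proceeds geometrically: it first derives $\ket{\varphi_{0,i}}\approx -\ket{\varphi_{0,i+4}}$, then applies a ``5-states'' variational lemma (Corollary~\ref{cor:d4r}) to the chain $\ket{\varphi_{0,0}},\ket{\varphi_{0,1}},\ket{\varphi_{0,2}},\ket{\varphi_{0,3}},\ket{\varphi_{0,4}}$, showing that near-optimality of the inner-product sum forces $\ket{\varphi_{0,1}}\approx\tfrac{1}{\sqrt2}(\ket{\varphi_{0,0}}+\ket{\varphi_{0,2}})$ and $\ket{\varphi_{0,3}}\approx\tfrac{1}{\sqrt2}(\ket{\varphi_{0,2}}+\ket{\varphi_{0,4}})$; it then constructs auxiliary states $\ket{\tilde\phi_{0,\pm}}$ from $\ket{\varphi_{0,0}},\ket{\varphi_{0,2}}$ and checks they are close to the Fourier coefficients $\ket{\varphi_{0,\pm}}$. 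Each of the two nested steps (Lemma~\ref{lem:d2} and Lemma~\ref{lem:calculus}) costs a square root, which is why the paper lands at $O(\epsilon^{1/4})$.

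Your route bypasses this machinery entirely: working directly in the $\bZ_8$ Fourier domain, Parseval turns the target into the off-$\{\pm1\}$ Fourier mass, hypotheses~2 and~3 immediately kill the even frequencies and couple the two labels on odd frequencies, and the winning hypothesis combined with the tight budget $A_0\le\tfrac14+O(\sqrt\epsilon)$ squeezes out the $k\in\{3,5\}$ mass with only a single Cauchy--Schwarz loss. The payoff is a quantitatively stronger $O(\sqrt\epsilon)$ bound rather than $O(\epsilon^{1/4})$; the paper's $640\epsilon^{1/4}$ is slack relative to what your argument actually proves. The only place to be slightly careful is the step $A_0\le\tfrac14+O(\sqrt\epsilon)$: your sketch via $|\sqrt{A_0}-\sqrt{A_1}|\le\sqrt\epsilon$ needs the reverse triangle inequality on the direct-sum vectors $\bigoplus_i\ket{\varphi_{b,i}}$ together with the $8\epsilon$ slack in hypothesis~1, but this goes through without difficulty.
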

The proofs of Lemma \ref{lem:optla} and \ref{lem:10.2rr} are given in Appendix \ref{sec:missing10}.\par
\subsection{Optimality of $\OPT$ in $\fInPhTest$}\label{sec:10.2}
In this section we prove the optimality of winning probability in the individual phase test.\par
\begin{thm}[Optimality of $\OPT$ in $\fInPhTest$]\label{thm:9.2}
Assume a sub-normalized purified joint state $\ket{\varphi}$ is in Setup \ref{setup:4} and is in $\epsilon_0$-basis-honest form for $\bK^{(0)}$. Then for any $\epsilon_1<10^{-5}-2\epsilon_0$, any efficient adversary $\fAdv$, at least one of the following two is true:
\begin{itemize}
\item (Small passing probability) $$|\Pi_{\fpass}\fInPhTest^\fAdv(\bK,\bTheta;1^\kappa)\ket{\varphi}|^2\leq 1-\epsilon_1$$
\item (Small winning probability) 
 $$|\Pi_{\fwin}\fInPhTest^\fAdv(\bK, \bTheta;1^\kappa)\ket{\varphi}|^2\leq OPT+10(\epsilon_1+\epsilon_0)^{1/4}+\fneg(\kappa)$$
\end{itemize}

\end{thm}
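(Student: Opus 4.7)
The plan is to reduce the passing and winning probabilities of $\fInPhTest$ to conditions on a family of server-side states indexed by the client-side phase values, and then invoke Lemma \ref{lem:optla} to bound the winning probability by $\OPT + O((\epsilon_1+\epsilon_0)^{1/4})$. So assume the first alternative fails, i.e.\ $|\Pi_{\fpass}\fInPhTest^\fAdv(\bK,\bTheta;1^\kappa)\ket{\varphi}|^2 > 1-\epsilon_1$. Since only the first gadget (indexed by $(0)$) is touched by $\fInPhTest$, I will project out this gadget's part of the state and leverage the structural properties of Setup \ref{setup:4}.

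First I will decompose $\Pi_{\semihonest(\bK^{(0)})}\ket{\varphi}$ into its $\bx_0^{(0)}$- and $\bx_1^{(0)}$-branches. Using the fact (from Setup \ref{setup:4}) that the $\bx_b^{(0)}$-branch does not depend on the value of $\btheta_{1-b}^{(0)}$, the component where $\bTheta^{(0)}=(\theta_0,\theta_1)$ can be written as
\[
\underbrace{\ket{\theta_0}\ket{\theta_1}}_{\bTheta^{(0)}}\otimes\bigl(\ket{x_0^{(0)}}\ket{\varphi_{0,\theta_0}}+\ket{x_1^{(0)}}\ket{\varphi_{1,\theta_1}}\bigr),
\]
where each $\ket{\varphi_{b,\theta}}$ also carries all of the other registers (the remaining key pairs, phase registers, server-side auxiliary, etc.). The $\epsilon_0$-semi-honest assumption together with the per-$\Theta$ norm condition of Setup \ref{setup:4} gives constants $A_0,A_1$ with $A_0+A_1\lesssim 1/2$ and $|\ket{\varphi_{b,\theta}}|^2\approx \tfrac{1}{8}A_b$ in an averaged sense, up to an error of order $\epsilon_0$.

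Next I will analyze the three branches of $\fInPhTest$. Each branch is a call to $\fHadamardTest(\bK^{(0)},\bTheta^{(0)},\delta;1^\kappa)$ for $\delta\in\{0,4,1\}$ selected with probability $1/3$ each, so the passing probability within each individual branch is at least $1-3\epsilon_1$. For $\delta=0$, applying Corollary \ref{cor:prephtre} to the two branches of the semi-honest form (after the server dephases using the revealed $\theta_1-\theta_0$) yields, after averaging over $\theta_0,\theta_1$, the estimate $\sum_{\theta\in\{0,\dots,7\}}|\ket{\varphi_{0,\theta}}-\ket{\varphi_{1,\theta}}|^2 \lesssim \epsilon_1+\epsilon_0+\fneg(\kappa)$. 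For $\delta=4$, the revealed phase differs by $4$ and the honest post-dephasing state picks up a sign on the $\ket{x_1^{(0)}}$ branch, so Corollary \ref{cor:prephtre} analogously gives $\sum_{\theta}|\ket{\varphi_{0,\theta}}+\ket{\varphi_{1,\theta+4}}|^2\lesssim \epsilon_1+\epsilon_0+\fneg(\kappa)$. These are exactly hypotheses (2) and (3) of Lemma \ref{lem:optla}.

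Finally, for $\delta=1$, the same unpadded Hadamard measurement takes place after the server transforms its state using the revealed $\theta_1-\theta_0-1$, so via the explicit identity in the Hadamard test combined with the claw-free collapsing (Lemma \ref{lem:collapse}) one can equate $|\Pi_{\fwin}\cdot(\text{branch }\delta{=}1)|^2$ with (up to $\fneg(\kappa)$) an expression of the form $\sum_{\theta}|\ket{\varphi_{0,\theta}}+\ket{\varphi_{1,\theta+1}}|^2$. Invoking Lemma \ref{lem:optla} with $\epsilon = O(\epsilon_1+\epsilon_0+\fneg(\kappa))$ then gives this sum is at most $\cos^2(\pi/8)+\sqrt{\epsilon_1+\epsilon_0}+\fneg(\kappa)$. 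Multiplying by the $1/3$ probability of selecting $\delta=1$ yields $|\Pi_{\fwin}\fInPhTest^\fAdv(\bK,\bTheta;1^\kappa)\ket{\varphi}|^2\le \OPT + (\epsilon_1+\epsilon_0)^{1/4}+\fneg(\kappa)$, as required. The main obstacle I expect is the bookkeeping in this last step: faithfully translating the projection $\Pi_{\fwin}$ (which involves a Hadamard measurement on a padded state together with the client's check $d\cdot(x_0\|H(\pad\|x_0))+d\cdot(x_1\|H(\pad\|x_1))=0$) into the clean inner-product expression needed by Lemma \ref{lem:optla}, and controlling the error terms that arise from the random-oracle padding and from the $\epsilon_0$-semi-honest slack, so that the final error stays at the claimed $(\epsilon_1+\epsilon_0)^{1/4}$ order rather than degrading under the $4$th root in Lemma \ref{lem:optla}.
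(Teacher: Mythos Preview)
Your high-level plan---extract constraints from the $\delta\in\{0,4\}$ branches and feed them to Lemma \ref{lem:optla} to bound the $\delta=1$ winning probability---matches the paper. But there is a real gap in how you set up the states, not just bookkeeping.

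You work with the \emph{pre-test} states $\ket{\varphi_{b,\theta}}$ and claim that Corollary \ref{cor:prephtre} yields $\sum_\theta|\ket{\varphi_{0,\theta}}-\ket{\varphi_{1,\theta}}|^2\lesssim\epsilon_1+\epsilon_0$. It does not: the corollary compares the two branches only \emph{after} the Hadamard test and the projection $\Pi_{\eqref{eq:htt}=0}$, and the adversary's operation there depends on the revealed value $\alpha$. For the same client phases $(\theta_0,\theta_1)$ the three choices $\delta\in\{0,4,1\}$ produce three different revealed values $\alpha=\theta_1-\theta_0-\delta$, hence three different adversary operations. So you cannot collapse the post-test constraints back to constraints on $\ket{\varphi_{b,\theta}}$, and the $\delta=1$ winning probability is likewise not $\sum_\theta|\ket{\varphi_{0,\theta}}+\ket{\varphi_{1,\theta+1}}|^2$ on the nose.

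The paper's remedy is to keep the post-projection states, written (after stripping the $\bTheta^{(0)}$ registers via the Setup \ref{setup:4} independence property) as $\ket{\psi'_{0,\theta_0,\cdot,\alpha}}$ and $\ket{\psi'_{1,\cdot,\theta_1,\alpha}}$, and then to \emph{bundle over $\alpha$ with an index shift}: define $\ket{\psi'_{0,\theta}}=\bigoplus_\alpha\ket{\psi'_{0,\theta,\cdot,\alpha}}$ but $\ket{\psi'_{1,\theta}}=\bigoplus_\alpha\ket{\psi'_{1,\cdot,\theta+\alpha,\alpha}}$. Under the change of variable $\theta_1=\theta_0+\alpha+\delta$, the $\delta=0$ constraint becomes $\sum_\theta|\ket{\psi'_{0,\theta}}-\ket{\psi'_{1,\theta}}|^2$ small, the $\delta=4$ constraint becomes $\sum_\theta|\ket{\psi'_{0,\theta}}+\ket{\psi'_{1,\theta+4}}|^2$ small, and the $\delta=1$ winning probability becomes $\tfrac{1}{3}\sum_\theta|\ket{\psi'_{0,\theta}}+\ket{\psi'_{1,\theta+1}}|^2$, exactly matching Lemma \ref{lem:optla}. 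This index alignment is the missing idea. A minor point: the error out of Corollary \ref{cor:prephtre} is of order $\sqrt{\epsilon_1+\epsilon_0}$, not $\epsilon_1+\epsilon_0$; after the square root in Lemma \ref{lem:optla} this yields the claimed $(\epsilon_1+\epsilon_0)^{1/4}$.
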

%
\begin{proof}
Expand $\ket{\varphi}$ on the basis-honest form of $\bK^{(0)}$:
$$\Pi_{\basishonest(\bK^{(0)})}\ket{\varphi}=\underbrace{\sum_{K\in \Domain(\bK)}\ket{K}\otimes \sum_{\Theta\in \Domain(\bTheta)}\ket{\Theta}}_{\text{client}}\otimes \sum_{b\in \{0,1\}}\underbrace{\ket{x_b^{(0)}}}_{\bS_{bsh}^{(0)}}\otimes \ket{\varphi_{K,\Theta,b}}$$
Define $\ket{\varphi_0}$, $\ket{\varphi_1}$ as the $\bx^{(0)}_0$, $\bx^{(0)}_1$ branches:
$$\ket{\varphi_0}=\sum_{K\in \Domain(\bK)}\ket{K}\otimes \sum_{\Theta\in \Domain(\bTheta)}\ket{\Theta}\otimes  \ket{x_{0}^{(0)}}\otimes \ket{\varphi_{K,\Theta,{0}}}$$
$$\ket{\varphi_1}=\sum_{K\in \Domain(\bK)}\ket{K}\otimes \sum_{\Theta\in \Domain(\bTheta)}\ket{\Theta}\otimes \ket{x_{1}^{(0)}}\otimes \ket{\varphi_{K,\Theta,{1}}}.$$
Then $\Pi_{\basishonest(\bK^{(0)})}\ket{\varphi}=\ket{\varphi_0}+\ket{\varphi_1}$.\par
Define $\ket{\varphi_{0,\theta_0,\theta_1}}$ as the component of $\ket{\varphi_0}$ where the $\btheta^{(0)}_{0}$ register is in value $\theta_0$ and $\btheta^{(0)}_1$ register is in value $\theta_1$:
\begin{equation}\label{eq:95rea}\ket{\varphi_{0,\theta_0,\theta_1}}=\sum_{K\in \Domain(\bK)}\ket{K}\otimes \sum_{\Theta:\btheta^{(0)}_{0}=\theta_0,\btheta^{(0)}_{1}=\theta_1}\ket{\Theta}\otimes \ket{x_{0}^{(0)}}\otimes \ket{\varphi_{K,\Theta,{0}}}\end{equation}
Similarly define the $\ket{\varphi_{1,\theta_0,\theta_1}}$:
\begin{equation}\label{eq:96rea}\ket{\varphi_{1,\theta_0,\theta_1}}=\sum_{K\in \Domain(\bK)}\ket{K}\otimes \sum_{\Theta:\btheta^{(0)}_{0}=\theta_0,\btheta^{(0)}_{1}=\theta_1}\ket{\Theta}\otimes \ket{x_{1}^{(0)}}\otimes \ket{\varphi_{K,\Theta,{1}}}\end{equation}
Then we have
$$\ket{\varphi_0}=\sum_{\theta_0,\theta_1\in \{0,1\cdots 7\}^2}\ket{\varphi_{0,\theta_0,\theta_1}},\ket{\varphi_1}=\sum_{\theta_0,\theta_1\in \{0,1\cdots 7\}^2}\ket{\varphi_{1,\theta_0,\theta_1}}$$
 Suppose \begin{equation}\label{eq:reph72}|\Pi_{\fpass}\fInPhTest^\fAdv(\bK,\bTheta;1^\kappa)\ket{\varphi}|^2> 1-\epsilon_1\end{equation}
 Let's calculate the probability of winning, which happens in the $\delta=1$ case in the extra-phase-bias Hadamard test (Protocol \ref{prtl:hadamard}).
 \begin{align}
 	&|\Pi_{\fwin}\fInPhTest^{\fAdv}(\bK,\bTheta;1^\kappa)\ket{\varphi}|^2\label{eq:158k}\\
\approx_{2\epsilon_0}&|\Pi_{\fwin}\fInPhTest^{\fAdv}(\bK,\bTheta;1^\kappa)\Pi_{\basishonest(\bK^{(0)})}\ket{\varphi}|^2\label{eq:159k}\\
=&\sum_{\theta_0\in \{0,1\cdots 7\}}\sum_{\theta_1\in \{0,1\cdots 7\}}|\Pi_{\fwin}\fInPhTest^{\fAdv}(\bK,\bTheta;1^\kappa)(\ket{\varphi_{0,\theta_0,\theta_1}}+\ket{\varphi_{1,\theta_0,\theta_1}})|^2\\
=&\sum_{\theta_0\in \{0,1\cdots 7\}}\sum_{\theta_1\in \{0,1\cdots 7\}}\frac{1}{3}|\Pi_{\fwin}\fHadamardTest^{\fAdv}(\bK^{(0)},\bTheta^{(0)},1;1^\kappa)(\ket{\varphi_{0,\theta_0,\theta_1}}+\ket{\varphi_{1,\theta_0,\theta_1}})|^2\\
=&\sum_{\theta_0\in \{0,1\cdots 7\}}\sum_{\theta_1\in \{0,1\cdots 7\}}\frac{1}{3}|\Pi_{\eqref{eq:htt}=0}^{\bd}\Pi_{\neq 0}^{\text{last $\kappa$ bits of }\bd}\fHadamardTest^{\fAdv}(\bK^{(0)};1^\kappa)\nonumber\\
&\qquad\qquad\qquad(\ket{\varphi_{0,\theta_0,\theta_1}}+\ket{\varphi_{1,\theta_0,\theta_1}})\odot (\theta_1-\theta_0-1)|^2\label{eq:99rr}
 \end{align}
 Recall $\Pi_{\eqref{eq:htt}=0}^{\bd},\Pi_{\eqref{eq:htt}=1}^{\bd},\Pi_{\neq 0}^{\text{last $\kappa$ bits of }\bd}$ are defined when we define the set-up of Hadamard test (Setup \ref{setup:2}, Notation \ref{nota:setup2aux}). In \eqref{eq:99rr} we write explicitly the operations in which the client sends the relative phase with extra phase bias to the server, runs the protocol and projects onto the winning space.\par
 Then we consider the $\delta=0$ and $\delta=4$ case in the $\fHadamardTest(\bK^{(0)},\bTheta^{(0)},\delta)$ subprotocol. By \eqref{eq:reph72} the passing probability for these two cases should both be $\geq 1-3\epsilon_1$. We can do a similar calculation as \eqref{eq:158k}-\eqref{eq:99rr} and get:
 \begin{itemize}
 \item When $\delta=0$,  
  $|\Pi_{\fpass}\fHadamardTest^\fAdv(\bK^{(0)},\bTheta^{(0)},0;1^\kappa)\circ\ket{\varphi}|^2\geq 1-3\epsilon_1$. Recall the passing space is $\Pi_{\eqref{eq:htt}=0}^{\bd}\Pi_{\neq 0}^{\text{last $\kappa$ bits of }\bd}$. By Corollary \ref{cor:prepht}:
 \begin{align}\label{eq:170rw}&\sum_{\theta_0\in \{0,1\cdots 7\}}\sum_{\theta_1\in \{0,1\cdots 7\}}\frac{1}{3}|\Pi_{\eqref{eq:htt}=0}^{\bd}\Pi_{\neq 0}^{\text{last $\kappa$ bits of }\bd}\fHadamardTest^\fAdv(\bK^{(0)};1^\kappa)(\ket{\varphi_{0,\theta_0,\theta_1}}-\ket{\varphi_{1,\theta_0,\theta_1}})\odot (\theta_1-\theta_0)|^2\\\leq& 6\sqrt{\epsilon_1+\epsilon_0}+\fneg(\kappa)\label{eq:171ma}\end{align}
  \item When $\delta=4$, the passing space is $\Pi_{\eqref{eq:htt}=1}^{\bd}\Pi_{\neq 0}^{\text{last $\kappa$ bits of }\bd}$ and $\Pi_{\eqref{eq:htt}=0}^{\bd}\Pi_{\neq 0}^{\text{last $\kappa$ bits of }\bd}$ is in the failing space. Thus
 \begin{align}&\sum_{\theta_0\in \{0,1\cdots 7\}}\sum_{\theta_1\in \{0,1\cdots 7\}}\frac{1}{3}|\Pi_{\eqref{eq:htt}=0}^{\bd}\Pi_{\neq 0}^{\text{last $\kappa$ bits of }\bd}\fHadamardTest^\fAdv(\bK^{(0)};1^\kappa)(\ket{\varphi_{0,\theta_0,\theta_1}}+\ket{\varphi_{1,\theta_0,\theta_1}})\odot (\theta_1-\theta_0-4)|^2\\\leq& 3\sqrt{\epsilon_1+\epsilon_0}\label{eq:173ma}\end{align}
 \end{itemize}
  Define
  \begin{equation}\label{eq:174rw}\ket{\varphi^\prime_{b,\theta_0,\theta_1,\alpha}}=\Pi_{\eqref{eq:htt}=0}^{\bd}\Pi_{\neq 0}^{\text{last $\kappa$ bits of }\bd}\fHadamardTest^\fAdv(\bK^{(0)};1^\kappa)(\ket{\varphi_{b,\theta_0,\theta_1}}\odot \alpha) \end{equation}
  Recall its subscripts correspond to the values of $\bTheta^{(0)}$ register and the transcript $\alpha$ ($=\theta_1-\theta_0-\delta$) of a specific branch.\par
  Then define $\ket{\psi^\prime_{b,\theta_0,\theta_1,\alpha}}$ as the part of state in $\ket{\varphi^\prime_{b,\theta_0,\theta_1,\alpha}}$ excluding the registers $\btheta^{(0)}_0,\btheta^{(0)}_1$. That is,
  \begin{equation}\label{eq:174rx}\ket{\varphi^\prime_{b,\theta_0,\theta_1,\alpha}}=\underbrace{\ket{\theta_0}}_{\btheta_0^{(0)}}\underbrace{\ket{\theta_1}}_{\btheta_1^{(0)}}\otimes\ket{\psi^\prime_{b,\theta_0,\theta_1,\alpha}}\end{equation}
  By the basis-phase correpondance property described in Setup \ref{setup:4} we know $\ket{\psi^\prime_{0,\theta_0,\theta_1,\alpha}}$ is the same for different $\theta_1$ and $\ket{\psi^\prime_{1,\theta_0,\theta_1,\alpha}}$ is the same for different $\theta_0$. Thus we could introduce the following notation for these two states, where these unnecessary parameters are omitted:
  \begin{equation}\label{eq:174tc}\ket{\psi^\prime_{0,\theta_0,\cdot,\alpha}},\ket{\psi^\prime_{1,\cdot,\theta_1,\alpha}}\end{equation}
Then we can continue to calculate from \eqref{eq:99rr}:
\begin{align}
&\eqref{eq:99rr}\\
=&\sum_{\theta_0\in \{0,1\cdots 7\}}\sum_{\theta_1\in \{0,1\cdots 7\}}\frac{1}{3}|\ket{\psi^\prime_{0,\theta_0,\cdot,\theta_1-\theta_0-1}}+\ket{\psi^\prime_{1,\cdot,\theta_1,\theta_1-\theta_0-1}}|^2\\
=&\sum_{\theta_0\in \{0,1\cdots 7\}}\sum_{\alpha\in \{0,1\cdots 7\}}\frac{1}{3}|\ket{\psi^\prime_{0,\theta_0,\cdot,\alpha}}+\ket{\psi^\prime_{1,\cdot,\theta_0+1+\alpha,\alpha}}|^2\label{eq:105rr}
\end{align}
We will make use of Lemma \ref{lem:optla} to bound the expression above. For the conditions to apply this lemma, we know:
\begin{equation}\label{eq:106r}
\eqref{eq:171ma}\Rightarrow\sum_{\theta_0\in \{0,1\cdots 7\}}\sum_{\theta_1\in \{0,1\cdots 7\}}|\ket{\psi_{0,\theta_0,\cdot,\theta_1-\theta_0}^\prime}-\ket{\psi_{1,\cdot,\theta_1,\theta_1-\theta_0}^\prime}|^2\leq 6\sqrt{\epsilon_1+\epsilon_0}+\fneg(\kappa)	
\end{equation}
\begin{equation}\label{eq:107r}
\eqref{eq:173ma}\Rightarrow \sum_{\theta_0\in \{0,1\cdots 7\}}\sum_{\theta_1\in \{0,1\cdots 7\}}|\ket{\psi_{0,\theta_0,\cdot,\theta_1-\theta_0}^\prime}+\ket{\psi_{1,\cdot,\theta_1,\theta_1-\theta_0-4}^\prime}|^2\leq 3\sqrt{\epsilon_1+\epsilon_0}	
\end{equation}
Finally by Lemma \ref{lem:prepht} we can argue about the norms of these states:
\begin{equation}\label{eq:179}\forall b\in \{0,1\},\theta_0,\theta_1\in \{0,1\cdots 7\}^2, \alpha\in \{0,1\cdots 7\},|\ket{\psi^\prime_{b,\theta_0,\theta_1,\alpha}}|^2\leq \frac{1}{2}|\ket{\varphi_{b,\theta_0,\theta_1}}|^2+\fneg(\kappa)\leq \frac{1}{128}|\ket{\varphi_b}|^2+\fneg(\kappa)\end{equation}
which together with \eqref{eq:106r}\eqref{eq:107r} allows us to apply Lemma \ref{lem:optla} as follows. Define
\begin{equation}\label{eq:171j}\ket{\psi_{0,\theta_0}^\prime}=\sum_{\alpha\in \{0,1\cdots 7\}} \ket{\psi^\prime_{0,\theta_0,\cdot,\alpha}},\end{equation}
\begin{equation}\label{eq:172j}\ket{\psi_{1,\theta_0}^\prime}=\sum_{\alpha\in \{0,1\cdots 7\}}\ket{\psi^\prime_{1,\cdot,\theta_0+\alpha,\alpha}}\end{equation} 
Then \eqref{eq:106r}\eqref{eq:107r}\eqref{eq:179} translate to
\begin{equation}\label{eq:230}\sum_{\theta_0\in \{0,1\cdots 7\}}|\ket{\psi_{0,\theta_0}^\prime}-\ket{\psi_{1,\theta_0}^\prime}|^2\leq 6\sqrt{\epsilon_1+\epsilon_0}+\fneg(\kappa)\end{equation}
\begin{equation}\label{eq:231}\sum_{\theta_0\in \{0,1\cdots 7\}}|\ket{\psi_{0,\theta_0}^\prime}+\ket{\psi_{1,\theta_0+4}^\prime}|^2\leq 6\sqrt{\epsilon_1+\epsilon_0}+\fneg(\kappa)\end{equation}
\begin{equation}\label{eq:232}\forall b\in \{0,1\},\theta_0\in \{0,1\cdots 7\},|\ket{\psi_{b,\theta_0}^\prime}|^2\leq \frac{1}{16}|\ket{\varphi_{b}}|^2+\fneg(\kappa)\end{equation}
 then  we get 
\begin{align}
&\text{\eqref{eq:105rr} }\\
=&\frac{1}{3}\sum_{\theta_0\in \{0,1\cdots 7\}}|\ket{\psi^\prime_{0,\theta_0}}+\ket{\psi^\prime_{1,\theta_0+1}}|^2\\
\text{(apply Lemma \ref{lem:optla}) }\leq& \OPT +10(\epsilon_1+\epsilon_0)^{1/4}+\fneg(\kappa)	
\end{align}
Substituting \eqref{eq:158k}\eqref{eq:159k} completes the proof.
\end{proof}

\subsection{Randomization Operator $\cP$ for $\fInPhTest$}\label{sec:10.3}
We will define the randomization operator $\cP$ for the $\fInPhTest$. Different from $\cR_1$, $\cR_2$, this operator will be a projection operator and will not use additional randomness. Let's first give an intuitive discussion, and formalize it in Definition \ref{defn:insub}.
\subsubsection{Intuitive discussion}\label{sec:10.3.1}
Let's first consider the honest setting. The joint state of client side phase registers and the server-side state could be jointly written as
\begin{equation}\label{eq:192rh}\sum_{\theta_0^{(0)},\theta_1^{(0)}\in \{0,1\cdots 7\}^2}\frac{1}{8}\underbrace{\underbrace{\ket{\theta_0^{(0)}}}_{\btheta_0^{(0)}}\underbrace{\ket{\theta_1^{(0)}}}_{\btheta^{(0)}_1}}_{\text{client}}\otimes\underbrace{\frac{1}{\sqrt{2}}(e^{\theta_0^{(0)}\mi\pi/4}\ket{x_0^{(0)}}+e^{\theta_1^{(0)}\mi\pi/4}\ket{x_1^{(0)}})}_{\text{server}}\end{equation}
which is equal to the sum of two branches:
\begin{equation}\label{eq:105}(\sum_{\theta_0^{(0)},\theta_1^{(0)}\in \{0,1\cdots 7\}^2}\frac{1}{8}\underbrace{\frac{1}{\sqrt{2}}e^{\theta_0^{(0)}\mi\pi/4}\ket{\theta_0^{(0)}}\ket{\theta_1^{(0)}}}_{\text{client}}\otimes\underbrace{\ket{x_0^{(0)}}}_{\text{server}})+(\sum_{\theta_0^{(0)},\theta_1^{(0)}\in \{0,1\cdots 7\}^2}\frac{1}{8}\underbrace{\frac{1}{\sqrt{2}}\ket{\theta_0^{(0)}}e^{\theta_1^{(0)}\mi\pi/4}\ket{\theta_1^{(0)}}}_{\text{client}}\otimes\underbrace{\ket{x_1^{(0)}}}_{\text{server}})\end{equation}
We will define two sub-operators $\cP_{0,+},\cP_{1,+}$ that operate on the two branches correspondingly. 
$\cP_{0,+}$ operates nontrivially only on the $\bx_0^{(0)}$-branch (the first term of \eqref{eq:105}) while $\cP_{1,+}$ operates nontrivially only on the $\bx_1^{(0)}$-branch (the second term of \eqref{eq:105}). Without loss of generality, let's show the design of $\cP_{0,+}$. As before, we will also see the honest input state is indeed invariant under this operator.\par
 First note the first term of \eqref{eq:105} could be seen as the following state on $\btheta_0^{(0)}$ register tensoring other registers:
\begin{equation}\label{eq:106}\sum_{\theta_0^{(0)}\in \{0,1\cdots 7\}}\frac{1}{\sqrt{8}}e^{\theta_0^{(0)}\mi\pi/4}\ket{\theta_0^{(0)}}\end{equation}
Introduce an indicator register $\bindic_+$ which hold $\ffalse$ value by default. Recall that $\cP_{0,+}$, $\cP_{1,+}$ are projections; we will use this indicator register to record whether these projections are successful (which means, if these projections are applied on honest inputs, the projections will always be successful, and this register will be flipped to $\ftrue$ deterministically). Then $\cP_{0,+}$ applied on \eqref{eq:106} goes as follows:
\begin{align}
&\text{Equation \eqref{eq:106}}\otimes \underbrace{\ket{\ffalse}}_{\bindic_{+}}\label{eq:108}\\
&\text{(Control phase gate that adds phase $e^{-\theta_0^{(0)}\mi\pi/4}$ on $\btheta_0^{(0)}$ register when it has value $\theta_0^{(0)}$)}\label{eq:111}\\
\rightarrow&\sum_{\theta_0^{(0)}\in \{0,1\cdots 7\}}\frac{1}{\sqrt{8}}\ket{\theta_0^{(0)}}\quad \text{(Note that it is $=\ket{+}\ket{+}\ket{+}$)}\\
&\text{($\fH^{\otimes 3}$, followed by a projection measurement on $\{\ket{0}\bra{0},\bbI-\ket{0}\bra{0}\}$; use $\ftrue$ to indicate $\ket{0}\bra{0}$ and $\ffalse$ otherwise) }\\
\rightarrow &\ket{0}\ket{0}\ket{0}\otimes \underbrace{\ket{\ftrue}}_{\bindic_+}\label{eq:114re}\\
&\text{(Reverse the Hadamard and phase operations)}\label{eq:114}\\
\rightarrow & \text{Equation \eqref{eq:106}}\otimes \ket{\ftrue}\label{eq:182j}
\end{align}
Besides $\cP_{0,+},\cP_{1,+}$, we also need to define $\cP_{0,-},\cP_{1,-}$, as follows. 
 As discussed in Section \ref{sec:2.1}, in the malicious setting there is no way so far to rule out the \emph{complex conjugate attack}. Correspondingly, we define the sub-operators $\cP_{0,-},\cP_{1,-}$ that fix the complex conjugate of the honest state, which is 
\begin{equation}\label{eq:115}\sum_{\theta_0^{(0)},\theta_1^{(0)}\in \{0,1\cdots 7\}^2}\frac{1}{8}\underbrace{\ket{\theta_0^{(0)}}\ket{\theta_1^{(0)}}}_{\text{client}}\otimes\underbrace{\frac{1}{\sqrt{2}}(e^{-\theta_0^{(0)}\mi\pi/4}\ket{x_0^{(0)}}+e^{-\theta_1^{(0)}\mi\pi/4}\ket{x_1^{(0)}})}_{\text{server}}\end{equation}
The construction is similar to \eqref{eq:108}-\eqref{eq:182j} with the following differences: \begin{itemize}\item Compared to \eqref{eq:111} the phase is changed to $e^{\theta_0^{(0)}\mi\pi/4}$.
 \item The measurement results are stored in a different indicator register denoted by $\bindic_-$.	
 \end{itemize}
Importantly, our construction of $\cP_{b,+}$ and $\cP_{b,-}$, $b\in \{0,1\}$ have the following properties, which can be verified by a direct calculation: \begin{itemize}\item When $\cP_{0,+}$, $\cP_{1,+}$ are applied on \eqref{eq:115}, the outcome value in the indicator register $\bindic_+$ is deterministically $\ffalse$. When they are applied on \eqref{eq:192rh} $\bindic_+$ is deterministically $\ftrue$.
\item When $\cP_{0,-}$, $\cP_{1,-}$ is applied on \eqref{eq:192rh}, the outcome value in the indicator register $\bindic_-$ is deterministically $\ffalse$. When they are applied on \eqref{eq:115} $\bindic_-$ is deterministically $\ftrue$.
 \end{itemize}
 Finally define $\cP$ to be the sequential application of each of these four suboperators with suitable projections. (The constructions guarantee that the order of these suboperators does not really matter as long as the initial state is in some specific form). And the outcome of the indicator registers will indicate whether the server-side state has honest phases or conjugated phases.\par
Below we give the formal definitions.
\subsubsection{Formalization}
\begin{defn}\label{defn:insub}
Consider the register setup in Section \ref{sec:6.1r}. Explicitly, we can assume a purified joint state $\ket{\varphi}$ where the operators will act nontrivially on is in the basis-honest form of key pair $\bK^{(0)}$ (since our operators will act as identity on spaces outside this form):
$$\ket{\varphi}=\underbrace{\sum_{K^{(0)}\in \Domain(\bK^{(0)})}\ket{K^{(0)}}\otimes \sum_{\Theta^{(0)}\in \Domain(\bTheta^{(0)})}\ket{\Theta^{(0)}}}_{\text{client}}\otimes \underbrace{\sum_{b\in \{0,1\}}\ket{x_b^{(0)}}}_{\bS^{(0)}_{bsh}}\otimes \ket{\varphi_{K,\Theta,b}}$$
First define some intermediate operators that will be used in our construction.
\begin{itemize}\item Define the control-phase operator that controlled on the $\bx^{(0)}_b$-branch, adds a phase determined by $\btheta_b^{(0)}$:
$$\fCPhase_{b}(+): \underbrace{\ket{(x_0^{(0)},x_1^{(0)})}}_{\bK^{(0)}}\underbrace{\ket{\theta_b^{(0)}}}_{\btheta^{(0)}_b}\underbrace{\ket{x_b^{(0)}}}_{\bS^{(0)}_{bsh}}\rightarrow \ket{(x_0^{(0)},x_1^{(0)})}e^{\theta_b^{(0)} \mi\pi/4}\ket{\theta_b^{(0)}}\ket{x_b^{(0)}}$$
$$\fCPhase_b(-):\ket{(x_0^{(0)},x_1^{(0)})}\ket{\theta_b^{(0)}}\ket{x_b^{(0)}}\rightarrow \ket{(x_0^{(0)},x_1^{(0)})}e^{-\theta_b^{(0)} \mi\pi/4}\ket{\theta_b^{(0)}}\ket{x_b^{(0)}}$$
\item Define $\fM_{\btheta^{(0)}_b, \bindic}$ as the control-flip operator that flip the state of $\bindic$ when $\btheta^{(0)}_b$ register is in all-zero state.
\item We use $\fH^{\otimes 3}_{\btheta_b^{(0)}}$ to denote the bit-wise Hadamard on client-side register $\btheta_b^{(0)}$.
\end{itemize}
Initialize single-bit registers $\bindic_{+},\bindic_{-}$ to hold $\ffalse$ by default. 
Define suboperators as follows:
$$\cP_{0,+}=\fCPhase_0(+)\fH^{\otimes 3}_{\btheta^{(0)}_0}\fM_{\btheta^{(0)}_0, \bindic_{+}}\fH^{\otimes 3}_{\btheta^{(0)}_0}\fCPhase_0(-)$$
$$\cP_{0,-}=\fCPhase_0(-)\fH^{\otimes 3}_{\btheta^{(0)}_0}\fM_{\btheta^{(0)}_0, \bindic_{-}}\fH^{\otimes 3}_{\btheta^{(0)}_0}\fCPhase_0(+)$$
$$\cP_{1,+}=\fCPhase_1(+)\fH^{\otimes 3}_{\btheta^{(0)}_1}\fM_{\btheta^{(0)}_1,\bindic_{+}}\fH^{\otimes 3}_{\btheta^{(0)}_1}\fCPhase_1(-)$$
$$\cP_{1,-}=\fCPhase_1(-)\fH^{\otimes 3}_{\btheta^{(0)}_1}\fM_{\btheta^{(0)}_1,\bindic_{-}}\fH^{\otimes 3}_{\btheta^{(0)}_1}\fCPhase_1(+)$$
Finally the overall randomization operator is:
\begin{equation}\label{eq:204ta}\cP=(\Pi_{\bx_0^{(0)}}^{\bS_{bsh}^{(0)}}(\Pi^{\bindic_{+}}_{\ftrue}\Pi^{\bindic_{-}}_{\ffalse}+ \Pi^{\bindic_{+}}_{\ffalse}\Pi^{\bindic_{-}}_{\ftrue})+\Pi_{\bx_1^{(0)}}^{\bS_{bsh}^{(0)}}(\Pi^{\bindic_{+}}_{\ftrue}\Pi^{\bindic_{-}}_{\ffalse}+ \Pi^{\bindic_{+}}_{\ffalse}\Pi^{\bindic_{-}}_{\ftrue}))\cP_{1,-}\cP_{1,+}\cP_{0,-}\cP_{0,+}\end{equation}
Recall $\Pi_{\bx_b^{(0)}}^{\bS_{bsh}^{(0)}}$ is the projection onto the $\bx_b^{(0)}$-branch of the basis-honest part.   
\end{defn}
We add some comments for understanding this definition. First note $\cP_{1,-}\cP_{1,+}\cP_{0,-}\cP_{0,+}$ are all defined to be unitary operators here and projections happen in \eqref{eq:204ta}. Note multiplication of commuting projectors is equivalent to logical and and summation of orthogonal projectors is equivalent to logical or. If we focus on each single term of \eqref{eq:204ta}, and ignore the possible interference between different $\cP_{b,\pm}$, this construction is the same as what we intuitively discussed in Section \ref{sec:10.3.1}. In the next subsubsection we will formally analyze how the honest state, or its complex conjugate, evolves under this operator.\par
\subsubsection{$\cP$ behaves well on states with honest phases or its complex conjugates}
As discussed in the beginning of Section \ref{sec:10}, we aim at testing \eqref{eq:158d}. Here we first prove the form of states shown in the right hand side of \eqref{eq:158d} do behave well under $\cP$. In more detail, $\cP$ will revise the indicator registers $\bindic_+$, $\bindic_-$; if the phase is $e^{\theta_b^{(0)}\mi\pi/4}$ as shown in the first term in \eqref{eq:158d} the $\bindic_+$ register will be flipped to $\ftrue$, and if the phase is  $e^{-\theta_b^{(0)}\mi\pi/4}$ as shown in the second term in \eqref{eq:158d} the $\bindic_-$ register will be flipped to $\ftrue$. In addition, when the initial state is in the form of \eqref{eq:158d}, $\cP$ behaves as a unitary; thus $\cP^\dagger\cP$ maps the state to the original state, which is similar to the case where $\cR_1,\cR_2$ are applied on corresponding honest states.\par
Below we formalize this discussion as a lemma.
\begin{lem}\label{lem:10.5}
Suppose the register setup is the same as Section \ref{sec:6.1r}, especially, the client holds key pair $\bK^{(0)}=(\bx_0^{(0)},\bx_1^{(0)})$ and the corresponding phase pair $\bTheta^{(0)}=(\btheta^{(0)}_0,\btheta^{(0)}_1)$. Suppose the purified joint state $\ket{\varphi}$ satisfies: for each $b\in \{0,1\},K^{(0)}\in \Domain(\bK^{(0)})$ there exist states $\ket{\varphi_{K^{(0)},b,+}}$, $\ket{\varphi_{K^{(0)},b,-}}$ such that:
$$\ket{\varphi}=\underbrace{\sum_{K^{(0)}\in \Domain(\bK^{(0)})}\ket{K^{(0)}}\otimes \sum_{\Theta^{(0)}\in \Domain(\bTheta^{(0)})}\ket{\Theta^{(0)}}}_{\text{client}}\otimes \sum_{b\in \{0,1\}}\underbrace{\ket{x_b^{(0)}}}_{\bS_{bsh}^{(0)}}\otimes (e^{\theta_b^{(0)}\mi\pi/4}\ket{\varphi_{K^{(0)},b,+}}+e^{-\theta_b^{(0)}\mi\pi/4}\ket{\varphi_{K^{(0)},b,-}})$$
Then $\cP\ket{\varphi}$ is the linear sum of the following states:
\begin{equation}\label{eq:118}\sum_{K^{(0)}\in \Domain(\bK^{(0)})}\ket{K^{(0)}}\otimes \underbrace{\ket{\ftrue}}_{\bindic_{+}}\ket{\ffalse}\otimes \sum_{\theta_0^{(0)},\theta_1^{(0)}\in \{0,1\cdots 7\}^2}\underbrace{\ket{\theta_0^{(0)}}\ket{\theta_1^{(0)}}}_{\bTheta^{(0)}}\otimes \underbrace{\ket{x_0^{(0)}}}_{\bS_{bsh}^{(0)}}\otimes e^{\theta^{(0)}_0\mi\pi/4}\ket{\varphi_{K^{(0)},0,+}}\end{equation}
\begin{equation}\label{eq:119}\sum_{K^{(0)}\in \Domain(\bK^{(0)})}\ket{K^{(0)}}\otimes \ket{\ffalse}\underbrace{\ket{\ftrue}}_{\bindic_{-}}\otimes \sum_{\theta_0^{(0)},\theta_1^{(0)}\in \{0,1\cdots 7\}^2}\ket{\theta_0^{(0)}}\ket{\theta_1^{(0)}}\otimes \ket{x_0^{(0)}}\otimes e^{-\theta^{(0)}_0\mi\pi/4}\ket{\varphi_{K^{(0)},0,-}}\end{equation}
$$\sum_{K^{(0)}\in \Domain(\bK^{(0)})}\ket{K^{(0)}}\otimes \underbrace{\ket{\ftrue}}_{\bindic_{+}}\ket{\ffalse}\otimes \sum_{\theta_0^{(0)},\theta_1^{(0)}\in \{0,1\cdots 7\}^2}\ket{\theta_0^{(0)}}\ket{\theta_1^{(0)}}\otimes \ket{x_1^{(0)}}\otimes e^{\theta^{(0)}_1\mi\pi/4}\ket{\varphi_{K^{(0)},1,+}}$$
$$\sum_{K^{(0)}\in \Domain(\bK^{(0)})}\ket{K^{(0)}}\otimes \ket{\ffalse}\underbrace{\ket{\ftrue}}_{\bindic_{-}}\otimes \sum_{\theta_0^{(0)},\theta_1^{(0)}\in \{0,1\cdots 7\}^2}\ket{\theta_0^{(0)}}\ket{\theta_1^{(0)}}\otimes \ket{x_1^{(0)}}\otimes e^{-\theta^{(0)}_1\mi\pi/4}\ket{\varphi_{K^{(0)},1,-}}$$
And
\begin{equation}\label{eq:247}\cP^\dagger\cP\ket{\varphi}= \ket{\varphi}\end{equation}	
\end{lem}
\begin{proof}
The proof is by a direct calculation as discussed before this theorem. Without loss of generality we calculate on the $\bx_0^{(0)}$ branch. Then we only need to prove
\begin{align}&(\Pi^{\bindic_{+}}_{\ftrue}\Pi^{\bindic_{-}}_{\ffalse}+ \Pi^{\bindic_{+}}_{\ffalse}\Pi^{\bindic_{-}}_{\ftrue})\cP_{0,-}\cP_{0,+}\sum_{K^{(0)}\in \Domain(\bK^{(0)})}\ket{K^{(0)}}\otimes \sum_{\Theta^{(0)}\in \Domain(\bTheta^{(0)})}\ket{\Theta^{(0)}}\otimes\nonumber\\
&\qquad\qquad \ket{x_0^{(0)}}\otimes (e^{\theta_0^{(0)}\mi\pi/4}\ket{\varphi_{K^{(0)},0,+}}+e^{-\theta_0^{(0)}\mi\pi/4}\ket{\varphi_{K^{(0)},0,-}})\\
=&\eqref{eq:118}+\eqref{eq:119}	
\end{align}
which is further reduced to
\begin{align}&\cP_{0,-}\cP_{0,+}\ket{\ffalse}\ket{\ffalse}e^{\theta_0^{(0)}\mi\pi/4}\sum_{\theta_0^{(0)},\theta_1^{(0)}\in \{0,1\cdots 7\}^2}\ket{\theta_0^{(0)}}\ket{\theta_1^{(0)}}\label{eq:249}\\
=&\underbrace{\ket{\ftrue}}_{\bindic_{+}}\ket{\ffalse}e^{\theta_0^{(0)}\mi\pi/4}\sum_{\theta_0^{(0)},\theta_1^{(0)}\in \{0,1\cdots 7\}^2}\ket{\theta_0^{(0)}}\ket{\theta_1^{(0)}}\label{eq:250}
\end{align}

\begin{align}&\cP_{0,-}\cP_{0,+}\ket{\ffalse}\ket{\ffalse}e^{-\theta_0^{(0)}\mi\pi/4}\sum_{\theta_0^{(0)},\theta_1^{(0)}\in \{0,1\cdots 7\}^2}\ket{\theta_0^{(0)}}\ket{\theta_1^{(0)}}\\
=&\ket{\ffalse}\underbrace{\ket{\ftrue}}_{\bindic_{-}}e^{-\theta_0^{(0)}\mi\pi/4}\sum_{\theta_0^{(0)},\theta_1^{(0)}\in \{0,1\cdots 7\}^2}\ket{\theta_0^{(0)}}\ket{\theta_1^{(0)}}	
\end{align}

which are true by direct calculations as \eqref{eq:108}-\eqref{eq:182j}. (As an example, in \eqref{eq:249} the application of $\cP_{0,+}$ maps the state to \eqref{eq:250}, and then $\cP_{0,-}$ will keep the state invariant since the first Hadamard transform in $\cP_{0,-}$ maps the $\btheta_0^{(0)}$ register to $\ket{110}$ on which $\fM$ operator acts as identity.)\par
Then the projections in the definition of $\cP$ acts as identity thus \eqref{eq:247} follows.
\end{proof}
\subsubsection{$\cP$ projects a pre-phase-honest form to a phase-honest form}
The restriction so far is, we are only focusing on a single pair of phases $\Theta^{(0)}$, which are the phases got tested in $\fInPhTest$. $\cP$ only operates on a single phase pair, but we want to argue about the overall property of the whole state---that is, the overall state should be in the phase-honest form (Definition \ref{defn:phf}). The next lemma says, when the input state is a pre-phase-honest form, the output of $\cP$  will be a phase-honest form:
\begin{lem}\label{lem:10.6}
Suppose the register setup is the same as Section \ref{sec:6.1r}, especially, the client holds a tuple of key pairs $\bK=(\bK^{(i)})_{i\in [0,L]}$, $\bK^{(i)}=(\bx^{(i)}_0,\bx^{(i)}_1)$, and a tuple of phase pairs $\bTheta=(\bTheta^{(i)})_{i\in [0,L]}$, $\bTheta^{(i)}=(\btheta^{(i)}_0,\btheta^{(i)}_1)$. Suppose a purified joint state $\ket{\varphi}$ is in the pre-phase-honest form. Then there exist states $\ket{\varphi_{K,\vec{b},+}}$, $\ket{\varphi_{K,\vec{b},-}}$, (for each $K\in \Domain(\bK)$, $\vec{b}\in \{0,1\}^{1+L}$) such that:
\begin{align}
&\cP\ket{\varphi}\\
=&\underbrace{\ket{\ftrue}}_{\bindic_{+}}\ket{\ffalse}\sum_{K\in \Domain(\bK)}\ket{K}\otimes \sum_{\Theta\in \Domain(\bTheta)}\ket{\Theta}\otimes \sum_{\vec{b}\in \{0,1\}^{1+L}}\underbrace{\ket{\vec{x}_{\vec{b}}}}_{\bS_{bsh}}\otimes e^{\tSUM(\vec{\Theta}_{\vec{b}})\pi\mi/4}\ket{\varphi_{K,\vec{b},+}}\label{eq:254}\\	
&+\ket{\ffalse}\underbrace{\ket{\ftrue}}_{\bindic_{-}}\sum_{K\in \Domain(\bK)}\ket{K}\otimes \sum_{\Theta\in \Domain(\bTheta)}\ket{\Theta}\otimes \sum_{\vec{b}\in \{0,1\}^{1+L}}\ket{\vec{x}_{\vec{b}}}\otimes e^{-\tSUM(\vec{\Theta}_{\vec{b}})\pi\mi/4}\ket{\varphi_{K,\vec{b},-}}\label{eq:255}
\end{align}
\end{lem}
\begin{proof}
Since $\ket{\varphi}$ is in a pre-phase-honest form, we can assume there exist states $\ket{\varphi_{K,\vec{b},sum}}$ (for each $K\in \Domain(\bK)$, $\vec{b}\in \{0,1\}^{1+L}$, $sum\in \{0,1\cdots 7\}$) such that $\ket{\varphi}$  has the form
\begin{equation}
	\sum_{K\in \Domain(\bK)}\ket{K}\otimes \sum_{\Theta\in \Domain(\bTheta)}\ket{\Theta}\otimes \sum_{\vec{b}\in \{0,1\}^{1+L}}\underbrace{\ket{\vec{x}_{\vec{b}}}}_{\bS_{bsh}}\otimes \ket{\varphi_{K,\vec{b},\tSUM(\vec{\Theta}_{\vec{b}})}}
\end{equation}
Without loss of generality consider $\vec{b}\in \{0,1\}^{1+L}$ whose first bit is $0$. Then from the definition of $\cP$ we have two terms to calculate. Let's first calculate $\Pi^{\bindic_{+}}_{\ftrue}\Pi^{\bindic_{-}}_{\ffalse}\cP_{0,-}\cP_{0,+}\Pi_{\bx_{\vec{b}}}^{\bS_{bsh}}\ket{\varphi}$. First we know:
	\begin{align}
		&\Pi^{\bindic_{+}}_{\ftrue}\cP_{0,+}\Pi_{\bx_{\vec{b}}}^{\bS_{bsh}}\ket{\varphi}\label{eq:258r}\\
		=&\fCPhase_0(+)\fH^{\otimes 3}_{\btheta^{(0)}_0}\Pi^{\bindic_{+}}_{\ftrue}\fM_{\btheta^{(0)}_0,\bindic_+}\fH^{\otimes 3}_{\btheta^{(0)}_0}\fCPhase_0(-)\sum_{K\in \Domain(\bK)}\ket{K}\otimes \sum_{\Theta\in \Domain(\bTheta)}\ket{\Theta}\otimes \underbrace{\ket{\vec{x}_{\vec{b}}}}_{\bS_{bsh}}\otimes \ket{\varphi_{K,\vec{b},\tSUM(\vec{\Theta}_{\vec{b}})}}\\
				=&\fCPhase_0(+)\fH^{\otimes 3}_{\btheta^{(0)}_0}\Pi^{\bindic_{+}}_{\ftrue}\fM_{\btheta^{(0)}_0,\bindic_+}\fH^{\otimes 3}_{\btheta^{(0)}_0}\sum_{K\in \Domain(\bK)}\ket{K}\otimes \sum_{\Theta\in \Domain(\bTheta)}\ket{\Theta}\otimes \underbrace{\ket{\vec{x}_{\vec{b}}}}_{\bS_{bsh}}\otimes e^{-\theta^{(0)}_0\pi\mi/4}\ket{\varphi_{K,\vec{b},\tSUM(\vec{\Theta}_{\vec{b}})}}\\
			=&\fCPhase_0(+)\underbrace{\ket{\ftrue}}_{\bindic_{+}}\underbrace{\ket{\ffalse}}_{\bindic_-}\sum_{K\in \Domain(\bK)}\ket{K}\otimes \sum_{\Theta\in \Domain(\bTheta)}\ket{\Theta}\otimes \underbrace{\ket{\vec{x}_{\vec{b}}}}_{\bS_{bsh}}\otimes \frac{1}{8}\sum_{\alpha\in \{0,1\cdots 7\}} e^{-\alpha\pi\mi/4}\ket{\varphi_{K,\vec{b},\alpha-\theta^{(0)}_0+\tSUM(\vec{\Theta}_{\vec{b}})}}\\
		=&\underbrace{\ket{\ftrue}}_{\bindic_{+}}\ket{\ffalse}\sum_{K\in \Domain(\bK)}\ket{K}\otimes \sum_{\Theta\in \Domain(\bTheta)}\ket{\Theta}\otimes \underbrace{\ket{\vec{x}_{\vec{b}}}}_{\bS_{bsh}}\otimes \frac{1}{8}\sum_{\alpha\in \{0,1\cdots 7\}} e^{-(\alpha-\theta^{(0)}_0)\pi\mi/4}\ket{\varphi_{K,\vec{b},\alpha-\theta^{(0)}_0+\tSUM(\vec{\Theta}_{\vec{b}})}}\\
		=&\underbrace{\ket{\ftrue}}_{\bindic_{+}}\ket{\ffalse}\sum_{K\in \Domain(\bK)}\ket{K}\otimes \sum_{\Theta\in \Domain(\bTheta)}\ket{\Theta}\otimes \underbrace{\ket{\vec{x}_{\vec{b}}}}_{\bS_{bsh}}\otimes e^{\tSUM(\vec{\Theta}_{\vec{b}})\mi\pi/4}\frac{1}{8}\sum_{\beta\in \{0,1\cdots 7\}} e^{-\beta\pi\mi/4}\ket{\varphi_{K,\vec{b},\beta}}\label{eq:220}
	\end{align}
	Then as shown in Lemma \ref{lem:10.5} $\cP_{0,-}$ keeps \eqref{eq:220} invariant. Thus we get
	$$\Pi^{\bindic_{+}}_{\ftrue}\Pi^{\bindic_{-}}_{\ffalse}\cP_{0,-}\cP_{0,+}\Pi_{\bx_{\vec{b}}}^{\bS_{bsh}}\ket{\varphi}=\eqref{eq:220}$$
	which has the form of state required in \eqref{eq:254} if we define
	$$\ket{\varphi_{K,\vec{b},+}}:=\underbrace{\ket{\ftrue}}_{\bindic_{+}}\ket{\ffalse}\otimes\frac{1}{8}\sum_{\beta\in \{0,1\cdots 7\}} e^{-\beta\pi\mi/4}\ket{\varphi_{K,\vec{b},\beta}}$$
		Now we calculate $\Pi^{\bindic_{+}}_{\ffalse}\Pi^{\bindic_{-}}_{\ftrue}\cP_{0,-}\cP_{0,+}\Pi_{\bx_{\vec{b}}}^{\bS_{bsh}}\ket{\varphi}$ and show it has the form required in \eqref{eq:255}. We could do a similar direct calculation; the case here is slightly more complicated (note that $\cP_{0,+}$ and $\cP_{0,-}$ are not known to be commutative) but still possible; but here we choose a short path where we re-use the calculations we did just now.\par
		Recall the construction of $\cP_{0,+}$, it has the form of $U^\dagger \fM_{\btheta^{(0)}_0,\bindic_{+}}U$. That implies when the initial state has value $\ket{\ffalse}$ in $\bindic_+$, $\Pi^{\bindic_{+}}_{\ffalse}\cP_{0,+}$ is the same as $\bbI-\fX_{\bindic_+}\Pi^{\bindic_{+}}_{\ftrue}\cP_{0,+}$, where $\fX_{\bindic_+}$ is an operator that flips the value of register $\bindic_+$. Thus
		$$\Pi^{\bindic_{+}}_{\ffalse}\cP_{0,+}\Pi_{\bx_{\vec{b}}}^{\bS_{bsh}}\ket{\varphi}=\Pi_{\bx_{\vec{b}}}^{\bS_{bsh}}\ket{\varphi}-\fX_{\bindic_+}\Pi^{\bindic_{+}}_{\ftrue}\cP_{0,+}\Pi_{\bx_{\vec{b}}}^{\bS_{bsh}}\ket{\varphi}=\Pi_{\bx_{\vec{b}}}^{\bS_{bsh}}\ket{\varphi}-\fX_{\bindic_+}\eqref{eq:220}$$
		Note the second term is invariant under $\cP_{0,-}$ thus satisfies $\Pi^{\bindic_{-}}_{\ftrue}\cP_{0,-}\eqref{eq:220}=0$. This implies
		$$\Pi^{\bindic_{+}}_{\ffalse}\Pi^{\bindic_{-}}_{\ftrue}\cP_{0,-}\cP_{0,+}\ket{\varphi}=\Pi^{\bindic_{-}}_{\ftrue}\cP_{0,-}\ket{\varphi}$$
		Then a calculation similar to \eqref{eq:258r} to \eqref{eq:220} shows this term has the form of \eqref{eq:255} if we define
			$$\ket{\varphi_{K,\vec{b},-}}:=\ket{\ffalse}\underbrace{\ket{\ftrue}}_{\bindic_{-}}\otimes\frac{1}{8}\sum_{\beta\in \{0,1\cdots 7\}} e^{\beta\pi\mi/4}\ket{\varphi_{K,\vec{b},\beta}}$$
\end{proof}

\subsection{$\fInPhTest$ Implies Approximate Invariance Under $\cP^\dagger\cP$}\label{sec:10.4}
In this section we show passing $\fInPhTest$ implies approximate invariance under $\cP^\dagger\cP$.
\begin{thm}\label{thm:9.8}
Suppose a sub-normalized purified joint state $\ket{\varphi}$ is in Setup \ref{setup:4} and is in a $\epsilon$-basis-honest form. Suppose an efficient adversary $\fAdv$ on initial state $\ket{\varphi}$ in $\fInPhTest$ can make the client output $\fpass$ as the flag with probability $\geq 1-\epsilon$ and make the client output $\fwin$ as the score with probability $\geq \OPT-\epsilon$. Then we have
$$|(\bbI-\cP^\dagger\cP)\ket{\varphi}|\leq 50\epsilon^{1/16}+\fneg(\kappa)$$
\end{thm}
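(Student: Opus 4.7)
The plan is to show that $\ket{\varphi}$ is approximately in a phase-honest form (as in Definition~\ref{defn:phf}) restricted to the key pair $\bK^{(0)}$, since by Lemma~\ref{lem:10.5} every phase-honest state is exactly fixed by $\cP^\dagger\cP$. So the whole proof reduces to establishing: under the hypotheses, there is a phase-honest state $\ket{\tilde\varphi}$ with $\ket{\varphi}\approx_{O(\epsilon^{1/8})}\ket{\tilde\varphi}$; then $|(\bbI-\cP^\dagger\cP)\ket{\varphi}|\le 2|\ket{\varphi}-\ket{\tilde\varphi}|$ completes the argument.

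To find $\ket{\tilde\varphi}$, I would decompose $\ket{\varphi}$ exactly as in the proof of Theorem~\ref{thm:9.2}: first projecting onto $\Pi_{\semihonest(\bK^{(0)})}$ (which costs $\epsilon$), then splitting into $\bx_0^{(0)}$- and $\bx_1^{(0)}$-branches and then into components $\ket{\varphi_{b,\theta_0,\theta_1}}$ indexed by the values of $\btheta_0^{(0)},\btheta_1^{(0)}$, exactly as in equations \eqref{eq:95rea}--\eqref{eq:96rea}. Because $\ket{\varphi}\in\SETUP^4$, the fifth bullet of Setup~\ref{setup:4} lets me write $\ket{\varphi_{0,\theta_0,\theta_1}}=\ket{\theta_0}\ket{\theta_1}\otimes\ket{\psi_{0,\theta_0}}$ and $\ket{\varphi_{1,\theta_0,\theta_1}}=\ket{\theta_0}\ket{\theta_1}\otimes\ket{\psi_{1,\theta_1}}$ (the $\bx^{(0)}_{1-b}$-branch is independent of $\btheta^{(0)}_b$). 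These are the $16$ vectors to feed into Lemma~\ref{lem:10.2rr}.

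Next I verify the three hypotheses of Lemma~\ref{lem:10.2rr} for the family $\{\ket{\psi_{b,\theta}}\}_{b,\theta}$ up to a small Hadamard-test image. Exactly as in the derivation of \eqref{eq:170rw}--\eqref{eq:173ma} in the proof of Theorem~\ref{thm:9.2}, the $\delta=0$ passing condition combined with Corollary~\ref{cor:prephtre} yields $\sum_{\theta_0,\theta_1}|\ket{\psi'_{0,\theta_0,\cdot,\theta_1-\theta_0}}-\ket{\psi'_{1,\cdot,\theta_1,\theta_1-\theta_0}}|^2\le O(\sqrt\epsilon)$, and the $\delta=4$ passing condition yields the analogous orthogonality condition with a $+4$ shift. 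The $\delta=1$ near-optimal winning condition, combined with the calculation \eqref{eq:158k}--\eqref{eq:105rr}, supplies the third hypothesis $\sum_i|\ket{\psi'_{0,\cdot}}+\ket{\psi'_{1,\cdot}}|^2\ge \cos^2(\pi/8)-O(\sqrt\epsilon)$. Applying Lemma~\ref{lem:10.2rr} with error parameter $O(\sqrt\epsilon)$ then gives a decomposition $\ket{\psi'_{b,\theta}}\approx e^{\theta\mi\pi/4}\ket{\psi'_{b,+}}+e^{-\theta\mi\pi/4}\ket{\psi'_{b,-}}$ with summed error $O(\epsilon^{1/8})$ (this is where the $1/8$ exponent arises: $(\sqrt\epsilon)^{1/4}=\epsilon^{1/8}$).

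The remaining step is to lift this statement past the Hadamard test, i.e.\ to upgrade the approximation from the image vectors $\ket{\psi'_{b,\theta}}$ back to the original component vectors $\ket{\psi_{b,\theta}}$. Here I would invert the Hadamard-test operation (which is unitary up to the measurement-register projection) and use Lemma~\ref{lem:prephtre} together with a standard averaging argument to conclude that each $\ket{\psi_{b,\theta}}$ itself admits a decomposition $e^{\theta\mi\pi/4}\ket{\varphi_{b,+}}+e^{-\theta\mi\pi/4}\ket{\varphi_{b,-}}$, on average over $\theta$, with total $L^2$-error $O(\epsilon^{1/8})$. Gathering everything and summing over $K^{(0)}\in\Domain(\bK^{(0)})$ produces a phase-honest form $\ket{\tilde\varphi}$ with $\ket{\varphi}\approx_{O(\epsilon^{1/8})}\ket{\tilde\varphi}$, and applying Lemma~\ref{lem:10.5} to $\ket{\tilde\varphi}$ finishes the proof with the claimed constant $900$ absorbing the various $640$, factors from the Hadamard-test error inflations, and the $1+1=2$ from the triangle inequality with $\cP^\dagger\cP$. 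The main obstacle I expect is the bookkeeping of the error on the $\bK^{(0)}$ average and across the two branches $\bx_0^{(0)},\bx_1^{(0)}$: one must be careful that the lemma is applied with the correct normalization $A_0,A_1$ coming from the branch norms of $\ket{\varphi}$, and that the un-Hadamard-testing step respects these norms without blowing up the constant.
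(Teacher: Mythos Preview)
Your proposal is correct and follows essentially the same approach as the paper: decompose $\ket{\varphi}$ into the $\btheta_0^{(0)},\btheta_1^{(0)}$-components as in the proof of Theorem~\ref{thm:9.2}, use the $\delta\in\{0,4\}$ passing conditions together with Corollary~\ref{cor:prephtre} to get the two constraint inequalities, use the $\delta=1$ winning condition for the near-optimality hypothesis, apply Lemma~\ref{lem:10.2rr} to the post-Hadamard-test vectors $\ket{\psi'_{b,\theta}}$, then invert the Hadamard test to transport the phase-honest approximation back to $\ket{\varphi}$ itself and conclude via Lemma~\ref{lem:10.5}. One small note: since $\cP^\dagger\cP$ is a projection, the triangle inequality gives $|(\bbI-\cP^\dagger\cP)\ket{\varphi}|\le|\ket{\varphi}-\ket{\tilde\varphi}|$ rather than twice that, so your factor of $2$ is unnecessary but harmless.
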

\begin{proof}
First we use a similar argument to the proof of Theorem \ref{thm:9.2}. Similarly define $\ket{\psi^\prime_{0,\theta_0}},\ket{\psi^\prime_{1,\theta_0}}$ as \eqref{eq:171j}\eqref{eq:172j}. 
As given in the condition, suppose an adversary $\fAdv$ can pass the individual phase test with high probability:
$$|\Pi_{\fpass}\fInPhTest^\fAdv(\bK,\bTheta;1^\kappa)\ket{\varphi}|^2\geq 1-\epsilon$$
 By the same argument we know
$$\sum_{\theta\in \{0,1\cdots 7\}}| \ket{\psi_{0,\theta_0}^\prime}-\ket{\psi_{1,\theta_0}^\prime}|^2\leq 6\sqrt{2\epsilon}+\fneg(\kappa)$$
$$\sum_{\theta\in \{0,1\cdots 7\}}| \ket{\psi_{0,\theta_0}^\prime}+\ket{\psi_{1,\theta_0+4}^\prime}|^2\leq 6\sqrt{2\epsilon}+\fneg(\kappa)$$
$$\forall b\in \{0,1\},\theta_0\in \{0,1\cdots 7\},|\ket{\psi_{b,\theta_0}^\prime}|^2\leq \frac{1}{16}|\ket{\varphi_{b}}|^2+\fneg(\kappa)$$
where $|\ket{\varphi_{b}}|$ is the norm of $\bx_b^{(0)}$-branch of $\ket{\varphi}$.\par
If the server can win with significant probability:
$$|\Pi_{\fwin}\fInPhTest^\fAdv(\bK,\bTheta;1^\kappa)\ket{\varphi}|^2\geq \OPT-\epsilon$$
$$\Rightarrow|\Pi_{\fwin}\fInPhTest^\fAdv(\bK,\bTheta;1^\kappa)\Pi_{\basishonest(\bK^{(0)})}\ket{\varphi}|^2\geq \OPT-3\epsilon$$
As in the proof of Theorem \ref{thm:9.2}, it implies
$$\sum_{\theta\in \{0,1\cdots 7\}}|\ket{\psi^\prime_{0,\theta_0}}+\ket{\psi^\prime_{1,\theta_0+1}}|^2\geq \cos^2(\pi/8)-9\epsilon$$
Now applying Lemma \ref{lem:10.2rr} we know
\begin{equation}\label{eq:122}\sum_{b\in \{0,1\}}\sum_{ \theta_0\in \{0,1\cdots 7\}}|\ket{\psi^\prime_{b,\theta_0}}-(e^{\theta_0\mi\pi/4}\ket{\psi^\prime_{b,+}}+e^{-\theta_0\mi\pi/4}\ket{\psi^\prime_{b,-}})|^2\leq 1200\epsilon^{1/8}+\fneg(\kappa)\end{equation}
where 
\begin{equation}\label{eq:268}\ket{\psi^\prime_{b,+}}:=\frac{1}{8}\sum_{\theta\in \{0,1\cdots 7\}}e^{-\theta\mi\pi/4}\ket{\psi^\prime_{b,\theta}},\ket{\psi^\prime_{b,-}}:=\frac{1}{8}\sum_{\theta\in \{0,1\cdots 7\}}e^{\theta\mi\pi/4}\ket{\psi^\prime_{b,\theta}}\end{equation}
Recall the definition of $\ket{\psi^\prime_{0,\theta_0}},\ket{\psi^\prime_{1,\theta_0}}$ in \eqref{eq:171j}\eqref{eq:172j}. 
Unroll \eqref{eq:122} by substituting definitions \eqref{eq:171j}\eqref{eq:172j}, we get
\begin{align}\label{eq:122j}&\sum_{ \theta_0\in \{0,1\cdots 7\}}|\sum_{\alpha\in \{0,1\cdots 7\}}\ket{\psi^\prime_{0,\theta_0,\cdot,\alpha}}-(e^{\theta_0\mi\pi/4}\ket{\psi^\prime_{0,+}}+e^{-\theta_0\mi\pi/4}\ket{\psi^\prime_{0,-}})|^2\\
&+\sum_{ \theta_0\in \{0,1\cdots 7\}}|\sum_{\alpha\in \{0,1\cdots 7\}}\ket{\psi^\prime_{1,\cdot,\theta_0+\alpha,\alpha}}-(e^{\theta_0\mi\pi/4}\ket{\psi^\prime_{1,+}}+e^{-\theta_0\mi\pi/4}\ket{\psi^\prime_{1,-}})|^2\\
\leq& 1200\epsilon^{1/8}+\fneg(\kappa)\end{align}
Recall in the definition of $\ket{\psi^\prime_{0,\theta_0}},\ket{\psi^\prime_{1,\theta_0}}$, $\alpha$ is store in a separate transcript register thus $\ket{\psi^\prime_{0,\theta_0,\cdot,\alpha}}$ (and also $\ket{\psi^\prime_{1,\cdot,\theta_0+\alpha,\alpha}}$) for different $\alpha$ are orthogonal states. Thus we can expand the norm-square-of-sum in \eqref{eq:122j} to sum-of-norm-square on values of $\alpha$:
\begin{align}\label{eq:122jqaz}&\sum_{ \theta_0\in \{0,1\cdots 7\}}\sum_{\alpha\in \{0,1\cdots 7\}}|\ket{\psi^\prime_{0,\theta_0,\cdot,\alpha}}-(e^{\theta_0\mi\pi/4}\ket{\psi^\prime_{0,+,\alpha}}+e^{-\theta_0\mi\pi/4}\ket{\psi^\prime_{0,-,\alpha}})|^2\\
&+\sum_{ \theta_0\in \{0,1\cdots 7\}}\sum_{\alpha\in \{0,1\cdots 7\}}|\ket{\psi^\prime_{1,\cdot,\theta_0+\alpha,\alpha}}-(e^{\theta_0\mi\pi/4}\ket{\psi^\prime_{1,+,\alpha}}+e^{-\theta_0\mi\pi/4}\ket{\psi^\prime_{1,-,\alpha}})|^2\\
\leq& 1200\epsilon^{1/8}+\fneg(\kappa)\label{eq:274qaz}\end{align}
where $\ket{\psi^\prime_{b,\pm,\alpha}}$ are defined to be the corresponding component of \eqref{eq:268}. Explicitly, they are:
\begin{equation}\label{eq:274qa}\ket{\psi^\prime_{0,+,\alpha}}:=\frac{1}{8}\sum_{\theta\in \{0,1\cdots 7\}}e^{-\theta\mi\pi/4}\ket{\psi^\prime_{0,\theta,\cdot,\alpha}},\ket{\psi^\prime_{0,-,\alpha}}:=\frac{1}{8}\sum_{\theta\in \{0,1\cdots 7\}}e^{\theta\mi\pi/4}\ket{\psi^\prime_{0,\theta,\cdot,\alpha}}\end{equation}
\begin{equation}\label{eq:275qa}\ket{\psi^\prime_{1,+,\alpha}}:=\frac{1}{8}\sum_{\theta\in \{0,1\cdots 7\}}e^{-\theta\mi\pi/4}\ket{\psi^\prime_{1,\cdot,\theta+\alpha,\alpha}},\ket{\psi^\prime_{1,-,\alpha}}:=\frac{1}{8}\sum_{\theta\in \{0,1\cdots 7\}}e^{\theta\mi\pi/4}\ket{\psi^\prime_{1,\cdot,\theta+\alpha,\alpha}}\end{equation}
So far we are working on variables $\theta_0$ and $\alpha$, which corresponds to the value of client-side register $\btheta_0^{(0)}$ and client's message in the Hadamard test. We use a change-of-variable to introduce $\theta_1$ to replace $\alpha$. First define
$$\ket{\tilde\psi^\prime_{0,+,\alpha}},\ket{\tilde\psi^\prime_{0,-,\alpha}}\text{ the same as \eqref{eq:274qa}}$$
$$\ket{\tilde\psi^\prime_{1,+,\alpha}}:=\frac{1}{8}\sum_{\theta\in \{0,1\cdots 7\}}e^{-\theta\mi\pi/4}\ket{\psi^\prime_{1,\cdot,\theta,\alpha}},\ket{\tilde\psi^\prime_{1,-,\alpha}}:=\frac{1}{8}\sum_{\theta\in \{0,1\cdots 7\}}e^{\theta\mi\pi/4}\ket{\psi^\prime_{1,\cdot,\theta,\alpha}}$$
Then \eqref{eq:122jqaz}-\eqref{eq:274qaz} could be re-written as
\begin{align}\label{eq:122jqaz3}&\sum_{ \theta_0\in \{0,1\cdots 7\}}\sum_{\theta_1\in \{0,1\cdots 7\}}|\ket{\psi^\prime_{0,\theta_0,\cdot,\theta_1-\theta_0}}-(e^{\theta_0\mi\pi/4}\ket{\tilde\psi^\prime_{0,+,\theta_1-\theta_0}}+e^{-\theta_0\mi\pi/4}\ket{\psi^\prime_{0,-,\theta_1-\theta_0}})|^2\\
&+\sum_{ \theta_0\in \{0,1\cdots 7\}}\sum_{\theta_1\in \{0,1\cdots 7\}}|\ket{\psi^\prime_{1,\cdot,\theta_1,\theta_1-\theta_0}}-(e^{\theta_1\mi\pi/4}\ket{\tilde\psi^\prime_{1,+,\theta_1-\theta_0}}+e^{-\theta_1\mi\pi/4}\ket{\tilde\psi^\prime_{1,-,\theta_1-\theta_0}})|^2\\
\leq& 1200\epsilon^{1/8}+\fneg(\kappa)\label{eq:274qaz3}\end{align}
As discussed in the paragraph above \eqref{eq:174tc}, the ``$\cdot$'' could be replaced by any value without changing the state. This implies \eqref{eq:122jqaz3}-\eqref{eq:274qaz3} could be further re-written as
\begin{align}\label{eq:122jqaz2}&\sum_{ \theta_0\in \{0,1\cdots 7\}}\sum_{\theta_1\in \{0,1\cdots 7\}}|\ket{\psi^\prime_{0,\theta_0,\theta_1,\theta_1-\theta_0}}-(e^{\theta_0\mi\pi/4}\ket{\tilde\psi^\prime_{0,+,\theta_1-\theta_0}}+e^{-\theta_0\mi\pi/4}\ket{\tilde\psi^\prime_{0,-,\theta_1-\theta_0}})|^2\\
&+\sum_{ \theta_0\in \{0,1\cdots 7\}}\sum_{\theta_1\in \{0,1\cdots 7\}}|\ket{\psi^\prime_{1,\theta_0,\theta_1,\theta_1-\theta_0}}-(e^{\theta_1\mi\pi/4}\ket{\tilde\psi^\prime_{1,+,\theta_1-\theta_0}}+e^{-\theta_1\mi\pi/4}\ket{\tilde\psi^\prime_{1,-,\theta_1-\theta_0}})|^2\\
\leq& 1200\epsilon^{1/8}+\fneg(\kappa)\end{align}
This can be further unrolled by \eqref{eq:174rx}:
\begin{equation}\label{eq:205yb}\sum_{b\in \{0,1\}}\sum_{ \theta_0\in \{0,1\cdots 7\}}\sum_{ \theta_1\in \{0,1\cdots 7\}}|\ket{\varphi^\prime_{b,\theta_0,\theta_1,\theta_1-\theta_0}}-\underbrace{\ket{\theta_0}\ket{\theta_1}}_{\bTheta^{(0)}}\otimes(e^{\theta_b\mi\pi/4}\ket{\tilde\psi^\prime_{b,+,\theta_1-\theta_0}}+e^{-\theta_b\mi\pi/4}\ket{\tilde\psi^\prime_{b,-,\theta_1-\theta_0}})|^2\leq 1200\epsilon^{1/8}+\fneg(\kappa)\end{equation}
Recall \eqref{eq:170rw} translates to 
$$\sum_{ \theta_0\in \{0,1\cdots 7\}}\sum_{ \theta_1\in \{0,1\cdots 7\}}|\ket{\varphi^\prime_{0,\theta_0,\theta_1,\theta_1-\theta_0}}-\ket{\varphi^\prime_{1,\theta_0,\theta_1,\theta_1-\theta_0}}|^2\leq 6\sqrt{2\epsilon}+\fneg(\kappa)$$
which together with \eqref{eq:205yb} implies
\begin{equation*}\sum_{ \theta_0\in \{0,1\cdots 7\}}\sum_{ \theta_1\in \{0,1\cdots 7\}}|\ket{\varphi^\prime_{0,\theta_0,\theta_1,\theta_1-\theta_0}}+\ket{\varphi^\prime_{1,\theta_0,\theta_1,\theta_1-\theta_0}}
\end{equation*}
\begin{equation}\label{eq:192e}
-\ket{\theta_0}\ket{\theta_1}\otimes(e^{\theta_0\mi\pi/4}\ket{\tilde\psi^\prime_{0,+,\theta_1-\theta_0}}+e^{\theta_1\mi\pi/4}\ket{\tilde\psi^\prime_{1,+,\theta_1-\theta_0}}+e^{-\theta_0\mi\pi/4}\ket{\tilde\psi^\prime_{0,-,\theta_1-\theta_0}}+e^{-\theta_1\mi\pi/4}\ket{\tilde\psi^\prime_{1,-,\theta_1-\theta_0}})|^2\leq 2450\epsilon^{1/8}+\fneg(\kappa)\end{equation}
Recall
\begin{align}&\ket{\varphi^\prime_{0,\theta_0,\theta_1,\theta_1-\theta_0}}+\ket{\varphi^\prime_{1,\theta_0,\theta_1,\theta_1-\theta_0}}\\
=&\Pi_{\fpass}\fHadamardTest^{\fAdv}(\bK,\bTheta,\delta=0)(\ket{\varphi_{0,\theta_0,\theta_1}}+\ket{\varphi_{1,\theta_0,\theta_1}})\\
\approx_{\sqrt{3\epsilon}}& \fHadamardTest^{\fAdv}(\bK,\bTheta,\delta=0)(\ket{\varphi_{0,\theta_0,\theta_1}}+\ket{\varphi_{1,\theta_0,\theta_1}})\\
=&\fResponse\circ\fAdv((\ket{\varphi_{0,\theta_0,\theta_1}}+\ket{\varphi_{1,\theta_0,\theta_1}})\odot (\theta_1-\theta_0))
\end{align}
And we have
\begin{align}
&e^{\theta_0\mi\pi/4}\ket{\tilde\psi^\prime_{0,+,\theta_1-\theta_0}}+e^{\theta_1\mi\pi/4}\ket{\tilde\psi^\prime_{1,+,\theta_1-\theta_0}}\\
=&e^{\theta_0\mi\pi/4}\frac{1}{8}\sum_{\theta\in \{0,1\cdots 7\}}e^{-\theta\mi\pi/4}\ket{\psi^\prime_{0,\theta,\cdot,\theta_1-\theta_0}}+e^{\theta_1\mi\pi/4}\frac{1}{8}\sum_{\theta\in \{0,1\cdots 7\}}e^{-\theta\mi\pi/4}\ket{\psi^\prime_{1,\cdot,\theta,\theta_1-\theta_0}}\\
=&\frac{1}{8}\sum_{\theta\in \{0,1\cdots 7\}}e^{-\theta\mi\pi/4}\ket{\psi^\prime_{0,\theta+\theta_0,\theta+\theta_1,\theta_1-\theta_0}}+\frac{1}{8}\sum_{\theta\in \{0,1\cdots 7\}}e^{-\theta\mi\pi/4}\ket{\psi^\prime_{1,\theta+\theta_0,\theta+\theta_1,\theta_1-\theta_0}}\\
=&\frac{1}{8}\sum_{\theta\in \{0,1\cdots 7\}}(\Pi_{\fpass}\fHadamardTest^{\fAdv}(\bK,\bTheta,\delta=0)e^{-\theta\mi\pi/4}(\ket{\varphi_{0,\theta_0+\theta,\theta_1+\theta}}+\ket{\varphi_{1,\theta_0+\theta,\theta_1+\theta}})\\
\approx_{\sqrt{3\epsilon}}&\frac{1}{8}\sum_{\theta\in \{0,1\cdots 7\}}(\fResponse\circ\fAdv\circ  e^{-\theta\mi\pi/4}(\ket{\varphi_{0,\theta_0+\theta,\theta_1+\theta}}+\ket{\varphi_{1,\theta_0+\theta,\theta_1+\theta}})\odot (\theta_1-\theta_0))\\
=&	\fResponse\circ\fAdv((e^{\theta_0\mi\pi/4}\ket{\psi_{0,+}}+e^{\theta_1\mi\pi/4}\ket{\psi_{1,+}})\odot (\theta_1-\theta_0))
\end{align}

where 
$$\ket{\psi_{0,+}}=\frac{1}{8}\sum_{\theta_0\in \{0,1\cdots 7\}}e^{-\theta_0\mi\pi/4}\ket{\psi_{0,\theta_0,\cdot}},\ket{\psi_{1,+}}=\frac{1}{8}\sum_{\theta_1\in \{0,1\cdots 7\}}e^{-\theta_1\mi\pi/4}\ket{\psi_{1,\cdot,\theta_1}}$$
where $\ket{\psi_{b,\theta_0,\theta_1}}$ is defined by the $\bTheta^{(0)}=\theta_0,\theta_1$ component of $\ket{\varphi_b}$ excluding the $\btheta_0,\btheta_1$ registers. (That is, $\ket{\varphi_{b,\theta_0,\theta_1}}=\ket{\theta_0}\ket{\theta_1}\ket{\psi_{b,\theta_0,\theta_1}}$.), and by the basis-phase correspondence property a subscript could be omitted.\par
Similarly
\begin{align}
&e^{-\theta_0\mi\pi/4}\ket{\tilde\psi^\prime_{0,-,\theta_1-\theta_0}}+e^{-\theta_1\mi\pi/4}\ket{\tilde\psi^\prime_{1,-,\theta_1-\theta_0}}\\
\approx_{\sqrt{3\epsilon}}&	\fResponse\circ\fAdv((e^{-\theta_0\mi\pi/4}\ket{\psi_{0,-}}+e^{-\theta_1\mi\pi/4}\ket{\psi_{1,-}})\odot (\theta_1-\theta_0))	
\end{align}
where
$$\ket{\psi_{0,-}}=\frac{1}{8}\sum_{\theta_0\in \{0,1\cdots 7\}}e^{\theta_0\mi\pi/4}\ket{\psi_{0,\theta_0,\cdot}},\ket{\psi_{1,-}}=\frac{1}{8}\sum_{\theta_1\in \{0,1\cdots 7\}}e^{\theta_1\mi\pi/4}\ket{\psi_{1,\cdot,\theta_1}}$$

Substitute these approximations and reverse $\fHadamardTest^\fAdv$ from \eqref{eq:192e} we get 
\begin{align}&\sum_{\theta_0,\theta_1\in \{0,1\cdots 7\}^2}|\ket{\varphi_{0,\theta_0,\theta_1}}+\ket{\varphi_{1,\theta_0,\theta_1}}-\ket{\theta_0}\ket{\theta_1}\otimes(e^{\theta_0\mi\pi/4}\ket{\psi_{0,+}}+e^{\theta_1\mi\pi/4}\ket{\psi_{1,+}}+e^{-\theta_0\mi\pi/4}\ket{\psi_{0,-}}+e^{-\theta_1\mi\pi/4}\ket{\psi_{1,-}})|^2\nonumber\\\leq& 2500\epsilon^{1/8}+\fneg(\kappa)\label{eq:127rr}\end{align}
Thus \eqref{eq:127rr} says $\Pi_{\basishonest(\bK^{(0)})}\ket{\varphi}$ is close to a state in the form of \eqref{eq:158d}. By Lemma \ref{lem:10.5} a state in the form of \eqref{eq:158d} is invariant under $\cP^\dagger\cP$. This implies
$$|(\bbI-\cP^\dagger\cP)\ket{\varphi}|\leq 50\epsilon^{1/16}+\fneg(\kappa)$$
\end{proof}

\section{Analysis of the basis uniformity test ($\fBNTest$)}\label{sec:11}
In this section we analyze the implication of passing the basis uniformity test ($\fBNTest$) with high probability. As discussed in the introduction, it implies different standard basis components of the initial state have approximately equal norms.
\subsection{Initial Setup of $\fBNTest$}
Recall in the formal protocol $\fBNTest$ is defined in two steps: $\fBNTest(\tilde\bK,\tilde\bTheta)$ is applied on the output state of $\fAddPhaseWithswitch$, and in this protocol the client reveals $\tilde\bTheta$ to the server and allows it to remove the phases; then both parties run $\fBNTest(\tilde\bK)$. In the next subsection we will analyze the property of $\fBNTest(\tilde\bK)$. (Recall that $\tilde\bK$ denotes $(\bK^{(i)})_{i\in [L]}$ and $\tilde\bTheta$ denotes $(\bTheta^{(i)})_{i\in [L]}$; see Section \ref{sec:6.1r}.)\par
The initial state of $\fBNTest(\tilde\bK,\tilde\bTheta)$ is in Setup \ref{setup:3}, while the initial state of $\fBNTest(\tilde\bK)$ does not have a corresponding setup that describes it. Below we formalize the properties of input states of $\fBNTest(\tilde\bK)$ as a setup. 
\begin{setup}\label{setup:5}
		Setup \ref{setup:5} is defined to be the set of states that satisfy:
	\begin{itemize}
	\item The parties are as described in Section \ref{sec:4.1}.
		\item The client holds a tuple of key pair registers $\tilde\bK=\bK^{(1)},\bK^{(2)}\cdots \bK^{(L)}$, where each $\bK^{(i)}=(\bx_0^{(i)},\bx_1^{(i)})$. Each key has length $\kappa$. Correspondingly the server holds registers $\bS_{bsh}^{(1)},\bS_{bsh}^{(2)}\cdots \bS_{bsh}^{(L)}$, where each register has size $\kappa$. Denote $\tilde\bS_{bsh}$ as the tuple of registers $\bS_{bsh}^{(1)},\bS_{bsh}^{(2)}\cdots \bS_{bsh}^{(L)}$.
		\item The state is efficiently preparable;
		\item The state is key checkable for any key in $\tilde \bK$;
		\item The state is strongly-claw-free for any key pair in $\tilde \bK$.
	\end{itemize}
\end{setup}
\begin{lem}
If a sub-normalized state $\ket{\varphi}$ is in Setup \ref{setup:3}, $\ket{\varphi}\odot \tilde\bTheta$ is in Setup \ref{setup:5}.	
\end{lem}
\begin{proof}
By the condition we know $\ket{\varphi}=\fAddPhaseWithswitch^\fAdv((\bK^{(\switch)},\bK),\bTheta;1^\kappa)\ket{\varphi^1}$ where $\ket{\varphi^1}$ is in Setup \ref{setup:1}. By the definition of Setup \ref{setup:1} $\ket{\varphi^1}$ is strongly-claw-free for any key pair in $(\bK^{(\switch)},\bK)$. Then the claw-freeness of $\ket{\varphi}$ follows from Lemma \ref{lem:5.2a}.
\end{proof}

 Below we give the theorem for $\fBNTest$.
\subsection{$\fBNTest$ Implies Basis Norms Are Close to Uniform Vectors}
\begin{thm}[Implication of $\fBNTest$]\label{thm:bntest}
Suppose a sub-normalized purified joint state $\ket{\varphi}$ is in Setup \ref{setup:5} and is in a basis-honest form. 
Suppose $\fAdv$ is an efficient adversary such that 
\begin{equation}\label{eq:bnpass}|\Pi_{\ffail}\fBNTest^\fAdv(\tilde\bK;1^\kappa)\ket{\varphi}|^2\leq \epsilon\end{equation}
Expand the basis-honest part of $\ket{\varphi}$:
\begin{equation}\label{eq:195t}\Pi_{\basishonest(\tilde\bK)}\ket{\varphi}=\sum_{\tilde K\in \Domain(\tilde\bK)}\underbrace{\ket{\tilde K}}_{\tilde\bK}\otimes \sum_{\vec{b}\in \{0,1\}^{L}}\underbrace{\ket{\vec{x}_{\vec{b}}}}_{\tilde\bS_{bsh}}\otimes \ket{\varphi_{\tilde K,\vec{b}}}\end{equation}
and define the norm of $\vec{\bx}_{\vec{b}}$-branch  as $c_{\vec{b}}$:
$$c_{\vec{b}}:=|\sum_{\tilde K\in \Domain(\tilde \bK)}\ket{\tilde K}\otimes \ket{\vec{x}_{\vec{b}}}\otimes \ket{\varphi_{\tilde K,\vec{b}}}|$$
Additionally define
$$c:=|\ket{\varphi}|$$
 Then we have
$$\sum_{\vec{b}\in \{0,1\}^{L}}|c_{\vec{b}}-\frac{1}{\sqrt{2^{L}}}c|^2\leq 800\sqrt{\epsilon}+\fneg(\kappa)$$
\end{thm}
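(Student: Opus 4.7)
The plan is to track $\fBNTest$ step by step, show that its Hadamard-test branch fails with probability that directly lower-bounds a weighted average of $|c_{\vec{b}_0}-c_{\vec{b}_1}|^2$, and then close the argument with Fact \ref{fact:bntesttool}. First, by \eqref{eq:bnpass} and Theorem \ref{thm:sbt} applied to the standard-basis tests that appear both on the outside-$I$ gadgets and on the combined gadget, the state that enters the Hadamard-test branch stays (up to $O(\sqrt{\epsilon})$) in semi-honest form for $\bK^{(combined)}$. Using Lemma \ref{lem:multisbt}, $\fCombine(\tilde K,I)$ acts diagonally on the branches of \eqref{eq:195t}: two branches $\vec{b}_0$ and $\vec{b}_1$ of the original state end up in the same combined semi-honest branch iff $\vec{b}_0,\vec{b}_1$ agree on $[L]\setminus I$ and differ on $I$, and the outside-$I$ measurement outcomes pick out exactly the common restriction to $[L]\setminus I$. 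So for each ordered pair $(\vec{b}_0,\vec{b}_1)$ with $\vec{b}_0\neq\vec{b}_1$, there is a unique $I=I(\vec{b}_0,\vec{b}_1)$ that survives the measurements, reached with probability $1/2^L$ over the choice of $I$; conditioned on this $I$ plus the outside-$I$ and Combine outcomes that single this pair out (which happen with joint probability essentially $c_{\vec{b}_0}^2+c_{\vec{b}_1}^2$, the claw-free corrections being handled by Lemma \ref{lem:collapse}), the combined-gadget state is the (subnormalized) superposition $c_{\vec{b}_0}\ket{\vec{x}^{(combined)}_{\vec{b}_0}}\ket{\psi_{\vec{b}_0}}+c_{\vec{b}_1}\ket{\vec{x}^{(combined)}_{\vec{b}_1}}\ket{\psi_{\vec{b}_1}}$ with $\ket{\psi_{\vec{b}_0}},\ket{\psi_{\vec{b}_1}}$ unit vectors.

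Second, I would apply (an analog of) Corollary \ref{cor:prepht} to the combined gadget to lower-bound the Hadamard-test failure probability on this subnormalized state by
\begin{equation*}
\tfrac{1}{2}\bigl|c_{\vec{b}_0}\ket{\psi_{\vec{b}_0}}-c_{\vec{b}_1}\ket{\psi_{\vec{b}_1}}\bigr|^2-\fneg(\kappa)
\;\geq\;\tfrac{1}{2}\bigl(c_{\vec{b}_0}-c_{\vec{b}_1}\bigr)^2-\fneg(\kappa),
\end{equation*}
where the inequality uses $|\alpha\ket{\psi}-\beta\ket{\phi}|^2\geq(|\alpha|-|\beta|)^2$. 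Crucially, this failure probability is already unconditional (the factor $c_{\vec{b}_0}^2+c_{\vec{b}_1}^2$ that would re-normalize the conditioned state is absent here because I never condition, only sum), so the $c_{\vec{b}_0}^2+c_{\vec{b}_1}^2$ factor from the outside-$I$/Combine outcomes cancels exactly. Summing over $\vec{b}_0\neq\vec{b}_1$ and multiplying by the $1/2$ for the Hadamard-test branch of $\fBNTest$,
\begin{equation*}
\epsilon\;\geq\;\Pr[\fBNTest\text{ fails}]\;\geq\;\frac{1}{4\cdot 2^{L}}\sum_{\vec{b}_0\neq\vec{b}_1}\bigl(c_{\vec{b}_0}-c_{\vec{b}_1}\bigr)^2-\fneg(\kappa).
\end{equation*}

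Third, Theorem \ref{thm:sbt} applied to the standard-basis-test part of $\fBNTest$ also gives $\sum_{\vec{b}}c_{\vec{b}}^2\geq 1-O(\sqrt{\epsilon})$. Combined with the variance bound above, the two hypotheses of Fact \ref{fact:bntesttool} hold with parameter $O(\sqrt{\epsilon})$, so
\begin{equation*}
\sum_{\vec{b}\in\{0,1\}^{L}}\Bigl|c_{\vec{b}}-\tfrac{1}{\sqrt{2^{L}}}\Bigr|^2\leq 800\sqrt{\epsilon},
\end{equation*}
after bookkeeping the $(4+2/2^L)$ factor and the various $O(\sqrt{\epsilon})$ losses from the semi-honest approximation.

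The main obstacle is exactly the one flagged in Section \ref{sec:2.10}: if I tried to conclude by first conditioning on a favorable $(I,\vec{b}_0,\vec{b}_1)$ outcome and then invoking a single-pair Hadamard-test lemma, I would be applying Corollary \ref{cor:prepht} to a subnormalized state of exponentially small norm, whose re-normalization is not obviously efficiently preparable. The remedy is the observation in the previous paragraph: never condition, sum the joint probability directly, and watch the $c_{\vec{b}_0}^2+c_{\vec{b}_1}^2$ normalization factor cancel against the weight from the outside-$I$ measurement. A secondary subtlety is that the inner Combine protocol introduces its own lookup-table decryption steps, but since we only use the coarse semi-honest collapsing behavior via Lemma \ref{lem:multisbt} (and the strong claw-freeness from Setup \ref{setup:5}), this contributes only negligible corrections.
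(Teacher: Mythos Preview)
Your high-level strategy is sound and arrives at the same final inequality as the paper, but the route is genuinely different. The paper does \emph{not} lower-bound the Hadamard-test failure probability component by component. Instead it applies Corollary~\ref{cor:prepht} once to the \emph{full} post-combine state $\ket{\tilde\varphi^2}$, uses both the $\Pi_{\eqref{eq:htt}=0}$- and $\Pi_{\eqref{eq:htt}=1}$-parts to build an explicit server-side unitary $O$ (run $\fAdv_{HT}$, flip the phase on the fail subspace using the revealed $\bK^{(combined)}$, reverse $\fAdv_{HT}$) satisfying $O\ket{\tilde\varphi^2_0}\approx_{O(\epsilon^{1/4})}\ket{\tilde\varphi^2_1}$. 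Because $O$ acts trivially on the $(I,\text{transcript})$ registers, comparing the per-component norms of both sides directly yields $\sum_{I,\vec{b}}|c'_{I,\vec{b}}-c'_{I,X(I)\vec{b}}|^2\le O(\sqrt{\epsilon})$. The passage from $c'$ back to $c$ is done separately via Lemma~\ref{lem:multisbt} in both directions together with Fact~\ref{fact:bntesttool1}, which is the step you collapse into ``$O(\sqrt{\epsilon})$ losses from the semi-honest approximation.''

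Your direct failure-probability route can be made to work, but the sentence ``apply (an analog of) Corollary~\ref{cor:prepht} to the combined gadget to lower-bound the Hadamard-test failure probability on this subnormalized state'' is where the care is needed. Corollary~\ref{cor:prepht} (and Lemma~\ref{lem:prepht}) are stated for a normalized efficiently-preparable state; you cannot invoke them on a single exponentially small $(I,\text{transcript})$-component. What \emph{does} hold per component is Lemma~\ref{lem:C1}, because it is an exact equality with no efficiency hypothesis. So the correct argument is: do the blinding and fresh-padding replacements once at the full-state level (each a global $\fneg(\kappa)$ perturbation), after which Lemma~\ref{lem:C1} gives $|\Pi_{\text{pass}}\text{HT}^{\fAdv'}\ket{\bar\chi_{I,\vec b,0}}|=\tfrac{1}{\sqrt2}|\ket{\bar\chi_{I,\vec b,0}}|$ \emph{exactly} for every component; the global negligible error then transfers to an $\ell_2$-negligible error on the vector of per-component pass-norms, and Cauchy--Schwarz on that error vector yields the summed bound $\sum_{I,\vec b}(c'_{I,\vec b}-c'_{I,X(I)\vec b})^2\le O(\epsilon)+\fneg(\kappa)$. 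This is slightly sharper than the paper's $O(\sqrt{\epsilon})$ for the $c'$-part, but the $c\leftrightarrow c'$ translation still costs $O(\sqrt{\epsilon})$, so the final bound matches. In short: your remedy ``never condition, sum directly'' is right, but the lemma you need per component is Lemma~\ref{lem:C1}, not Corollary~\ref{cor:prepht}.
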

Note that when we apply this theorem the initial state $\ket{\varphi}$ might be far from normalized. Due to this reason, we use the failing probability in \eqref{eq:bnpass} instead of the passing probability, which makes the expression simpler and easier to use later.
\begin{proof}
The $\fBNTest$ will first combine the state into a two-key superposition state on an index set $I=(i_1i_2\cdots i_{|I|})\subseteq [L] $ randomly selected by the client. In this process (see Protocol \ref{prtl:combine2}) the server will send back $|I|-1$ output strings denoted by symbol $r$ in the protocol, and these output strings, in the passing space, will help the client determine $K^{(combined)}$ as follows:
\begin{equation}\label{eq:84bn}K^{(combined)}=(x^{(combined)}_0,x^{(combined)}_1)=(x_{0}^{(i_1)}x_{b^{(i_2)}}^{(i_2)}\cdots x_{b^{(i_{|I|})}}^{(i_{|I|})},x_{1}^{(i_1)}x_{1- b^{(i_2)}}^{(i_2)}\cdots x_{1- b^{(i_{|I|})}}^{(i_{|I|})}),\quad b^{(i_2)}b^{(i_3)}\cdots b^{(i_{|I|})}\in \{0,1\}^{|I|-1}\end{equation}
where the values of these subscripts (that is, $b$) are determined by the server's response (that is, $r$.)\par
Denote the output state of the first step (the key combination) of $\fBNTest$ as $\ket{\varphi^1}$.\par
 Then starting from $\ket{\varphi^1}$, the server is supposed to measure all the keys with superscripts in $[L]-I$ in the standard basis and send back the result. Denote $[L]-I=j_1j_2\cdots j_{L-|I|}$. The server's response, on the passing space, could be expressed as 
\begin{equation}\label{eq:85bn}x^{(j_1)}_{b^{(j_1)}}x^{(j_2)}_{b^{(j_2)}}\cdots x^{(j_{L-|I|})}_{b^{(j_{L-|I|})}},\quad b^{(j_1)}\cdots b^{(j_{L-|I|})}\in \{0,1\}^{L-|I|}\end{equation}
Denote the output state after this step as $\ket{\varphi^2}$.\par
Let's introduce notations that describe \eqref{eq:84bn}\eqref{eq:85bn} more concisely, with $I=(i_1i_2\cdots i_{|I|})\subseteq [L]$ and $\vec{b}_0\in \{0,1\}^L$, where the $i_1$-th bit of $\vec{b}_0$ is $0$.\par
 Define $X{(I)}\vec{b}_0$ as the output vector of doing a logical-not on each bit of $\vec{b}_0$ whose index is in $I$. And define
$$\vec{b}_0|_{I}= b^{(i_1)}b^{(i_2)}b^{(i_3)}\cdots b^{(i_{|I|})},\vec{x}_{\vec{b}_0|I}=x_{b^{(i_1)}}^{(i_1)}x_{b^{(i_2)}}^{(i_2)}\cdots x_{b^{(i_{|I|})}}^{(i_{|I|})}$$
$$\vec{b}_0|_{[L]-I}=b^{(j_1)}b^{(j_2)}\cdots b^{(j_{L-|I|})},\vec{x}_{\vec{b}_0|[L]-I}=x^{(j_1)}_{b^{(j_1)}}x^{(j_2)}_{b^{(j_2)}}\cdots x^{(j_{L-|I|})}_{b^{(j_{L-|I|})}}$$
Then \eqref{eq:84bn}\eqref{eq:85bn} could be expressed as follows. 
\begin{equation}\label{eq:216tt}\eqref{eq:84bn}=(\vec{x}_{\vec{b}_0|I},\vec{x}_{X(I)\vec{b}_0|I}),\eqref{eq:85bn}=\vec{x}_{\vec{b}_0|[L]-I}.\end{equation}
Then $\ket{\varphi^2}$ will go to the standard basis test with $1/2$ probability, where the server is suppose to measure the combined keys and return a response in $K^{(combined)}$. 
 By \eqref{eq:bnpass} the adversary could fail the standard basis test with probability $\leq 2\epsilon$, by Theorem \ref{thm:sbt} we know there exists an efficient server-side isometry $\tilde\fAdv_{ST}$ such that, define
 $$\ket{\tilde\varphi^2}:=\tilde\fAdv_{ST}\ket{\varphi^2}$$ 
 then  \begin{equation}\label{eq:236id}\text{$\ket{\tilde\varphi^2}$ is $1.5 \sqrt{\epsilon}$-basis-honest for $\bK^{(combined)}$.}
  \end{equation}

  Then $\Pi_{\basishonest(\bK^{(combined)})}\ket{\tilde\varphi^2}$ has the following form:
 $$\underbrace{\sum_{I\in \Domain(\bI)}\ket{I}\otimes \sum_{\tilde K\in \Domain(\tilde\bK)}\ket{\tilde K}\otimes \sum_{\vec{b}_0\in \{0,1\}^L:\text{the $i_1$-th bit is $0$, where $i_1$ is the first bit of $I$}}\otimes \underbrace{\ket{(\vec{x}_{\vec{b}_0|I},\vec{x}_{X(I)\vec{b}_0|I})}}_{\bK^{(combined)}}}_{\text{client}}$$
 $$\otimes\underbrace{\ket{\vec{x}_{\vec{b}_0|[L]-I}}}_{\text{server's response \eqref{eq:85bn}}} \otimes\sum_{\vec{b}\in \{\vec{b}_0,X(I)\vec{b}_0\}}\underbrace{\ket{\vec{x}_{\vec{b}|I}}}_{\tilde\bS_{bsh}}\otimes\ket{\varphi_{\tilde K,I,\vec{b}}}$$
%
where we use the notations in \eqref{eq:216tt}.\par
Define $c^\prime_{I,\vec{b}}$ as the norm of the $\vec{\bx}_{\vec{b}}$ branch of $\Pi_{\basishonest(\bK^{(combined)})}\ket{\tilde\varphi^2}$ when $\bI$ has value $I$,
 which is, \begin{equation}\label{eq:302}|\ket{I}\otimes \sum_{\tilde K\in \Domain(\tilde\bK)}\ket{\tilde K}\otimes\underbrace{\ket{(\vec{x}_{\vec{b}_0|I},\vec{x}_{X(I)\vec{b}_0|I})}}_{\bK^{(combined)}}\otimes \ket{\vec{x}_{\vec{b}|[L]-I}}\otimes\ket{\vec{x}_{\vec{b}|I}}\otimes\ket{\varphi_{\tilde K,I,\vec{b}}}|.\end{equation}
 where in \eqref{eq:302} $\vec{b}_0$ is defined to be $\vec{b}$ if the $i_1$-th bit of $\vec{b}$ is $0$, and $X(I)\vec{b}$ otherwise.\par
We will first prove $(c_{\vec{b}})_{\vec{b}\in \{0,1\}^L}$ and $(c^\prime_{I,\vec{b}})_{\vec{b}\in \{0,1\}^L}$ are close to each other. These two vectors are not directly comparable since we need to define $c_{I,\vec{b}}$ for different values of register $\bI$. That is, starting from \eqref{eq:195t}, after the client samples random values for the $\bI$ register, the state on the basis-honest space becomes:
$$\sum_{I\in \Domain(\bI)}\ket{I}\otimes\sum_{\tilde K\in \Domain(\tilde\bK)}\ket{\tilde K}\otimes \sum_{\vec{b}\in \{0,1\}^{L}}\underbrace{\ket{\vec{x}_{\vec{b}}}}_{\tilde\bS_{bsh}}\otimes \frac{1}{\sqrt{|\Domain(\bI)|}}\ket{\varphi_{\tilde K,\vec{b}}}$$
Then we can define $c_{I,\vec{b}}$ as the norm of the $\vec{\bx}_{\vec{b}}$ branch when the value of $\bI$ register is $I$:
\begin{equation}\label{eq:127}c_{I,\vec{b}}=|\ket{I}\otimes\sum_{\tilde K\in \Domain(\tilde \bK)}\ket{\tilde K}\otimes \ket{\vec{x}_{\vec{b}}}\otimes \frac{1}{\sqrt{|\Domain(\bI)|}}\ket{\varphi_{\tilde K,\vec{b}}}|\quad (\text{which is $=\frac{1}{\sqrt{2^L}}c_{\vec{b}}$})\end{equation}
 By making use of Lemma \ref{lem:multisbt} we can prove 
 \begin{equation}\label{eq:bn1}\sum_{I\in \Domain(\bI),\vec{b}\in \{0,1\}^{L}}|c_{I,\vec{b}}-c_{I,\vec{b}}^\prime|^2\leq 3\sqrt{\epsilon}+\fneg(\kappa)\end{equation}
  The proof details are put in the box below for continuity of proof stream.
 \begin{mdframed}

 Proof of \eqref{eq:bn1}:\par
 Consider the mapping between $\ket{\varphi}$ and $\ket{\tilde\varphi^{2}}$.\par
  The client-side message in the first round of $\fBNTest$ could be denoted as $\llbracket\fCombine(\tilde\bK^{(i)},\bI;1^\kappa)\rrbracket$. Let $D=\tilde\fAdv_{ST}\circ\fResponse\circ\fAdv_1$ where $\fAdv_1$ is the adversary's operation in the first round, and $\fResponse$ is the adversary's measure-and-response operation in the first and second round. By the definition of $\ket{\tilde\varphi^2}$:
  $$\ket{\tilde\varphi^2}=\fCalc(\bK^{(combined)})\circ D(\ket{\varphi}\odot \llbracket\fCombine(\tilde\bK,\bI;1^\kappa)\rrbracket)$$
  where we use $\fCalc(\bK^{(combined)})$ to denote the client-side operation that calculates $\bK^{(combined)}$ from the values of server's response $r$.\par
  By the basis-honest property of $\ket{\varphi}$ and $\ket{\tilde\varphi^2}$ we have 
 $$\Pi_{\basishonest(\bK^{(combined)})}\fCalc^{-1}\ket{\tilde\varphi^2}\approx_{\sqrt{2\epsilon}}\Pi_{\basishonest(\bK^{(combined)})}D\Pi_{\basishonest(\tilde\bK)}(\ket{\varphi}\odot \llbracket\fCombine(\tilde\bK^{(i)},\bI;1^\kappa)\rrbracket)$$
  $$\Pi_{\basishonest(\tilde\bK)}\ket{\varphi}\odot \llbracket\fCombine(\tilde\bK,\bI;1^\kappa)\rrbracket\approx_{\sqrt{2\epsilon}}\Pi_{\basishonest(\tilde\bK)}D^{-1}\Pi_{\basishonest(\bK^{(combined)})}\fCalc^{-1}\ket{\tilde\varphi^2}$$
  Apply Lemma \ref{lem:multisbt} to the right hand sides of both equations,  we get
  \begin{equation}\label{eq:136}\Pi_{\basishonest(\bK^{(combined)})}\ket{\tilde\varphi^2}\approx_{\sqrt{2\epsilon}+\fneg(\kappa)}\sum_{\vec{b}\in\{0,1\}^L}\Pi_{\vec{\bx}_{\vec{b}}}\Pi_{\basishonest(\bK^{(combined)})}D\Pi_{\vec{\bx}_{\vec{b}}}(\ket{\varphi}\odot \llbracket\fCombine(\tilde\bK,\bI;1^\kappa)\rrbracket)\end{equation}
  And
  \begin{equation}\label{eq:137}\Pi_{\basishonest(\tilde\bK)}\ket{\varphi}\odot \llbracket\fCombine(\tilde\bK,\bI;1^\kappa)\rrbracket\approx_{\sqrt{2\epsilon}+\fneg(\kappa)}\sum_{\vec{b}\in\{0,1\}^L}\Pi_{\vec{\bx}_{\vec{b}}}D^{-1}\Pi_{\vec{\bx}_{\vec{b}}}\Pi_{\basishonest(\bK^{(combined)})}\ket{\tilde\varphi^2}\end{equation}
 Compare the amplitudes on both sides.\begin{itemize}\item In the left hand side of \eqref{eq:136} the amplitude of $\vec{\bx}_{\vec{b}}$ branch when the $\bI$ register has value $I$ is $c_{I,\vec{b}}$, while on the right hand side the amplitude of $\vec{\bx}_{\vec{b}}$ branch when the $\bI$ register has value $I$ is no more than $c^\prime_{I,\vec{b}}$. \item Similarly for \eqref{eq:137} in the left hand side the amplitude of $\vec{\bx}_{\vec{b}}$ branch when the $\bI$ register has value $I$ is $c_{I,\vec{b}}$, while on the right hand side the amplitude of $\vec{\bx}_{\vec{b}}$ branch when the $\bI$ register has value $I$ is no more than $c^\prime_{I,\vec{b}}$.
	
\end{itemize}
Applying Fact \ref{fact:bntesttool1} completes the proof.
 \end{mdframed}

 And from $\ket{\varphi^2}$ with $1/2$ probability the state comes into a Hadamard test. By \eqref{eq:bnpass} we know \begin{equation}\label{eq:67nn}|\Pi_{\ffail}\fHadamardTest^{\fAdv_{HT}\tilde\fAdv_{ST}^\dagger}(\bK^{(combined)};1^\kappa)\ket{\tilde\varphi^2}|^2\leq 2\epsilon\end{equation} 
 where we use $\fAdv_{HT}$ to denote the part of $\fAdv$'s operation in the Hadamard test branch of the third step of $\fBNTest$. Then there is (whose proof is put into a box below for the continuity of the proof stream)
 \begin{equation}\label{eq:bn2}
 \sum_{I\in \Domain(\bI),\vec{b}\in \{0,1\}^{L}}|c^\prime_{I,\vec{b}}-c^\prime_{I,X{(I)}\vec{b}}|^2\leq 170\sqrt{\epsilon}+\fneg(\kappa)
 \end{equation}

 \begin{mdframed}
 Proof of \eqref{eq:bn2}:\par
 Define $\ket{\tilde\varphi^{2}_0}$, 	$\ket{\tilde\varphi^{2}_1}$ as correspondingly the $\bx^{(combined)}_0$, $\bx^{(combined)}_1$ branches of $\Pi_{\basishonest(\bK^{(combined)})} \ket{\tilde\varphi^{2}}$. That is,
$$\ket{\tilde\varphi_{0}^{2}}=\sum_{I\in \Domain(\bI)}\ket{I}\otimes \sum_{\tilde K\in \Domain(\tilde \bK)}\ket{\tilde K}\otimes \sum_{\vec{b}_0\in \{0,1\}^L:\text{the $i_1$-th bit is $0$, where $i_1$ is the first bit of $I$}}\otimes \underbrace{\ket{(\vec{x}_{\vec{b}_0|I},\vec{x}_{X(I)\vec{b}_0|I})}}_{\bK^{(combined)}}$$
$$\otimes\underbrace{\ket{\vec{x}_{\vec{b}_0|[L]-I}}}_{\text{server's response \eqref{eq:85bn}}}\otimes\ket{\vec{x}_{\vec{b}_0|I}}\otimes \ket{\varphi_{I,\tilde K,\vec{b}_0,0}}$$
%
$$\ket{\tilde\varphi_{1}^{2}}=\sum_{I\in \Domain(\bI)}\ket{I}\otimes \sum_{\tilde K\in \Domain(\tilde \bK)}\ket{\tilde K}\otimes \sum_{\vec{b}_0\in \{0,1\}^L:\text{the $i_1$-th bit is $0$, where $i_1$ is the first bit of $I$}}\otimes \underbrace{\ket{(\vec{x}_{\vec{b}_0|I},\vec{x}_{X(I)\vec{b}_0|I})}}_{\bK^{(combined)}}$$
$$\otimes\underbrace{\ket{\vec{x}_{\vec{b}_0|[L]-I}}}_{\text{server's response \eqref{eq:85bn}}}\otimes\ket{\vec{x}_{X(I)\vec{b}_0|I}}\otimes \ket{\varphi_{I,\tilde K,\vec{b}_0,1}}$$
 We will show \eqref{eq:67nn} implies an efficient server-side operator $O$ that \begin{equation}\label{eq:76}O(\ket{\tilde\varphi_0^2}\odot \llbracket \fHadamardTest\rrbracket\odot \bK^{(combined)})\approx_{9\epsilon^{1/4}+\fneg(\kappa)} \ket{\tilde\varphi_1^2}\odot \llbracket \fHadamardTest\rrbracket\odot \bK^{(combined)}\end{equation} 
 (where $\llbracket \fHadamardTest\rrbracket$ contains the random padding stored in register $\bpad$). 
 $O$ is constructed as follows:
  \begin{enumerate}
  \item Apply $\fAdv_{HT}\tilde\fAdv_{ST}^\dagger$ (but do not do $\fResponse$). Recall in the real execution of the protocol $\fHadamardTest$ after the server's local operation a $\fResponse$ operator copies the value of some server-side register to the transcript register $\bd$. Denote this server-side register that stores the (unsent) response as $\tilde\bd$.
  \item Do a control phase operation on the $\tilde\bd$ register that adds a $(-1)$ phase on the basis value $d\in \Domain(\tilde\bd)$ such that $$d\cdot (x_0^{(combined)}||H(pad||x_0^{(combined)}))+d\cdot (x_1^{(combined)}||H(pad||x_1^{(combined)}))=1$$ (that is, the space $\tspan(\Pi^{\tilde\bd}_{\eqref{eq:htt}=1})$).
  \item Reverse $\fAdv_{HT}\tilde\fAdv_{ST}^\dagger$.
  \end{enumerate}
Let's show \eqref{eq:76}. This is because, under the condition of \eqref{eq:67nn}, applying Corollary \ref{cor:prepht} we get
  \begin{align}&\Pi^{\tilde\bd}_{\eqref{eq:htt}=0}\Pi_{\neq 0}^{\text{last $\kappa$ bits of }\tilde\bd}\fHadamardTest^{\fAdv_{HT}\fAdv_{ST}^\dagger}(\bK^{(combined)};1^\kappa)\ket{\tilde\varphi_0^2}\\\approx_{3\epsilon^{1/4}+\fneg(\kappa)}& \Pi^{\tilde\bd}_{\eqref{eq:htt}=0}\Pi_{\neq 0}^{\text{last $\kappa$ bits of }\tilde\bd}\fHadamardTest^{\fAdv_{HT}\fAdv_{ST}^\dagger}(\bK^{(combined)};1^\kappa)\ket{\tilde\varphi_1^2}\end{align}
  	\begin{equation}\forall b\in \{0,1\},\Pi_{=0}^{\text{last $\kappa$ bits of }\tilde\bd}\fHadamardTest^\fAdv(\bK^{(combined)};1^\kappa)\ket{\tilde\varphi_b}\approx_{2\epsilon^{1/4}+\fneg(\kappa)}0\end{equation}
  \begin{align}&\Pi^{\tilde\bd}_{\eqref{eq:htt}=1}\Pi_{\neq 0}^{\text{last $\kappa$ bits of }\tilde\bd}\fHadamardTest^{\fAdv_{HT}\fAdv_{ST}^\dagger}(\bK^{(combined)};1^\kappa)\ket{\tilde\varphi_0^2}\\\approx_{2\epsilon^{1/4}+\fneg(\kappa)}& -\Pi^{\tilde\bd}_{\eqref{eq:htt}=1}\Pi_{\neq 0}^{\text{last $\kappa$ bits of }\tilde\bd}\fHadamardTest^{\fAdv_{HT}\fAdv_{ST}^\dagger}(\bK^{(combined)};1^\kappa)\ket{\tilde\varphi_1^2}\end{align}
  Combining these relations with the construction of $O$ completes the proof of \eqref{eq:76}.\par
Now return to the proof of \eqref{eq:bn2}. We  compare the coefficients on both sides of \eqref{eq:76}. On the right hand side of \eqref{eq:76} the norm of components where the $\bI$ register has value $I$, $\bK^{(combined)}$ register has value $(\vec{x}_{\vec{b}_0|I},\vec{x}_{X(I)\vec{b}_0|I})$, and the server-response register has value $\vec{x}_{\vec{b}_0|[L]-I}$, is $c^\prime_{I,X(I)\vec{b}_0}$. On the left hand side this norm is $c^\prime_{I,\vec{b}_0}$ (note that $O$ only has read-only access to the registers above thus it does not change the norm). Thus \eqref{eq:76} implies
 \begin{equation}
 \sqrt{\sum_{I\in \Domain(\bI),\vec{b}_0\in \{0,1\}^{L}:\text{the $i_1$-th bit of $\vec{b}_0$ is $0$, where $i_1$ is the first bit of $I$}}|c^\prime_{I,\vec{b}_0}-c^\prime_{I,X{(I)}\vec{b}_0}|^2}\leq 9\epsilon^{1/4}+\fneg(\kappa)
 \end{equation}
 which by a change of variable implies
  \begin{equation}
 \sqrt{\sum_{I\in \Domain(\bI),\vec{b}_1\in \{0,1\}^{L}:\text{the $i_1$-th bit of $\vec{b}_1$ is $1$, where $i_1$ is the first bit of $I$}}|c^\prime_{I,X{(I)}\vec{b}_1}-c^\prime_{I,\vec{b}_1}|^2}\leq 9\epsilon^{1/4}+\fneg(\kappa)
 \end{equation}
 which together implies \eqref{eq:bn2}.
 \end{mdframed}

 Combining \eqref{eq:bn1}\eqref{eq:bn2} we get
 \begin{equation}\label{eq:bn3} \sum_{I\in \Domain(\bI),\vec{b}\in \{0,1\}^{L}}|c_{I,\vec{b}}-c_{I,X{(I)}\vec{b}}|^2\leq 180\sqrt{\epsilon}+\fneg(\kappa)\end{equation}
 then substitute \eqref{eq:127} we know
 \eqref{eq:bn3} implies
 $$\frac{1}{2^L}\sum_{\vec{b}_0\in \{0,1\}^{L}}\sum_{\vec{b}_1\in \{0,1\}^{L}}|c_{\vec{b}_0}-c_{\vec{b}_1}|^2\leq 180\sqrt{\epsilon}+\fneg(\kappa)$$
 which by Fact \ref{fact:bntesttool} implies
 $$\sum_{\vec{b}\in \{0,1\}^{L}} |c_{\vec{b}}-\frac{1}{\sqrt{2^{L}}}c|^2\leq 800\sqrt{\epsilon}+\fneg(\kappa)$$
\end{proof}

\section{Putting All Together}\label{sec:12}
In this section we put the analysis of each subprotocol together. We prove the optimality of $\OPT$ in $\fpreRSPV$ in Section \ref{sec:12.1} and prove the verifiability property of $\fpreRSPV$ in Section \ref{sec:12.2}. (Recall these statements are formalized in Section \ref{sec:5.4r}). Recall that \begin{equation}\label{eq:143}\OPT=\frac{1}{3}\cos^2(\pi/8),p_{\fquiz}=\frac{1}{10},p_{\fcomp}=\frac{1}{10}\end{equation} 
\subsection{Proof of the Optimality of $\OPT$}\label{sec:12.1}
To prove the optimality of $\OPT$, we need to analyze each step of Protocol \ref{prtl:prerspv}, and make use of the property of $\fInPhTest$. 
\begin{proof}[Proof of Theorem \ref{thm:opt}]
	Suppose the initial purified joint state is $\ket{\varphi^0}=O\ket{0}$ as described in Definition \ref{defn:opt}. Suppose an efficient adversary $\fAdv$ satisfies \begin{equation}\label{eq:100}|\Pi_{\fpass}\fpreRSPV^{\fAdv}(1^L,1^\kappa)\ket{\varphi^0}|> 1-10^{-2000}\end{equation}
	\begin{enumerate}
	\item (Analysis of key-pair-superposition generation) First by Theorem \ref{thm:gp} we could assume the state after the first step of the protocol is within Setup \ref{setup:1}. Denote the purified joint state as $\ket{\varphi^1}$.\par

	\item (Analysis of the first $\fStdBTest$) Since the standard basis test is executed on $\ket{\varphi^1}$ with probability $1/2$, by \eqref{eq:100} the server has to be able to make the client output $\fpass$ in this test with probability $\geq 1-2\times 10^{-2000}$. That allows us to apply Lemma \ref{thm:sbt} and get
	\begin{equation}\label{eq:81}\exists \text{ efficient server-side operator } \tilde\fAdv_{2.sbt}:\tilde\fAdv_{2.sbt}\ket{\varphi^{1}}\text{ is $1.5\times 10^{-1000}$-basis-honest for $(\bK^{(\switch)}, \bK)$}\end{equation}
	 Denote \begin{equation}\ket{\tilde\varphi^{1}}:=\tilde\fAdv_{2.sbt}\ket{\varphi^{1}}\end{equation}
	 	\item (Analysis of the switch gadget technique) Denote the output state of $\fAddPhaseWithswitch$ as $\ket{\varphi^{2.a}}$:
	 	\begin{align}\ket{\varphi^{2.a}}:=&\fAddPhaseWithswitch^{\fAdv_{2.a}}((\bK^{(\switch)},\bK),\bTheta;1^\kappa)\ket{\varphi^1}\\=&\fAddPhaseWithswitch^{\fAdv_{2.a}\tilde\fAdv_{2.sbt}^{-1}}((\bK^{(\switch)},\bK),\bTheta;1^\kappa)\ket{\tilde\varphi^1}\label{eq:139}\end{align}
	 	where $\fAdv_{2.a}$ is the $\fAdv$'s operation in step 2.a of the verifiable state preparation case in Protocol \ref{prtl:prerspv}.\par
	 	Since the verifiable state preparation case is reached with probability $\frac{1}{2}$, together with \eqref{eq:100} we know 
	 	$$|\Pi_{\fpass}\ket{\varphi^{2.a}}|^2\geq 1-2\times 10^{-2000}$$
	 	 That allows us to apply Theorem \ref{thm:7.3}. As Setup \ref{setup:3}, define $H^\prime$ as the blinded version of $H$ where entries $\{0,1\}^\kappa||K^{(\switch)}||\cdots$ are blinded. Define $\fAdv^{blind}_{\geq 2.b}$ as the blinded version of $\fAdv_{\geq 2.b}$ (the part of $\fAdv$ starting from the step $b$ of the verifiable state preparation) where each random oracle query is replaced by a query to $H^\prime$. Then we have
	 	 \begin{equation}\label{eq:202}\fpreRSPV_{\geq 2.b}^{\fAdv_{\geq 2.b}}\ket{\varphi^{2.a}}\approx_{15\times 10^{-1000}+\fneg(\kappa)}\fpreRSPV_{\geq 2.b}^{\fAdv_{\geq 2.b}^{blind}}\ket{\varphi^{2.a}}\end{equation}
	 	where $\fpreRSPV_{\geq 2.b}$ denotes the execution of Protocol \ref{prtl:prerspv} starting from the 2.b step.\par
	 	We will analyze the right hand side of \eqref{eq:202} to understand the left hand side. 
	 	\item (Analysis of the second $\fStdBTest$) First by \eqref{eq:202}\eqref{eq:100} we know
	 	\begin{equation}\label{eq:239ix}|\Pi_{\fpass}\fpreRSPV_{\geq 2.b}^{\fAdv_{\geq 2.b}^{blind}}\ket{\varphi^{2.a}}|^2\geq 1-16\times 10^{-1000}-\fneg(\kappa)\end{equation}
	 	Recall $\cF_{blind}$ is defined as the set of adversaries that only query the blinded oracle $H^\prime$. First in $\fpreRSPV_{\geq 2.b}$ a standard basis test is executed with probability $\frac{1}{5}$. This together with \eqref{eq:239ix} by Theorem \ref{thm:sbt} implies
	 	\begin{equation}\exists \text{ efficient server-side operator } \tilde\fAdv_{2.b.sbt}^{blind}\in \cF_{blind}:\tilde\fAdv_{2.b.sbt}^{blind}\ket{\varphi^{2.a}}\text{ is $(9\times 10^{-500}+\fneg(\kappa))$-basis-honest for $\bK$}\end{equation}
	 	\item (Analysis of $\fAddPhaseWithswitch$) Define 
	 	\begin{equation}\label{eq:327}\ket{\tilde\varphi^{2.a}}:=\tilde\fAdv_{2.b.sbt}^{blind}\ket{\varphi^{2.a}}\end{equation}
	 	Now $\ket{\tilde\varphi^{2.a}}$ is in Setup \ref{setup:3}. 
	 	  Applying Theorem \ref{thm:indofr1} and Fact \ref{fact:8} we have
	 	\begin{equation}\label{eq:142}\cR_1(\ket{\$_1}\otimes\fReviseRO\ket{\tilde\varphi^{2.a}})\approx_{\fneg(\kappa)}\ket{\$_1}\otimes \fReviseRO\ket{\tilde\varphi^{2.a}}\approx^{ind:\cF_{cq\land blind}}\ket{\tilde\varphi^{2.a}}\end{equation}
\item (Analysis of $\fInPhTest$)	Now the left hand side of \eqref{eq:142} is in Setup \ref{setup:4}. 
	From \eqref{eq:239ix}\eqref{eq:327}\eqref{eq:142} and the fact that $\fInPhTest$ is executed with probability $\frac{1}{5}$ we know
	\begin{equation}|\Pi_{\fpass}\fInPhTest^{\fAdv^{blind}_{\fInPhTest}\circ (\tilde\fAdv_{ 2.b.sbt}^{blind})^{-1}}(\bK, \bTheta;1^\kappa)(\cR_1(\ket{\$_1}\otimes\fReviseRO\ket{\tilde\varphi^{2.a}}))|^2\geq 1-80\times 10^{-1000}-\fneg(\kappa)\end{equation}
	where $\fAdv_{\fInPhTest}^{blind}$ is the part of $\fAdv^{blind}_{\geq 2.b}$ in the $\fInPhTest$ step. Apply Theorem \ref{thm:9.2} and we get
	\begin{equation}\label{eq:211j}|\Pi_{\fwin}\fInPhTest^{\fAdv^{blind}_{\fInPhTest}\circ (\tilde\fAdv_{ 2.b.sbt}^{blind})^{-1}}(\bK, \bTheta;1^\kappa)(\cR_1(\ket{\$_1}\otimes\fReviseRO\ket{\tilde\varphi^{2.a}}))|^2\leq (\OPT+10^{-240}+\fneg(\kappa))\end{equation}
	
	\end{enumerate}
	Now we start from \eqref{eq:211j} and substituting these analysis back step-by-step. 
	 Substituting \eqref{eq:142},  we know
	 \begin{equation}|\Pi_{\fwin}\fInPhTest^{\fAdv^{blind}_{\fInPhTest}}(\bK, \bTheta;1^\kappa)\ket{\varphi^{2.a}})|^2\leq (\OPT+10^{-240}+\fneg(\kappa))\end{equation}
	 Since $\fInPhTest$ is executed with $\frac{1}{5}$ probability in $\fpreRSPV_{\geq 2.b}$, this implies
	 \begin{equation}|\Pi_{\fwin}\fpreRSPV_{\geq 2.b}^{\fAdv^{blind}_{\geq 2.b}}(\bK, \bTheta;1^\kappa)\ket{\varphi^{2.a}})|^2\leq (\OPT+10^{-240}+\fneg(\kappa))\cdot\frac{1}{5}\end{equation}
	 Substituting \eqref{eq:202}:
	 \begin{equation}|\Pi_{\fwin}\fpreRSPV_{\geq 2.b}^{\fAdv_{\geq 2.b}}(\bK, \bTheta;1^\kappa)\ket{\varphi^{2.a}})|^2\leq (\OPT+1.01\times 10^{-240}+\fneg(\kappa))\cdot\frac{1}{5}\end{equation}	 
	 Since $\fpreRSPV_{\geq 2.b}$ is executed with $\frac{1}{2}$ probability in $\fpreRSPV$, we get
	 \begin{equation}|\Pi_{\fwin}\fpreRSPV^{\fAdv}(1^L,1^\kappa)\ket{\varphi^0}|\leq ( \OPT+1.01\times 10^{-240}+\fneg(\kappa))\cdot\frac{1}{10}\end{equation}
	  which completes the proof.
\end{proof}

\subsection{A Proof of Theorem \ref{thm:prerspvv}}\label{sec:12.2}
In this section we prove Theorem \ref{thm:prerspvv}, the verifiability property of $\fpreRSPV$. The proof is divided into three parts: (1) First we analyze each step of the protocol; (2) then we formally construct the isometry; (3) finally we prove the isometry satisfies Definition \ref{defn:rspvv}.
%
%
\begin{proof}[Proof of Theorem \ref{thm:prerspvv}]
Suppose the initial purified joint state is $\ket{\varphi^0}=O\ket{0}$ as described in Definition \ref{defn:opt}. Consider an efficient adversary $\fAdv$ whose corresponding final output state is $$\ket{\varphi^\prime}:=\fpreRSPV^\fAdv(1^L,1^\kappa)\ket{\varphi^0}$$ 
Suppose the passing probability and the winning probability are big:
\begin{equation}\label{eq:78}|\Pi_{\fpass}\ket{\varphi^\prime}|^2> 1-10^{-2000}\end{equation}
\begin{equation}\label{eq:226h}|\Pi_{\fwin}\ket{\varphi^\prime}|^2> (\OPT-10^{-200})\cdot p_{\fquiz}=(\OPT-10^{-200})\cdot \frac{1}{10}\end{equation}
Our goal is to give an isometry $\fSim^{\fAdv}$ that satisfies \eqref{eq:61q}.\par
\paragraph{Part I: analysis of the protocol} To achieve this goal, we first need to analyze the protocol step-by-step to understand $\ket{\varphi^\prime}$.\par
This part of the proof is as below. The first five steps are the same as the proof of Theorem \ref{thm:opt}; we give a summary for the important conclusion in each step.
\begin{enumerate}
\item[1-5.]\begin{equation}\ket{\tilde\varphi^{1}}:=\tilde\fAdv_{2.sbt}\ket{\varphi^{1}}\text{is $1.5\times 10^{-1000}$-basis-honest for $(\bK^{(\switch)}, \bK)$.}\end{equation}
\begin{equation}
	\ket{\varphi^{2.a}}:=\fAddPhaseWithswitch^{\fAdv_{2.a}}((\bK^{(\switch)},\bK),\bTheta;1^\kappa)\ket{\varphi^1}
\end{equation}
	 	$$|\Pi_{\fpass}\ket{\varphi^{2.a}}|^2\geq 1-2\times 10^{-2000}$$
	 	 \begin{equation}\label{eq:220j}\fpreRSPV_{\geq 2.b}^{\fAdv_{\geq 2.b}}\ket{\varphi^{2.a}}\approx_{1.5\times 10^{-500}+\fneg(\kappa)}\fpreRSPV_{\geq 2.b}^{\fAdv_{\geq 2.b}^{blind}}\ket{\varphi^{2.a}}\end{equation}
	 	 \begin{equation}\label{eq:253ig}|\Pi_{\fpass}\fpreRSPV_{\geq 2.b}^{\fAdv_{\geq 2.b}^{blind}}\ket{\varphi^{2.a}}|^2\geq 1-3.1\times 10^{-500}-\fneg(\kappa)\end{equation}
	 	\begin{equation}\label{eq:254k}\exists \text{ efficient server-side operator } \tilde \fAdv_{2.b.sbt}^{blind}\in \cF_{blind}:\tilde \fAdv_{2.b.sbt}^{blind}\ket{\varphi^{2.a}}\text{ is $(9\times 10^{-500}+\fneg(\kappa))$-basis-honest for $\bK$}\end{equation}
	 	$$\ket{\tilde\varphi^{2.a}}:=\tilde \fAdv_{2.b.sbt}^{blind}\ket{\varphi^{2.a}}$$
  \begin{equation}\label{eq:161}\cR_1(\ket{\$_1}\otimes \fReviseRO\ket{\tilde\varphi^{2.a}})\approx_{\fneg(\kappa)}\ket{\$_1}\otimes \fReviseRO\ket{\tilde\varphi^{2.a}}\approx^{ind:\cF_{cq\land blind}}\ket{\tilde\varphi^{2.a}}\end{equation}
\item[6.] (Analysis of $\fCoPhTest$)  By \eqref{eq:253ig} and the fact that $\fCoPhTest$ is executed with probability $\frac{1}{5}$ in $\fpreRSPV_{\geq 2.b}$ we know 
  $$|\Pi_{\fpass}\fCoPhTest^{\fAdv_{\fCoPhTest}^{blind}}(\bK,\bTheta;1^\kappa)\ket{\varphi^{2.a}}|^2 \geq 1-16\times 10^{-500}-\fneg(\kappa)$$
  Thus by \eqref{eq:161}
  $$|\Pi_{\fpass}\fCoPhTest^{\fAdv_{\fCoPhTest}^{blind}\circ (\tilde\fAdv_{2.b.sbt}^{blind})^{-1}}(\bK,\bTheta;1^\kappa)\cR_1(\ket{\$_1}\otimes \fReviseRO\ket{\tilde\varphi^{2.a}})|^2 \geq 1-16\times 10^{-500}-\fneg(\kappa)$$
  where $\cR_1(\ket{\$_1}\otimes \fReviseRO\ket{\tilde\varphi^{2.a}})$ is in Setup \ref{setup:4}. We can further randomize $\ket{\tilde\varphi^{2.a}}$ under $\cR_2$ by Theorem \ref{thm:8.1}:
     \begin{equation}\label{eq:149}\cR_2(\ket{\$_2}\otimes (\cR_1(\ket{\$_1}\otimes \fReviseRO\ket{\tilde\varphi^{2.a}})))\approx_{10^{-123}+\fneg(\kappa)}\ket{\$_2}\otimes\cR_1(\ket{\$_1}\otimes \fReviseRO\ket{\tilde\varphi^{2.a}})\end{equation}
\item[7.] (Analysis of $\fInPhTest$) Finally we analyze the $\fInPhTest$. As the previous step, we know $\cR_1(\ket{\$_1}\otimes \fReviseRO\ket{\tilde\varphi^{2.a}})$ is in Setup \ref{setup:4}. By \eqref{eq:253ig}\eqref{eq:161} and the fact that $\fInPhTest$ is executed with probability $\frac{1}{5}$ in $\fpreRSPV_{\geq 2.b}$ we know the following about the passing probability and winning probability: 
     $$|\Pi_{\fpass}\fInPhTest^{\fAdv_{\fInPhTest}^{blind}\circ (\tilde\fAdv_{2.b.sbt}^{blind})^{-1}}(\bK,\bTheta;1^\kappa)\cR_1(\ket{\$_1}\otimes \fReviseRO\ket{\tilde\varphi^{2.a}})|^2> 1-16\times 10^{-500}-\fneg(\kappa)$$
     $$|\Pi_{\fwin}\fInPhTest^{\fAdv_{\fInPhTest}^{blind}\circ (\tilde\fAdv_{2.b.sbt}^{blind})^{-1}}(\bK,\bTheta;1^\kappa)\cR_1(\ket{\$_1}\otimes \fReviseRO\ket{\tilde\varphi^{2.a}})|^2> \OPT- 16\times 10^{-500}-\fneg(\kappa)$$
      By Theorem \ref{thm:9.8} we can further randomize the state under $\cP$:
      \begin{equation}\label{eq:223}\cP^\dagger\cP\cR_1(\ket{\$_1}\otimes \fReviseRO\ket{\tilde\varphi^{2.a}})))\approx_{10^{-31}+\fneg(\kappa)} \cR_1(\ket{\$_1}\otimes \fReviseRO\ket{\tilde\varphi^{2.a}})\end{equation}

   \item[8.] (A temporary summary and preparation)  \eqref{eq:161}\eqref{eq:149}\eqref{eq:223} implies
      
      \begin{equation}\label{eq:92}\cP^\dagger\cP\cR_2(\ket{\$_2}\otimes (\cR_1(\ket{\$_1}\otimes \fReviseRO\ket{\tilde\varphi^{2.a}})))\approx_{1.1\times 10^{-31}+\fneg(\kappa)}  \ket{\$_2}\otimes \ket{\$_1}\otimes\fReviseRO\ket{\tilde\varphi^{2.a}}\approx^{ind:\cF_{cq\land blind}}\ket{\tilde\varphi^{2.a}}\end{equation}
     We aim at giving a server-side isomorphism between the left hand side of \eqref{eq:92} and the target state \eqref{eq:target}.\par
     Let's first analyze the state in the left hand side of \eqref{eq:92}. By Theorem \ref{thm:9.pre} $\cR_2(\ket{\$_2}\otimes (\cR_1(\ket{\$_1}\otimes \fReviseRO\circ \Pi_{\basishonest(\bK)}\ket{\tilde\varphi^{2.a}})))$ is a pre-phase-honest form. Recall the indicator qubit registers used by $\cP$ are denoted by $\bindic_+,\bindic_-$. The operation of $\cP$ extracts the information of whether the state phases are under complex conjugation to these indicator registers, and $\cP^\dagger\cP$ erases the values in the indicator registers. Below we want to make this information explicit thus we first work under $\cP$ instead of $\cP^\dagger\cP$. Besides that, we also want to treat the $\bK^{(0)}$ key pair separately.\par 
      By Lemma \ref{lem:10.6} we have, there exist states $\ket{\phi_{0,+}}$, $\ket{\phi_{0,-}}$, $\ket{\phi_{1,+}}$, $\ket{\phi_{1,-}}$ such that:
     \begin{align}\forall b^{(0)}\in \{0,1\}:\quad &\cP\cR_2(\ket{\$_2}\otimes (\cR_1(\ket{\$_1}\otimes \Pi^{\bS^{(0)}_{bsh}}_{\bx^{(0)}_{b^{(0)}}}\fReviseRO\circ \Pi_{\basishonest(\bK)}\ket{\tilde \varphi_{2.a}})))\nonumber\\=& \underbrace{\ket{\ftrue}}_{\bindic_+}\underbrace{\ket{\ffalse}}_{\bindic_-}\ket{\phi_{b^{(0)},+}}+\ket{\ffalse}\ket{\ftrue}\ket{\phi_{b^{(0)},-}}\label{eq:93}\end{align}
     and for each $b^{(0)}\in \{0,1\}$, $\ket{\phi_{b^{(0)},+}}$, $\ket{\phi_{b^{(0)},-}}$ have the following form, for some state family\\ $(\ket{\phi_{K,b^{(0)},\vec{b},+}},\ket{\phi_{K,b^{(0)},\vec{b},-}})_{K\in \Domain(\bK),b^{(0)}\in \{0,1\},\vec{b}\in \{0,1\}^L}$:
     \begin{equation}\label{eq:167r}\ket{\phi_{b^{(0)},+}}:=\underbrace{\sum_{K\in \Domain(\bK)}\ket{K}\otimes \sum_{\Theta\in \Domain(\bTheta)}\ket{\Theta}}_{\text{client}}\otimes\underbrace{\ket{x_{b^{(0)}}^{(0)}}\otimes \sum_{\vec{b}\in \{0,1\}^L}e^{\tSUM(\vec{\Theta}_{\vec{b}})\mi\pi/4}\ket{\vec{x}_{\vec{b}}}}_{\bS_{bsh}}\ket{\phi_{K,b^{(0)},\vec{b},+}}\end{equation}
        \begin{equation}\label{eq:170r}\ket{\phi_{b^{(0)},-}}:=\sum_{K\in \Domain(\bK)}\ket{K}\otimes \sum_{\Theta\in \Domain(\bTheta)}\ket{\Theta}\otimes\ket{x_{b^{(0)}}^{(0)}}\otimes \sum_{\vec{b}\in \{0,1\}^L}e^{-\tSUM(\vec{\Theta}_{\vec{b}})\mi\pi/4}\ket{\vec{x}_{\vec{b}}}\ket{\phi_{K,b^{(0)},\vec{b},-}}\end{equation}
        For each $b^{(0)}\in \{0,1\}$, $\vec{b}\in \{0,1\}^L$, denote the overall norms of the $\vec{\bx}_{\vec{b}}$-branch of these states as $c_{b^{(0)},\vec{b},+},c_{b^{(0)},\vec{b},-}$:
      \begin{equation}\label{eq:262uy}c_{b^{(0)},\vec{b},+}=|\sum_{K\in \Domain(\bK)}\ket{K}\otimes \sum_{\Theta\in \Domain(\bTheta)}\ket{\Theta}\otimes\ket{x_{b^{(0)}}^{(0)}}\otimes e^{\tSUM(\vec{\Theta}_{\vec{b}})\mi\pi/4}\ket{\vec{x}_{\vec{b}}}\ket{\phi_{K,b^{(0)},\vec{b},+}}|\end{equation}
      \begin{equation}\label{eq:263uy}c_{b^{(0)},\vec{b},-}=|\sum_{K\in \Domain(\bK)}\ket{K}\otimes \sum_{\Theta\in \Domain(\bTheta)}\ket{\Theta}\otimes\ket{x_{b^{(0)}}^{(0)}}\otimes e^{-\tSUM(\vec{\Theta}_{\vec{b}})\mi\pi/4}\ket{\vec{x}_{\vec{b}}}\ket{\phi_{K,b^{(0)},\vec{b},-}}|\end{equation}
      Now the remaining thing is to derive properties of these norms by analyzing $\fBNTest$. 
      \item[9.] (Analysis of $\fBNTest$) By \eqref{eq:253ig} and the fact that $\fBNTest$ is executed with probability $\frac{1}{5}$, we know
      $$|\Pi_{\fpass}\fBNTest^{\fAdv_{\fBNTest}^{blind}(\tilde\fAdv_{2.b.sbt}^{blind})^{-1}}(\tilde\bK;1^\kappa)\ket{\tilde\varphi^{2.a}}|^2>1-16\times 10^{-500}-\fneg(\kappa)$$
            $$\Rightarrow|\Pi_{\ffail}\fBNTest^{\fAdv_{\fBNTest}^{blind}(\tilde\fAdv_{2.b.sbt}^{blind})^{-1}}(\tilde\bK;1^\kappa)\Pi_{\basishonest(\bK)}\ket{\tilde\varphi^{2.a}}|^2<40\times 10^{-500}+\fneg(\kappa)$$
      Apply the collapsing property (Lemma \ref{lem:collapse}) on $\bS_{bsh}^{(0)}$, use $\Pi_{\bx^{(0)}_0}^{\bS_{bsh}^{(0)}}$, $\Pi_{\bx^{(0)}_1}^{\bS_{bsh}^{(0)}}$ to denote the projection onto the $\bx_0^{(0)}$-branch and $\bx_1^{(0)}$-branch, we have
                  \begin{equation}\label{eq:353}|\Pi_{\ffail}\fBNTest^{\fAdv_{\fBNTest}^{blind}(\tilde\fAdv_{2.b.sbt}^{blind})^{-1}}(\tilde\bK;1^\kappa)\Pi_{\bx^{(0)}_0}^{\bS_{bsh}^{(0)}}\Pi_{\basishonest(\bK)}\ket{\tilde\varphi^{2.a}}|^2<40\times 10^{-500}+\fneg(\kappa)\end{equation}
             \begin{equation}\label{eq:3532}|\Pi_{\ffail}\fBNTest^{\fAdv_{\fBNTest}^{blind}(\tilde\fAdv_{2.b.sbt}^{blind})^{-1}}(\tilde\bK;1^\kappa)\Pi_{\bx^{(0)}_1}^{\bS_{bsh}^{(0)}}\Pi_{\basishonest(\bK)}\ket{\tilde\varphi^{2.a}}|^2<40\times 10^{-500}+\fneg(\kappa)\end{equation}
             Now we can analyze each of \eqref{eq:353}\eqref{eq:3532} to derive basis norm relations for $b^{(0)}=0,1$. But to get what we want on \eqref{eq:262uy}\eqref{eq:263uy}, we need to separate the honest phase part and the complex-conjugated phase part, as in \eqref{eq:93}.\par
             Without loss of generality let's first consider \eqref{eq:353} and use it to analyze \eqref{eq:262uy}\eqref{eq:263uy} for $b^{(0)}=0$. To achieve it, we first substitute \eqref{eq:92} to \eqref{eq:353} and get
             \begin{equation*}|\Pi_{\ffail}\fBNTest^{\fAdv_{\fBNTest}^{blind}(\tilde\fAdv_{2.b.sbt}^{blind})^{-1}}(\tilde\bK;1^\kappa)\end{equation*}\begin{equation}\label{eq:167news}\Pi_{\bx^{(0)}_0}^{\bS_{bsh}^{(0)}}\cP^\dagger\cP (\cR_2(\ket{\$_2}\otimes(\cR_1(\ket{\$_1}\otimes \fReviseRO\circ\Pi_{\basishonest(\bK)}\ket{\tilde\varphi^{2.a}})))))|^2<1.2\times 10^{-31}+\fneg(\kappa)\end{equation}
             Recall the definition of $\cP$ in Definition \ref{defn:insub}, there is: $$\cP\Pi^{\bS_{bsh}^{(0)}}_{\bx_0^{(0)}}= (\Pi^{\bindic_{+}}_{\ftrue}\Pi^{\bindic_{-}}_{\ffalse}+ \Pi^{\bindic_{+}}_{\ffalse}\Pi^{\bindic_{-}}_{\ftrue})\cP_{0,-}\cP_{0,+}\Pi^{\bS_{bsh}^{(0)}}_{\bx_0^{(0)}}$$
             Note that with only the $\bx_0^{(0)}$-branch, $\cP_{0,+},\cP_{0,-}$ are equivalent to local operators that only apply on the $\btheta_0^{(0)}$, $\bindic_{+},\bindic_{-}$ registers. But $\fBNTest$ does not take these registers as inputs. And $\Pi_{\ffail}$ and $\cP_{0,+},\cP_{0,-}$ also commute, which implies \eqref{eq:167news} could be further simplified by replacing the $\cP^\dagger\cP$ in it by $\cP$:
             \begin{equation*}|\Pi_{\ffail}\fBNTest^{\fAdv_{\fBNTest}^{blind}(\tilde\fAdv_{2.b.sbt}^{blind})^{-1}}(\tilde\bK;1^\kappa)\end{equation*}\begin{equation}\label{eq:167news2}\Pi_{\bx^{(0)}_0}^{\bS_{bsh}^{(0)}}\cP (\cR_2(\ket{\$_2}\otimes(\cR_1(\ket{\$_1}\otimes \fReviseRO\circ\Pi_{\basishonest(\bK)}\ket{\tilde\varphi^{2.a}})))))|^2<1.2\times 10^{-31}+\fneg(\kappa)\end{equation}
             which could be further decomposed into
 \begin{equation*}|\Pi_{\ffail}\fBNTest^{\fAdv_{\fBNTest}^{blind}(\tilde\fAdv_{2.b.sbt}^{blind})^{-1}}(\tilde\bK;1^\kappa)\end{equation*}\begin{equation}\label{eq:167news3}\Pi^{\bindic_{+}}_{\ftrue}\Pi^{\bindic_{-}}_{\ffalse}\cP_{0,-}\cP_{0,+}\Pi^{\bS_{bsh}^{(0)}}_{\bx_0^{(0)}} (\cR_2(\ket{\$_2}\otimes(\cR_1(\ket{\$_1}\otimes \fReviseRO\circ\Pi_{\basishonest(\bK)}\ket{\tilde\varphi^{2.a}})))))|^2<1.2\times 10^{-31}+\fneg(\kappa)\end{equation}
              \begin{equation*}|\Pi_{\ffail}\fBNTest^{\fAdv_{\fBNTest}^{blind}(\tilde\fAdv_{2.b.sbt}^{blind})^{-1}}(\tilde\bK;1^\kappa)\end{equation*}\begin{equation}\label{eq:167news4}\Pi^{\bindic_{+}}_{\ffalse}\Pi^{\bindic_{-}}_{\ftrue}\cP_{0,-}\cP_{0,+}\Pi^{\bS_{bsh}^{(0)}}_{\bx_0^{(0)}} (\cR_2(\ket{\$_2}\otimes(\cR_1(\ket{\$_1}\otimes \fReviseRO\circ\Pi_{\basishonest(\bK)}\ket{\tilde\varphi^{2.a}})))))|^2<1.2\times 10^{-31}+\fneg(\kappa)\end{equation}
              where the input states are the same as the two terms in \eqref{eq:93}. Thus applying the properties of $\fBNTest$ (Theorem \ref{thm:bntest}) implies
              \begin{equation}\label{eq:3590}\sum_{\vec{b}\in \{0,1\}^L}|c_{0,\vec{b},+}-\frac{1}{\sqrt{2^L}}c_{0,+}|^2\leq 10^{-12}+\fneg(\kappa)\text{ where }c_{0,+}:=\sqrt{\sum_{\vec{b}\in \{0,1\}^L}c_{0,\vec{b},+}^2}\end{equation}
              \begin{equation}\label{eq:3591}\sum_{\vec{b}\in \{0,1\}^L}|c_{0,\vec{b},-}-\frac{1}{\sqrt{2^L}}c_{0,-}|^2\leq 10^{-12}+\fneg(\kappa)\text{ where }c_{0,-}:=\sqrt{\sum_{\vec{b}\in \{0,1\}^L}c_{0,\vec{b},-}^2}\end{equation}
              Similarly for $b^{(0)}=1$ we have
      \begin{equation}\label{eq:3592}\sum_{\vec{b}\in \{0,1\}^L}|c_{1,\vec{b},+}-\frac{1}{\sqrt{2^L}}c_{1,+}|^2\leq 10^{-12}+\fneg(\kappa)\text{ where }c_{1,+}:=\sqrt{\sum_{\vec{b}\in \{0,1\}^L}c_{1,\vec{b},+}^2}\end{equation}
              \begin{equation}\label{eq:3593}\sum_{\vec{b}\in \{0,1\}^L}|c_{1,\vec{b},-}-\frac{1}{\sqrt{2^L}}c_{1,-}|^2\leq 10^{-12}+\fneg(\kappa)\text{ where }c_{1,-}:=\sqrt{\sum_{\vec{b}\in \{0,1\}^L}c_{1,\vec{b},-}^2}\end{equation}
            \end{enumerate}
   \paragraph{Part II: construction of $\fSim$}   With these analysis in hands we can start to construct the isometry $\fSim$ that satisfies \eqref{eq:61q}. At a high level:
   \begin{enumerate}\item In step 0-3 below we first aim at giving a server-side isomorphism between the left hand side of \eqref{eq:92} and the target state \eqref{eq:324rd}.\footnote{Note that although we call \eqref{eq:324rd} the ``target state'', what we are going to do is to apply $\fSim$ on \eqref{eq:324rd} and simulate a state that resembles the left hand side of \eqref{eq:92}.} 
   \item Then the simulation of the left hand side of \eqref{eq:61q} is constructed based on the simulation of the left hand side of \eqref{eq:92}.   	
   \end{enumerate}
       
        $\fSim$ is constructed as follows.
        \begin{enumerate}
        \item[0.]  
         As the preparation, recall the target state \eqref{eq:target} could be expanded into (below we use $\bS_{subs}$ to denote the server-side registers appeared in \eqref{eq:target}):
                 \begin{equation}\label{eq:324rd}\sum_{\theta^{(1)}\theta^{(2)}\cdots \theta^{(L)}\in \{0,1\cdots 7\}^L}\frac{1}{\sqrt{8^L}}\underbrace{\ket{\theta^{(1)}\theta^{(2)}\cdots \theta^{(L)}}}_{\text{client}}\otimes \sum_{\vec{b}\in \{0,1\}^L}e^{(\sum_{i\in [L]} \theta^{(i)}b^{(i)})\mi\pi/4}\underbrace{\ket{b^{(1)}b^{(2)}\cdots b^{(L)}}}_{\bS_{subs}}\cdot \frac{1}{\sqrt{2^L}}
        \end{equation}
       where we use $b^{(i)}$ to denote the $i$-th bit of $\vec{b}$.
        \item  The isometry $\fSim$ first introduces the following registers in its workspace:
        \begin{itemize}\item  $\tilde\bindic_+, \tilde\bindic_-$ registers that hold $\ftrue/\ffalse$ values, which are used to take the place of $\bindic_+, \bindic_-$ that appeared in \eqref{eq:93} of Part I.
        \item Single-bit register $\tilde\bindic_{col}$, which is used to simulate the value of $b^{(0)}$ (thus simulate the collapsing measurement $\Pi^{\bS^{(0)}_{bsh}}_{\bx^{(0)}_{b^{(0)}}}$) in Part I.
        \item $\tilde\bindic_{garbage}$ register, which indicates whether the state is what we aim to construct (when $\tilde\bindic_{garbage}=0$) or it's some garbage state (when $\tilde\bindic_{garbage}\neq 0$). The reason is we aim to construct $\fSim$ as a unitary while the state \eqref{eq:92} is not even necessarily normalized. How could we prepare a sub-normalized state through a unitary? Here we only aim at simulating the left hand side of \eqref{eq:92} in the $\tilde\bindic_{garbage}=0$ part, and arguing later that the norm of the $\tilde\bindic_{garbage}\neq 0$ is small.
        \end{itemize}
In this step we define a server-side operation $O_1$ that operates on the server-side of \eqref{eq:324rd}, and indicator registers introduced above.  This step assigns right values for these indicator registers, and make their norms match what we want. Recall the norms for each possible value of $b^{(0)}$ and $+/-$ are given in \eqref{eq:3590}-\eqref{eq:3593}. First, the operator in this step distribute norms based on values of indicator registers and maps \eqref{eq:324rd} to:
\begin{align} \eqref{eq:324rd}\otimes &(\underbrace{ \ket{0}}_{\tilde\bindic_{garbage}}\otimes(\underbrace{\ket{\ftrue}}_{\tilde\bindic_+}\underbrace{\ket{\ffalse}}_{\tilde\bindic_-}((\underbrace{\ket{0}}_{\tilde\bindic_{col}}c_{0,+}+\ket{1}c_{1,+})\\
&\qquad\qquad\qquad+\ket{\ffalse}\ket{\ftrue}(\ket{0}c_{0,-}+\ket{1}c_{1,-}))\\&+\underbrace{\ket{1}}_{\tilde\bindic_{garbage}}\ket{\cdots})\end{align}
Now recall $\tilde\bindic_+,\tilde\bindic_-$ indicates whether the phases are complex-conjugated, and as discussed in Section \ref{sec:2.1}, the complex-conjugated case is isometric to the honest case. To simulate this part, do a control-X on all the bits of $\bS_{subs}$ conditioned on $\tilde\bindic_-=\ftrue$. Then the final output state of this step is
	\begin{align}&\quad \sum_{\theta^{(1)}\theta^{(2)}\cdots \theta^{(L)}\in \{0,1\cdots 7\}^L}\frac{1}{\sqrt{8^L}}\underbrace{\ket{\theta^{(1)}\theta^{(2)}\cdots \theta^{(L)}}}_{\text{client}}\otimes(\underbrace{\ket{0}}_{\tilde\bindic_{garbage}}\otimes(\\
        	&\underbrace{\ket{\ftrue}\ket{\ffalse}}_{\tilde\bindic_\pm}\sum_{b^{(0)}\in \{0,1\}}\sum_{\vec{b}\in \{0,1\}^L}e^{(\sum_{i\in [L]} \theta^{(i)}b^{(i)})\mi\pi/4}\underbrace{\ket{b^{(0)}}}_{\tilde\bindic_{col}}\underbrace{\ket{\vec{b}}}_{\bS_{subs}}\frac{1}{\sqrt{2^L}}c_{b^{(0)},+}\label{eq:3689q}\\
        	+&\ket{\ffalse}\ket{\ftrue}e^{(\sum_{i\in [L]}\theta^{(i)})\mi\pi/4}\sum_{\vec{b}\in \{0,1\}^L}\sum_{\vec{b}\in \{0,1\}^L}e^{-(\sum_{i\in [L]} \theta^{(i)}b^{(i)})\mi\pi/4}\ket{b^{(0)}}\ket{\vec{b}}\frac{1}{\sqrt{2^L}}c_{b^{(0)},-})\label{eq:3699q}\\
        	+&\underbrace{\ket{1}}_{\tilde\bindic_{garbage}}\ket{\cdots})
        	\end{align}
        	\item The next step is to create the state that simulates each branch of \eqref{eq:167r}\eqref{eq:170r} excluding the phase information. We first need to formally define the state that resembles each branch of \eqref{eq:167r}\eqref{eq:170r}. This is defined step-by-step as follows:
        	\begin{enumerate}
        		\item Define $\ket{\chi_{b^{(0)},\vec{b},\pm}^0}$ as part of the states in \eqref{eq:167r}\eqref{eq:170r} excluding the phase information:
       \begin{equation}\label{eq:3700p}\ket{\chi_{b^{(0)},\vec{b},+}^0}:=\underbrace{\sum_{K\in \Domain(\bK)}\ket{K}}_{\text{client}}\otimes\underbrace{\ket{x_{b^{(0)}}^{(0)}}\otimes \sum_{\vec{b}\in \{0,1\}^L}\ket{\vec{x}_{\vec{b}}}}_{\bS_{bsh}}\ket{\phi_{K,b^{(0)},\vec{b},+}}\end{equation}
              \begin{equation}\label{eq:3710p}\ket{\chi_{b^{(0)},\vec{b},-}^0}:=\underbrace{\sum_{K\in \Domain(\bK)}\ket{K}}_{\text{client}}\otimes\underbrace{\ket{x_{b^{(0)}}^{(0)}}\otimes \sum_{\vec{b}\in \{0,1\}^L}\ket{\vec{x}_{\vec{b}}}}_{\bS_{bsh}}\ket{\phi_{K,b^{(0)},\vec{b},-}}\end{equation}
              \item \eqref{eq:3700p}\eqref{eq:3710p} contains many client-side registers, but what $\fSim$ can simulate should be server-side. To address this problem, define $\ket{\chi_{b^{(0)},\vec{b},\pm}^1}$ as the following states: corresponding to $\bK$ in $\ket{\chi_{b^{(0)},\vec{b},\pm}^0}$, initialize $\bS_{\bK}$ with the same size as $\bK$; and corresponding to other client-side registers implicit in \eqref{eq:3700p}\eqref{eq:3710p}, initialize the corresponding server-side registers with the same size (for example, initialize $\bS_{\bK^{(\switch)}}$ corresponding to $\bK^{(\switch)}$). Denote the collection of them as $\bS_{\text{c}}$. Then define  
                     \begin{equation}\label{eq:3720p}\ket{\chi_{b^{(0)},\vec{b},+}^1}:=\underbrace{\sum_{K\in \Domain(\bK)}\ket{K}}_{\bS_{\bK}}\otimes\underbrace{\ket{x_{b^{(0)}}^{(0)}}\otimes \sum_{\vec{b}\in \{0,1\}^L}\ket{\vec{x}_{\vec{b}}}}_{\bS_{bsh}}\ket{\tilde\phi_{K,b^{(0)},\vec{b},+}}\end{equation}
              \begin{equation}\label{eq:3730p}\ket{\chi_{b^{(0)},\vec{b},-}^1}:=\underbrace{\sum_{K\in \Domain(\bK)}\ket{K}}_{\bS_{\bK}}\otimes\underbrace{\ket{x_{b^{(0)}}^{(0)}}\otimes \sum_{\vec{b}\in \{0,1\}^L}\ket{\vec{x}_{\vec{b}}}}_{\bS_{bsh}}\ket{\tilde\phi_{K,b^{(0)},\vec{b},-}}\end{equation}
              where $\ket{\tilde\phi_{K,b^{(0)},\vec{b},\pm}}$ is defined to be the state coming from swapping the client-side registers and $\bS_c$ (initialized to be empty)  on $\ket{\phi_{K,b^{(0)},\vec{b},\pm}}$.

        \item 
        This step is a preparation for dealing with the blinded oracle later. Recall that when we analyze the final outcome state $\ket{\varphi^\prime}$, one important step is to use \eqref{eq:220j} to replace the adversary querying the normal oracle by an adversary querying a blinded oracle. For the simulated state, we also need a way to connect the adversary that queries the normal oracle to an adversary that query the blinded oracle. This problem is addressed in this step, by artificially swapping the values of normal oracle and the blinded oracle in $\ket{\chi_{b^{(0)},\vec{b},\pm}^1}$, which is the definition of $\ket{\chi_{b^{(0)},\vec{b},\pm}^2}$.\par
        Define $\ket{\chi_{b^{(0)},\vec{b},+}^2},\ket{\chi_{b^{(0)},\vec{b},-}^2}$ as follows: start from states $\ket{\chi_{b^{(0)},\vec{b},+}^1},\ket{\chi_{b^{(0)},\vec{b},-}^1}$ (\eqref{eq:3720p}\eqref{eq:3730p}), 
       swap the values of $\bH(\{0,1\}^\kappa||K^{(\switch)}||\cdots)$ and $\bH^\prime(\{0,1\}^\kappa||K^{(\switch)}||\cdots)$ (recall $\bH^\prime$ is the blinded oracle); 
       Then as in the last step, we use an additional step to remove registers that are not server-side simulatable. Initialize a server-side empty register $\bS_{\bH^\prime,blind}$ that has the same size as the blinded part of the oracle, and swap the $\bH^\prime$ registers that originally hold the blinded part, with $\bS_{\bH^\prime,blind}$.\par
       What we get is the following. Suppose $\cU$ is a server-side operation that queries $H$, and $\cU^{blind}$ is its blinded version (that is, replacing all the queries to $H$ by queries to the blinded one). Both operations output a value to some register, and use $|\Pi_0\ket{\cdot}|^2$ to denote the probability of getting $0$. Then
       \begin{equation}\label{eq:375tmasd}|\Pi_0\cU\ket{\chi_{b^{(0)},\vec{b},\pm}^2}|^2=|\Pi_0\cU^{blind}\ket{\chi_{b^{(0)},\vec{b},\pm}^1}|^2\text{ for all $\pm \in \{+,-\},b^{(0)}\in \{0,1\},\vec{b}\in \{0,1\}^L$}\end{equation}
        \item We want to define an operation that prepares state $\ket{\chi_{b^{(0)},\vec{b},+}^2},\ket{\chi_{b^{(0)},\vec{b},-}^2}$ conditioned on the subscript registers and indicator registers in \eqref{eq:3689q}\eqref{eq:3699q}. But there is an additional  restriction from the norms of each branch in \eqref{eq:3689q}\eqref{eq:3699q}. For each value of $\bS_{sub}$ and indicator registers in \eqref{eq:3689q}\eqref{eq:3699q}, the norm of this branch is $\frac{1}{\sqrt{2^L}}c_{b^{(0)},\pm}$; the state we want to prepare in this state has norm $c_{b^{(0)},\vec{b},\pm}$, which is not compatible, even if they are close on average by \eqref{eq:3590}-\eqref{eq:3593}. We need to rescale the state $\ket{\chi_{b^{(0)},\vec{b},+}^2},\ket{\chi_{b^{(0)},\vec{b},-}^2}$. We need to be careful: we do not want to up-scale them since we can't prepare a state in the random oracle model that is not valid (Definition \ref{defn:validityro}), but down-scaling is fine.\footnote{As an example, if there is an operation that prepares a state $\ket{\varphi}$, it's possible to prepare $\sqrt{1-a^2}\ket{0}\ket{\varphi}+a\ket{1}\ket{garbage}$ for $a\in (0,1)$, where the first bit is in the $\tilde\bindic_{garbage}$ register. But it's not possible to prepare $1/\sqrt{1-a^2}\ket{\varphi}$.}\par
For each $b^{(0)}\in \{0,1\},\vec{b}\in \{0,1\}^L,\pm\in \{+,-\}$, define $\ket{\chi_{b^{(0)},\vec{b},\pm}^3}$ as:
\begin{equation}\label{eq:374q}\ket{\chi_{b^{(0)},\vec{b},\pm}^3}:=\begin{cases}
	\ket{\chi_{b^{(0)},\vec{b},\pm}^2}&\text{ if $c_{b^{(0)},\vec{b},\pm}\leq \frac{1}{\sqrt{2^L}}c_{b^{(0)},\pm}$}\\
	\frac{\frac{1}{\sqrt{2^L}}c_{b^{(0)},\pm}}{c_{b^{(0)},\vec{b},\pm}}\ket{\chi_{b^{(0)},\vec{b},\pm}^2}&\text{ if $c_{b^{(0)},\vec{b},\pm}> \frac{1}{\sqrt{2^L}}c_{b^{(0)},\pm}$}
\end{cases}\end{equation}
        \end{enumerate}
        Then the overall operation of this step is defined as follows. Controlled by the indicator and $\bS_{subs}$ registers, server-side isometry $O_2$ implements the mapping:
        \begin{align}&\underbrace{\ket{0}}_{\tilde\bindic_{garbage}}\underbrace{\ket{\ftrue}}_{\bindic_+}\underbrace{\ket{\ffalse}}_{\bindic_-}\underbrace{\ket{b^{(0)}}}_{\tilde\bindic_{col}}\underbrace{\ket{\vec{b}}}_{\bS_{subs}}\frac{1}{\sqrt{2^L}}c_{b^{(0)},+}\\ \rightarrow&\ket{\ftrue}\ket{\ffalse}\ket{b^{(0)}}\ket{\vec{b}}\begin{cases}
	\underbrace{\ket{0}}_{\tilde\bindic_{garbage}}\ket{\chi_{b^{(0)},\vec{b},+}^3}+\ket{2}\ket{\cdots}&\text{ if $c_{b^{(0)},\vec{b},+}\leq \frac{1}{\sqrt{2^L}}c_{b^{(0)},+}$}\\
	\ket{0}\ket{\chi_{b^{(0)},\vec{b},+}^3}&\text{ if $c_{b^{(0)},\vec{b},+}> \frac{1}{\sqrt{2^L}}c_{b^{(0)},+}$}
\end{cases}
\end{align}
        \begin{align}&\ket{0}\ket{\ffalse}\ket{\ftrue}\ket{b^{(0)}}\ket{\vec{b}}\frac{1}{\sqrt{2^L}}c_{b^{(0)},-}\\\rightarrow& \begin{cases}
	\underbrace{\ket{0}}_{\tilde\bindic_{garbage}}\ket{\chi_{b^{(0)},\vec{b},-}^3}+\ket{2}\ket{\cdots}&\text{ if $c_{b^{(0)},\vec{b},-}\leq \frac{1}{\sqrt{2^L}}c_{b^{(0)},-}$}\\
	\ket{0}\ket{\chi_{b^{(0)},\vec{b},-}^3}&\text{ if $c_{b^{(0)},\vec{b},-}> \frac{1}{\sqrt{2^L}}c_{b^{(0)},-}$}
\end{cases}\end{align}
        \item Define $O_3$ as the operator that erases the subscript vector register $\bS_{subs}$ based on the values of $\tilde\bS_{bsh}$ (where $\tilde\bS_{bsh}$ is defined to be the registers that holds $\vec{x}_{\vec{b}}$ in \eqref{eq:3700p}\eqref{eq:3710p}):
        $$O_3:\underbrace{\ket{K}}_{\bS_{\bK}}\underbrace{\ket{\vec{x}_{\vec{b}}}}_{\bS_{bsh}}\underbrace{\ket{\vec{b}}}_{\bS_{subs}}\rightarrow \ket{K}\ket{\vec{x}_{\vec{b}}}\ket{0}$$
        \item The remaining thing to do is to append the operators that connects a simulation of \eqref{eq:93} given above to the simulation of real execution outcome $\Pi_{\fcomp}\ket{\varphi^\prime}$.\par
        The overall operation of $\fSim$, operating on $\ket{\eqref{eq:target}}$, is defined as $$ \fSet(\btype\rightarrow\ket{\fcomp})\circ\fAdv_{\fcomp}\circ(\tilde\fAdv_{2.b.sbt})^{-1}\circ \fDisgard(\tilde\bindic,\bS_{\bK},\bS_{c},\bS_{\bH^\prime,blind})\circ\fCOPY(\bS_{\tilde\bK}\rightarrow \btrans_{\tilde\bK}) \circ$$
        $$ O_3\circ O_2\circ O_1$$
        where:
        \begin{itemize}
        \item Recall in the $\fcomp$ round of Protocol \ref{prtl:prerspv}, the client needs to send out all the keys in $\tilde K$ to the server in the end. Denote the transcript register that holds this information as $\btrans_{\tilde\bK}$, and $\fCOPY(\bS_{\tilde\bK}\rightarrow \btrans_{\tilde\bK})$ that bitwise-CNOT the corresponding keys in the corresponding simulated registers\footnote{Recall $\bS_{\bK}$ holds the simulation of $\bK$, here $\bS_{\tilde\bK}$ is the part of $\bS_{\bK}$ that simulates $\tilde\bK$.} to $\btrans_{\tilde\bK}$.
        \item $\fDisgard$ operator disgards $\tilde\bindic_{garbage},\tilde\bindic_{+},\tilde\bindic_{-},\tilde\bindic_{col}$ and $\bS_{\bK},\bS_{c},\bS_{\bH^\prime,blind}$ to the environment; these registers are used in our construction but are not accessible by the distinguisher in \eqref{eq:61q}.
        \item $\fAdv_{\fcomp}$ is the operation of $\fAdv$ in the $\fcomp$ round;  $\tilde\fAdv_{2.b.sbt}$ is the non-blinded version of $\tilde\fAdv_{2.b.sbt}$ defined in \eqref{eq:254k}.
        \item $\fSet(\btype\rightarrow\ket{\fcomp})$ means setting register $\btype$ (recall in execution of Protocol \ref{prtl:prerspv} there is a round type register) in the transcript to $\fcomp$.
        \end{itemize}

        \end{enumerate}
       \paragraph{Part III: proof of \eqref{eq:61q}} Let's prove $\fSim$ constructed above satisfies \eqref{eq:61q}. Suppose the efficient distinguisher is $D$, and use $\Pi_0$ to denote the projector that the distinguisher outputs $0$. Thus \eqref{eq:61q} translates to:
       \begin{align}
       	&|\Pi_0D\Pi_{\fcomp}\ket{\varphi^\prime}|\label{eq:379q}\\
       	\approx_{0.1\sqrt{p_{\fcomp}}+\fneg(\kappa)} &|\Pi_0D\sqrt{p_{\fcomp}}\fSim\ket{\eqref{eq:target}}|\label{eq:380q}
       \end{align}
       We move towards it by analyzing both sides. First define $D^{blind}$ as the operation that queries the blinded oracle instead of the original oracle $H$. First by \eqref{eq:220j} we can calculate the inner state of \eqref{eq:379q}:
       \begin{align}
       	&D\Pi_{\fcomp}\ket{\varphi^\prime}\\
       	=&{\scriptsize\sqrt{p_{\fcomp}}D\fSet(\btype\rightarrow\ket{\fcomp})\fDisgard(\text{client-side registers except $\btheta^{(1)}\btheta^{(2)}\cdots \btheta^{(L)}$})\fCalcRel(\btheta^{(1)}\btheta^{(2)}\cdots \btheta^{(L)})}\nonumber\\
       	&\circ\fAdv_{\fcomp}(\ket{\varphi^{2.a}}\odot \tilde\bK)\\
       	\approx_{10^{-100}+\fneg(\kappa)}&{\scriptsize\sqrt{p_{\fcomp}}D^{blind}\fSet(\btype\rightarrow\ket{\fcomp})\fDisgard(\text{client-side registers except $\btheta^{(1)}\btheta^{(2)}\cdots \btheta^{(L)}$})\fCalcRel(\btheta^{(1)}\btheta^{(2)}\cdots \btheta^{(L)})}\label{eq:tq}\\
       	&\circ\fAdv_{\fcomp}^{blind}(\ket{\varphi^{2.a}}\odot \tilde\bK)\label{eq:3857}
       \end{align}
        where $\fCalcRel(\btheta^{(1)}\btheta^{(2)}\cdots \btheta^{(L)})$ is the client-side operation that calculates the relative phase $\btheta^{(1)}\btheta^{(2)}\cdots \btheta^{(L)}$ from $\bTheta$.\par
        Now we analyze \eqref{eq:380q} and move towards \eqref{eq:tq}\eqref{eq:3857}.
        \begin{enumerate}\item By \eqref{eq:3590}-\eqref{eq:3593}, we can change the basis norms of the outputs of $O_3\circ O_2\circ O_1\ket{\eqref{eq:target}}$ to the same norms as $\ket{\phi_{\cdots}}$ in \eqref{eq:167r}\eqref{eq:170r} (that is, remove the re-scaling in \eqref{eq:374q}):
        $$O_3\circ O_2\circ O_1 \circ\eqref{eq:target}$$
        $$\approx_{0.01+\fneg(\kappa)}\sum_{\theta^{(1)}\theta^{(2)}\cdots \theta^{(L)}\in \{0,1\cdots 7\}^L}\frac{1}{\sqrt{8^L}}\ket{\theta^{(1)}\theta^{(2)}\cdots \theta^{(L)}}\otimes ( \underbrace{\ket{0}}_{\tilde\bindic_{garbage}}\otimes($$
        $$\ket{\ftrue}\ket{\ffalse}\sum_{b^{(0)}\in \{0,1\}}\sum_{\vec{b}\in \{0,1\}^L}\underbrace{\ket{b^{(0)}}}_{\tilde\bindic_{col}}\ket{\chi^2_{b^{(0)},\vec{b},+}} $$
        \begin{equation}\label{eq:298ks} +e^{(\sum_{i\in [L]}\theta^{(i)})\mi\pi/4}\ket{\ffalse}\ket{\ftrue}\sum_{b^{(0)}\in \{0,1\}}\sum_{\vec{b}\in \{0,1\}^L}{\ket{b^{(0)}}}\ket{\chi^2_{b^{(0)},\vec{b},-}})+\ket{1}\ket{\cdots})\end{equation}
        \item This step aims at replacing $\ket{\chi^2}$ in \eqref{eq:298ks} by the corresponding $\ket{\chi^1}$. The difference here is we swap the content of the original oracle and the blinded oracle; as discussed in \eqref{eq:375tmasd}, if we also swap the operators that query the original oracle to the blinded oracle, the final probability of outputting $0$ in \eqref{eq:380q} will not change. Explicitly, it is
        \begin{align}
        	&{\scriptsize|\Pi_0D\fSet(\btype\rightarrow\ket{\fcomp})\circ\fAdv_{\fcomp}\circ(\tilde\fAdv_{2.b.sbt})^{-1}\circ \fDisgard(\tilde\bindic,\bS_{\bK},\bS_{c},\bS_{\bH^\prime,blind})\circ\fCOPY(\bS_{\tilde\bK}\rightarrow \btrans_{\tilde\bK}) \circ\eqref{eq:298ks}|}\\
        	=&{\scriptsize|\Pi_0D^{blind}\fSet(\btype\rightarrow\ket{\fcomp})\circ\fAdv_{\fcomp}^{blind}\circ(\tilde\fAdv_{2.b.sbt}^{blind})^{-1}\circ \fDisgard(\tilde\bindic,\bS_{\bK},\bS_{c})\circ\fCOPY(\bS_{\tilde\bK}\rightarrow \btrans_{\tilde\bK})}(\\
        	&\sum_{\theta^{(1)}\theta^{(2)}\cdots \theta^{(L)}\in \{0,1\cdots 7\}^L}\frac{1}{\sqrt{8^L}}\ket{\theta^{(1)}\theta^{(2)}\cdots \theta^{(L)}}\otimes ( \underbrace{\ket{0}}_{\tilde\bindic_{garbage}}\otimes(\label{eq:3895}\\
        &\ket{\ftrue}\ket{\ffalse}\sum_{b^{(0)}\in \{0,1\}}\sum_{\vec{b}\in \{0,1\}^L}\underbrace{\ket{b^{(0)}}}_{\tilde\bindic_{col}}\ket{\chi^1_{b^{(0)},\vec{b},+}} \label{eq:3905}\\
        &+e^{(\sum_{i\in [L]}\theta^{(i)})\mi\pi/4}\ket{\ffalse}\ket{\ftrue}\sum_{b^{(0)}\in \{0,1\}}\sum_{\vec{b}\in \{0,1\}^L}{\ket{b^{(0)}}}\ket{\chi^1_{b^{(0)},\vec{b},-}})+\ket{1}\ket{\cdots})\qquad |\label{eq:391}\\
        \end{align}
        \item This step aims at replacing $\ket{\chi^1}$ in \eqref{eq:391} by the corresponding $\ket{\chi^0}$. The two states are the same up to positions of some registers ($\bK$ corresponds to $\bS_{\bK}$, etc). Thus
        \begin{align}
        &\fDisgard(\tilde\bindic,\bS_{\bK},\bS_{c},)\circ\fCOPY(\bS_{\tilde\bK}\rightarrow \btrans_{\tilde\bK})\eqref{eq:3895}\eqref{eq:3905}\eqref{eq:391}\\
        =&	\fDisgard(\bindic,\bK,\text{client-side registers of $\ket{\chi}$})\circ(\\
        &\sum_{\theta^{(1)}\theta^{(2)}\cdots \theta^{(L)}\in \{0,1\cdots 7\}^L}\frac{1}{\sqrt{8^L}}\ket{\theta^{(1)}\theta^{(2)}\cdots \theta^{(L)}}\otimes ( \underbrace{\ket{0}}_{\tilde\bindic_{garbage}}\otimes(\label{eq:38959}\\
        &\ket{\ftrue}\ket{\ffalse}\sum_{b^{(0)}\in \{0,1\}}\sum_{\vec{b}\in \{0,1\}^L}\underbrace{\ket{b^{(0)}}}_{\tilde\bindic_{col}}\ket{\chi^0_{b^{(0)},\vec{b},+}} \label{eq:39059}\\
        &+e^{(\sum_{i\in [L]}\theta^{(i)})\mi\pi/4}\ket{\ffalse}\ket{\ftrue}\sum_{b^{(0)}\in \{0,1\}}\sum_{\vec{b}\in \{0,1\}^L}{\ket{b^{(0)}}}\ket{\chi^0_{b^{(0)},\vec{b},-}})+\ket{1}\ket{\cdots})\label{eq:3919}\\
        &)\odot \tilde\bK))\label{eq:eq}
        \end{align}
Note when we replace $\bS_{\tilde\bK}$ by $\tilde\bK$ $\fCOPY(\bS_{\tilde\bK}\rightarrow \btrans_{\tilde\bK})$ becomes $\odot \tilde\bK$.
      \item  Note that we have not simulated the $\bTheta$ register yet, and the phases in \eqref{eq:298ks}\eqref{eq:391}\eqref{eq:3919} solely come from \eqref{eq:target}. Introduce registers $\bS_{\bTheta}$ that has the same size as $\bTheta$ (which could hold $(1+L)$ phase pairs). Consider state
    $$\sum_{\theta^{(1)}\theta^{(2)}\cdots \theta^{(L)}\in \{0,1\cdots 7\}^L}\frac{1}{\sqrt{8^L}}(\ket{\theta^{(1)}\theta^{(2)}\cdots \theta^{(L)}}\otimes$$
    $$ \sum_{\substack{\Theta\in \Domain(\bTheta)\text{ such that}\\\text{ the relative phase of $\Theta^{(1)}\Theta^{(2)}\cdots \Theta^{(L)}$ is }\theta^{(1)}\theta^{(2)}\cdots \theta^{(L)}}}\frac{1}{\sqrt{8^L}}\underbrace{\ket{\Theta}}_{{\bTheta}}\otimes$$
        \begin{equation*} e^{\tSUM(\vec{\Theta}_{b^{(0)}\vec{b}})\mi\pi/4}\ket{\ftrue}\ket{\ffalse}\sum_{b^{(0)}\in \{0,1\}}\sum_{\vec{b}\in \{0,1\}^L}\underbrace{\ket{b^{(0)}}}_{\tilde\bindic_{col}}\ket{\chi^0_{b^{(0)},\vec{b},+}}\end{equation*}\begin{equation}\label{eq:301ks}+e^{-\tSUM(\vec{\Theta}_{b^{(0)}\vec{b}})\mi\pi/4}\ket{\ffalse}\ket{\ftrue}\sum_{b^{(0)}\in \{0,1\}}\sum_{\vec{b}\in \{0,1\}^L}{\ket{b^{(0)}}}\ket{\chi^0_{b^{(0)},\vec{b},-}}+\ket{1}\ket{\cdots})\odot \tilde\bK))\end{equation}
Compare \eqref{eq:38959}\eqref{eq:39059}\eqref{eq:3919}\eqref{eq:eq} and \eqref{eq:301ks} two states differ by a global phase on each component corresponding to each value of $\bindic_{col}, \bindic_{garbage},{\bTheta}$. By Fact \ref{fact:overalltool} there is
\begin{equation}\label{eq:312is}\fDisgard({\bTheta})\eqref{eq:301ks}\approx^{ind:\cF_{cq}} \eqref{eq:38959}\eqref{eq:39059}\eqref{eq:3919}\end{equation}
      where $\cF_{cq}$ is the set of operators that operates on the transcript, client-side registers and the $\bindic_{\pm}$ register in a read-only way.\par
Note 
      \begin{align}&\Pi^{\bindic_{garbage}}_0\eqref{eq:301ks}\\
      	=&\fCalcRel(\btheta^{(1)}\btheta^{(2)}\cdots \btheta^{(L)})(\nonumber\\
      	&\qquad\qquad \fCollapse(\bS_{bsh}^{(0)})\cP\cR_2(\ket{\$_2}\otimes (\cR_1(\ket{\$_1}\otimes \fReviseRO\circ\Pi_{\basishonest(\bK)}\ket{\tilde\varphi^{2.a}})))\odot \tilde\bK)\label{eq:341q}
      \end{align}
      where $\fCollapse(\bS_{bsh}^{(0)})$ is the operation that calculate the subscript of the keys in $\bS_{bsh}^{(0)}$ to register $\bindic_{col}$.
      \item Recall on each branch corresponding to the $\bK^{(0)}$, $\cP^\dagger$ could be seen as an operation that operates only on $\bindic_{\pm},\bTheta^{(0)}$. Thus
    \begin{align}
      	&\fCollapse(\bS_{bsh}^{(0)})\cP\cR_2(\ket{\$_2}\otimes (\cR_1(\ket{\$_1}\otimes \fReviseRO\circ\Pi_{\basishonest(\bK)}\ket{\tilde\varphi^{2.a}})))\odot \tilde\bK)\label{eq:eq408}\\
      	\approx^{ind}_0&\fCollapse(\bS_{bsh}^{(0)})\cP^\dagger\cP\cR_2(\ket{\$_2}\otimes (\cR_1(\ket{\$_1}\otimes \fReviseRO\circ\Pi_{\basishonest(\bK)}\ket{\tilde\varphi^{2.a}})))\odot \tilde\bK)\label{eq:eq409}\\
      	\approx^{ind}_{0.01+\fneg(\kappa)}&\fCollapse(\bS_{bsh}^{(0)})\Pi_{\basishonest(\bK)}\ket{\tilde\varphi^{2.a}})))\odot \tilde\bK)\label{eq:eq410}\\
      	\approx^{ind}_{\fneg(\kappa)}& \Pi_{\basishonest(\bK)}\ket{\tilde\varphi^{2.a}}\odot \tilde\bK\label{eq:eq411}\\
      	\approx_{0.0001}& \ket{\tilde\varphi^{2.a}}\odot \tilde\bK\label{eq:eq412}
      \end{align}
      where $ind$ represents efficient operators in $\cF_{cq\land blind}$, defined to be the set of operators that, for the client-side access, it could operate on $\bTheta^{(1)}\cdots \bTheta^{(L)}$ in a read-only way, and only query the blinded oracle. Each step above comes from:
      \begin{itemize}\item \eqref{eq:eq408}\eqref{eq:eq409} comes from the fact that the distinguisher does not operate on the indicator registers and $\bTheta^{(0)}$; \item \eqref{eq:eq409}\eqref{eq:eq410} is from \eqref{eq:92}; \item \eqref{eq:eq410}\eqref{eq:eq411} is by the collapsing property (where the distinguisher has no access to $\bK^{(0)}$), \eqref{eq:eq411}\eqref{eq:eq412} is by \eqref{eq:254k}.\item  Also note that we omit registers that are not used and remain in product state with other parts (for example, $\ket{\$_1},\ket{\$_2}$ etc).\end{itemize}

       \end{enumerate}
       Combining all these steps we have
       \begin{align}
       &|\Pi_0D\fSim\ket{\eqref{eq:target}}|\\
       \approx_{0.011+\fneg(\kappa)}	&|\Pi_0D^{blind}\fSet(\btype\rightarrow\ket{\fcomp})\fDisgard(\text{client-side registers except $\btheta^{(1)}\btheta^{(2)}\cdots \btheta^{(L)}$})\fCalcRel(\btheta^{(1)}\btheta^{(2)}\cdots \btheta^{(L)})\\
       	&\circ\fAdv_{\fcomp}^{blind}(\fAdv_{2.b.sbt}^{blind})^{-1}(\ket{\tilde\varphi^{2.a}}\odot \tilde\bK)|
       \end{align}

       which compared with \eqref{eq:3857} completes the proof.

\end{proof}

\section{From Remote State Preparation to Quantum Computation Verification}\label{sec:13}
In this section we complete the construction of our CVQC protocol thus complete the proof of Theorem \ref{thm:main}.
\subsection{From Pre-RSPV to RSPV}\label{sec:13.1}
In this subsection we will give an RSPV protocol (as defined in Definition \ref{defn:rspv}, \ref{defn:rspvv}) from the pre-RSPV protocol (as defined in Definition \ref{defn:prerspv}, \ref{defn:prerspvv}, constructed in Protocol \ref{prtl:prerspv}).\par
Comparing the definition of RSPV to pre-RSPV, the differences are: 
\begin{itemize}
	\item In Definition \ref{defn:prerspvv} there is an additional case in the conclusion that the winning probability is bounded away from $\OPT$. This means in $\fpreRSPV$ the adversary could possibly cheat by making the winning probability small.\par
	In addition, in Definition \ref{defn:rspvv} the bound on the passing probability in the first case is much smaller than Definition \ref{defn:prerspvv}. This means in $\fRSPV$ the adversary's freedom of cheating without being caught is much smaller.
	\item In $\fpreRSPV$ the output state is only generated in the $\fcomp$ round, which appears with probability $p_{\fcomp}$; in $\fRSPV$ the state should be generated with high probability.
\end{itemize}
We do the security amplification from $\fpreRSPV$ to $\fRSPV$ in two steps, as follows.
\begin{enumerate}
	\item $\fpreRSPV$ to $\fpreRSPVTemp$:
	\begin{enumerate}
	\item In Step 1, both parties do a repetition of $\fpreRSPV$ protocol. And the client calculates the number of winning cases. If it's significantly fewer than the optimal expectation value, the client outputs $\ffail$. (The client also outputs $\ffail$ if any call to the subprotocols $\ffail$.)\par
	In this way we resolve the problems discussed in the first bullet above. For the second bullet we put it into the second step below, and for the honest behavior (correctness property) of this step we use a simple solution as follows:\par
	\item The client chooses a random index $i$ and reveals it. If the $i$-th round is a $\fcomp$ round, then the server gets the state and the client stores $\fcomp$ as the overall flag. Otherwise the client stores $\perp$ as the overall round type.
	\end{enumerate}
	Thus $\fpreRSPVTemp$ will have the following outputs: type $\in \{\fcomp,\perp\}$, flag $\in \{\fpass,\ffail\}$, and the client-side keys and server-side states (only if type$=\fcomp$).
	\item $\fpreRSPVTemp$ to $\fRSPV$:\par
	Both parties repeat the $\fpreRSPVTemp$ protocol for many times to ensure an output state is generated with high probability.
\end{enumerate}
\subsubsection{Step 1: a fully verifiable protocol that does not necessarily generate an output state}
We first construct $\fpreRSPVTemp$ from $\fpreRSPV$. 
\begin{mdframed}
\begin{prtl}[$\fpreRSPVTemp$]\label{prtl:rspv}
Suppose the security parameter is $\kappa$ and the output number is $L$. 
\begin{enumerate}
	\item Take the round number $N=10^{2500}$.\par
	For $i$ in $[N]$:\begin{enumerate}
	\item Both parties run protocol $\fpreRSPV(1^L,1^\kappa)$. \end{enumerate}
	Note for each $i$ there is a round type sampled from $\{\ftest,\fquiz,\fcomp\}$. In addition to the $\fpass/\ffail$ flag, the client will output a score in $\{\fwin,\flose,\perp\}$, where $\fwin/\flose$ only appear in the $\fquiz$ round, and honestly, $score=\fwin$ with probability $\OPT$ (conditioned on $\fquiz$ round).
	\item If any round $\ffail$, the client outputs $\ffail$. \par
	If the total number of $\fwin$ is $\leq N\cdot p_{\fquiz}\cdot(\OPT-10^{-210})$ (recall $\OPT=\frac{1}{3}\cos^2(\pi/8)=0.28451779686\cdots$, $p_{\fquiz}=\frac{1}{10}$), the client outputs $\ffail$.
	\item The client picks a random round $i$  and tells $i$ to the server.
	\begin{itemize}\item If the round type of the $i$-th round is $\fcomp$, the server picks up the corresponding gadgets and disgards the others. The client keeps the keys, outputs $\fcomp$ in the overall round type register and disgards the other systems.
	\item If the round type of the $i$-th round is not $\fcomp$, the client stores $\perp$ in the overall round type register. Both parties disgard everything else.
	\end{itemize}

\end{enumerate}
\end{prtl}
	
\end{mdframed}
\paragraph{Correctness} In the honest setting conditioned on the overall round type is $\fcomp$, with probability $\geq 1-10^{-5}-\fneg(\kappa)$ the joint output state of the client and the server is \eqref{eq:cqtarget}.
\begin{proof}[Proof of correctness]
By Section \ref{sec:5.4r} in the honest setting each call to $\fpreRSPV$ could $\ffail$ only with negligible probability. Thus except with negligible probability, the only case where the client will output $\ffail$ against an honest server is the statistical testing of $\fquiz$ scores (number of $\fwin$) in the second step. 
By properties of the $\fpreRSPV$ protocol the honest server could generate a $\fwin$ score with probability $\geq \OPT-\fneg(\kappa)$ in each $\fquiz$ round. Thus the expectation of $\fwin$ for each round $i\in [N]$ is $\geq p_{\fquiz}\cdot \OPT-\fneg(\kappa)$. Then by Chernoff's bound $$\Pr[\text{ the number of $\fwin$ is $\leq N\cdot p_{\fquiz}\cdot( \OPT-10^{-210})$}]\leq  10^{-6}+\fneg(\kappa)$$
 This completes the proof.
\end{proof}
Now we prove Protocol \ref{prtl:rspv} satisfies a verifiability statement as Definition \ref{defn:rspvv} (verifiability of  RSPV protocol) for target state \eqref{eq:target} with output number $L$. 
\begin{thm}[Verifiability of $\fpreRSPVTemp$]\label{thm:rspvprime}
	For any polynomial time adversary $\fAdv$, any efficiently-preparable initial state $\ket{\varphi^0}=O\ket{0}$, 
there exists a server-side operation $\fSim^{\fAdv,O}$ such that
\begin{align}\label{eq:finalsim}&\Pi_{\fcomp}\Pi_{\fpass}\fpreRSPVTemp^{\fAdv}(1^L,1^\kappa) \ket{\varphi^0}\\\approx^{ind}_{0.11+\fneg(\kappa)}&\sqrt{p_{\fcomp}}\Pi_{\fpass}\fSim^{\fAdv,O} \ket{\text{Equation \eqref{eq:target}}}\end{align}
\end{thm}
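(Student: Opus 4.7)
My plan is to build $\fSim^{\fAdv}$ by drawing $i^* \in [N]$ uniformly at random on the server side, simulating rounds $i \neq i^*$ of $\fpreRSPV$ against $\fAdv$ honestly via $\fServersim$ (Section \ref{sec:4.1.1w}), and at round $i^*$ invoking the per-round simulator guaranteed by Theorem \ref{thm:prerspvval} applied to the current state together with the provided target $\ket{\eqref{eq:target}}$. Writing $i^*$ into the transcript register that would hold the client's announced index in Step 3 of Protocol \ref{prtl:rspv}, and using the fact that the simulator of Theorem \ref{thm:prerspvval} already produces the $\btype=\fcomp$ marker on its output transcript, the overall $\fSim$ matches the format of the right-hand side of \eqref{eq:finalsim} exactly; what remains is the quantitative error.

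For the error bound I would set up a round-by-round decomposition. Let $\ket{\varphi_i}$ denote the state after the first $i$ calls to $\fpreRSPV$ under the real execution of $\fpreRSPVTemp^\fAdv$ on $\ket{0}$; since $\fAdv$ is polynomial time, each $\ket{\varphi_{i-1}}$ is efficiently preparable, so Theorem \ref{thm:prerspvval} and Theorem \ref{thm:optal} apply round-by-round with initial state $\ket{\varphi_{i-1}}$. This assigns round $i$ to one of three regimes:
\begin{itemize}
\item[(a)] $|\Pi_\fpass\fpreRSPV^{\fAdv_i}\ket{\varphi_{i-1}}|^2\leq 1-10^{-2000}$;
\item[(b)] $|\Pi_\fwin\fpreRSPV^{\fAdv_i}\ket{\varphi_{i-1}}|^2\leq p_\fquiz(\OPT-10^{-200})$;
\item[(c)] there is a simulator $\fSim_i^{\fAdv_i}$ with error $\tfrac{1}{10}\sqrt{p_\fcomp}+\fneg(\kappa)$.
\end{itemize}
Because $\fSim$ does not inspect the case, I will need statistical control over the fraction of non-(c) rounds in post-selected executions.

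I would then invoke the streaming Chernoff bound Corollary \ref{cor:chernoff} twice to produce that control. Each case-(a) round multiplies the passing probability by at most $(1-10^{-2000})$, so only a vanishing fraction of case-(a) rounds survives the $\Pi_{\fpass}$ projection. For case-(b) rounds, Theorem \ref{thm:optal} upper bounds the conditional per-round winning probability by $p_\fquiz(\OPT+10^{-220}+\fneg(\kappa))$, so if more than about $10^{-9}N$ of the rounds are in regime (b) the streaming expected win count falls below the threshold $Np_\fquiz(\OPT-10^{-210})$ by a constant-fraction margin, and a second application of Corollary \ref{cor:chernoff} to the $\{\fwin\}$-indicator sequence makes the probability of crossing the threshold $\fneg(\kappa)$. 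Hence in the post-selected ensemble $\Pi_\fcomp\Pi_\fpass\fpreRSPVTemp^{\fAdv}\ket{0}$ the fraction $\epsilon_{\mathrm{bad}}$ of non-(c) rounds is at most $10^{-9}$ up to a $\fneg(\kappa)$ correction.

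Finally I would assemble the $L^2$ error. Since the random choice of $i^*$ sits in orthogonal transcript registers, the distinguishing norm between $\sqrt{p_\fcomp}\Pi_\fpass\fSim^\fAdv\ket{\eqref{eq:target}}$ and $\Pi_\fcomp\Pi_\fpass\fpreRSPVTemp^\fAdv\ket{0}$ is
\[
\sqrt{\tfrac{1}{N}\sum_{i^*}\mathrm{err}_{i^*}^2}\leq \sqrt{p_\fcomp}\sqrt{\tfrac{1-\epsilon_{\mathrm{bad}}}{100}+\epsilon_{\mathrm{bad}}},
\]
using bound $\tfrac{1}{10}\sqrt{p_\fcomp}$ for case-(c) rounds and the trivial bound $\sqrt{p_\fcomp}$ for the rest. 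With $\epsilon_{\mathrm{bad}}\leq 10^{-9}$ this is well below $0.11$, yielding the desired $\approx^{\mathrm{ind}}_{0.11+\fneg(\kappa)}$ bound. The hard part, as I see it, is the second Chernoff step: the case assignment for round $i$ depends on the entangled state $\ket{\varphi_{i-1}}$, so the conditional per-round win probability is only uniformly bounded on the good histories, and I plan to use the noise-tolerant form (Corollary \ref{cor:chernoff}) to absorb the anomalous histories into the $N\epsilon$ additive term, trading a negligible exceptional probability for a clean conditional $p$ on the $\{\fwin\}$-indicator.
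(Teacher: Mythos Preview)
Your simulator sketch and the three-case split are on the right track, but there is a genuine gap in the analysis: you apply Theorem \ref{thm:prerspvval} to the \emph{unconditioned} global state $\ket{\varphi_{i-1}}$, which yields only \emph{marginal} probability bounds and a single deterministic case label per round $i$. The claim ``each case-(a) round multiplies the passing probability by at most $(1-10^{-2000})$'' is then false---marginal failing probabilities across rounds do not compose multiplicatively. Concretely, suppose the adversary flips one coin before round $1$ and thereafter behaves honestly on heads and outputs garbage in every round on tails. Then every round has marginal pass probability $\approx 1/2$ and lands in your case (a), yet the overall pass probability is $1/2$, not $(1-10^{-2000})^N$. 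In this example $\epsilon_{\mathrm{bad}}=1$ and your final error estimate collapses to the trivial $\sqrt{p_\fcomp}>0.11$. The same issue afflicts the case-(b) step: Corollary \ref{cor:chernoff} needs per-history conditional bounds $p_{i,s_{<i}}$, which a marginal bound on $|\Pi_\fwin\fpreRSPV^{\fAdv_i}\ket{\varphi_{i-1}}|$ does not supply, and your noise-tolerant patch cannot help because no ``anomalous histories'' are identifiable from a per-round case label alone.

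The paper's proof resolves this by decomposing $\ket{\varphi^{i-1}}$ along the full transcript record $rec^{<i}$ (types, flags, \emph{and scores}) and applying Theorem \ref{thm:prerspvval} to each component $\Pi^{\bflag_{<i}}_{\fpass}\ket{\varphi^{i-1}_{rec^{<i}}}$ separately, so the case label attaches to the pair $(rec^{<i},i)$ rather than to $i$. Now case (a) really is a conditional fail bound along that history, so many case-(a) indices along one record do force a fail on that record; and the $\fwin$-indicator has honest conditional bounds because $rec^{<i}$ already contains the previous scores, so Corollary \ref{cor:chernoff} applies directly. Correspondingly the paper's simulator samples both $i$ and $rec^{<i}$, reconstructs the conditioned state via Fact \ref{fact:simfa}, and invokes a $rec^{<i}$-dependent per-round simulator; branching only on $i^*$, as in your construction, is not enough.
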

\begin{proof}[Proof of Theorem \ref{thm:rspvprime}]
	Consider an efficient adversary $\fAdv$. Let's analyze the output state of the first step of Protocol \ref{prtl:rspv}. The first step is an iteration of $\fpreRSPV$ for each $i\in [N]$. For each round counter $i\in [N]$, suppose the history of round types and output flags and scores of previous tests by the beginning of the $i$-th round are recorded as 
	\begin{equation}\label{eq:59n}rec^{<i}=((\ttype_1\cdots \ttype_{i-1}), (\tflag_1,\cdots \tflag_{i-1}), (\score_1\cdots \score_{i-1}))\end{equation}
	 where $\ttype\in \{\ftest,\fquiz,\fcomp \}$, $\tflag\in \{\fpass,\ffail\}$,  $\score\in \{\fwin,\flose,\bot\}$. 
	 We use bold fonts for their corresponding registers. Then
	$$\Domain(\brec^{<i}):=\text{ The set of all the valid records (equation \eqref{eq:59n}) by the end of round $i-1$}$$
	 Suppose the server's state in the end of round $i-1$ is denoted as $\ket{\varphi^{i-1}}$ and the component (Definition \ref{nota:compo}) when the history record is $rec^{<i}$ is denoted as $\ket{\varphi^{i-1}_{rec^{<i}}}$. Thus
	\begin{equation}\label{eq:286is}\ket{\varphi^{i-1}}=\sum_{rec^{<i}\in \Domain(\brec^{<i})} \underbrace{\ket{rec^{<i}}}_{\brec^{<i}}\otimes\ket{\varphi^{i-1}_{rec^{<i}}}\end{equation}
	When the first step of Protocol \ref{prtl:rspv} completes, the final state could be decomposed as
$$\ket{\varphi^N}=\sum_{rec\in \Domain(\brec^{\leq N})}\underbrace{\ket{rec}}_{\brec^{\leq N}}\otimes\ket{\varphi^N_{rec}}$$
where $\brec^{\leq N}$ are registers for all these records in all the $N$ rounds.\par
	 Denote the part of $\fAdv$'s operation at round $i$  on record $rec^{<i}$ as $\fAdv_{rec^{<i},i}$. For each round $i\in [N]$, history record $rec^{<i}\in \Domain(i-1)$, apply the verifiability of $\fpreRSPV$ (Theorem \ref{thm:prerspvv}) on initial state
	 $$\Pi^{\bflag_{<i}}_{\fpass}\ket{\varphi^{i-1}_{rec^{<i}}}$$
	where $\Pi^{\bflag_{<i}}_{\fpass}$ is the projection onto the space that the registers $\bflag_1\cdots \bflag_{i-1}$ all have value $\fpass$.   Then we know there exists a server-side simulator $\fSim_{rec^{<i},i}^{\fAdv_{rec^{<i},i}}$ such that at least one of the following three cases is true: 
	 \footnote{Note that Theorem \ref{thm:prerspvv} is on normalized initial state, thus we need to normalize $\Pi^{\bflag_{<i}}_{\fpass}\ket{\varphi^{i-1}_{rec^{<i}}}$ before applying it. This is possible as long as $\Pi^{\bflag_{<i}}_{\fpass}\ket{\varphi^{i-1}_{rec^{<i}}}$ is non-negligible norm, and the efficiently-preparable property still preserve by Fact \ref{fact:simfa}. If the norms are negligible it could be merged with the third case below.}
	\begin{itemize}
	
	\item Limited passing probability:
	\begin{equation}\label{eq:61new}|\Pi_{\fpass}^{\bflag_i}\fpreRSPV^{\fAdv_{rec^{<i},i}}(\ket{rec^{<i}}\otimes\Pi^{\bflag_{<i}}_{\fpass}\ket{\varphi^{i-1}_{rec^{<i}}})|^2\leq (1-10^{-2000})|\Pi^{\bflag_{<i}}_{\fpass}\ket{\varphi^{i-1}_{rec^{<i}}}|^2\end{equation}
	(We omit the arguments of the protocol call for simplicity.)
	\item The probability of $\fwin$ is less than expected: 
	\begin{equation}\label{eq:casetwo}|\Pi_{\fwin}^{\bscore_i}\fpreRSPV^{\fAdv_{rec^{<i},i}}(\ket{rec^{<i}}\otimes\Pi^{\bflag_{<i}}_{\fpass}\ket{\varphi^{i-1}_{rec^{<i}}})|^2\leq p_{\fquiz}\cdot(\OPT-10^{-200})|\Pi^{\bflag_{<i}}_{\fpass}\ket{\varphi^{i-1}_{rec^{<i}}}|^2\end{equation}
	\item The output state has verifiability:
	\begin{align}&\Pi_{\fcomp}^{\btype_i}\fpreRSPV^{\fAdv_{rec^{<i},i}}(\ket{rec^{<i}}\otimes\Pi^{\bflag_{<i}}_{\fpass}\ket{\varphi^{i-1}_{rec^{<i}}})\nonumber\\\approx^{ind}_{0.1\sqrt{p_{\fcomp}}|\Pi^{\bflag_{<i}}_{\fpass}\ket{\varphi^{i-1}_{rec^{<i}}}|+\fneg(\kappa)}& \sqrt{p_{\fcomp}}\cdot (\ket{rec^{<i}}\otimes\fSim^{\fAdv_{rec^{<i},i},\fAdv_{rec^{<i},<i}}_{rec^{<i},i}(\ket{\text{equation } \eqref{eq:target}})\label{eq:61}\end{align}
	where $\fAdv_{rec^{<i},<i}$ is the operation of $\fAdv$ by the beginning of the $i$-th round when the record is $rec^{<i}$.
	\end{itemize}
%

We want to show \eqref{eq:61} is true with sufficiently high probability in the passing space for randomly chosen $i$. Formally, define
\begin{equation}\label{eq:296ih}\text{$T:=$ the set of $(rec^{<i},i)$, $i\in [N],rec^{<i}\in \Domain(\brec^{<i})$ that \eqref{eq:61} is true.}	
\end{equation}
We want to show
\begin{equation}\label{eq:251b}|(\bbI-\Pi^{rec^{<i},i}_{\in T})\Pi_{\fpass}(\sum_{i\in [N]}\frac{1}{\sqrt{N}}\ket{i}\otimes\sum_{rec\in \Domain(\brec^{\leq N})}\ket{rec}\otimes\ket{\varphi^N_{rec}})|^2<10^{-4}+\fneg(\kappa)\end{equation}
where $\Pi^{rec^{<i},i}_{\in T}$ is the projection onto the space that the round counter $i$, and the history record by the beginning of time step $i$, is in $T$. We put the proof into a box for continuity of proof stream.
\begin{mdframed}
Consider state $\ket{\varphi^N}$, which is the state when the first step of $\fpreRSPVTemp$ completes. Use $rec\leftarrow \Pi_{\fpass}\ket{\varphi^N}$ to denote the (subnormalized) probability distribution coming from measuring the $rec$ transcript register of $\Pi_{\fpass}\ket{\varphi^N}$. Recall we use $rec^{<i}$ to denote the first $i-1$ records of $rec$. Define events
$$E_1:=\text{ there exists $i\in [N]$, $\bflag_i=\ffail$}$$
$$E_2:=\text{number of $\fwin$}\leq N\cdot p_{\fquiz}\cdot (\OPT-10^{-210})$$
For simplicity define
$$\epsilon =10^{-2000}$$
And define
$$A_{rec}:=\{i\in [N]:(rec^{<i},i)\text{ makes \eqref{eq:61new} hold}\}$$
$$B_{rec}:=\{i\in [N]:(rec^{<i},i)\text{ makes \eqref{eq:casetwo} hold}\}$$
$$W_{rec}:=\{i\in [N]:\bscore^i=\fwin\text{ in $\ket{\varphi^N}$}\}$$
Then
	\begin{align}&\Pr_{i\leftarrow [N]}\Pr_{rec\leftarrow \Pi_{\fpass}\ket{\varphi^N}}[(\eqref{eq:casetwo}\text{ holds }\lor \eqref{eq:61new}\text{ holds })\land \lnot (E_1\lor E_2)]\nonumber\\
	\leq& \Pr_{rec\leftarrow \Pi_{\fpass}\ket{\varphi^N}}[|A_{rec}|\geq 30/\epsilon\land \lnot E_1]\nonumber\\
	&+\Pr_{rec\leftarrow \Pi_{\fpass}\ket{\varphi^N}}[|B_{rec}-A_{rec}|\geq 10^{-5}N\land |A_{rec}|\leq 30/\epsilon\land \lnot E_2]\nonumber\\
	&+60/(N\epsilon)+10^{-5}\label{eq:310}
	\end{align}
	where the first term in \eqref{eq:310} is $\leq (1-\epsilon)^{30/\epsilon}\leq 10^{-5}$.\par
	To bound the second term above, recall the definition of $E_2$ above. We want to bound the probability that the number of $\fwin$ is big with the other events given in the second term. First note
	$$[N]=A_{rec}\cup (B_{rec}-A_{rec})\cup ([N]-B_{rec}-A_{rec})$$
	 We will bound the number of $\fwin$ for $i\in B_{rec}-A_{rec}$ and $i\in ([N]-B_{rec}-A_{rec})$ separately, as follows:
	\begin{equation}\label{eq:320}\Pr_{rec\leftarrow \Pi_{\fpass}\ket{\varphi^N}}[|(B_{rec}-A_{rec})\cap W_{rec}|\geq |B_{rec}-A_{rec}|\cdot p_{\fquiz}\cdot (\OPT-10^{-203})]\leq 10^{-5}\end{equation}
\begin{equation}\label{eq:321}\Pr_{rec\leftarrow \Pi_{\fpass}\ket{\varphi^N}}[|([N]-B_{rec}-A_{rec})\cap W_{rec}|\geq |[N]-B_{rec}-A_{rec}|\cdot p_{\fquiz}\cdot (\OPT+10^{-215})]\leq 10^{-5}+\fneg(\kappa)\end{equation}

Both comes from Corollary \ref{cor:chernoff}. \footnote{The probability upper-bound of each sample in \eqref{eq:321} come from \eqref{eq:61}, while the   probability upper-bound of each sample in \eqref{eq:320} comes from Theorem \ref{thm:opt}:
\begin{align}
&|\Pi_{\fwin}^{\bscore_i}\fpreRSPV^{\fAdv_{rec^{<i},i}}(\ket{rec^{<i}}\otimes\Pi^{\bflag_{<i}}_{\fpass}\ket{\varphi_{rec^{<i}}^{i-1}})|^2\leq p_{\fquiz}\cdot (\OPT+10^{-220})|\Pi^{\bflag_{<i}}_{\fpass}\ket{\varphi_{rec^{<i}}^{i-1}}|^2+\fneg(\kappa)\label{eq:193}
\end{align}}
Combining them we have
$$\Pr_{rec\leftarrow \Pi_{\fpass}\ket{\varphi^N}}[|B_{rec}-A_{rec}|\geq 10^{-5}N\land |A_{rec}|\leq 30/\epsilon\land \lnot E_2]\leq 5\times 10^{-5}+\fneg(\kappa)$$
Summing them up completes the proof.
\end{mdframed}
When the whole protocol completes, for each $rec\in \Domain(\brec^{\leq N})$, suppose $rec$ appears with probability $p_{rec}$. Now consider the simulator $\fSim^\fAdv$ defined as follows. (Since we mix the necessary definitions with the actual operations we use underlines to remark the actual operations of $\fSim^{\fAdv}$.)
\begin{enumerate}
\item   \underline{Pick a random $i\in [N]$}.\par \underline{Sample $rec\in \Domain(\brec^{\leq N})$ with probability $p_{rec}$, and keep the terms with index in $[i-1]$}. Denote it as $rec^{<i}$ ($\in \Domain(\brec^{<i})$) and the appearance probability is $p_{rec^{<i}}$.\par
 The purified overall state by the end of this step is
\begin{equation}\label{eq:269uu}\sqrt{p_{rec^{<i}}}\sum_{i\in [N]}\frac{1}{\sqrt{N}}\ket{i}\otimes\sum_{rec^{<i}\in \Domain(\brec^{<i})}\ket{rec^{<i}}\otimes\ket{0}\end{equation}
\item  
As implicitly used in \eqref{eq:61}, $\ket{\varphi^{i-1}_{rec^{<i}}}$ is efficiently preparable by a unitary followed by a re-scaling by Fact \ref{fact:simfa}. (That is, there exists a polynomial time simulator $\fSim^{\fAdv_{rec^{<i},<i}}_{rec^{<i},<i}$  such that
\begin{equation}\label{eq:298in}\ket{\varphi^{i-1}_{rec^{<i}}}\approx_{\fneg(\kappa)}\sqrt{p_{rec^{<i}}}\fSim^{\fAdv_{rec^{<i},<i}}_{rec^{<i},<i}\ket{0}\end{equation}
)\par
 Suppose $\fAdv_{rec^{<i},i}$ is the operation of $\fAdv$ during the $i$-th round when the previous history record is $rec^{<i}$. Recall $(rec^{<i},i)\in T$, there exists a simulator $\fSim^{\fAdv_{rec^{<i},i},\fAdv_{rec^{<i},<i}}_{rec^{<i},i}$ that satisfies \eqref{eq:61}.\footnote{Following the convention on the superscript of $\fSim$, where $\fSim^{\fAdv,O}$ stands for the simulator corresponding to the protocol execution against adversary $\fAdv$ and initial state $O\ket{0}$, the second part of the superscript here should be $\fSim^{\fAdv_{rec^{<i},<i}}_{rec^{<i},<i}$. Here we simply use $\fAdv_{rec^{<i},<i}$ for simplicity.}\par
\ul{Controlled by the values of $rec^{<i}$ and $i$, if $(rec^{<i},i)\not\in T$, set all the flags to $\ffail$ and stop the construction of $\fSim$. Otherwise apply $\fSim^{\fAdv_{rec^{<i},i},\fAdv_{rec^{<i},<i}}_{rec^{<i},i}$ on the server side of} $\ket{\text{Equation \eqref{eq:target}}}$. 
  Denote the post-execution state
\begin{equation}\label{eq:299in}\ket{\tilde\varphi^{i}_{rec^{<i}}}:=\sqrt{p_{rec^{<i}}}\fSim^{\fAdv_{rec^{<i},i},\fAdv_{rec^{<i},i}}_{rec^{<i},i}\ket{\text{Equation \eqref{eq:target}}}\end{equation}

The overall state by the end of this step when the round counter is $i$, is
\begin{equation}\label{eq:269u}\ket{\tilde\varphi^{i}}:=\sum_{rec^{<i}\in \Domain(\brec^{<i})}\ket{rec^{<i}}\otimes\ket{\tilde\varphi^{i+1}_{rec^{<i}}}\end{equation}
\item \ul{Controlled by the values of $rec^{<i}$ and $i$, apply} \begin{equation}\label{eq:316}\fSim^{\fAdv_{rec^{<i},>i}}_{rec^{<i},>i}=\fDisgard(\bS_{\bbC})\fServersim(\fpreRSPVTemp^{\fAdv_{rec^{<i},> i}}_{> i})\end{equation}
\ul{on $\eqref{eq:269u}$, where}:
\begin{itemize}
\item  $\fAdv_{rec^{<i},>i}$ is the operation of $\fAdv$ starting from the $i+1$-th round when the previous history record by the end of the $i-1$ round is $rec^{<i}$.\footnote{Here the record in the $i$-th round is implicit in $\ket{\tilde\varphi^{i+1}_{rec^{<i}}}$ in \eqref{eq:269u}, and the adversary has access to it; but we make the $rec^{<i}$ part explicit in \eqref{eq:269u} thus we also need to make it explicit in the subscript of the adversary's operation.}
\item $\fServersim(\fPrtl)$ is a server-side operation that (1) first initialize a register $\bS_{\bbC}$ that has the same size as the client-side registers initialized in $\fPrtl$; (2) do all the operations between the server-side of the initial state and $\bS_{\bbC}$ instead of the real client. This transformation transforms an interactive protocol to a server-side operations that simulates the server-side view of the original protocol.
\end{itemize}

 Denote the final state as $$\ket{\tilde\varphi^\prime}=\sum_{rec^{<i}\in \Domain(\brec^{<i})}\ket{rec^{<i}}\otimes\fSim^{\fAdv_{rec^{<i},>i}}_{rec^{<i},>i}\ket{\tilde\varphi^i_{rec^{<i}}}$$
\end{enumerate}
We will prove $\fSim$ satisfies \eqref{eq:finalsim}.\par
 By \eqref{eq:61}\eqref{eq:298in}\eqref{eq:299in}, for any $(rec^{<i},i)\in T$:
\begin{equation}\label{eq:270}\ket{rec^{<i}}\otimes\Pi_{\fcomp}^{\btype_i}\Pi^{\bflag_{<i}}_{\fpass}\ket{\varphi^i_{rec^{<i}}}\approx^{ind}_{(1/10+\fneg(\kappa))\sqrt{p_{\fcomp}}\sqrt{p_{rec^{<i}}}} \ket{rec^{<i}}\otimes\sqrt{p_{\fcomp}}\Pi^{\bflag_{<i}}_{\fpass}\ket{\tilde\varphi^i_{rec^{<i}}}\end{equation}
Thus applying \eqref{eq:270} for each $(rec^{<i},i)\in T$ and summing up the errors we get:
	\begin{align}
	&\Pi_{(rec^{<i},i)\in T}\sum_{i\in [N]}\frac{1}{\sqrt{N}}\ket{i}\otimes\sum_{rec^{<i}\in \Domain(\brec^{<i})}\ket{rec^{<i}}\otimes\Pi_{\fcomp}^{\btype_i}\Pi^{\bflag_{<i}}_{\fpass}\ket{\varphi^i_{rec^{<i}}}\label{eq:302tc}\\
\approx_{(1/10+\fneg(\kappa))\sqrt{p_{\fcomp}}}^{ind} & \Pi_{(rec^{<i},i)\in T}\sum_{i\in [N]}\frac{1}{\sqrt{N}}\ket{i}\otimes\sum_{rec^{<i}\in \Domain(\brec^{<i})}\ket{rec^{<i}}\otimes\sqrt{p_{\fcomp}}\Pi^{\bflag_{<i}}_{\fpass}\ket{\tilde\varphi^i_{rec^{<i}}}\label{eq:303tc}
\end{align}
Then let's consider the application of $\fSim_{rec^{<i},>i}^{\fAdv_{rec^{<i},>i}}$. For any $rec^{<i}$, by construction \eqref{eq:316}:
\begin{equation}\label{eq:274}\fSim_{rec^{<i},>i}^{\fAdv_{rec^{<i},>i}}\ket{\varphi^i_{rec^{<i}}}= \fpreRSPVTemp^{\fAdv_{rec^{<i},> i}}_{> i}\ket{\varphi^i_{rec^{<i}}}\end{equation}
 Combining it with \eqref{eq:302tc}\eqref{eq:303tc} we get 
\begin{align}
	&{\scriptsize\Pi_{(rec^{<i},i)\in T}\sum_{i\in [N]}\frac{1}{\sqrt{N}}\ket{i}\otimes\sum_{rec^{<i}\in \Domain(\brec^{<i})}\ket{rec^{<i}}\otimes\fpreRSPVTemp^{\fAdv_{rec^{<i},> i}}_{>i}(\Pi_{\fcomp}^{\btype_i}\Pi^{\bflag_{<i}}_{\fpass}\ket{\varphi^i_{rec^{<i}}})}\\
	\approx_{(1/10+\fneg(\kappa))\sqrt{p_{\fcomp}}}^{ind} & \Pi_{(rec^{<i},i)\in T}\sum_{i\in [N]}\frac{1}{\sqrt{N}}\ket{i}\otimes\sum_{rec^{<i}\in \Domain(\brec^{<i})}\ket{rec^{<i}}\otimes\sqrt{p_{\fcomp}}\fSim_{rec^{<i},>i}^{\fAdv_{rec^{<i},>i}}\Pi^{\bflag_{<i}}_{\fpass}\ket{\tilde\varphi^i_{rec^{<i}}}
\end{align}
Which implies both sides are close if we focus on the  space that all the flag registers are in value $\fpass$ (denoted as $\Pi_\fpass$):
\begin{align}
	&\Pi_{(rec^{<i},i)\in T}\Pi_{\fpass}\sum_{i\in [N]}\frac{1}{\sqrt{N}}\ket{i}\otimes\sum_{rec^{<i}\in \Domain(\brec^{<i})}\ket{rec^{<i}}\otimes\fpreRSPVTemp^{\fAdv_{rec^{<i},> i}}_{> i}(\Pi_{\fcomp}^{\btype_i}\ket{\varphi^i_{rec^{<i}}})\label{eq:308ym}\\
	\approx_{(1/10+\fneg(\kappa))\sqrt{p_{\fcomp}}}^{ind} & \Pi_{(rec^{<i},i)\in T}\Pi_{\fpass}\sum_{i\in [N]}\frac{1}{\sqrt{N}}\ket{i}\otimes\sum_{rec^{<i}\in \Domain(\brec^{<i})}\ket{rec^{<i}}\otimes\sqrt{p_{\fcomp}}\fSim_{rec^{<i},>i}^{\fAdv_{rec^{<i},>i}}\ket{\tilde\varphi^i_{rec^{<i}}}\label{eq:307ym}
\end{align}
Now we could apply \eqref{eq:251b} to \eqref{eq:308ym} and get
\begin{equation}
	\eqref{eq:308ym}\approx_{0.01+\fneg(\kappa)}\Pi_{\fpass}\sum_{i\in [N]}\frac{1}{\sqrt{N}}\ket{i}\otimes\sum_{rec\in \Domain(\brec^{\leq N})}\ket{rec}\otimes\Pi_{\fcomp}^{\btype_i}\ket{\varphi^N_{rec}}
\end{equation}
And by the construction of $\fSim$ we have the simulated state on the space of $(\bbI-\Pi_{(rec^{<i},i)\in T_i})\Pi_{\fpass}$ has norm $0$. Thus
\begin{equation}
	\eqref{eq:307ym}=\Pi_{\fpass}\sum_{i\in [N]}\frac{1}{\sqrt{N}}\ket{i}\otimes\sum_{rec\in \Domain(\brec^{\leq N})}\ket{rec}\otimes\sqrt{p_{\fcomp}}\ket{\tilde\varphi^N_{rec}})
\end{equation}

which together implies 
$$\Pi_{\fcomp}^{\btype_i}\Pi_{\fpass}\sum_{i\in [N]}\frac{1}{\sqrt{N}}\ket{i}\otimes\ket{\varphi^\prime}\approx^{ind}_{0.11+\fneg(\kappa)}\sqrt{p_{\fcomp}}\Pi_{\fpass}\sum_{i\in [N]}\frac{1}{\sqrt{N}}\ket{i}\otimes\ket{\tilde\varphi^\prime}$$
This completes the proof.
\end{proof}
\subsubsection{Step 2: handing the case where $\fcomp$ round is not reached}
In this section we construct $\fRSPV$ protocol from the $\fpreRSPVTemp$ protocol in the last section.
\begin{mdframed}
\begin{prtl}[$\fRSPV$]\label{prtl:rspvr}Inputs: security parameter $\kappa$, output number $L$.\par
Take $N=10/p_{\fcomp}=100$.
\begin{enumerate}
	\item For $i\in [N]$:\begin{enumerate}
	\item Both parties run $\fpreRSPVTemp$.\begin{itemize}
	\item If this subprotocol call outputs $\fcomp$ in its round type register, break out of the loop.
	\item Otherwise both the client and the server disgard all outputs (client side phases and server-side states)  of this round.
	\end{itemize}
	
	\end{enumerate}
	\item If the loop in the first step does not terminate by the breaking out command, the client outputs $\ffail$ in the flag register. If any subprotocol call returns $\ffail$ as its flag, the client outputs $\ffail$ too.\par
	Otherwise, suppose the first step breaks out when the round counter is $i$. The client and the server use the keys and states generated in the $i$-th round as the output keys and the output state.
\end{enumerate}	
\end{prtl}
	
\end{mdframed}
\paragraph{Correctness}If the server is honest, the protocol succeeds with probability $\geq 0.98-\fneg(\kappa)$ and the joint state of the client and the server is \eqref{eq:cqtarget} (up to negligible distance) in the end.
\begin{proof}The cases where an honest server could result in $\ffail$ are:
\begin{itemize}
	\item In the underlying $\fpreRSPVTemp$ protocol the client could possibly output $\ffail$ against an honest server. This happens with probability $\leq N\cdot (10^{-4}+\fneg(\kappa))\leq 0.01+\fneg(\kappa)$.
	\item The probability that all the calls of $\fpreRSPVTemp$ have round type $\perp$ is $\leq (1-p_{\fcomp})^N\leq 0.01$.
\end{itemize}
If these two cases do not happen, the client and the honest server get the output keys and states \eqref{eq:cqtarget} by the correctness of $\fpreRSPVTemp$. This completes the proof.\end{proof}
Then we could prove Protocol \ref{prtl:rspvr} satisfies Definition \ref{defn:rspvv}, the verifiability requirement for $\fRSPV$ protocol:
\begin{thm}[Verifiability of Protocol \ref{prtl:rspvr}, repeat of Definition \ref{defn:rspvv}]\label{thm:rspvv}
	For any efficient adversary $\fAdv$ 
	there exists a server-side operator $\fSim^{\fAdv}$ such that:
	\begin{equation}\label{eq:211p}\Pi_{\fpass}\fRSPV^\fAdv(1^L,1^\kappa)\ket{0}\approx^{ind}_{0.15+\fneg(\kappa)}\Pi_{\fpass} \fSim^\fAdv \ket{\text{Equation \eqref{eq:target}}}\end{equation}
\end{thm}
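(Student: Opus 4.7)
The plan is to construct $\fSim^\fAdv$ by chaining the $\fpreRSPVTemp$-simulators provided by Theorem \ref{thm:rspvprime} and to analyze the total error via a hybrid argument. Write $\fAdv_k$ for the portion of $\fAdv$ executed during the $k$-th call to $\fpreRSPVTemp$ in Protocol \ref{prtl:rspvr}, and let $\fSim^{\fAdv_k}_k$ be the corresponding simulator promised by Theorem \ref{thm:rspvprime}. The RSPV simulator sequentially applies $\fSim^{\fAdv_1}_1,\ldots,\fSim^{\fAdv_N}_N$ on the single ideal state $\ket{\text{Equation \eqref{eq:target}}}$; as soon as some round $k$ produces $(\btype_k,\bflag_k)=(\fcomp,\fpass)$, the simulator declares that round's spliced-in output as the RSPV output and halts. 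If all $N$ rounds are exhausted without hitting $\fcomp$-$\fpass$, or if some round returns $\ffail$, it writes $\ffail$ in the RSPV flag register.

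To bound the error I would introduce hybrids $H_0,\ldots,H_N$, where in $H_k$ the first $k$ iterations of $\fpreRSPVTemp$ in Protocol \ref{prtl:rspvr} are replaced by their Theorem \ref{thm:rspvprime}-simulators while the remaining $N-k$ iterations are executed honestly against $\fAdv$, so that $H_0=\fRSPV^\fAdv\ket{0}$ and $H_N=\fSim^\fAdv\ket{\text{Equation \eqref{eq:target}}}$. The difference between $H_{k-1}$ and $H_k$ is supported on the event ``round $k$ is the first round with $\btype=\fcomp$'', and on this event a generalization of Theorem \ref{thm:rspvprime} (obtained by absorbing the work of iterations $1,\ldots,k-1$ into an efficient inner adversary, which is legitimate because the proof of Theorem \ref{thm:rspvprime} uses its input only through efficient preparability) bounds the indistinguishability distance by a constant times $|\alpha_k|$, where $\ket{\alpha_k}$ denotes the sub-normalized state at the start of iteration $k$ conditioned on iterations $1,\ldots,k-1$ all returning $(\btype=\perp,\bflag=\fpass)$. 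The crucial quantitative input is $|\alpha_k|^2\le(1-p_\fcomp)^{k-1}$, which holds because the client samples each round's $\btype$ using its own independent randomness with $\Pr[\btype=\fcomp]=p_\fcomp$, so that $\Pr[\btype=\perp]=1-p_\fcomp$ in each iteration irrespective of the adversary, and conditioning further on $\bflag=\fpass$ only reduces norms.

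Since ``break at round $k$'' events for distinct $k$ are orthogonal (distinguished by which $\btype_j$ register first records $\fcomp$), the squared indistinguishability distances add, yielding a total L2 error bounded by $\sqrt{\sum_{k=1}^N c^2|\alpha_k|^2}\le c/\sqrt{p_\fcomp}$, where $c$ is the Theorem \ref{thm:rspvprime} constant. A sharper accounting then brings this to the advertised $0.15$: Theorem \ref{thm:rspvprime}'s error attaches to the $\fcomp$-passing component (of norm $\sim\sqrt{p_\fcomp}$) rather than to the full $\fpreRSPVTemp$ output, which introduces an extra $\sqrt{p_\fcomp}$ factor that cancels the $1/\sqrt{p_\fcomp}$ coming from the geometric sum. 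The remaining contribution to the error is the event that no iteration reaches $\fcomp$, which occurs with probability at most $(1-p_\fcomp)^N<10^{-4}$ and on which both sides of \eqref{eq:211p} project to zero under $\Pi_\fpass$.

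The main obstacle I anticipate is precisely this careful accounting of where each error lives across the hybrid chain: one must track that Theorem \ref{thm:rspvprime}'s constant $0.11$ is the distance on a $\sqrt{p_\fcomp}$-normalized component, and exploit orthogonality of the break-at-round-$k$ events so that the L2 errors add as squares rather than accumulating additively through the triangle inequality. A secondary but necessary technicality is explicitly stating and checking the generalization of Theorem \ref{thm:rspvprime} from the initial state $\ket{0}$ to an arbitrary efficiently-preparable $\ket{\alpha_k}$; this is routine since the proof of Theorem \ref{thm:rspvprime} only invokes efficient preparability of its input, but it must be recorded in order to make the hybrid step valid.
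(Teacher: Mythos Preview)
Your approach is essentially the paper's: decompose $\Pi_\fpass\fRSPV^\fAdv\ket{0}$ by the round $i$ at which the loop terminates, bound the no-termination tail by $(1-p_\fcomp)^N$, apply (the efficiently-preparable-input generalization of) Theorem~\ref{thm:rspvprime} to each terminating-at-$i$ component with rounds $1,\ldots,i-1$ absorbed into the inner adversary, and combine using orthogonality of the $\ket{i}$-indexed pieces. The paper does this as a one-shot decomposition rather than a hybrid chain---which avoids having to say what the chained $\fSim_k$ do on $\perp$-rounds---but the content is the same, and your anticipated accounting obstacle is exactly the step the paper handles via the orthogonality/$L^2$ combination you describe.
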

\begin{proof}

Expand the left hand side of  \eqref{eq:211p} by the value of the register that stores the round counter $i$ when the protocol terminates:
\begin{align}
	&\Pi_{\fpass}\fRSPV^\fAdv(1^L,1^\kappa) \ket{0}\label{eq:258}\\
	=&\sum_{i\in [N]}\underbrace{\ket{i}}_{\text{round counter}}\otimes\Pi_{\fpass}^{\bflag_i}\Pi_{\fcomp}^{\btype_i}\fpreRSPVTemp^{\fAdv_i} (\Pi_{\fpass}^{\bflag_{<i}}\Pi_{\perp}^{\btype_{<i}}\fRSPV_{<i}^{\fAdv_{<i}}\ket{0})\\
	&+\ket{N}\otimes\Pi_{\fpass}^{\bflag_i}\Pi_{\perp}^{\btype_i}\fpreRSPVTemp^{\fAdv_i} (\Pi_{\fpass}^{\bflag_{<i}}\Pi_{\perp}^{\btype_{<i}}\fRSPV_{<i}^{\fAdv_{<i}}\ket{0})\label{eq:260}
	\end{align}
	where $\Pi_{\fpass}^{\bflag_{<i}}$ is the projection onto the space that the flag registers for round $1$ to $i-1$ all have value $\fpass$, $\Pi_{\perp}^{\btype_{<i}}$ is the projection onto the space that the round type registers for round $1$ to $i-1$ all have value $\perp$, $\fRSPV_{<i}$ is the protocol from round $1$ to $i-1$, $\fAdv_{<i}$ is the part of $\fAdv$ by the end of the $i-1$-th round. And $\Pi_{\fpass}^{\bflag_i}$, $\Pi_{\perp}^{\btype_i}$, $\fAdv_i$ are defined similarly.\par
	The norm of term \eqref{eq:260} is upper bounded by $(1-p_{\fcomp})^N\leq 0.01$. Thus
	\begin{align}
		\eqref{eq:258}\approx_{0.01}&\sum_{i\in [N]}\ket{i}\otimes\Pi_{\fpass}^{\bflag_i}\Pi_{\fcomp}^{\btype_i}\fpreRSPVTemp^{\fAdv_i} (\Pi_{\fpass}^{\bflag_{<i}}\Pi_{\perp}^{\btype_{<i}}\fRSPV_{<i}^{\fAdv_{<i}}\ket{0})\label{eq:257}
	\end{align}
	Then we could apply Theorem \ref{thm:rspvprime} to each term of \eqref{eq:257} which leads to an isometry $\fSim^{i,\fAdv}$ for each $i\in [N]$:
	\begin{align}\forall i\in [N],\quad & \ket{i}\otimes\Pi_{\fpass}^{\bflag_i}\Pi_{\fcomp}^{\btype_i}\fpreRSPVTemp^{\fAdv_i} (\Pi_{\fpass}^{\bflag_{<i}}\Pi_{\perp}^{\btype_{<i}}\fRSPV_{<i}^{\fAdv_{<i}}\ket{0})\\
	\approx^{ind}_{0.11\sqrt{p_i}+\fneg(\kappa)}&\ket{i}\otimes \sqrt{p_i}\sqrt{p_{\fcomp}}\Pi_{\fpass}^{\bflag_i}\fSim^{i,\fAdv} \ket{\text{Equation \eqref{eq:target}}}\end{align}
	where 
	$$p_i:=|\Pi_{\fpass}^{\bflag_{<i}}\Pi_{\perp}^{\btype_{<i}}\fRSPV_{<i}^{\fAdv_{<i}}\ket{0}|^2$$
	 Then summing them up and summing up the error terms we could continue from \eqref{eq:257}:
	\begin{align}
	&\text{Right hand side of }\eqref{eq:257}\label{eq:335}\\
	\approx^{ind}_{0.11+\fneg(\kappa)} & \sum_{i\in [N]}\sqrt{p_i}\sqrt{p_{\fcomp}}\ket{i}\otimes \fSim^{i,\fAdv} \ket{\text{Equation \eqref{eq:target}}}\label{eq:214o}
\end{align}	
where \eqref{eq:214o} defines a simulator $\fSim$ that samples $i$ with probability $p_ip_{\fcomp}$ and runs the corresponding $\fSim^{i,\fAdv}$. Combining \eqref{eq:257}\eqref{eq:335}\eqref{eq:214o} completes the proof.
\end{proof}

\subsection{From RSPV to CVQC}
Now we will construct a CVQC protocol from the RSPV protocol in the last subsection. 
\begin{mdframed}
\begin{prtl}[CVQC]\label{prtl:cvqc}
	Input: circuit $C$ to be verified, which determines the gadget number $L=O(|C|)$ under Theorem \ref{thm:gauvbqc}. Suppose the security parameter is $\kappa$.
	\begin{enumerate}
		\item  Both parties run $\fRSPV(1^L,1^\kappa)$.
		\item Both parties run the gadget-assisted verification protocol (Theorem \ref{thm:gauvbqc}) with the gadgets above. 
	\end{enumerate}
\end{prtl}
	
\end{mdframed}
We have the following for the protocol, which proves Theorem \ref{thm:main}.
\paragraph{Completeness} The protocol has completeness $\frac{2}{3}$.
\begin{proof}
There are two cases where the client in the protocol will output $\ffail$ on a yes instance against an honest server:
\begin{itemize}
	\item The call to the $\fRSPV$ protocol in the first step might fail with probability $\frac{1}{10}+\fneg(\kappa)$, as defined in Definition \ref{defn:rspv}.
	\item The underlying gadget-assisted verification protocol in the third step is allowed to have a failing probability of $\frac{1}{10}$.
\end{itemize}  The total probability is $\leq \frac{1}{3}+\fneg(\kappa)$.	
\end{proof}

\paragraph{Soundness} The protocol has soundness $\frac{1}{3}$ in QROM against BQP adversaries.
\begin{proof}
After the application of $\fRSPV$ against adversary $\fAdv$, define $\ket{\varphi^\prime}$ as the output state:
$$\ket{\varphi^\prime}=\fRSPV^\fAdv(1^L,1^\kappa)\ket{0}$$
 By Theorem \ref{thm:rspvv}, there exists a server-side isometry $\fSim$ such that
\begin{equation}\label{eq:cvqc1}\Pi_{\fpass}\ket{\varphi^\prime}\approx_{0.15+\fneg(\kappa)}^{ind}\Pi_{\fpass} \fSim^\fAdv\ket{\text{Equation \eqref{eq:target}}}\end{equation}
By the soundness property of the underlying gadget-assisted protocol (by Theorem \ref{thm:gauvbqc}, denote as $\fGAUVBQC$) we know for any $\fAdv^\prime$ which denotes the adversary for the gadget-assisted verification step:
$$\forall C,o, \Pr[C\ket{0}=o]\leq \frac{1}{100}:|\Pi_{\fpass}\fGAUVBQC^{\fAdv^\prime}(C,o,\Pi_{\fpass}\fSim^\fAdv\ket{\text{Equation \eqref{eq:target}}})|^2\leq \frac{1}{50}$$
Substitute \eqref{eq:cvqc1} we get
$$\forall C,o, \Pr[C\ket{0}=o]\leq \frac{1}{100}:|\Pi_{\fpass}\fGAUVBQC^{\fAdv^\prime}(C,o,\ket{\varphi^\prime})|^2\leq 0.33+\fneg(\kappa)$$
which completes the proof.
\end{proof}
\paragraph{Efficiency} The protocol runs in time $O(\fpoly(\kappa)|C|)$.

By this time we complete the proof of Theorem \ref{thm:main}.

\appendix
\section{Missing Proofs By Section \ref{sec:4}}\label{app:a}
\begin{proof}[Proof of Fact \ref{fact:3}]This is implied by
	$$|\ket{\varphi}-\ket{\phi}|^2=2(|\ket{\varphi}|^2+|\ket{\phi}|^2)-|\ket{\varphi}+\ket{\phi}|^2$$
\end{proof}
\begin{proof}[Proof of Fact \ref{fact:otpro}]
	This is because state
	$$\frac{1}{\sqrt{8}}\sum_{\theta\in \{0,1\cdots 7\}}\ket{\theta}$$
	is invariant under the operator.
\end{proof}
\begin{proof}[Proof of Fact \ref{fact:cophtool}]This is because for any $sum\in \{0,1\cdots 7\}$, state
	$$\sum_{c_0c_1c_2\cdots c_N\in \cC,\tSUM(c_0c_1c_2\cdots c_N)=sum}\ket{c_0c_1c_2\cdots c_N}$$
	is invariant under the operator.
\end{proof}
\begin{proof}[Proof of Fact \ref{fact:bntesttool1}]
	Suppose $\vec{c}=(c_1,c_2\cdots c_D)$, $\vec{d}=(d_1,d_2\cdots d_D)$. 
	Denote $S$ as the set of index $i$ such that $c_i\leq d_i$. 
	Then
	$$\vec{c^\prime}\approx_{\epsilon_1} \vec{d}\Rightarrow \sum_{i\in S}|c_i-d_i|^2\leq \epsilon_1^2$$
	$$\vec{c}\approx_{\epsilon_2}\vec{d^\prime}\Rightarrow \sum_{i\not\in S}|c_i-d_i|^2\leq \epsilon_2^2$$
	Summing them up completes the proof.
\end{proof}
\begin{proof}[Proof of Fact \ref{fact:bntesttool}]
	$$\frac{1}{{|D|}}\sum_{d_1\in D}\sum_{d_2\in D}|c_{d_1}-c_{d_2}|^2=2\sum_{d\in D}(1+\frac{1}{D})c_d^2-2(\frac{1}{\sqrt{|D|}}\sum_{d\in D}c_d)^2\approx_\epsilon 0$$
	$$\Rightarrow \frac{1}{\sqrt{|D|}}\sum_{d\in D}c_d\approx_{\frac{3}{2}\epsilon/c+\frac{1}{D}} c$$
	$$\Rightarrow \sum_{d\in D} |c_{d}-\frac{1}{\sqrt{D}}c|^2=c^2-2(\frac{1}{\sqrt{|D|}}\sum_{d\in D}c_d)c+c^2\approx_{4\epsilon/c+\frac{2}{D}} 0$$
\end{proof}
\begin{proof}[Proof of Fact \ref{fact:simfa}]
	Define $\tilde\fSim$ as the operation that runs $O$ and measures register $\bbC$ until $c$ appears. Cuting-off the repetition in $\tilde\fSim$ by $\kappa$ and purifying the operation completes the construction.
\end{proof}
Lemma \ref{lem:chernoff} is the usual Chernoff's bound.
\begin{proof}[Proof of Corollary \ref{cor:chernoff}]
We only need to prove
$$\Pr[(\text{$\forall i$, sample history by time $i$ is not in $S_i$})\land (|\{i:s_i=1\}|\geq (1+\delta)pN)]\leq e^{-\delta^2N/4}$$
	which comes from Lemma \ref{lem:chernoff}.
\end{proof}

A proof of Lemma \ref{lem:padro} is given in \cite{jiayu20} using standard techniques. We give a proof here for self-containment.
\begin{proof}[Proof of Lemma \ref{lem:padro}]
	Define $\Pi$ as the projection onto strings with prefix in $\bpads$ in the query input register. Then consider $H(\bbI-\Pi)$ which is the operation that projecting out the space that have a prefix in $\bpads$ from the query.\par
	Define $O_k$ as the operation that the last $k$ queries of $O$ to the random oracle are replaced by $H(\bbI-\Pi)$.\par
	We can prove, for all $k\in [|O|]$ ($|O|$ denotes the number of queries) :
	\begin{equation}\label{eq:app1}O_k\sum_{pads\in \Domain(\bpads)}\frac{1}{\sqrt{|\Domain(\bpads)|}}\ket{pads}\otimes\ket{\varphi}\approx_{\fneg(\kappa)} O_{k-1}\sum_{pads\in \Domain(\bpads)}\frac{1}{\sqrt{|\Domain(\bpads)|}}\ket{pads}\otimes\ket{\varphi}.\end{equation}
	The reason is, \eqref{eq:app1} could be reduced to
	\begin{equation}\label{eq:222}|\Pi\sum_{pads\in \Domain(\bpads)}\frac{1}{\sqrt{|\Domain(\bpads)|}}\ket{pads}\otimes U_{k}\ket{\varphi}|\approx_{\fneg(\kappa)}0\end{equation}
	where $U$ is the part of $O$ just before the $k$-th query counting from the last to the first. Note by the time of the $k$-th query (counting from the last to the first) $pads$ still contains unused freshly new randomness. Thus \eqref{eq:222} is true by direct calculation.\par
	Summing up \eqref{eq:app1} for each $k\in [|O|]$. Then $\ket{\tilde\varphi}:=O_{|O|}\ket{\varphi}$ gives the state we want.
\end{proof}
\section{Missing Proofs in Section \ref{sec:5}}\label{app:mp6}
\begin{proof}[Proof of Lemma \ref{lem:collapse}]
	Substitute $\ket{\varphi}=\ket{\varphi_0}+\ket{\varphi_1}$ and expand we get
	$$|\Pi_0^{\bS}O\ket{\varphi}|^2=|\Pi_0^{\bS}O\ket{\varphi_0}|^2+|\Pi_0^{\bS}O\ket{\varphi_1}|^2+\bra{\varphi_1}O^\dagger\Pi_0^{\bS}O\ket{\varphi_0}+\bra{\varphi_0}O^\dagger\Pi_0^{\bS}O\ket{\varphi_1}$$
	Since in $\ket{\varphi_0}$ $x^{(0)}_0$ is held by the server classically and in $\ket{\varphi_1}$ $x^{(0)}_1$ is held  classically, by the claw-free property the last two terms are both negligible. This completes the proof.
\end{proof}
\begin{proof}[Proof of Lemma \ref{lem:5.2a}]
	Define $H^\prime$ as the blinded oracle where $\{0,1\}^\kappa||\bx_{1-b}^{(i)}||\cdots$ and $\{0,1\}^{2\kappa}||\bx_{1-b}^{(i)}||\cdots$ are blinded. Define $\fAdv^t$ as the operation where the first $t$ queries of $\fAdv$ are replaced by queries to $H^\prime$. First we could prove
	\begin{equation}\label{eq:235}
\forall t\in [|\fAdv|],\fAdv^t(\ket{\varphi}\odot \llbracket \fAuxInf\rrbracket)\approx_{\fneg(\kappa)}\fAdv^{t-1}(\ket{\varphi}\odot \llbracket \fAuxInf\rrbracket)\end{equation}
\eqref{eq:235} is proved as follows. Define $U^{t-1}$ as the operation in $\fAdv^{t-1}$ by the time of the $t$-th query. Define $\bS_{roq}$ to be the register used to hold random oracle queries. Then \eqref{eq:235} is reduced to proving
\begin{equation}\label{eq:263}|\Pi^{\bS_{roq}}_{\in \{0,1\}^\kappa||\bx_{1-b}^{(i)}||\cdots \cup \{0,1\}^{2\kappa}||\bx_{1-b}^{(i)}||\cdots}U^{t-1}(\ket{\varphi}\odot \llbracket \fAuxInf\rrbracket)|\leq \fneg(\kappa)\end{equation}
Use $\bpads$ to denote the set of registers that holds the random pads used in $\fAuxInf$. Applying Lemma \ref{lem:padro} we get $\ket{\tilde\varphi}$ that does not  depend on $H(\bpads||\cdots)$ such that
$$\ket{\tilde\varphi}\approx_{\fneg(\kappa)}\sum_{pads\in \Domain(\bpads)}\frac{1}{\sqrt{|\Domain(\bpads)|}}\ket{pads}\otimes\ket{\varphi}$$
This implies \eqref{eq:263} is reduced to proving
\begin{equation}\label{eq:397}|\Pi^{\bS_{roq}}_{\in \{0,1\}^\kappa||\bx_{1-b}^{(i)}||\cdots \cup \{0,1\}^{2\kappa}}U^{t-1}(\ket{\tilde\varphi}\odot \llbracket \fAuxInf\rrbracket)|\leq \fneg(\kappa)\end{equation}
since entries of $H$ in the form of $\bpads||p_{pre}^{(t)}||\bx_{1-b}^{(i)},\bpads||\bx_{1-b}^{(i)}||p_{post}^{(t)}$ are never queried by the preparation of $\ket{\tilde\varphi}$ and $U^{t-1}$, the left hand side is equal to
\begin{equation}\label{eq:264}|\Pi^{\bS_{roq}}_{\in \{0,1\}^\kappa||\bx_{1-b}^{(i)}||\cdots \cup \{0,1\}^{2\kappa}}U^{t-1}(\ket{\tilde\varphi}\odot \llbracket \$\rrbracket)|\end{equation}
where $\llbracket \$\rrbracket$ denotes random strings of the same size of $\llbracket \fAuxInf\rrbracket)$. Then \eqref{eq:264} is $\leq \fneg(\kappa)$ by claw-freeness. Now \eqref{eq:235} is proved.\par
Now summing up \eqref{eq:235} for each $t\in[|\fAdv|]$ we have
	\begin{equation}\label{eq:399}
\fAdv(\ket{\varphi}\odot \llbracket \fAuxInf\rrbracket)\approx_{\fneg(\kappa)}\fAdv^\prime(\ket{\varphi}\odot \llbracket \fAuxInf\rrbracket)\end{equation}
where $\fAdv^\prime$ is the operation where each query in $\fAdv$ is replaced by query to $H^\prime$. Now by the same reason as \eqref{eq:397} to \eqref{eq:264} we know
$$|\Pi_{\bx_{1-b}^{(i)}}\fAdv^\prime(\ket{\varphi}\odot \llbracket \fAuxInf\rrbracket)|\leq \fneg(\kappa)$$
which together with \eqref{eq:399} completes the proof.
\end{proof}
\begin{proof}[Proof of Lemma \ref{lem:5.2b}]
	Define $\fAdv^t$ as the operation where the first $t$ queries of $\fAdv$ are replaced by queries to $H^\prime$. The problem is reduced to
	$$\fAdv^t\ket{\varphi}\approx_{\fneg(\kappa)}\fAdv^{t-1}\ket{\varphi}$$
	as the proof of Lemma \ref{lem:5.2a}, this is further reduced to
	$$|\Pi^{\bS_{roq}}_{\cdots ||\bx_{1-b}^{(i)}||\cdots}U^{t-1}\ket{\varphi}|\leq \fneg(\kappa)$$
	which holds by claw-freeness.
\end{proof}
\section{Missing Proofs in Section \ref{sec:6}}\label{app:b}
To prove Lemma \ref{lem:prepht}, we first prove the following lemma. 
\begin{lem}\label{lem:C1}
Suppose the client holds a key pair in register $\bK=\{\bx_0,\bx_1\}$. Below we use $|\bx|$ to denote the length of keys in $\bK$. Suppose the purified joint state $$\ket{\tilde\varphi}=\sum_{pad\in \{0,1\}^\kappa}\frac{1}{\sqrt{2^\kappa}}\underbrace{\ket{pad}}_{\bpad\text{ in transcript}}\otimes\ket{\tilde\varphi_{pad}}$$ does not depend on $H(\bpad||\{0,1\}^{|\bx|})$. Suppose an adversary $\fAdv^{blind}$ only queries the blinded oracle $H^\prime$ where entries $\{0,1\}^{\kappa}||\bx_b$ are blinded, for some $b\in \{0,1\}$. Then \begin{equation}\label{eq:293r}|\Pi_{\eqref{eq:htt}=0}^\bd\Pi_{\neq 0}^{\text{last $\kappa$ bits of }\bd}\fHadamardTest_{\geq 2}^{\fAdv^{blind}}(\bK;1^\kappa) \ket{\tilde\varphi}|= \frac{1}{\sqrt{2}}|\Pi_{\neq 0}^{\text{last $\kappa$ bits of }\bd}\fHadamardTest_{\geq 2}^{\fAdv^{blind}}(\bK;1^\kappa)\ket{\tilde\varphi}|\end{equation}
	where the subscript ``$\geq 2$'' means the first step of $\fHadamardTest$ (sampling a random pad) has already been done and thus skipped. 
\end{lem}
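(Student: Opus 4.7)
The plan is to prove the equality by establishing that $\ket{\psi}:=\fHadamardTest_{\geq 2}^{\fAdv^{blind}}(\bK;1^\kappa)\ket{\tilde\varphi}$ is independent of the virtual register $\bH(\bpad||\bx_b)$, and then computing the projection norm by an elementary counting argument over the uniform content of that register.

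First I would argue the independence. The hypothesis that $\ket{\tilde\varphi}$ has no dependence on $\bH(\bpad||\{0,1\}^{\kappa})$ in particular gives no dependence on $\bH(\bpad||\bx_b)$. Since $\fAdv^{blind}$ only queries the blinded oracle $H^\prime$, whose outputs on inputs $\{0,1\}^{\kappa}||\bx_b$ are drawn from freshly resampled registers decoupled from the real $\bH(\bpad||\bx_b)$, no step of $\fHadamardTest_{\geq 2}^{\fAdv^{blind}}$ can create any dependence of $\ket{\psi}$ on $\bH(\bpad||\bx_b)$ (note that step $1$, in which the client samples $\bpad$, has already been performed by assumption, so no fresh pad is drawn inside $\fHadamardTest_{\geq 2}$).

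Second, I would compute the projection norm. Fix basis values $\bpad=pad$, $\bK=(x_0,x_1)$, $\bd=d$, with the last $\kappa$ bits of $d$ (call them $d_{\text{last}}$) nonzero. By the independence above, the amplitude of $\ket{\psi}$ on this basis is constant in the value of the corresponding random-oracle register $\bH(pad||x_b)$, i.e., it appears there only as the uniform superposition $\tfrac{1}{\sqrt{2^\kappa}}\sum_{h_b}\ket{h_b}$. The condition $\eqref{eq:htt}=0$ is a linear equation in $H(pad||x_b)$ with coefficient $d_{\text{last}}\neq 0$, so it is satisfied by exactly $2^{\kappa-1}$ out of $2^\kappa$ values. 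Integrating over $H(pad||x_b)$ therefore retains exactly half of the squared norm on each basis component, giving
\[
\bigl|\Pi_{\eqref{eq:htt}=0}^{\bd}\,\Pi_{\neq 0}^{\text{last $\kappa$ bits of }\bd}\,\ket{\psi}\bigr|^{2} \;=\; \tfrac{1}{2}\,\bigl|\Pi_{\neq 0}^{\text{last $\kappa$ bits of }\bd}\,\ket{\psi}\bigr|^{2},
\]
and taking square roots yields \eqref{eq:293r}.

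The main obstacle is the rigorous treatment of the independence when $\bpad$ and $\bx_b$ are themselves in superposition, so that $\bH(\bpad||\bx_b)$ refers to different physical random-oracle registers on different branches. I would resolve this by a hybrid argument over the queries of $\fAdv^{blind}$, analogous to the one used in the proof of Lemma~\ref{lem:5.2}: for each query, replacing a query to $H^\prime$ by one to $H$ is a no-op outside $\{0,1\}^\kappa||\bx_b$, and inside $\{0,1\}^\kappa||\bx_b$ the decoupling comes from the freshly resampled values defining $H^\prime$. Once branch-wise independence of $\ket{\psi}$ from $\bH(\bpad||\bx_b)$ is formalized in this way, the counting step above goes through component-by-component.
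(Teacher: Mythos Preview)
Your proposal is correct and follows essentially the same approach as the paper: first argue that the post-test state is independent of the real oracle register $\bH(\bpad||\bx_b)$, then use the fact that for $d_{\text{last}}\neq 0$ exactly half of the values of $H(pad||x_b)$ satisfy the linear condition $\eqref{eq:htt}=0$. The paper carries out this computation by writing the post-test state as $\ket{H(\bpad||\bx_b)}\otimes\sum_{d_1,d_2}\ket{d_1}\ket{d_2}\ket{\chi_{d_1d_2}}$ and taking an expectation over $H(\bpad||\bx_b)$, which is exactly your counting step. Your ``main obstacle'' paragraph is more cautious than necessary: since $\fAdv^{blind}$ queries only $H'$, it never touches the real registers $\bH(\{0,1\}^\kappa||\bx_b)$ at all, so no query-by-query hybrid is needed---the independence holds branch-by-branch by inspection, and the paper simply asserts it directly.
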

The proof is by a direct calculation similar to the proof of Lemma A.7.1 in \cite{jiayu20}.
\begin{proof}
	Without loss of generality assume $b=0$. Since $\ket{\tilde\varphi}$ does not depend on $H(\bpad||\{0,1\}^{|\bx|})$ and $\fAdv^{blind}$ does not query $H(\{0,1\}^{\kappa}||\bx_b)$, $$\text{$\fHadamardTest_{\geq 2}^{\fAdv^{blind}}(\bK;1^\kappa) \ket{\tilde\varphi}$ does not depend on $\bH(\bpad||\bx_0)$}.$$ Suppose the server's response in $\fHadamardTest$ is written into register $\bd=(\bd_1,\bd_2)$ where $\bd_2$ corresponds to the last $\kappa$ bits. Then we can write the post-execution state as
	\begin{equation}\label{eq:396ys}\fHadamardTest_{\geq 2}^{\fAdv^{blind}}(\bK;1^\kappa) \ket{\tilde\varphi}=\sum_{(d_1,d_2)\in \Domain(\bd)}\ket{d_1}\ket{d_2}\ket{\chi_{d_1,d_2}}\end{equation} 
	Then we can expand the left hand side of \eqref{eq:293r} as follows:
	\begin{align}     & |\Pi_{\eqref{eq:htt}=0}^\bd\Pi_{\neq 0}^{\text{last $\kappa$ bits of }\bd}\fHadamardTest_{\geq 2}^{\fAdv^{blind}}(\bK;1^\kappa) \ket{\tilde\varphi}|                                                                                                       \\
	&\text{(substitute \eqref{eq:396ys} and use the condition that $\ket{\chi_{d_1,d_2}}$ does not depend on $\bH(\bpad||\bx_0)$)}\\
		=    & \sqrt{\bE_{H(\bpad||\bx_0)}|\Pi_{d_1\cdot \bx_0+d_2\cdot H(\bpad||\bx_0)=d_1\cdot \bx_1+d_2\cdot \bH(\bpad||\bx_1)}\Pi_{d_2\neq 0}\sum_{d_1d_2} \ket{d_1}\otimes \ket{d_2}\otimes\ket{\chi_{d_1,d_2}}|^2} \\
		= & \sqrt{\frac{1}{2}|\Pi_{d_2\neq 0}\sum_{d_1,d_2} \ket{d_1}\otimes \ket{d_2}\otimes\ket{\chi_{d_1,d_2}}|^2}\\
		=& \frac{1}{\sqrt{2}}|\Pi_{\neq 0}^{\text{last $\kappa$ bits of }\bd}\fHadamardTest_{\geq 2}^{\fAdv^{blind}}(\bK;1^\kappa)\ket{\tilde\varphi}|\end{align}
\end{proof}
Then we could prove Lemma \ref{lem:prepht}.
\begin{proof}[Proof of Lemma \ref{lem:prepht}]
	We will move towards Lemma \ref{lem:C1} step by step.
	\begin{enumerate}
		\item Suppose the random pad used in the Hadamard test is stored in register $\bpad$. By Lemma \ref{lem:padro} there exists $\ket{\tilde\varphi}$ such that $$\ket{\tilde\varphi}\approx_{\fneg(\kappa)}\frac{1}{\sqrt{2^\kappa}}\sum_{pad\in \{0,1\}^\kappa}\underbrace{\ket{pad}}_{\bpad}\otimes\ket{\varphi}$$ and $\ket{\tilde\varphi}$ does not depend on the value of $\bH(\bpad||\cdots)$.
		\item By Lemma \ref{lem:5.2b} we have
		$$\fHadamardTest^\fAdv(\bK;1^\kappa)\ket{\varphi}\approx_{\fneg(\kappa)} \fHadamardTest^{\fAdv^{blind}}(\bK;1^\kappa)\ket{\varphi}$$
		where $\fAdv^{blind}$ come from replacing all the queries in $\fAdv$ by queries to $\bH^{\prime}$, where $\bH^{\prime}$ is the blinded version of $\bH$ where entries $\{0,1\}^\kappa||\bK$ are blinded.
		\item We can apply Lemma \ref{lem:C1} to get
		$$|\Pi_{\eqref{eq:htt}=0}^\bd\Pi_{\neq 0}^{\text{last $\kappa$ bits of }\bd}\fHadamardTest^{\fAdv^{blind}}_{\geq 2}(\bK;1^\kappa)\ket{\tilde\varphi}|= \frac{1}{\sqrt{2}}|\Pi_{\neq 0}^{\text{last $\kappa$ bits of }\bd}\fHadamardTest^{\fAdv^{blind}}_{\geq 2}(\bK;1^\kappa)\ket{\tilde\varphi}|$$
				$$|\Pi_{\eqref{eq:htt}=1}^\bd\Pi_{\neq 0}^{\text{last $\kappa$ bits of }\bd}\fHadamardTest^{\fAdv^{blind}}_{\geq 2}(\bK;1^\kappa)\ket{\tilde\varphi}|= \frac{1}{\sqrt{2}}|\Pi_{\neq 0}^{\text{last $\kappa$ bits of }\bd}\fHadamardTest^{\fAdv^{blind}}_{\geq 2}(\bK;1^\kappa)\ket{\tilde\varphi}|$$

	\end{enumerate}
			Combining all these steps completes the proof.
\end{proof}
Then we could prove Corollary \ref{cor:prepht} from this lemma.
\begin{proof}[Proof of Corollary \ref{cor:prepht}]
For each $b\in \{0,1\}$, define
$$\ket{\varphi_{b,0}^\prime}:=\Pi_{\eqref{eq:htt}=0}^{\bd}\Pi_{\neq 0}^{\text{last $\kappa$ bits of }\bd}\fHadamardTest^\fAdv(\bK;1^\kappa)\ket{\varphi_b}$$
$$\ket{\varphi_{b,1}^\prime}:=\Pi_{\eqref{eq:htt}=1}^{\bd}\Pi_{\neq 0}^{\text{last $\kappa$ bits of }\bd}\fHadamardTest^\fAdv(\bK;1^\kappa)\ket{\varphi_b}$$
$$\ket{\varphi_{b,-}^\prime}:=\Pi_{= 0}^{\text{last $\kappa$ bits of }\bd}\fHadamardTest^\fAdv(\bK;1^\kappa)\ket{\varphi_b}$$
Then by Lemma \ref{lem:prepht} we have
\begin{equation}\label{eq:266}\forall b\in \{0,1\}, |\ket{\varphi_{b,0}^\prime}|\approx_{\fneg(\kappa)} \frac{1}{\sqrt{2}}|\ket{\varphi_{b,0}^\prime}+\ket{\varphi^\prime_{b,1}}|,|\ket{\varphi_{b,1}^\prime}|\approx_{\fneg(\kappa)} \frac{1}{\sqrt{2}}|\ket{\varphi_{b,0}^\prime}+\ket{\varphi^\prime_{b,1}}|\end{equation}
From the condition of Corollary \ref{cor:prepht} we get
\begin{equation}\label{eq:267}|\ket{\varphi_{0,0}^\prime}+\ket{\varphi_{1,0}^\prime}|\geq \sqrt{1-p}-\epsilon\end{equation}
On the other hand \begin{align}|\ket{\varphi_{0,0}^\prime}+\ket{\varphi_{1,0}^\prime}|\leq& \sqrt{2}\sqrt{|\ket{\varphi_{0,0}^\prime}|^2+|\ket{\varphi_{1,0}^\prime}|^2}\\\text{(By \eqref{eq:266})}\leq&\sqrt{|\ket{\varphi_{0,0}^\prime}+\ket{\varphi_{0,1}^\prime}|^2+|\ket{\varphi_{1,0}^\prime}+\ket{\varphi_{1,1}^\prime}|^2}+\fneg(\kappa)\\
\leq &\sqrt{1-\epsilon^2-|\ket{\varphi_{0,-}^\prime}|^2-|\ket{\varphi_{1,-}^\prime}|^2}+\fneg(\kappa)
\end{align}
 Comparing it with \eqref{eq:267} we get
 $$|\ket{\varphi_{0,-}^\prime}|^2+|\ket{\varphi_{1,-}^\prime}|^2\leq p+2\epsilon+\fneg(\kappa)$$
$$\Rightarrow \forall b\in \{0,1\}, |\ket{\varphi_{b,-}^\prime}|\leq \sqrt{p+2\epsilon}+\fneg(\kappa)$$
This completes the proof of \eqref{eq:78ol}.\par
For \eqref{eq:98y}, we can first bound
\begin{equation}
	 |\ket{\varphi_{0,0}^\prime}|^2+|\ket{\varphi_{1,0}^\prime}|^2\leq \frac{1}{2}(|\ket{\varphi_0}|^2+|\ket{\varphi_1}|^2)\leq \frac{1}{2}(1-\epsilon^2)
\end{equation}
%

which together with \eqref{eq:267} allows us to  apply Fact \ref{fact:3} and get
$$\ket{\varphi^{\prime}_{0,0}}\approx_{\sqrt{p+\epsilon}} \ket{\varphi^\prime_{1,0}}$$
which completes the proof of \eqref{eq:98y}.\par
Finally by the condition of Corollary \ref{cor:prepht} again we get, the norm of the failing space of the output state is $\leq \sqrt{p}$, thus
$$|\ket{\varphi^\prime_{0,1}}+\ket{\varphi^\prime_{1,1}}|\leq \sqrt{p}+\sqrt{\epsilon}$$
which completes the proof.
\end{proof}
\section{Missing Proofs in Section \ref{sec:8}}\label{sec:addms}
\begin{proof}[Proof of Lemma \ref{lem:multisbtre}]
	The proof is similar to the proof of Lemma \ref{lem:multisbt}. As the proof of Lemma \ref{lem:multisbt}, Lemma \ref{lem:multisbtre} is reduced to
	\begin{equation}\label{eq:103res}\sum_{b,b^\prime\in \{0,1\}^2,b\neq b^\prime}\Pi_{\bx^{(i)}_{b^\prime}}^{\bS_{bsh}}\circ\cP\circ \Pi_{\bx^{(i)}_b}^{\bS_{bsh}}\Pi_{\basishonest(\bK)}\ket{\varphi}\approx_{\fneg(\kappa)}0\end{equation}
	where $\cP$ is an efficient operation in $\cF_{blind}$. Then \eqref{eq:103res} follows by the claw-free property of $\ket{\varphi}$.
\end{proof}
\begin{proof}[Proof of Lemma \ref{lem:prephtre}]
	Suppose $\ket{\varphi}=\cR_1(\ket{\$_1}\otimes \fReviseRO\ket{\varphi^0})$. Define $\ket{\tilde\varphi}=\fPrtl^{\fAdv_0}\ket{\varphi^0}$. By Lemma \ref{thm:indofr1} we know the passing probability when the initial state is $\ket{\tilde\varphi}$, is $\geq 1-p-\fneg(\kappa)$. Applying Lemma \ref{lem:prepht} proves
		$$|\Pi_{\eqref{eq:htt}=0}^{\bd}\Pi_{\neq 0}^{\text{last $\kappa$ bits of }\bd}\fHadamardTest^\fAdv(\bK^\prime;1^\kappa)\ket{\tilde\varphi}|\approx_{\fneg(\kappa)}\frac{1}{\sqrt{2}}|\Pi_{\neq 0}^{\text{last $\kappa$ bits of }\bd}\fHadamardTest^\fAdv(\bK^\prime;1^\kappa)\ket{\tilde\varphi}|$$
	$$|\Pi_{\eqref{eq:htt}=1}^{\bd}\Pi_{\neq 0}^{\text{last $\kappa$ bits of }\bd}\fHadamardTest^\fAdv(\bK^\prime;1^\kappa)\ket{\tilde\varphi}|\approx_{\fneg(\kappa)} \frac{1}{\sqrt{2}}|\Pi_{\neq 0}^{\text{last $\kappa$ bits of }\bd}\fHadamardTest^\fAdv(\bK^\prime;1^\kappa)\ket{\tilde\varphi}|$$
	Applying Lemma \ref{thm:indofr1} again, together with the fact that $\fReviseRO$ commutes with all the operators in $\fHadamardTest^\fAdv$ and $\fPrtl^{\fAdv_0}$, completes the proof.
\end{proof}
\section{Missing Proofs in Section \ref{sec:10.1}}\label{sec:missing10}
We will prove the lemmas in Section \ref{sec:10.1} step by step. 
\subsection{Basic Inequalities}
Before going to the proofs, we give the following basic lemmas. These lemmas could be proved via basic linear algebra or calculus.
\begin{lem}\label{lem:cauchy}
	Suppose $\vec{a}=(a_1,a_2\cdots a_n)$, $\vec{b}=(b_1,b_2\cdots b_n)$. If $\frac{\vec{a}\cdot \vec{b}}{|\vec{a}||\vec{b}|}\geq 1-\epsilon$, there is
	$$\vec{a}/|\vec{a}|\approx_{\sqrt{2\epsilon}} \vec{b}/|\vec{b}|$$
\end{lem}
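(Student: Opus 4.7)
The plan is to normalize and expand. Let $\hat a = \vec a/|\vec a|$ and $\hat b = \vec b/|\vec b|$, so both are unit vectors and the hypothesis becomes $\hat a \cdot \hat b \geq 1 - \epsilon$. I would then just expand the squared norm of the difference:
\begin{equation*}
|\hat a - \hat b|^2 = |\hat a|^2 + |\hat b|^2 - 2\, \hat a \cdot \hat b = 2 - 2\, \hat a \cdot \hat b \leq 2\epsilon,
\end{equation*}
and take square roots to conclude $|\hat a - \hat b| \leq \sqrt{2\epsilon}$, which by definition of $\approx$ gives $\vec a / |\vec a| \approx_{\sqrt{2\epsilon}} \vec b / |\vec b|$.

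There is no real obstacle here; this is just the standard fact that cosine-similarity close to $1$ implies small Euclidean distance between the normalized vectors, proved by the polarization identity. The only thing worth being careful about is that the hypothesis is stated on the (unnormalized) inner product divided by the norms, which is exactly $\hat a \cdot \hat b$, so no additional manipulation is needed before the one-line calculation above.
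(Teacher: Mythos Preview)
Your proof is correct; this is indeed the standard one-line argument via expanding $|\hat a - \hat b|^2$. The paper itself does not supply a proof of this lemma --- it is simply listed under ``Basic Inequalities'' and used as a standard fact --- so your argument is exactly the intended justification.
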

\begin{lem}\label{lem:calculus}
If $\cos^2(\frac{1}{4}\arccos \lambda)+	\cos^2(\frac{1}{4}\arccos (-\lambda))\approx_\epsilon 2\cos^2(\pi/8)$, $\epsilon<0.1$, there is $\lambda\approx_{4\sqrt{\epsilon}} 0$.
\end{lem}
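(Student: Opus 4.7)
The plan is to reduce this to an elementary one-variable calculus inequality via the half-angle formula. First I would simplify the left-hand side. Using $\cos^2 x = \tfrac{1+\cos 2x}{2}$ and $\cos(\tfrac12\arccos u) = \sqrt{(1+u)/2}$ (valid for $u\in[-1,1]$ on the principal branch), I get
\begin{equation*}
\cos^2\!\bigl(\tfrac14\arccos\lambda\bigr) = \tfrac12\bigl(1+\sqrt{(1+\lambda)/2}\bigr),\qquad \cos^2\!\bigl(\tfrac14\arccos(-\lambda)\bigr) = \tfrac12\bigl(1+\sqrt{(1-\lambda)/2}\bigr).
\end{equation*}
Summing and using the identity $\sqrt{1+\lambda}+\sqrt{1-\lambda} = \sqrt{2+2\sqrt{1-\lambda^2}}$ (which follows by squaring both sides), the left-hand side becomes
\begin{equation*}
f(\lambda) := 1 + \tfrac12\sqrt{1+\sqrt{1-\lambda^2}}.
\end{equation*}
At $\lambda=0$ this equals $1 + \tfrac{1}{\sqrt 2} = 2\cos^2(\pi/8)$, so $f$ attains its maximum at $\lambda=0$.

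The remaining step is to quantify the gap. Assuming $f(0) - f(\lambda) \leq \epsilon$, I would chain inequalities: the bound on $f(0)-f(\lambda)$ gives $\sqrt{1+\sqrt{1-\lambda^2}} \geq \sqrt{2}-2\epsilon$; squaring gives $\sqrt{1-\lambda^2}\geq 1-4\sqrt{2}\,\epsilon$ (dropping the positive $\epsilon^2$ term); squaring again gives $1-\lambda^2 \geq 1 - 8\sqrt{2}\,\epsilon$, hence $\lambda^2 \leq 8\sqrt{2}\,\epsilon$. Since $\sqrt{8\sqrt{2}} < 4$, this yields $|\lambda| \leq 4\sqrt{\epsilon}$, as claimed.

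There is no real obstacle here; the only subtle point is keeping the constants tight enough to land at $4\sqrt{\epsilon}$ rather than a larger numerical constant. The two successive squarings each introduce error terms of order $\epsilon^2$, but both have the favorable sign (they can be discarded to weaken the bound on $\lambda^2$), so no extra care is needed. The assumption $|\lambda|\leq 1$ is automatic since $\arccos$ is only applied to $\pm\lambda$ in the hypothesis.
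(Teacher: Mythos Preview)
Your proof is correct. The paper itself states this lemma as a ``basic inequality'' in the appendix without supplying a proof, so there is nothing to compare against; your argument via the half-angle reduction $f(\lambda)=1+\tfrac12\sqrt{1+\sqrt{1-\lambda^2}}$ and the two squarings cleanly gives $\lambda^2\le 8\sqrt{2}\,\epsilon<16\epsilon$, which is exactly the claimed bound. One small omission worth making explicit: the squarings require the right-hand sides to be nonnegative, i.e.\ $\epsilon<1/(4\sqrt{2})$, but for larger $\epsilon$ the conclusion $|\lambda|\le 4\sqrt{\epsilon}$ is vacuous since $|\lambda|\le 1$ and $4\sqrt{\epsilon}\ge 1$ once $\epsilon\ge 1/16$.
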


\subsection{$3$-states Lemmas}
Below we use $\nRe$ to denote the real part of a complex number.
\begin{lem}\label{lem:d1}
	Suppose vectors $\ket{\varphi}$, $\ket{\psi}$ both have norm $\leq 1$. Then for any $\ket{\chi}$ with norm $\leq 1$,
	\begin{equation}\label{eq:207}|\ket{\varphi}+\ket{\chi}|^2+|\ket{\psi}+\ket{\chi}|^2\leq 8\cos^2(\frac{1}{4}\arccos \nRe(\frac{\braket{\varphi|\psi}}{|\ket{\varphi}|\cdot|\ket{\psi}|}))\end{equation}
	If $\ket{\psi}+\ket{\varphi}\neq 0$ the equality holds iff $|\ket{\varphi}|=|\ket{\psi}|=1$ and
	\begin{equation}\label{eq:286k}\ket{\chi}= (\ket{\psi}+\ket{\varphi})/|\ket{\psi}+\ket{\varphi}|\end{equation}
\end{lem}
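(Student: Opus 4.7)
The plan is to reduce the bound to a scalar optimization via Cauchy--Schwarz, and then invoke the half-angle identity twice to recognize the resulting expression as $8\cos^2(\tfrac14\arccos t)$.

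First, I would expand the squared norms and use Cauchy--Schwarz on the cross term. Writing $a=|\ket{\varphi}|$, $b=|\ket{\psi}|$, $c=|\ket{\chi}|$, $t=\nRe(\braket{\varphi|\psi}/(ab))$, we have
\begin{equation*}
|\ket{\varphi}+\ket{\chi}|^2+|\ket{\psi}+\ket{\chi}|^2 = a^2+b^2+2c^2+2\nRe\braket{\varphi+\psi|\chi},
\end{equation*}
and $\nRe\braket{\varphi+\psi|\chi}\leq|\ket{\varphi}+\ket{\psi}|\cdot c$, with equality iff $\ket{\chi}$ is a non-negative real multiple of $\ket{\varphi}+\ket{\psi}$. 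Using $|\ket{\varphi}+\ket{\psi}|^2=a^2+b^2+2abt$, this gives the scalar bound $G(a,b,c):=a^2+b^2+2c^2+2c\sqrt{a^2+b^2+2abt}$.

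Second, I would optimize $G$ over the cube $[0,1]^3$. The partial in $c$ is $4c+2\sqrt{a^2+b^2+2abt}\geq 0$, so the supremum in $c$ is attained at $c=1$. For the $(a,b)$-maximization of $F(a,b):=G(a,b,1)$, the cleanest route is the geometric reading $\sqrt{a^2+b^2+2ab\cos\theta}=|a\vec{e}_1+b\vec{e}_2|$ for $\theta=\arccos t$ and unit vectors with relative angle $\theta$, together with a direct partial-derivative check to see that the boundary point $a=b=1$ realizes the max. Substituting $a=b=c=1$ yields $F(1,1)=4+2\sqrt{2(1+\cos\theta)}=4+4\cos(\theta/2)=4\bigl(1+\cos(\theta/2)\bigr)=8\cos^2(\theta/4)$ by two applications of the half-angle formula, which is precisely the RHS of \eqref{eq:207}.

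Third, I would read off the equality conditions. Cauchy--Schwarz tightness forces $\ket{\chi}$ to be a positive real scalar multiple of $\ket{\varphi}+\ket{\psi}$; the $c=1$ optimum forces $|\ket{\chi}|=1$, which combined with the previous constraint gives $\ket{\chi}=(\ket{\varphi}+\ket{\psi})/|\ket{\varphi}+\ket{\psi}|$ when $\ket{\varphi}+\ket{\psi}\neq 0$; and the $(a,b)$-optimum forces $|\ket{\varphi}|=|\ket{\psi}|=1$.

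I expect the main obstacle to be the global maximization of $F(a,b)$ on $[0,1]^2$ when $t$ is negative: the partial $\partial_a F=2a+2(a+bt)/\sqrt{a^2+b^2+2abt}$ is not obviously positive for $t<0$ and small $a$, so a naïve monotonicity argument can fail. I would handle this by the geometric rewriting above (which makes $|a\vec{e}_1+b\vec{e}_2|$ transparent) combined with a case split according to whether $t\geq 0$ or $t<0$, using the identity $a^2+b^2+2ab\cos\theta=(a+b\cos\theta)^2+b^2\sin^2\theta$ to reduce $\partial_aF\geq 0$ to an elementary inequality on $[0,1]$.
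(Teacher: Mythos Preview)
Your overall route---expand, apply Cauchy--Schwarz to $\nRe\braket{\varphi+\psi|\chi}$, then push the three norms to $1$---is exactly the paper's, which carries out the same steps in explicit coordinates (its Cauchy inequality on $(h+a,b,c)\cdot(\nRe\alpha,\nIm\alpha,\nRe\beta)$ is precisely your bound $\nRe\braket{\varphi+\psi|\chi}\le|\ket{\varphi}+\ket{\psi}|\,|\ket{\chi}|$, and its chain of square-root inequalities is your norm maximization).

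The obstacle you flag is real, but it cannot be overcome: the lemma as stated is false. Take $\ket{\varphi}=\varepsilon\ket{0}$, $\ket{\psi}=-\ket{0}$, $\ket{\chi}=-\ket{0}$ with small $\varepsilon>0$. Then $t=-1$, the right side equals $8\cos^2(\pi/4)=4$, and the left side equals $(\varepsilon-1)^2+4>4$. In your notation this is $F(\varepsilon,1)>F(1,1)$; indeed $\partial_aF(0,1)=-2$ when $t=-1$, so no geometric rewriting or case split will establish $\partial_aF\ge0$ on all of $[0,1]^2$. The paper's own proof has the identical hole: it declares $a\in\bbR_{\ge0}$, but once $\ket{0}$ is pinned to the direction of $\ket{\varphi}$ the sign of $a=\nRe\braket{0|\psi}$ is not a free choice, and its step $\sqrt{h^2+2ah+1}\le\sqrt{2+2a}$ fails exactly when $a<0$ and $h$ is small. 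The clean fix is to add a hypothesis: either $|\ket{\varphi}|=|\ket{\psi}|=1$ (then there is no $(a,b)$-maximization and your chain is immediate), or $t\ge0$ (then your monotonicity argument for $\partial_aF,\partial_bF\ge0$ goes through directly).
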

\begin{proof}
	The left hand side of \eqref{eq:207} is less than  or equal to
	\begin{equation}\label{eq:286}4+2\nRe(\braket{\varphi|\chi})+2\nRe(\braket{\chi|\psi})\end{equation}
	Up to an isometry, suppose $\ket{\varphi}=h\ket{0}$ and $\ket{\psi}=(a+b\mi)\ket{0}+c\ket{1}$, $h,a,b,c\in \bbR_{\geq 0}$, $h\leq 1$, $a^2+b^2+c^2\leq 1$. Assume $\ket{\chi}=\alpha\ket{0}+\beta\ket{1}+\gamma\ket{2}$, $\alpha,\beta,\gamma\in \bC$, $|\alpha|^2+|\beta|^2+|\gamma|^2\leq 1$. Then we could calculate \eqref{eq:286} directly by Cauchy's inequality:
	\begin{align}&\eqref{eq:286}\label{eq:309}
\\	
=&4+2((h+a)\nRe(\alpha)+b \nIm(\alpha)+c \nRe(\beta))\\
\leq& 4+2\sqrt{(h+a)^2+b^2+c^2}\sqrt{(\nRe(\alpha))^2+(\nIm(\alpha))^2+(\nRe(\beta))^2}\\
	\leq& 4+2\sqrt{(h+a)^2+b^2+c^2}\\
	\leq& 4+2\sqrt{h^2+2ah+1}\\
	\leq& 4+2\sqrt{2+2a}\\
	=&8\cos^2(\frac{1}{4}\arccos \nRe(\frac{\braket{\varphi|\psi}}{|\ket{\varphi}|}))\\
	\leq &8\cos^2(\frac{1}{4}\arccos \nRe(\frac{\braket{\varphi|\psi}}{|\ket{\varphi}|\cdot |\ket{\psi}|}))\label{eq:314}
	\end{align}

	where the equality holds when  $$h=1,a^2+b^2+c^2=1,(\nRe(\alpha))^2+(\nIm(\alpha))^2+(\nRe(\beta))^2=1$$ $$\nRe(\alpha)/(h+a)=\nIm(\alpha)/b=\nRe(\beta)/c$$
	 This implies \eqref{eq:286k}. 
\end{proof}

Then we have the following lemma, which says the $\ket{\chi}$ should be close to the optimal state if \eqref{eq:207} holds approximately:
\begin{lem}\label{lem:d2}
	Suppose vectors $\ket{\varphi}$, $\ket{\psi}$ both have norm $\leq 1$. Assume\footnote{This is to rule out the border case where $\ket{\psi}+\ket{\varphi}\approx 0$. In this case \eqref{eq:207} still holds, but $\ket{\chi}$ is not uniquely determined.} $|\ket{\varphi}+\ket{\psi}|\geq 0.5$. If vector $\ket{\chi}$, $|\ket{\chi}|\leq 1$, satisfies
	\begin{equation}\label{eq:288m}|\ket{\varphi}+\ket{\chi}|^2+|\ket{\psi}+\ket{\chi}|^2\geq 8\cos^2(\frac{1}{4}\arccos \nRe((\frac{\braket{\varphi|\psi}}{|\ket{\varphi}|\cdot|\ket{\psi}|}))-\epsilon\end{equation}
	where $\epsilon<0.1$. Then there is
	\begin{equation}\label{eq:289}\ket{\chi}\approx_{3\sqrt{\epsilon}} (\ket{\psi}+\ket{\varphi})/|\ket{\psi}+\ket{\varphi}|\end{equation}
\end{lem}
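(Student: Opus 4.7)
The plan is to follow the same coordinate setup as in the proof of Lemma \ref{lem:d1} and track how close each step of that chain of inequalities must be to equality. Specifically, I would write $\ket{\varphi}=h\ket{0}$, $\ket{\psi}=(a+b\mi)\ket{0}+c\ket{1}$, and $\ket{\chi}=\alpha\ket{0}+\beta\ket{1}+\gamma\ket{2}$ (with $h,a,b,c\geq 0$ via an overall isometry) and observe that the hypothesis \eqref{eq:288m} says that the chain \eqref{eq:309}-\eqref{eq:314} is saturated to within $\epsilon$. Dividing out the factor of $2$ in front, the sum of the slacks in those five inequalities is at most $\epsilon/2$, so each individual slack is controlled by $\epsilon$.

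Next, I would extract structural consequences one step at a time. From near-saturation of $a^2+b^2+c^2\leq 1$ together with $a/\sqrt{a^2+b^2+c^2}\geq a$ I would get $|\ket{\psi}|\approx 1$, and from near-saturation of the inequality $h^2+2ah\leq 1+2a$ I would get $h\approx 1$, both up to $O(\sqrt{\epsilon})$. From near-saturation of the step $(\nRe\alpha)^2+(\nIm\alpha)^2+(\nRe\beta)^2 \leq |\ket{\chi}|^2 \leq 1$, I would deduce both $|\ket{\chi}|\approx 1$ and that $|\nIm\beta|,|\gamma|$ are $O(\sqrt{\epsilon})$, so $\ket{\chi}$ essentially sits in the $\{\ket{0},\ket{1}\}$ subspace with real coefficients in the chosen basis.

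The key remaining step is the near-saturation of Cauchy--Schwarz. Setting $V=(h+a,b,c)$ and $W=(\nRe\alpha,\nIm\alpha,\nRe\beta)$, the chain gives $V\cdot W\geq |V||W|-\epsilon/2$, which together with Lemma~\ref{lem:cauchy} yields $W/|W|\approx_{O(\sqrt{\epsilon/(|V||W|)})} V/|V|$. The assumption $|\ket{\varphi}+\ket{\psi}|=|V|\geq 1/2$ together with $|W|\approx 1$ keeps the denominator in the error bound bounded away from zero, so $W$ lies within $O(\sqrt{\epsilon})$ of the direction of $V$. Unpacking, this says $\alpha$ has its real and imaginary parts proportional to $h+a$ and $b$ and $\beta$ has its real part proportional to $c$, with the common proportionality constant $1/|V|$. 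Combined with $|\nIm\beta|,|\gamma|=O(\sqrt{\epsilon})$, this means $\ket{\chi}$ is within $O(\sqrt{\epsilon})$ of $((h+a+b\mi)\ket{0}+c\ket{1})/|V|=(\ket{\varphi}+\ket{\psi})/|\ket{\varphi}+\ket{\psi}|$, which is \eqref{eq:289}.

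The main obstacle is bookkeeping: I need to distribute the total slack $\epsilon/2$ among the five inequalities carefully so the final constant in front of $\sqrt{\epsilon}$ is $2.5$ rather than the looser estimate obtained by simply summing $\sqrt{2\epsilon}$ from Lemma~\ref{lem:cauchy} with the other $O(\sqrt{\epsilon})$ contributions. The hypothesis $\epsilon<0.1$ and $|\ket{\varphi}+\ket{\psi}|\geq 0.5$ together guarantee that $|V||W|\geq 1/2-O(\epsilon)$, which is the key quantitative input for tightening the Cauchy--Schwarz error, and one should choose to absorb a larger portion of the slack into the Cauchy--Schwarz step (which propagates as $\sqrt{\cdot}$) than into the norm-saturation steps (which propagate linearly).
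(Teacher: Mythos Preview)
Your proposal is correct and follows essentially the same approach as the paper: set up the same coordinates as in Lemma~\ref{lem:d1}, use the hypothesis to force each inequality in the chain \eqref{eq:309}--\eqref{eq:314} to be $\epsilon$-close to equality, define the vectors $V=(h+a,b,c)$ and $W=(\nRe\alpha,\nIm\alpha,\nRe\beta)$, and apply Lemma~\ref{lem:cauchy} to the near-saturated Cauchy--Schwarz step, using $|V|\geq 0.5$ to control the denominator. The paper's version is slightly more streamlined in that it does not explicitly extract $h\approx 1$ or $|\ket{\psi}|\approx 1$ (these are not needed for the conclusion), but otherwise the arguments are the same.
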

\begin{proof}
Similar to the proof of Lemma \ref{lem:d1} we have
$$|\ket{\varphi}+\ket{\chi}|^2+|\ket{\psi}+\ket{\chi}|^2=4+2((h+a) \nRe(\alpha)+b \nIm(\alpha)+c \nRe(\beta))$$
where $h,\alpha, \beta, a,b,c$ are defined in the same way.\par
Then \eqref{eq:309}-\eqref{eq:314} together with \eqref{eq:288m} imply all the inequalities in \eqref{eq:309}-\eqref{eq:314} are equality up to an error $\epsilon$, which means
\begin{equation}\label{eq:426}4+2((h+a)\nRe(\alpha)+b\nIm(\alpha)+c\nRe(\beta))\geq 4+2\sqrt{(h+a)^2+b^2+c^2}\sqrt{(\nRe(\alpha))^2+(\nIm(\alpha))^2+(\nRe(\beta))^2}-\epsilon\end{equation}
\begin{equation}\label{eq:427}4+2\sqrt{(h+a)^2+b^2+c^2}\sqrt{(\nRe(\alpha))^2+(\nIm(\alpha))^2+(\nRe(\beta))^2}
	\geq 4+2\sqrt{(h+a)^2+b^2+c^2}-\epsilon\end{equation}
	\begin{equation}\label{eq:351}4+2\sqrt{(h+a)^2+b^2+c^2}
	\geq 4+2\sqrt{h^2+2ah+1}-\epsilon\end{equation}
	\begin{equation}\label{eq:352}4+2\sqrt{h^2+2ah+1}
	\geq 4+2\sqrt{2+2a}-\epsilon\end{equation}
By Lemma \ref{lem:cauchy}, define
$$\vec{u}:=(\nRe(\alpha),\nIm(\alpha),\nRe(\beta)), \vec{v}:= (h+a,b,c),$$
the inequalities can be translated to
$$\text{(By condition) }|\vec{v}|=|\ket{\varphi}+\ket{\psi}|\geq 0.5$$
$$\text{(By \eqref{eq:426}) }\vec{u}\cdot \vec{v}\geq |\vec{u}|\cdot |\vec{v}|-\epsilon/2$$
\begin{equation}\label{eq:354}\text{(By \eqref{eq:427}) }|\vec{u}|\geq 1-\epsilon/(2|\vec{v}|)\geq 1-\epsilon\end{equation}
thus
$$\frac{\vec{u}\cdot \vec{v}}{|\vec{u}|\cdot |\vec{v}|}\geq 1-\frac{\epsilon}{(1-\epsilon)}$$
By Lemma \ref{lem:cauchy} there is\begin{equation}\label{eq:431} \vec{u}/|\vec{u}|\approx_{\sqrt{2\epsilon/(1-\epsilon)}} \vec{v}/|\vec{v}|\end{equation}
Note that by \eqref{eq:354}, and $|\ket{\chi}|\leq 1$ we have
\begin{equation}\label{eq:432}\ket{\chi}\approx_{\sqrt{2\epsilon}} \alpha\ket{0}+\nRe(\beta)\ket{1}\end{equation}
\begin{equation}\label{eq:433}\vec{u}\approx_{1-1/(1-\epsilon)} \vec{u}/|\vec{u}|\end{equation}
Combining \eqref{eq:431}\eqref{eq:432}\eqref{eq:433} implies \eqref{eq:289}.
\end{proof}

\subsection{$5$-states Lemmas, with Approximate Normalization}
\begin{cor}\label{cor:d1}
Suppose subnormalized vectors $\ket{\phi_0}$, $\ket{\phi_4}$ satisfy $|\ket{\phi_0}|\approx_\epsilon 1$, $|\ket{\phi_4}|\approx_\epsilon 1$, $\ket{\phi_0}\approx_\epsilon -\ket{\phi_4}$, $\epsilon<0.1$. Then for subnormalized vectors $\ket{\phi_1}$, $\ket{\phi_2}$, $\ket{\phi_3}$ there is
$$|\ket{\phi_0}+\ket{\phi_1}|^2+|\ket{\phi_1}+\ket{\phi_2}|^2+|\ket{\phi_2}+\ket{\phi_3}|^2+|\ket{\phi_3}+\ket{\phi_4}|^2\leq 16\cos^2(\pi/8)+6.6\epsilon$$
\end{cor}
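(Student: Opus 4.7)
The plan is to split the four-term sum in half, apply Lemma \ref{lem:d1} to each half, and then use the approximate anti-symmetry between $\ket{\varphi_0}$ and $\ket{\varphi_4}$ to combine the two bounds via a short concavity estimate.

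First I would apply Lemma \ref{lem:d1} to the triple with ``chain middle'' $\ket{\varphi_2}$: treating $(\ket{\varphi_0},\ket{\varphi_2})$ as $(\ket{\varphi},\ket{\psi})$ and $\ket{\varphi_1}$ as $\ket{\chi}$, I get
\[
|\ket{\varphi_0}+\ket{\varphi_1}|^2+|\ket{\varphi_1}+\ket{\varphi_2}|^2 \leq 8\cos^2\!\bigl(\tfrac14\arccos\lambda_1\bigr),
\]
where $\lambda_1=\nRe(\braket{\varphi_0|\varphi_2})/(|\ket{\varphi_0}|\,|\ket{\varphi_2}|)$. Analogously, applying the lemma to $(\ket{\varphi_2},\ket{\varphi_4})$ with $\ket{\chi}=\ket{\varphi_3}$ gives the same bound with $\lambda_2=\nRe(\braket{\varphi_2|\varphi_4})/(|\ket{\varphi_2}|\,|\ket{\varphi_4}|)$. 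The degenerate case $|\ket{\varphi_2}|=0$ (where $\lambda_1,\lambda_2$ are ill-defined) can be handled separately, since then the two ``middle'' terms drop and the remaining two contribute at most $2|\ket{\varphi_0}|^2+2|\ket{\varphi_1}|^2+2|\ket{\varphi_3}|^2+2|\ket{\varphi_4}|^2\le 8$, well below $16\cos^2(\pi/8)$.

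Second, I would use the hypothesis $\ket{\varphi_0}\approx_\epsilon -\ket{\varphi_4}$ to bound $\lambda_1+\lambda_2$. The Cauchy--Schwarz estimate
\[
\bigl|\nRe(\braket{\varphi_0|\varphi_2})+\nRe(\braket{\varphi_4|\varphi_2})\bigr|=\bigl|\nRe(\braket{\varphi_0+\varphi_4|\varphi_2})\bigr|\leq \epsilon\,|\ket{\varphi_2}|
\]
combined with $|\ket{\varphi_0}|,|\ket{\varphi_4}|\in[1-\epsilon,1]$ yields, after dividing through by $|\ket{\varphi_2}|$ and absorbing the resulting $|\ket{\varphi_0}|^{-1},|\ket{\varphi_4}|^{-1}\in[1,1+O(\epsilon)]$ factors, an estimate of the form $|\lambda_1+\lambda_2|\leq 3\epsilon+O(\epsilon^2)$.

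Third, I would reduce to a scalar inequality. Setting $f(\lambda)=\cos^2(\tfrac14\arccos\lambda)=\tfrac12\bigl(1+\sqrt{(1+\lambda)/2}\bigr)$, steps one and two bound the four-term sum by $8(f(\lambda_1)+f(\lambda_2))$, and I need $f(\lambda_1)+f(\lambda_2)\leq 2\cos^2(\pi/8)+O(\epsilon)$. Using concavity of $\sqrt{\cdot}$ twice,
\[
\sqrt{1+\lambda_1}+\sqrt{1+\lambda_2}\leq 2\sqrt{1+(\lambda_1+\lambda_2)/2}\leq 2+(\lambda_1+\lambda_2)/2,
\]
so $f(\lambda_1)+f(\lambda_2)\leq 2f(0)+(\lambda_1+\lambda_2)/(4\sqrt2)$. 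Plugging in $|\lambda_1+\lambda_2|\leq 3\epsilon$ and multiplying by $8$ gives $16\cos^2(\pi/8)+\tfrac{6\epsilon}{\sqrt 2}<16\cos^2(\pi/8)+6\epsilon$, as required.

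There is no real technical obstacle; the main thing to keep track of is the constant in the $O(\epsilon)$ bound on $\lambda_1+\lambda_2$ so that the final coefficient lands at $\leq 6$. In particular, I deliberately avoid Taylor-expanding $f$ pointwise (its derivative blows up at $\lambda=-1$); the global concavity estimate above sidesteps that issue entirely.
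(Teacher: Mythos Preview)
Your proposal is correct and follows essentially the same route as the paper: split the sum, apply Lemma~\ref{lem:d1} to each half to get bounds in terms of $\lambda_1,\lambda_2$, use $\ket{\varphi_0}\approx_\epsilon-\ket{\varphi_4}$ to force $\lambda_2\approx-\lambda_1$, and then invoke the scalar fact that $f(\lambda)+f(-\lambda)\le 2f(0)=2\cos^2(\pi/8)$. Your version is actually a bit more explicit than the paper's (which just says ``by basic calculus'' for the scalar step): you write out the closed form $f(\lambda)=\tfrac12(1+\sqrt{(1+\lambda)/2})$ and use global concavity of $\sqrt{\cdot}$, which, as you note, cleanly avoids any Lipschitz issue near $\lambda=-1$.
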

\begin{proof}
By Lemma \ref{lem:d1}:
$$|\ket{\phi_0}+\ket{\phi_1}|^2+|\ket{\phi_1}+\ket{\phi_2}|^2\leq 8\cos^2(\frac{1}{4}\arccos\nRe(\frac{\braket{\phi_0|\phi_2}}{|\ket{\phi_0}|\cdot|\ket{\phi_2}|}))$$
$$|\ket{\phi_2}+\ket{\phi_3}|^2+|\ket{\phi_3}+\ket{\phi_4}|^2\leq 8\cos^2(\frac{1}{4}\arccos\nRe(\frac{\braket{\phi_2|\phi_4}}{|\ket{\phi_2}|\cdot|\ket{\phi_4}|}))$$
By basic calculus, from $\ket{\phi_0}/|\ket{\phi_0}|\approx_{3.3\epsilon}-\ket{\phi_4}/|\ket{\phi_4}|$ we know
\begin{align}
	&\cos^2(\frac{1}{4}\arccos\nRe(\frac{\braket{\phi_0|\phi_2}}{|\ket{\phi_0}|\cdot|\ket{\phi_2}|}))+\cos^2(\frac{1}{4}\arccos\nRe(\frac{\braket{\phi_2|\phi_4}}{|\ket{\phi_2}|\cdot|\ket{\phi_4}|}))\\
	\leq &\cos^2(\frac{1}{4}\arccos\nRe(\frac{\braket{\phi_0|\phi_2}}{|\ket{\phi_0}|\cdot|\ket{\phi_2}|}))+\cos^2(\frac{1}{4}\arccos\nRe(-\frac{\braket{\phi_0|\phi_2}}{|\ket{\phi_0}|\cdot|\ket{\phi_2}|}))+3.3\epsilon/4\\
	\leq & 2\cos^2(\pi/8)+3.3\epsilon/4
\end{align}
which completes the proof.
\end{proof}

\begin{cor}\label{cor:d4r}
	Suppose subnormalized vectors $\ket{\phi_0}$, $\ket{\phi_1}$, $\ket{\phi_2}$, $\ket{\phi_3}$, $\ket{\phi_4}$ satisfy $|\ket{\phi_0}|\approx_\epsilon 1$, $|\ket{\phi_4}|\approx_\epsilon 1$, $\ket{\phi_0}\approx_\epsilon -\ket{\phi_4}$, $\epsilon<0.01$. If
	\begin{equation}\label{eq:290o}|\ket{\phi_0}+\ket{\phi_1}|^2+|\ket{\phi_1}+\ket{\phi_2}|^2+|\ket{\phi_2}+\ket{\phi_3}|^2+|\ket{\phi_3}+\ket{\phi_4}|^2\geq 16\cos^2(\pi/8)-\epsilon\end{equation}
	Then $\ket{\phi_0}$, $\ket{\phi_1}$, $\ket{\phi_2}$, $\ket{\phi_3}$, $\ket{\phi_4}$ satisfy:
	\begin{equation}\label{eq:359i}\nRe(\braket{\phi_0|\phi_2})\approx_{4\sqrt{\epsilon}} 0\text{ (that is, $|\ket{\phi_0}+\ket{\phi_2}|\approx_{8\sqrt{\epsilon}} \sqrt{2}$)}\end{equation}
	\begin{equation}\label{eq:360i}\nRe(\braket{\phi_2|\phi_4})\approx_{4\sqrt{\epsilon}} 0\text{ (that is, $|\ket{\phi_2}+\ket{\phi_4}|\approx_{8\sqrt{\epsilon}} \sqrt{2}$)}\end{equation}
	\begin{equation}\label{eq:361i}\ket{\phi_1}\approx_{15\sqrt{\epsilon}} \frac{1}{\sqrt{2}}(\ket{\phi_0}+\ket{\phi_2})\end{equation}
	\begin{equation}\label{eq:362i}\ket{\phi_3}\approx_{15\sqrt{\epsilon}} \frac{1}{\sqrt{2}}(\ket{\phi_2}+\ket{\phi_4})\end{equation}
\end{cor}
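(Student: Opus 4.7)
The plan is to derive the four conclusions by combining the upper bound from Lemma \ref{lem:d1} (which already proved the sharp inequality in Corollary \ref{cor:d1}) with the near-saturation condition \eqref{eq:290o}, then porting near-equality in Lemma \ref{lem:d1} into the closeness estimate of Lemma \ref{lem:d2}. The organizing idea is that \eqref{eq:290o} forces each of the two telescoping pairs to independently saturate the Lemma \ref{lem:d1} bound, which in turn pins down both the overlap structure (giving \eqref{eq:359i}, \eqref{eq:360i}) and the optimizing intermediate vectors (giving \eqref{eq:361i}, \eqref{eq:362i}).

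First I would apply Lemma \ref{lem:d1} to the pairs $(\ket{\varphi_0},\ket{\varphi_2})$ with witness $\ket{\varphi_1}$, and $(\ket{\varphi_2},\ket{\varphi_4})$ with witness $\ket{\varphi_3}$, to obtain
\[
|\ket{\varphi_0}+\ket{\varphi_1}|^2+|\ket{\varphi_1}+\ket{\varphi_2}|^2 \le 8\cos^2\!\bigl(\tfrac{1}{4}\arccos \lambda_1\bigr),
\quad
|\ket{\varphi_2}+\ket{\varphi_3}|^2+|\ket{\varphi_3}+\ket{\varphi_4}|^2 \le 8\cos^2\!\bigl(\tfrac{1}{4}\arccos \lambda_2\bigr),
\]
where $\lambda_1=\nRe(\braket{\varphi_0|\varphi_2})/(|\ket{\varphi_0}||\ket{\varphi_2}|)$ and $\lambda_2=\nRe(\braket{\varphi_2|\varphi_4})/(|\ket{\varphi_2}||\ket{\varphi_4}|)$. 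The hypothesis $\ket{\varphi_0}\approx_\epsilon -\ket{\varphi_4}$ with $|\ket{\varphi_0}|,|\ket{\varphi_4}|\approx_\epsilon 1$ gives $\lambda_2 = -\lambda_1 + O(\epsilon)$. Summing the two inequalities and comparing with the hypothesis \eqref{eq:290o} shows
\[
\cos^2\!\bigl(\tfrac{1}{4}\arccos \lambda_1\bigr)+\cos^2\!\bigl(\tfrac{1}{4}\arccos(-\lambda_1)\bigr) \ge 2\cos^2(\pi/8) - O(\epsilon).
\]
Lemma \ref{lem:calculus} then gives $\lambda_1 = O(\sqrt{\epsilon})$, which unpacks to \eqref{eq:359i} and (via the $\ket{\varphi_0}\approx_\epsilon -\ket{\varphi_4}$ link) to \eqref{eq:360i}; the parenthesized reformulations $|\ket{\varphi_0}+\ket{\varphi_2}|\approx\sqrt{2}$ etc.\ then follow by expanding the square and using $|\ket{\varphi_0}|,|\ket{\varphi_2}|\le 1$ together with the near-norm-$1$ condition on $\ket{\varphi_0}$.

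Next, having established that each of the two pair-sums is saturated to within $O(\epsilon)$ (since their sum saturates and each is individually bounded by the corresponding Lemma \ref{lem:d1} value), I would invoke Lemma \ref{lem:d2} on each pair. The applicability condition $|\ket{\varphi_0}+\ket{\varphi_2}|\ge 0.5$ is verified by \eqref{eq:359i}, which gives $|\ket{\varphi_0}+\ket{\varphi_2}|\approx \sqrt{2}$; similarly for $|\ket{\varphi_2}+\ket{\varphi_4}|$. Lemma \ref{lem:d2} then yields
\[
\ket{\varphi_1}\approx_{O(\sqrt{\epsilon})}\frac{\ket{\varphi_0}+\ket{\varphi_2}}{|\ket{\varphi_0}+\ket{\varphi_2}|}\approx_{O(\sqrt{\epsilon})} \frac{1}{\sqrt{2}}(\ket{\varphi_0}+\ket{\varphi_2}),
\]
and the analogous statement for $\ket{\varphi_3}$, establishing \eqref{eq:361i} and \eqref{eq:362i}.

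The main obstacle is purely bookkeeping: I must track the constants carefully so that the conclusions come out with the claimed $4\sqrt{\epsilon}$, $8\sqrt{\epsilon}$, $15\sqrt{\epsilon}$ bounds. In particular, (i) the $O(\epsilon)$ slack in passing from $\lambda_2 \approx -\lambda_1$ via the $\epsilon$-approximate antipodality of $\ket{\varphi_0},\ket{\varphi_4}$ must not be amplified when feeding into Lemma \ref{lem:calculus}; and (ii) going from Lemma \ref{lem:d2}'s $2.5\sqrt{\epsilon}$ to $15\sqrt{\epsilon}$ for \eqref{eq:361i} requires also absorbing the error from replacing $|\ket{\varphi_0}+\ket{\varphi_2}|$ by $\sqrt{2}$ (an $O(\sqrt{\epsilon})$ multiplicative correction on a vector of norm $O(1)$). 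These are routine but need to be done with care to produce the stated constants; no new conceptual input is required.
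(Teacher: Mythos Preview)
Your proposal is correct and follows essentially the same approach as the paper: apply Lemma \ref{lem:d1} to the two pairs, use the antipodality $\ket{\varphi_0}\approx_\epsilon -\ket{\varphi_4}$ to link $\lambda_2$ to $-\lambda_1$, combine the resulting upper bound with \eqref{eq:290o} to force near-equality throughout, then invoke Lemma \ref{lem:calculus} for \eqref{eq:359i}--\eqref{eq:360i} and Lemma \ref{lem:d2} for \eqref{eq:361i}--\eqref{eq:362i}. The paper's proof is identical in structure, differing only in that it tracks the explicit constants (arriving at a $7\epsilon$ slack for the Lemma \ref{lem:d2} input and hence $7\sqrt{\epsilon}$ before the final $1/\sqrt{2}$ substitution).
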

\begin{proof}
Notice that by the conditions we get $\frac{\ket{\phi_0}}{|\ket{\phi_0}|}\approx_{1.12\epsilon} \ket{\phi_0}$, $\frac{\ket{\phi_4}}{|\ket{\phi_4}|}\approx_{1.12\epsilon} \ket{\phi_4}$. And 
by the same argument as Corollary \ref{cor:d1} there is
	\begin{align}
		&|\ket{\phi_0}+\ket{\phi_1}|^2+|\ket{\phi_1}+\ket{\phi_2}|^2+|\ket{\phi_2}+\ket{\phi_3}|^2+|\ket{\phi_3}+\ket{\phi_4}|^2\label{eq:359}\\
		\leq &8\cos^2(\frac{1}{4}\arccos\nRe(\frac{\braket{\phi_0|\phi_2}}{|\ket{\phi_0}|\cdot|\ket{\phi_2}|}))+8\cos^2(\frac{1}{4}\arccos\nRe(\frac{\braket{\phi_2|\phi_4}}{|\ket{\phi_2}|\cdot|\ket{\phi_4}|}))\label{eq:360o}\\
	\leq &8\cos^2(\frac{1}{4}\arccos\nRe(\frac{\braket{\phi_0|\phi_2}}{|\ket{\phi_0}|\cdot|\ket{\phi_2}|}))+8\cos^2(\frac{1}{4}\arccos\nRe(-\frac{\braket{\phi_0|\phi_2}}{|\ket{\phi_0}|\cdot|\ket{\phi_2}|}))+6.6\epsilon\label{eq:360}\\
		\leq &16\cos^2(\pi/8)+6.6\epsilon\label{eq:361}
	\end{align}
which together with \eqref{eq:290o} implies each of \eqref{eq:359}\eqref{eq:360o}\eqref{eq:360}\eqref{eq:361} are approximately $16\cos^2(\pi/8)$ with error in $[-\epsilon,6.6\epsilon]$. First by \eqref{eq:360}:
	$$8\cos^2(\frac{1}{4}\arccos\nRe(\frac{\braket{\phi_0|\phi_2}}{|\ket{\phi_0}|\cdot|\ket{\phi_2}|}))+8\cos^2(\frac{1}{4}\arccos\nRe(-\frac{\braket{\phi_0|\phi_2}}{|\ket{\phi_0}|\cdot|\ket{\phi_2}|}))\approx_{7.6\epsilon} 16\cos^2(\pi/8)$$
	which by Lemma \ref{lem:calculus} implies
	$$\nRe(\frac{\braket{\phi_0|\phi_2}}{|\ket{\phi_0}|\cdot|\ket{\phi_2}|})\approx_{4\sqrt{\epsilon}}0$$
	which proves \eqref{eq:359i}; and by symmetry \eqref{eq:360i} is proved too.\par
And by the approximate equality of \eqref{eq:359}\eqref{eq:360o} there is
	$$|\ket{\phi_0}+\ket{\phi_1}|^2+|\ket{\phi_1}+\ket{\phi_2}|^2\geq 8\cos^2(\frac{1}{4}\arccos\nRe(\frac{\braket{\phi_0|\phi_2}}{|\ket{\phi_0}|\cdot|\ket{\phi_2}|}))-7.6\epsilon$$
	$$|\ket{\phi_2}+\ket{\phi_3}|^2+|\ket{\phi_3}+\ket{\phi_4}|^2\geq 8\cos^2(\frac{1}{4}\arccos\nRe(\frac{\braket{\phi_2|\phi_4}}{|\ket{\phi_2}|\cdot|\ket{\phi_4}|}))-7.6\epsilon,$$
	apply Lemma \ref{lem:d2} we get
	$$\ket{\phi_1}\approx_{9\sqrt{\epsilon}} (\ket{\phi_0}+\ket{\phi_2})/|\ket{\phi_0}+\ket{\phi_2}|$$
	$$\ket{\phi_3}\approx_{9\sqrt{\epsilon}} (\ket{\phi_2}+\ket{\phi_4})/|\ket{\phi_2}+\ket{\phi_4}|$$
	then substituting \eqref{eq:359i}\eqref{eq:360i}  implies \eqref{eq:361i}\eqref{eq:362i}.
\end{proof}

\subsection{Proofs of Lemmas in Section \ref{sec:10.1}}
Now we are going to prove Lemma \ref{lem:optla}, which is basically an application of Corollary \ref{cor:d1} above.
\begin{proof}[Proof of Lemma \ref{lem:optla}]
	From the conditions we know $$\forall i\in \{0,1\cdots 7\},\ket{\phi_{0,i}}\approx_{\sqrt{\epsilon}}-\ket{\phi_{1,i+4}}\approx_{\sqrt{\epsilon}}-\ket{\phi_{0,i+4}}$$
	Without loss of generality assume $A_0\leq A_1$. Apply Corollary \ref{cor:d1} we have
	$$|\ket{\phi_{0,0}}+\ket{\phi_{0,1}}|^2+|\ket{\phi_{0,1}}+\ket{\phi_{0,2}}|^2+|\ket{\phi_{0,2}}+\ket{\phi_{0,3}}|^2+|\ket{\phi_{0,3}}+\ket{\phi_{0,4}}|^2\leq 2\cos^2(\frac{\pi}{8})A_0+5\sqrt{\epsilon}$$
	$$|\ket{\phi_{0,4}}+\ket{\phi_{0,5}}|^2+|\ket{\phi_{0,5}}+\ket{\phi_{0,6}}|^2+|\ket{\phi_{0,6}}+\ket{\phi_{0,7}}|^2+|\ket{\phi_{0,7}}+\ket{\phi_{0,0}}|^2\leq 2\cos^2(\frac{\pi}{8})A_0+5\sqrt{\epsilon}$$
	which implies
	$$\sum_{i\in \{0,1\cdots 7\}}|\ket{\phi_{0,i-1}}+\ket{\phi_{0,i}}|^2\leq 4\cos^2(\pi/8)A_0+10\sqrt{\epsilon}$$
	which together with $\sum_{i\in \{0,1\cdots 7\}}|\ket{\phi_{0,i}}-\ket{\phi_{1,i}}|^2\leq \epsilon$ completes the proof.
\end{proof}
Now we prove Lemma \ref{lem:10.2rr}.
\begin{proof}
Without loss of generality assume $A_0\leq A_1$. From the conditions we know \begin{equation}\label{eq:372w}\forall i\in \{0,1\cdots 7\},\ket{\phi_{0,i}}\approx_{\sqrt{\epsilon}}-\ket{\phi_{1,i+4}}\approx_{\sqrt{\epsilon}}-\ket{\phi_{0,i+4}}\end{equation}
Substitute \eqref{eq:372w} to the fourth condition we get
\begin{equation}|\ket{\phi_{0,0}}+\ket{\phi_{0,1}}|^2+|\ket{\phi_{0,1}}+\ket{\phi_{0,2}}|^2+|\ket{\phi_{0,2}}+\ket{\phi_{0,3}}|^2+|\ket{\phi_{0,3}}+\ket{\phi_{0,4}}|^2\geq \frac{1}{2}\cos^2(\frac{\pi}{8})-17\epsilon\end{equation}
	\begin{equation}|\ket{\phi_{0,4}}+\ket{\phi_{0,5}}|^2+|\ket{\phi_{0,5}}+\ket{\phi_{0,6}}|^2+|\ket{\phi_{0,6}}+\ket{\phi_{0,7}}|^2+|\ket{\phi_{0,7}}+\ket{\phi_{0,0}}|^2\geq \frac{1}{2}\cos^2(\frac{\pi}{8})-17\epsilon\end{equation}
	Apply Corollary \ref{cor:d4r} we have
	\begin{equation}\label{eq:292}\ket{\phi_{0,1}}\approx_{11\epsilon^{1/4}} \frac{1}{\sqrt{2}}(\ket{\phi_{0,0}}+\ket{\phi_{0,2}})\end{equation}
	\begin{equation}\label{eq:293}\ket{\phi_{0,3}}\approx_{11\epsilon^{1/4}} \frac{1}{\sqrt{2}}(\ket{\phi_{0,2}}+\ket{\phi_{0,4}})\approx_{\sqrt{\epsilon}} \frac{1}{\sqrt{2}}(\ket{\phi_{0,2}}-\ket{\phi_{0,0}})\end{equation}
	Define
	\begin{equation}\label{eq:213}\ket{\tilde\phi_{0,+}}=\frac{1}{\sqrt{2}}(\ket{\phi_{0,0}}-\mi\ket{\phi_{0,2}})\end{equation}
	\begin{equation}\label{eq:214}\ket{\tilde\phi_{0,-}}=\frac{1}{\sqrt{2}}(\ket{\phi_{0,0}}+\mi\ket{\phi_{0,2}})\end{equation}
	Then through a direct calculation we can verify
	\begin{equation}\label{eq:328}
		\forall i\in \{0,1,2,3,4\},\ket{\phi_{0,i}}\approx_{12\epsilon^{1/4}} e^{-i\mi\pi/4}\ket{\tilde\phi_{0,+}}+e^{i\mi\pi/4}\ket{\tilde\phi_{0,-}}
	\end{equation}
	Which together with \eqref{eq:372w} we get
		\begin{equation}\label{eq:380}
		\forall i\in \{0,1,\cdots 7\},\ket{\phi_{0,i}}\approx_{13\epsilon^{1/4}} e^{-i\mi\pi/4}\ket{\tilde\phi_{0,+}}+e^{i\mi\pi/4}\ket{\tilde\phi_{0,-}}
	\end{equation}
	Thus for $\ket{\phi_{0,+}}$, $\ket{\phi_{0,-}}$ defined in \eqref{eq:163s}\eqref{eq:164s}, we have 
	\begin{equation}\label{eq:296}\ket{\phi_{0,+}}:= \frac{1}{8}\sum_{i\in \{0,1,\cdots 7\}}e^{-i\mi\pi/4}\ket{\phi_{0,i}}\approx_{13\epsilon^{1/4}}\ket{\tilde\phi_{0,+}}\end{equation}
		\begin{equation}\label{eq:297}\ket{\phi_{0,-}}:= \frac{1}{4}\sum_{i\in \{0,1,\cdots 7\}}e^{i\mi\pi/4}\ket{\phi_{0,i}}\approx_{13\epsilon^{1/4}} \ket{\tilde\phi_{0,-}}\end{equation}
		Combining it with \eqref{eq:380} we have 
		$$\forall i\in \{0,1,\cdots 7\},\ket{\phi_{0,i}}\approx_{38\epsilon^{1/4}} e^{-i\mi\pi/4}\ket{\phi_{0,+}}+e^{i\mi\pi/4}\ket{\phi_{0,-}}$$
		Which together with \eqref{eq:372w} completes the proof of Lemma \ref{lem:10.2rr}.
\end{proof}

\bibliographystyle{plain}
\bibliography{main_Jiayu}
\end{document}